\documentclass[12pt]{amsart}

\input{yuming.sty}

\title{Robust self-testing for nonlocal games with robust game algebras}
\author[Yuming Zhao]{Yuming Zhao}
\date{}
\address{QMATH, Department of Mathematical Sciences, University of Copenhagen}
\email{yuming@math.ku.dk}

\begin{document}

\begin{abstract}
    We give an operator-algebraic formulation of robust self-testing in terms of states on $C^*$-algebras. We show that a quantum correlation $p$ is a robust self-test only if among all (abstract) states, there is a unique one achieving $p$. We show that the ``if" direction of this statement also holds, provided that $p$ is optimal/perfect for a nonlocal game that has a robust game algebra. This last condition applies to many nonlocal games of interest, including all XOR games, synchronous games, and boolean constrained system (BCS) games.

    For those nonlocal games with robust game algebras, we prove that self-testing is equivalent to the uniqueness of finite-dimensional tracial states on the associated game algebra, and robust self-testing is equivalent to the uniqueness of amenable tracial states. Applying this tracial-state characterization of self-testing to parallel repetition, we show that a synchronous game is a  self-test for perfect quantum strategies if and only if its parallel repeated version is a self-test for perfect quantum strategies.

    As a proof approach, we give the first quantitative Gower-Hatami theorem that is applicable to $C^*$-algebras. Here ``quantitative" means there is a constructive bound on the distance between the approximate representations and exact representations. We also demonstrate how this quantitative Gowers-Hatami theorem can be used to calculate the explicit robustness function of a self-test. 
\end{abstract}

\maketitle
\tableofcontents

\section{Introduction}

Suppose we have a physical system consisting of two separate labs, each capable of making several different measurements. If these two labs are entangled, then the measurement statistics $p$ (which are referred to as a \textbf{correlation}) can be correlated in surprising ways that any classical theory cannot explain. In this scenario, the behavior of the system is described by a quantum state and measurement operators (this collection is called a \textbf{model} for $p$). In general, a given correlation can be realized by many different models. However, some correlations have a unique\footnote{In more rigorous terms, the model is unique up to changing of bases of local systems and tensoring auxiliary systems.} underlying model. A correlation with this property is called a \textbf{self-test}. 

In essence, self-testing allows us to infer the exact quantum state and measurements solely from the observed correlations. Tsirelson, around the 1980s, already observed such phenomena when studying Bell-type inequalities~\cite{Tsi93}, which led to the  ``nonlocality birth" of self-testing~\cite{youtub}. The term ``self-testing" and its ``cryptography birth" was given by Mayers and Yao in \cite{MY03}. Since then, self-testing has been applied in various areas of quantum information. In device-independent quantum cryptography, self-testing is arguably the most effective way to prove the trustworthiness of quantum devices~\cite{BSCA18a,BSCA18b}. In delegated quantum computations, self-testing can be used to verify the correctness of computation performed by a remote quantum server~\cite{CGJV19,Gri19,BMZ23}. Self-testing has also been used to certify entanglement and separate quantum correlation sets~\cite{CS18}. In quantum complexity theory, self-testing is an important tool for dealing with malicious quantum provers and proving the soundness of an interactive proof system. Notably, self-testing is one of the key techniques in the recent breakthrough $\text{MIP}^*=\text{RE}$~\cite{MIPRE}. This complexity-theoretical result has resolved several long-standing open problems in operator algebras, including Connes' embedding problem and Kitchberg's QWEP conjecture.

Historically, many self-testing results were derived in an ad-hoc manner. Recently, approximate representation theory has emerged as a powerful tool for establishing self-testing results. One notable example is the use of the Gowers-Hatami theorem for finite groups \cite{GH17,CS18,Vid18}. However, this framework is limited to correlations and nonlocal games exhibiting certain finite-group structures. The first analogue of the Gowers-Hatami theorem that is applicable to certain $C^*$-algebras was introduced in \cite{MPS21}, but it lacks a quantitative bound on the ``distance" between approximate and exact representations, so the resulting self-test has no constructive robustness. 

In parallel, self-testing has played a crucial role in resolving significant conjectures in operator algebras. Yet, none of the aforementioned frameworks offers a comprehensive operator-algebraic interpretation of self-testing.
On this front, Paddock, Slofstra, Zhao, and Zhou have made substantial progress by giving the following operator-algebraic formulation of self-testing in terms of states on $C^*$-algebras: 

\begin{theorem}[Informal version of Theorem 3.5 in \cite{PSZZ23}]\label{thm}
An extreme quantum correlation $p$ is a self-test for quantum models if and only if there is a unique finite-dimensional state on $\mintensor$ for $p$.
\end{theorem}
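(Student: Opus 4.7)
The plan is to set up a precise correspondence between quantum models for $p$ (considered modulo local isometry and attachment of auxiliary systems) and finite-dimensional states on $\mintensor$ whose induced correlation is $p$, and then show that each notion of uniqueness translates into the other.

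First I would construct the map from models to states. Given a model $\models$, the POVMs induce $\ast$-representations $\pi_A\colon \POVM^{X,A}\to B(\mcH_A)$ and $\pi_B\colon \POVM^{Y,B}\to B(\mcH_B)$, and the commuting tensor product representation on $\mcH_A\otimes\mcH_B$ factors through the minimal tensor product. The vector state $\phi_S(\cdot) = \braket{\psi | \pi_A\otimes\pi_B(\cdot) | \psi}$ is then a finite-dimensional state on $\mintensor$ whose correlation is $p$, and a short computation shows that local isometries $V_A\otimes V_B$ and the attachment of auxiliary ancillas preserve $\phi_S$, so the map descends to equivalence classes of models. Conversely, given a finite-dimensional state $\phi$ for $p$, its GNS representation $\pi_\phi$ is finite-dimensional, and by Artin--Wedderburn any finite-dimensional representation of a minimal tensor product of two $C^*$-algebras splits (after accounting for multiplicities) as a tensor product of representations of the factors; this produces a model $S_\phi$ realizing $p$.

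The forward direction is then essentially immediate: if $p$ is a self-test and $\phi_1, \phi_2$ are two finite-dimensional states for $p$, the corresponding GNS-derived models are equivalent up to local isometry and ancillas, and state-invariance under such equivalences forces $\phi_1 = \phi_2$. For the converse, suppose $\phi$ is the unique finite-dimensional state for $p$ and consider two arbitrary models $S_1, S_2$. Both induce $\phi$, and inside each $\mcH_A^{(i)}\otimes\mcH_B^{(i)}$ the cyclic subspace under $\pi_A^{(i)}\otimes\pi_B^{(i)}$ containing $\ket{\psi_i}$ is isomorphic to the GNS space of $\phi$. Using the tensor decomposition of the GNS representation, the resulting intertwiner between the two cyclic subspaces can be chosen of product form $U_A\otimes U_B$, while the orthogonal complements are absorbed as auxiliary systems on each side, yielding the local isometry witnessing that $p$ is a self-test.

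The main obstacle is the step that decomposes the GNS representation as a tensor product of local representations and, more critically, the step that produces a \emph{local} intertwining unitary between two different GNS copies. Artin--Wedderburn provides existence of the decomposition, but multiplicity spaces create genuine ambiguity, and one must verify that local identifications from two different models can be compatibly assembled into a single product unitary after allowing local padding by ancillas. Extremality of $p$ is what makes this work: it rules out hidden convex decompositions of the state, ensuring that the ``cyclic part'' of each model is canonically the GNS representation and that no competing pure realizations can sneak in. Passing from the minimal tensor product, rather than the maximal one, is also essential, since it is precisely the algebra through which all genuine tensor product representations on $\mcH_A\otimes\mcH_B$ factor.
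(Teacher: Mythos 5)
Your high-level strategy is the right one and matches the spirit of the PSZZ23 argument: establish a correspondence between quantum models (modulo local dilation) and finite-dimensional states on the minimal tensor product, and then use extremality of $p$ to deduce purity of the state, irreducibility of the GNS representation, and its decomposition as a tensor of irreducible representations. However, there are two genuine gaps, one in each direction, and one incorrect side claim.

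In the forward direction, you assert that ``state-invariance under such equivalences forces $\phi_1 = \phi_2$.'' This is exactly the crux, and it is not free. A local dilation $S \succeq \wtd{S}$ directly gives $f_S(m^x_a \otimes n^y_b) = f_{\wtd{S}}(m^x_a \otimes n^y_b)$, but to conclude $f_S = f_{\wtd{S}}$ you must equate the two states on arbitrary monomials $\alpha\otimes\beta$ of higher degree. Because the dilation is given by isometries $I_A \otimes I_B$ rather than unitaries, you cannot simply telescope $I_A M I_A^* I_A M' I_A^* = I_A M M' I_A^*$; the argument instead proceeds by induction on the degree and requires that the ideal model be centrally supported so that Alice's operators acting on $\ket{\wtd\psi}$ can be commuted across to Bob's side with controlled error. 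This is precisely the content of \Cref{prop:diff} (at $\epsilon = 0$) and of [PSZZ23, Proposition~4.5 and~4.8], and it is also what ties extremality (which supplies a full-rank, hence centrally supported, ideal model) to the forward direction. Without it your forward implication does not close.

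In the converse direction, the claim that ``the resulting intertwiner between the two cyclic subspaces can be chosen of product form $U_A\otimes U_B$, while the orthogonal complements are absorbed as auxiliary systems on each side'' overlooks the central difficulty: the cyclic subspace $\mcK_i \subseteq \mcH_A^{(i)}\otimes\mcH_B^{(i)}$ generated by $\ket{\psi_i}$ under $\pi_A^{(i)}\otimes\pi_B^{(i)}$ is not, in general, of the form $\mcK_A\otimes\mcK_B$ for subspaces of the local factors; indeed, the orbit of $\ket{\psi_i}$ under the product algebra can escape any nonzero product subspace. So a unitary intertwiner between abstract GNS copies does not, by itself, produce a \emph{local} isometry. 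The actual construction in [PSZZ23] is more elaborate: it works with the decompositions of $\pi_A$ and $\pi_B$ separately into irreducibles plus multiplicity spaces, uses the support projections $\Pi_A, \Pi_B$, and exploits extremality to argue that only one local irreducible pair $(\wtd\pi_A,\wtd\pi_B)$ can appear in the decomposition compatibly with the unique pure state, from which the local isometries $I_A, I_B$ are then built. Your sketch needs this bridge.

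Finally, the assertion that ``any finite-dimensional representation of a minimal tensor product of two $C^*$-algebras splits (after accounting for multiplicities) as a tensor product of representations of the factors'' is false as stated (e.g., $\pi_2\otimes\pi_3 \oplus \pi_3\otimes\pi_2$ for $\mcA = \mcB = M_2\oplus M_3$ is not of product form). Only the irreducible constituents split as tensors; the fact that an arbitrary f.d.\ state yields a quantum model goes through a direct-sum construction, not a splitting of the whole GNS representation, as noted in the paper via the citation to [SW08].
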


Here a model $S$ is called a \textbf{quantum model} if the quantum state employed in $S$ is \textbf{finite-dimensional}. Correlations that have quantum models are called \textbf{quantum correlations}. The set $C_q$ of quantum correlations is a convex set, but not closed~\cite{Slof19b}. A quantum correlation is said to be \textbf{extreme} if it is an extreme point of $C_q$. The closure of $C_q$ is denoted $C_{qa}$ and is referred to as the set of \textbf{quantum approximate correlations}. 

Given a finite set $X$ of measurement settings and a finite set $A$ of measurement outcomes, the \textbf{POVM algebra} $\POVM^{X,A}$ is the universal $C^*$-algebra generated by ``abstract POVMs" $\{e^x_a:a\in A\},x\in X$ in the sense that $e^x_a$'s are positive contractions satisfying the algebraic relations $\sum_{a\in A}e^x_a=\Id$ for all $x\in X$. An (abstract) state $f$ on a $C^*$-algebra $\mcA$ is a unital positive linear functional, and $f$ is said to be finite-dimensional if its GNS representation is finite-dimensional. We say that a state $f$ on $\mintensor$ achieves a correlation $p$ if $f(e^x_a\otimes e^y_b)=p(a,b|x,y)$ for all $x,y,a,b$. It is well-known that (see e.g., \cite{Fri12}) $C_{qa}$ consists of correlations that can be achieved by states (including infinite-dimensional ones) on $\mintensor$, and $C_q$ consists of correlations that can be achieved by \textbf{finite-dimensional} states on $\mintensor$.

\Cref{thm} only deals with \textbf{exact} self-test, meaning that it only studies models that can achieve the given correlation perfectly. However, due to the presence of noise in the environment, one may never observe the ideal correlation exactly. For practical applications, it is crucial to study the \textbf{robustness} of a self-test. A correlation $p$ is said to be a \textbf{robust self-test} if it has a unique model $\wtd{S}$ and any model that achieves a correlation close to $p$ must be close to $\wtd{S}$ in some suitable sense. One of the main goals of this paper is to generalize \Cref{thm} to robust self-testing. Thus, we ask:

\begin{question}\label{qu1}
    What is an operator-algebraic formulation of \textbf{robust self-testing} in terms of (abstract) states on $C^*$-algebras?
\end{question}

In \cite{PSZZ23}, if there is a unique state in a set of states $\mcS$ that can achieve a correlation $p$, then $p$ is said to be an \textbf{abstract state self-test} for $\mcS$. Using this language, \Cref{thm} asserts that self-testing for quantum models is equivalent to abstract state self-testing for finite-dimensional states on $\mintensor$. The study of finite-dimensional states versus infinite-dimensional states on $\mintensor$ is closely related to understanding the structure of quantum correlation sets. One could also ask what is an abstract state self-test for infinite-dimensional states.

\begin{question}\label{qu2}
    Suppose a quantum correlation $p$ can be realized by a unique finite-dimensional state. Are there any \textbf{infinite-dimensional states} that can achieve $p$?
\end{question}

The first main contribution of this paper is to illustrate that \Cref{qu1} and \Cref{qu2} are indeed the same question. Suppose a quantum correlation $p$ is a self-test for quantum models, and let $f$ be the unique finite-dimensional state achieving $p$. If in addition, $p$ is a robust self-test, we prove that there is \textbf{no infinite-dimensional state} that can achieve $p$, so $f$ is the unique state for $p$ among \textbf{all states} on $\mintensor$ (see \Cref{thm:robustuniquestate} for details). In other words, robust self-test for quantum models implies abstract state self-test for all states on $\mintensor$. 

Our second main contribution is to show that robust self-testing for quantum models and abstract state self-testing for all states are indeed equivalent, provided that the correlation is optimal/perfect for a nonlocal game that has a \textbf{robust game algebra} (see \Cref{thm:robustcorr}). Here the game algebra associated with a nonlocal game $\mcG$ is the quotient of Bob's POVM algebra $\POVM^{Y,B}$ by some relations $\mcR$ such that optimal strategies for $\mcG$ correspond to tracial states on $\POVM^{Y,B}\slash\ang{\mcR}$. This game algebra is said to be robust if in addition, near-optimal strategies correspond to states on $\POVM^{Y,B}$ that are approximately tracial and approximately respect the relations in $\mcR$. As shown in \cite{Pad22}, all XOR games~\cite{Tsi87,Slof11}, synchronous games~\cite{PSSTW16}, and boolean constrained system (BCS) games~\cite{Ji13,CM14} have robust game algebras. So our abstract-state characterization of robust self-testing applies to optimal/perfect quantum correlations of these games.

Our third main contribution is to show that, for those nonlocal games with robust game algebras, self-testing for optimal quantum strategies is equivalent to the uniqueness of \textbf{finite-dimensional tracial states} on the associated game algebra (see \Cref{thm:gametrace}); and robust self-testing is equivalent to the uniqueness of \textbf{amenable tracial states} (see \Cref{thm:robustgame}).

Notably, the above tracial-state characterization of (robust) self-testing has several interesting implementations. First, it provides \textbf{one-line proofs} for many robust self-testing results. For example, the result that CHSH game is a robust self-test follows immediately from the fact that its associated game algebra $Cl_2$, the Clifford algebra of rank 2, has a unique tracial state; the Mermin-Peres magic square game is a robust self-test because the tensor product of two $Cl_2$'s has a unique tracial state; all games that are based on presentations of Pauli groups $P_n$ --- including Pauli braiding test~\cite{NV17}, low-weight Pauli braiding test~\cite{BMZ23}, Pauli basis test~\cite{MIPRE}, quantum low-degree test~\cite{NV18}, and so on --- are robust self-tests because the full group $C^*$-algebra of $P_n$ has a unique tracial state.

Furthermore, this tracial-state characterization illustrates the necessary and sufficient conditions for \textbf{non-robust self-testing}: a nonlocal game is a self-test but not a robust self-test if and only if its associated game algebra has a unique finite-dimensional tracial state but has multiple amenable tracial states. The first example of such a game was constructed by Man\v{c}inska and Schmidt in \cite{MS23}. Given two games $\mcG_1$ and $\mcG_2$, they consider the $(\mcG_1\vee \mcG_2)$-game in a rough sense that the players win if they choose the same game ($\mcG_1$ or $\mcG_2$) and win it. Taking $\mcG_1$ to be a game that separates $C_q$ and $C_{qa}$ (e.g., \cite{Slof19b}) and taking $\mcG_2$ to be the magic square game, the game algebra $C^*(\mcG_1\vee \mcG_2)$ has a unique finite-dimensional tracial state given by the perfect strategy for $\mcG_2$, so $\mcG_1\vee \mcG_2$ is a self-test. However, $C^*(\mcG_1\vee \mcG_2)$ also has other infinite-dimensional amenable tracial states given by perfect strategies for $\mcG_1$, thus this self-test is not robust. 

This tracial-state characterization of self-testing applies to parallel repeated games. Parallel repetition is an important technique for gap amplification in complexity theory. Parallel repeated synchronous games are key ingredients in the $\text{MIP}^*=\text{RE}$ proof~\cite{MIPRE}. Parallel repeated CHSH game and parallel repeated magic square game have been shown to be robust self-tests in \cite{McK16,McK17, CN16}. In this paper, based on an algebraic formulation of products of synchronous games given in \cite{productsyn}, we prove that a synchronous game is a self-test for its perfect strategies if and only if its parallel repeated version is a self-test for perfect strategies (see \Cref{cor:parallelselftest}). 

Having a characterization for robust self-testing in terms of tracial states on $C^*$-algebras also raises the prospect of constructing new robust self-tests from algebras which are known to have unique tracial states (e.g., full matrix algebras, Clifford algebras of even rank, etc.). Efficient (fewer generators and relations) presentations of those algebras could result in efficient (fewer questions and answers) robust self-tests, which, in turn, can be used to improve soundness analysis in quantum interactive proofs (see e.g. \cite{CVY23} for recent progress). One of the potential approaches to tackling the quantum PCP conjecture (game version) is to construct a family of self-tests that are efficient ($O(log(n))$-bit question and $O(1)$-bit answer) and are highly robust (robustness is independent of $n$).

In the final part of this paper, we investigate how the stability of a game algebra encodes the robustness function of a self-test. To this end, we state and prove a \textbf{quantitative} Gowers-Hatami theorem for game algebras (see \Cref{thm:algGH}). To the best of our knowledge, this is the first Gowers-Hatami theorem for $C^*$-algebras that has a quantitative bound on the distance between approximate representations and exact representations. This theorem also identifies all factors that determine how robust a self-test is (see \Cref{rmk:GH}) and provides a systematic approach to computing the robustness function. Using this theorem, we give a new and succinct proof for the result that the CHSH game is a robust self-test with robustness function $O(\sqrt{\eps})$.

\begin{comment}
\subsection{Open problems and future directions}

non-robust self-test for correlations

equivalents between ....

robust self-testing in parallel 

applations of quantatiove algebraic GH

design high robust, effient self-test, quantum PCP \cite{CN24}

\subsection{Organization of the paper}
\end{comment}

The rest of this paper is organized as follows. In \Cref{sec:pre}, we review some background concepts on algebras and nonlocal correlations. In \Cref{sec:operatorselftest}, we state and prove our operator-algebraic formulation of robust self-testing. In \Cref{sec:gameselftest}, we discuss robust self-testing in the context of nonlocal games. In \Cref{sec:trace}, we discuss some properties of tracial states, with a focus on amenable tracial states and the convex structure of finite-dimensional tracial states. In \Cref{sec:traceselftest,sec:tracerobustselftest}, we study self-testing and robust self-testing using tracial states. In \Cref{sec:parallel}, we apply the tracial-state characterization of self-testing to parallel repeated synchronous games. In \Cref{sec:GH}, we state and prove our quantitative Gowers-Hatami theorem for game algebras and show how it can be used to calculate the robustness function of the CHSH self-test.

\subsection{Acknowledgements}
Part of this work was completed as the author's PhD thesis~\cite{YZ} at the University of Waterloo. The author thanks his committee members Michael Brannan, Richard Cleve, Matthew Kennedy, William Slofstra, and Jurij Vol\v{c}i\v{c} for valuable feedback. We thank Laura Man\v{c}inska for pointing out the reference \cite{productsyn} and suggesting the connections to parallel repetition. While preparing this manuscript, the author became aware of contemporaneous work \cite{KM24} which
establishes similar results.

\section{Preliminaries}\label{sec:pre}
In this section, we establish an algebraic framework for nonlocal games and correlations.

\subsection{Algebras and representations}
 In this paper, a $*$-algebra refers to a unital associative $\C$-algebra $\mcA$ equipped with an antilinear involution $ \mcA\arr\mcA: a \mapsto
a^*$
such that $(a b)^* = b^* a^*$ for all $a,b \in \mcA$. We use $1$ for the identity in any $*$-algebra. We also study the presentations of $*$-algebras. Given a set $\mcX$, we use $\C^*\ang{\mcX}$ to denote the free $*$-algebra with generating set $\mcX$. In other words, elements in $\C^*\ang{\mcX}$ are noncommutative $*$-polynomials over $\mcX$. For any $\mcR \subseteq \C^*\ang{\mcX}$, $\C^*\ang{\mcX : \mcR}$ denotes the
quotient of $\C^*\ang{\mcS}$ by the two-sided $*$-ideal generated by $\mcR$. 

If a $*$-algebra $\mcA$ has a submultiplicative Banach norm $\norm{\cdot}_{\mcA}$ such that the \emph{$C^*$-identity} $\norm{a^*a}=\norm{a}^2$ holds for all $a\in \mcA$ and $\mcA$ is closed with respect to $\norm{\cdot}_\mcA$, then $\mcA$ is said to be a $C^*$-algebra. For any Hilbert space $\mcH$, we use $\msB(\mcH)$ to denote the $C^*$-algebra of bounded operators on $\mcH$. All Hilbert spaces in this paper are assumed to be separable. We denote by $\Id_\mcH$ the identity operator in $\msB(\mcH)$ and simply write $\Id$ if $\mcH$ is clear from the context. The \emph{commutant} of a subset $X\subseteq\msB(\mcH)$ is $X':=\{T\in\msB(\mcH):TS=ST \text{ for all }S\in X \}$. 

The set of $d\times d$ matrices $M_{d}(\C)$ equipped with the operator norm is a $C^*$-algebra called \emph{full matrix algebra} and denoted $\M_d$. If we think of $M_{d}(\C)$ as a vector space, then equipped with the inner product $\ang{A,B}:=\frac{1}{n}\Tr(A^*B)$, $M_d(\C)$ is a $d^2$-dimensional Hilbert space.

 A \emph{$*$-homomorphism}
$\phi: \mcA \to \mcB$ between $*$-algebras $\mcA$ and $\mcB$ is an algebra homomorphism 
such that $\phi(a^*) = \phi(a)^*$. A \emph{representation} $\pi$ of a $*$-algebra $\mcA$ is a $*$-homomorphism from $\mcA\arr\msB(\mcH)$ for some Hilbert space $\mcH$.
If $\mcH$ is finite-dimensional then we say $\pi$ is \emph{finite-dimensional}. A \emph{subrepresentation} of $\pi$ is a closed subspace $\mcK\subseteq\mcH$ such that 
\begin{equation*}
   \pi(\mcA)\mcK:=\text{span}\{\pi(a)v:a\in\mcA,v\in\mcK \}=\mcK. 
\end{equation*}
A representation is \emph{irreducible} if it contains no proper non-zero subrepresentations. A vector $\ket{v}\in\mcH$ is \emph{cyclic} for a representation $\pi:\mcA\arr\msB(\mcH)$ if $\pi(\mcA)\ket{v}$ is dense in $\mcH$. A \emph{cyclic representation} of $\mcA$ is a tuple $(H,\pi,\ket{v})$, where $\pi$ is a representation of $\mcA$ on $\mcH$, and $\ket{v}\in \mcH$ is cyclic for $\pi$. Two representations $\pi:\mcA\arr\msB(\mcH)$ and $\varphi:\mcA\arr\msB(\mcK)$ of $\mcA$ are \emph{(unitarily) equivalent}, denoted $\pi\cong\varphi$, if there is a unitary $U:\mcH\arr\mcK$ such that $U\pi(a)U^*=\varphi(a)$ for all $a\in\mcA$. Two cyclic representations $(\mcH,\pi,\ket{v})$ and $(\mcK,\varphi,\ket{w})$ are (unitarily) equivalent if there is such a unitary with $U\ket{v}=\ket{w}$. 

A state on a $*$-algebra is a linear function $f : \mcA
\arr \C$ such that
\begin{enumerate}[(i)]
    \item $f(1)=1$, $f(a^*) = \overline{f(a)}$, $f(a^* a) \geq 0$ for all $a\in\mcA$, and
    \item $f$ is bounded in the sense that
    \begin{align}\label{eq:bounded}
    \sup\left\{\frac{f(b^* a^* a b)}{f(b^* b)}: b \in \mcA, f(b^* b) \neq 0\right\}<\infty
\end{align}
for all $a\in\mcA$.
\end{enumerate}
 Every (bounded) state on a $*$-algebra $\mcA$ has a unique (up to equivalence) cyclic representation $(\mcH,\pi,\ket{v})$ where $\ket{v}$ is a unit vector in $\mcH$ and $f(a) = \bra{\psi}\pi(a)\ket{\psi}$ for all $a \in \mcA$
\cite[Theorem 4.38]{Sch20}. Every $*$-algebra $\mcA$ has an operator norm
$\norm{\cdotp}_\mcA:\mcA\arr\R_{\geq 0}\cup \{\infty\}$ given by the supremum over all states on $\mcA$. In other words,  
\begin{equation*}
    \norm{a}_\mcA:=\sup\{\sqrt{f(a^*a)}: f \text{ a state on } \mcA\}.
\end{equation*}
For any Hilbert space $\mcH$, we denote by $\norm{\cdot}_{op}$ the operator norm on $\msB(\mcH)$. A state $f$ is said to be \emph{tracial} if $f(ab)=f(ba)$ for all $a,b\in\mcA$. Every full matrix algebra $\M_d$ has a unique tracial state $\tr_d$ given by $\tr_d(T):=\frac{1}{d}\Tr(T)$.

%A state $\psi$ on a $*$-algebra $\mcA$ induces a semi-norm $\norm{a}_\psi:=\sqrt{\psi(a^*a)}$ on $\mcA$ satisfying $\abs{\psi(ab)}\leq \norm{a}_\psi\norm{b}_{\psi}$ for all $a,b\in \mcA$. If $\psi$ is tracial, then $\norm{\cdotp}_{\psi}$ is unitarily invariant, meaning that $\norm{uav}_{\psi} = \norm{a}_{\psi}$ for all unitaries $u,v \in \mcA$ and elements $a \in \mcA$.  For an arbitrary state $\psi$, the seminorm $\norm{\cdotp}_{\psi}$ is left unitarily invariant, but not necessarily right unitarily invariant.

%Using the GNS theorem, it can be shown that
%\begin{equation*}
%    \norm{a}_\mcA=\sup\{\norm{\pi(a)}_{op}:\pi \text{ a }
%\text{representation} \text{ on a Hilbert space }\mcH\}
%\end{equation*}
%and
%$\psi(b^*a^*ab)\leq \norm{a}^2_{\mcA}\psi(b^*b)$ for all $a,b\in\mcA$ and
%states $\psi$ on $\mcA$. 

%A \emph{$*$-positive cone}  of a $*$-algebra $\mcA$ is a subset $\mcA_{+}\subset \mcA$ such that (i) all elements in $\mcA_+$ are hermitian, (ii) $\lambda\cdot 1\in\mcA_+ $ and $\lambda a+b\in \mcA_+$ for all $a,b\in\mcA_+$ and $\lambda\in \mcR_{\geq 0}$, and (iii) $x^*ax\in\mcA_{+}$ for all $a\in\mcA_{+}$ and $x\in\mcA$. We say $\mcA_+$ has the \emph{archimedean property} if for any $x\in\mcA$, there is a $\lambda>0$ such that $\lambda1-x^*x\in\mcA_+$. A $*$-algebra $\mcA$ together with a $*$-positive cone $\mcA_+$ that satisfies the archimedean property is called a \emph{semi-pre-$C^*$-algebra}.  

In this paper, we often work with finite-dimensional systems. A state $f$ on a $*$-algebra is \emph{finite-dimensional} if the Hilbert space $\mcH$ in the GNS representation $(H,\pi,\ket{v})$ of $f$ is finite-dimensional. If $\pi:\mcA\arr\msB(\mcH)$ is a finite-dimensional representation, then by the \emph{structure theory for finite-dimensional $C^*$-algebras}, $H$ is isomorphic to $\bigoplus_{i=1}^\ell \C^{n_i} \otimes \C^{m_i}$ for some integers $n_i$, $m_i$, $i=1,\ldots,\ell$, and 
\begin{equation}
    \pi(A)\cong \bigoplus_{i=1}^\ell M_{n_i}(\C)\otimes \Id_{m_i}\,,\text{ and}\quad \pi(A)'\cong \bigoplus_{i=1}^\ell \Id_{n_i}\otimes M_{m_i}(\C).
\end{equation}
In particular, $\pi(A)$ and $\pi(A)'$ are direct sums of matrix algebras.

%\subsection{More on $C^*$-algebras}
Let $\mcA$ be a $C^*$-algebra. The Gelfand–Naimark theorem states that $\mcA$ can be represented as a $C^*$-subalgebra of some $\msB(\mcH)$. In other words, there is a faithful representation $\pi:\mcA\arr\msB(\mcH)$ for some Hilbert space $\mcH$. For every $a\in \mcA$, $\norm{a}^2-a^*a$ is positive in $\mcA$, so any linear functional $f:\mcA\arr\C$ satisfying $f(1)=1$ and $f(a^*a)\geq 0$ for all $a\in\mcA$ is automatically bounded in the sense of \Cref{eq:bounded}. The set of all states on $\mcA$, denoted $\msS(\mcA)$, is called the state space of $\mcA$. $\msS(\mcA)$ is convex and weak*-compact. The extreme points of $\msS(\mcA)$ are called \emph{pure states}. Let $f\in\msS(\mcA)$ and let $(\mcH,\pi,\ket{v})$ be a GNS representation of $f$. Then $f$ is a pure state if and only if $\pi$ is an irreducible representation.

Given two $C^*$-algebras $\mcA$ and $\mcB$, a linear mapping $\varphi:\mcA\arr\mcB$ is said to be \emph{positive} if $\varphi(x)$ is positive in $\mcB$ for any positive element $x\in\mcA$. A linear mapping $\Phi:\mcA\arr\mcB$ is said to be \emph{unital completely positive (ucp)} if 
\begin{enumerate}[(i)]
    \item $\Phi(1_\mcA)=1_\mcB$, and
    \item for any $n\in \N$, the mapping
    \begin{align*}
        \Phi_n:M_n(\mcA)&\arr M_n(\mcB)\\
        \begin{pmatrix}
            a_{11} & \cdots & a_{1n}\\
            \vdots & \ddots & \vdots\\
            a_{n1} & \cdots & a_{nn}
        \end{pmatrix} &\mapsto         \begin{pmatrix}
           \Phi( a_{11}) & \cdots & \Phi(a_{1n})\\
            \vdots & \ddots & \vdots\\
            \Phi(a_{n1}) & \cdots & \Phi(a_{nn})
        \end{pmatrix}
    \end{align*}
    is positive.
\end{enumerate}
The Stinespring dilation theorem states that for any ucp map $\Phi:\mcA\arr\msB(\mcH)$, there exists a Hilbert space $\mcK$, a representation $\pi:\mcA\arr\msB(\mcK)$, and an isometry $V:\mcH\arr\mcK$ such that
\begin{equation*}
    \Phi(a)=V^*\pi(a)V
\end{equation*}
for all $a\in\mcA$. Moreover, if $\mcH$ is finite-dimensional, then $\mcK$ can be taken as finite-dimensional.

Given two $C^*$-algebras $\mcA$ and $\mcB$, their algebraic tensor product $\mcA\otimes \mcB$ is a $*$-algebra. There is more than one way to make $\mcA\otimes\mcB$ into a $C^*$-algebra. The first is to define the max-norm $\norm{\cdot}_{max}$ on $\mcA\otimes\mcB$ by
\begin{equation}
    \norm{a}_{max}:=\sup\{\norm{\pi(a)}_{op}:\pi\text{ a representation of }\mcA\otimes\mcB \text{ on a Hilbert space }\mcH\} \label{eq:maxtensor}
\end{equation}
The \emph{max tensor product} of $\mcA$ and $\mcB$, denoted $\mcA\otimes_{max}\mcB$, is the closure of $\mcA\otimes\mcB$ with respect to $\norm{\cdot}_{max}$. For any representation $\pi:\mcA\otimes_{max}\mcB\arr\msB(\mcH)$ there are pairs of representations $\pi_A:\mcA\arr\msB(\mcH)$ and $\pi_B:\mcB\arr\msB(\mcH)$ such that $\pi(a\otimes b)=\pi_A(a)\pi_B(b)=\pi_B(b)\pi_A(a)$ for all $a\in\mcA,b\in\mcB$. One can also define the min-norm. Fix faithful representations $\mcA\subset \msB(\mcH)$ and $\mcB\subset \msB(\mcK)$, then $\mcA\otimes \mcB\subset \msB(\mcH\otimes \mcK)$ as a $*$-subalgebra. Restricting the operator norm on $\msB(\mcH\otimes \mcK)$ to $\mcA\otimes \mcB$ defines the min-norm $\norm{\cdot}_{min}$. The $C^*$-algebra $\mcA\otimes_{min}\mcB$ is the closure of $\mcA\otimes\mcB$ with respect to $\norm{\cdot}_{min}$. Note that $\mcA\otimes_{min}\mcB$ is independent of the choice of faithful representations $\mcA\subset \msB(\mcH)$ and $\mcB\subset \msB(\mcK)$.

Given two $C^*$-algebras $\mcA$ and $\mcB$, we say a state $f:\mcA\otimes \mcB\arr\C$ is \emph{max-continuous} (resp. \emph{min-continuous}) if it is continuous with respective to $\norm{\cdot}_{max}$ (resp. $\norm{\cdot}_{min}$). Equivalent, $f$ is max-continuous (resp. min-continuous) if it extends to a state on $\mcA\otimes_{max}\mcB$ (resp. $\mcA\otimes_{min}\mcB$). By our definition of states, every state on $\mcA\otimes \mcB$ is max-continuous.

\subsection{Vectors and measurements}
we use the bra-ket notation for vectors in Hilbert spaces. 
A \emph{vector state} is a unit vector in some Hilbert space. Given two vectors $\ket{v}$ and $\ket{w}$, we write $\ket{v}\approx_{\eps}\ket{w}$ to denote $\norm{\ket{v}-\ket{w}}\leq \eps$.

Given two vector states $\ket{\alpha}\in \mcH$ and $\ket{\beta}\in \mcK$, we often write $\ket{\alpha,\beta}$ to denote the product state $\ket{\alpha}\otimes\ket{\beta}\in \mcH\otimes\mcK$. We use $\{\ket{i}\}_{i=1}^d$ to denote the standard basis for the Euclidean space $\C^d$. Given two Hilbert space $\mcH_A$ and $\mcH_B$, every bipartite vector state $\ket{\psi}\in\mcH_A\otimes \mcH_B$ has a \emph{Schmidt decomposition} 
\begin{equation*}
    \ket{\psi}=\sum_{i\in\mcI}\lambda_i\ket{\alpha_i}\otimes\ket{\beta_i},
\end{equation*}
where the index set $\mcI$ is either finite or countable, the \emph{Schmidt coefficients} $\lambda_i$ are strictly positive, and $\{\ket{\alpha_i}\}_{i\in\mcI}$ and $\{\ket{\beta_i}\}_{i\in\mcI}$  are orthonormal subsets of $\mcH_A$ and $\mcH_B$ respectively. The cardinality of $\mcI$ is called the \emph{Schmidt rank} of $\ket{\psi}$.  
%For a derivation of the Schmidt decomposition in infinite-dimensional spaces, see e.g. \cite[Theorem A.5]{CLP17}. 

Given a Hilbert space space $\mcH$, we call
\begin{equation*}
    C_1(\mcH):=\{T\in\msB(\mcH):\sum_{n=1}^\infty s_n(T)< \infty\}
\end{equation*}
the \emph{trace class operators}, where $s_1(T)\geq s_2(T)\geq \cdots \geq 0$ are the \emph{singular values} of $T$. For any $T\in C_1(\mcH)$ and any orthonormal basis $\{\ket{\alpha_i}\}_{i\in\mcI}$ for $\mcH$,
\begin{equation}
    \sum_{i\in \mcI}\bra{\alpha_i}T\ket{\alpha_i}\label{eq:trace}
\end{equation}
converges and the value is independent of the choice of the orthonormal basis. We denote by $\Tr(T)$ the value of \Cref{eq:trace} and call it the \emph{trace of $T$}. An operator $\rho\in C_1(\mcH)$ is called a \emph{density operator} (or a quantum state) on $\mcH$ if $\rho$ is a positive operator with $\Tr(\rho)=1$. We say that a density operator $\rho$ on $\mcH$ is a \emph{pure quantum state} if $\rho=\ket{\psi}\bra{\psi}$ for some vector state $\ket{\psi}\in\mcH$, and we say that $\rho$ is a \emph{mixed quantum state} if it is not a pure quantum state. Every density operator $\rho$ on $\mcH$ induces a semi-norm $\norm{X}_\rho:=\sqrt{\Tr\big(X^*X\rho\big)}$ on $\msB(\mcH)$ which is called the $\rho$-norm. This norm is left unitarily invariant in the sense that 
\begin{equation*}
    \norm{UX}_\rho=\norm{X}_\rho
\end{equation*}
for any $X\in\msB(\mcH)$ and unitary operator $U$ on $\mcH$.

A \emph{positive operator-valued measure} (abbrev. POVM) on a Hilbert space $\mcH$ 
with finite index set $\mcI$ is a collection of positive operators $\{M_i:i\in \mcI\}$ on $\mcH$
such that $\sum_{i\in I}M_i=\Id$. A POVM $\{M_i\}_{i \in \mcI}$ is said to be a
\emph{projection-valued measure} (abbrev. PVM) if in addition, $M_i^*=M_i=M_i^2$ for all $i \in \mcI$. 
%\emph{Naimark's dilation theorem}states that for any POVM $\{M_i\}_{i \in \mcI}$ on a Hilbert space $\mcH$, there is a Hilbert space $K$, an isometry $V : \mcH \to \mcK$, and a PVM $\{P_i\}_{i \in \mcI}$ such that $M_i=V^*P_i V$.

Given a PVM $\{M_1,\ldots,M_m\}\subset\msB(\mcH)$, the corresponding \emph{observable} is the unitary operator $A\in\msB(\mcH)$ of order $m$ defined by
\begin{equation*}
    A:=\sum_{k=1}^m\exp\left(\frac{2\pi \sqrt{-1}}{m}k\right)M_k.
\end{equation*}
If $m=2$, then $A=M_1-M_2$ is hermitian and we call it a \emph{binary observable}.

\subsection{Nonlocal games and correlations}

Let $X$, $Y$, $A$, and $B$ be finite sets. A \emph{correlation} $p$ is a collection of conditional probabilities $\{p(a,b|x,y)\} \in \R_{\geq 0}^{A \times B \times X \times Y}$ such that 
\begin{equation*}
    \sum_{(a,b) \in A \times B} p(a,b|x,y) = 1
\end{equation*}
for all $x \in X$, $y \in Y$.  A \emph{quantum model} $\models$ consists of 
\begin{enumerate}
    \item finite-dimensional Hilbert spaces $\mcH_A$ and $\mcH_B$,
    \item POVMs $\{M^x_a\}_a\in A,x\in X $ on $\mcH_A$ and POVMs $\{N^y_b\}_b\in B,y\in Y$ on $\mcH_B$, and 
    \item a unit vector $\ket{\psi}\in\mcH_A\otimes\mcH_B$.
\end{enumerate}
If $\{M^x_a:a\in A\},x\in X$ and $\{N^y_b,b\in B\},y\in Y$ are
PVMs then we say $S$ is a \emph{projective}
quantum model. A
quantum model is \emph{full-rank} if $\dim \mcH_A = \dim \mcH_B$, and the Schmidt
rank of $\ket{\psi}$ is $\dim \mcH_A$. We say $\models$ is a model for a correlation $p$ or $S$ achieves $p$ if 
\begin{equation*}
    p(a,b|x,y) = \braket{\psi| M^x_a \otimes N^y_b|\psi}
\end{equation*}
for all $(a,b,x,y) \in A \times B \times X \times Y$. We often denote by $p_S$ the correlation achieved by $S$. 

We use $C_{q}(X,Y,A,B)$ to denote the set of correlations in $\R_{\geq 0}^{A
\times B \times X \times Y}$ that can be achieved by quantum models. We use $C_q$ to denote the union of $C_q(X,Y,A,B)$ over all finite sets $X,Y,A,B$. The set
$C_q$ is not closed \cite{Slof19b}, and the closure of $C_q$ is denoted by $C_{qa}$.

In addition to the above models that exhibit a tensor product structure, Algebraic Quantum Field Theory suggests a commuting-operator framework for correlations. A \emph{commuting operator model} $S=\big(\mcH,\{M^x_a\},\{N^y_b\}\big)$ for a correlation $p$ consists of
\begin{enumerate}[]
    \item a Hilbert space $\mcH$, 
    \item POVMs $\{M^x_a : a \in A\},x \in X$ and $\{N^y_b : b \in B \},y \in Y$ on $\mcH$ such that
    \begin{equation*}
        M^x_a N^y_b = N^y_b M^x_a
    \end{equation*}
     for all $(a,b,x,y) \in A \times B \times X \times Y$, and
    \item a unit vector $\ket{\psi} \in \mcH$ 
\end{enumerate}
such that
\begin{equation*}
    p(a,b|x,y) = \braket{\psi| M^x_a \cdot N^y_b|\psi}
\end{equation*}
for all $(a,b,x,y) \in A \times B \times X \times Y$. We use $C_{qc}$ to denote the set of
correlations that can be achieved by commuting operator models. The set $C_{qc}$ is
closed and convex.

We refer to the correlations in $C_q,C_{qa}$, and $C_{qc}$ as quantum correlations, quantum approximate correlations, and quantum commuting correlations respectively. 

Operationally, one can think that correlations arise from an interactive game: a referee samples a question pair $(x,y)$ according to a distribution $\mu$ on $X\times Y$, sends $x$ to the player Alice, and sends $y$ to the player Bob; Alice and Bob then return $a\in A$ and $b\in B$ respectively. Their behavior is captured by a correlation $p\in\R^{A\times B\times X\times Y}_{\geq 0}$ where $p(a,b|x,y)$ indicates the probability that Alice and Bob return $a$ and $b$ upon receiving $x$ and $y$. Based on a predicate $V:A\times B\times X\times Y \arr\{0,1\}$, the referee then determines whether the players win ($V(a,b|x,y)=1$) or lose ($V(a,b|x,y)=0$). We call the tuple $\mcG:=(X,Y,A,B,\mu, V)$ a \emph{nonlocal game}. Alice and Bob know the rules of $\mcG$ and can strategize together, but they are not allowed to communicate once the game begins. 

Given a nonlocal game $\mcG=(X,Y,A,B,\mu,V)$ and a correlation $p\in \R^{A\times B\times X\times Y}_{\geq 0}$, the winning probability of $p$ for $\mcG$ is
\begin{equation*}
    w(\mcG;p):=\sum_{a,b,x,y}\mu(x,y)V(a,b|x,y)p(a,b|x,y).
\end{equation*}
In the context of nonlocal games, we refer to models as strategies. Given a strategy $S$, we denote by $w(\mcG;S)$ (or simply $w(S)$ if $\mcG$ is clear from the context) the winning probability of $S$ for $\mcG$. That is, $w(\mcG;S)=w(\mcG;p)$ where $p$ is the correlation achieved by $S$. The \emph{quantum value} $w_q(\mcG)$ of $\mcG$ is defined to be the supremum of $w(\mcG;S)$ over all quantum strategies $S$ for $\mcG$. Since $C_{qa}$ is the closure of $C_q$,
\begin{align*}
    w_q(\mcG)=\sup\{w(\mcG;p):p\in C_q\}=\sup\{w(\mcG;p):p\in C_{qa}\}.
\end{align*}
In this paper, a correlation $p$ (resp. a strategy $S$) is said to be \emph{optimal}\footnote{If $w_q(\mcG)=1$, we often replace ``optimal" with ``perfect".} for $\mcG$ if $w(\mcG;p)=w_q(\mcG)$\footnote{One can also define the commuting-operator value $w_{co}(\mcG):=\sup \{w(\mcG;p):p\in C_{qc}\}$. In this paper, being optimal always means achieving the quantum value $w_q(\mcG)$, even when working with $p\in C_{qc}$ or infinite-dimensional commuting-operator strategies.} (resp. $w(\mcG;S)=w_q(\mcG)$).

%\subsection{Algebras for nonlocal correlations}
Quantum models for correlations can also be expressed as states on
$C^*$-algebras~\cite{PSZZ23}. Given finite sets $X$ and $A$, the \emph{POVM algebra}
$\POVM^{X,A}$ is the universal $C^*$-algebra generated by positive
contractions $e^x_a$, $x \in X$, $a \in A$, subject to the relations $\sum_{a
\in A} e^x_a = 1$ for all $x \in X$. By the universal property, a collection of operators $\{M^x_a,a\in A\},x\in X$ on a Hilbert space $\mcH$ are POVMs if and only if there is a representation $\pi : \POVM^{X,A} \to \msB(\mcH)$ sending $e^x_a\mapsto M^x_a$. When working with bipartite system $\POVM^{X,A}\otimes \POVM^{Y,B}$, we let $m^{x}_a := e^x_a \in \POVM^{X,A}$ and
$n^{y}_b := e^{y}_b \in \POVM^{Y,B}$. 
%We can also think of these generators as
%being inside the algebraic tensor product $\POVM^{X,A} \otimes_{alg} \POVM^{Y,B}$,
%where the product $m^{x}_a \cdot n^y_b := m^x_a \otimes n^y_b$. 

Given a quantum model $\models$, let $\pi_A:\POVM^{X,A}\arr\msB(\mcH_A)$ be the representation sending $m^x_a\mapsto M^x_a$ and let $\pi_B:\POVM^{Y,B}\arr\msB(\mcH_B)$ be the representation sending $n^y_b\mapsto N^y_b$, we refer to $\pi_A\otimes\pi_B$ as the \emph{associated representation of $S$}. The abstract state $f_S$ on $\POVM^{X,A} \otimes_{min} \POVM^{Y,B}$ defined by $f_S(x): = \braket{\psi|(\pi_A \otimes \pi_B)(x)|\psi}$ is finite-dimensional and achieves $p_S$ in the sense that 
\begin{equation}\label{eqn:state_to_correlation}
     f_S(m^x_a \otimes n^y_b)=\braket{\psi|\pi_A(m_a^x)\otimes \pi_B(n_b^y)|\psi}=p_S(a,b|x,y).
\end{equation}
We refer to $f_S$ as the \emph{abstract state defined by $S$}. Conversely, any finite-dimensional state on $\POVM^{X,A} \otimes_{min} \POVM^{Y,B}$ yields a quantum model $S$ such that $f = f_S$ \cite{SW08}. In other words, $C_q(X,Y,A,B)$ consists of correlations that can be achieved by finite-dimensional states on $\POVM^{X,A} \otimes_{min} \POVM^{Y,B}$~ \cite{Fri12,JNPPSW11}.  

We also work with PVM algebras $\PVM^{X,A}$, which is the quotient of $\POVM^{X,A}$ by the relations $(e^x_a)^2 = e^x_a$ for all $x \in X$, $a \in A$. Note that the resulting algebra $\PVM^{X,A}$ is isomorphic to the group $C^*$-algebra $C^* (\Z_{\abs{A}}^{*\abs{X}})$. We often work with unitary generators $\{a_x:x\in X\}$ and $\{b_y:y\in Y\}$ for $\PVM^{X,A}\otimes_{min}\PVM^{Y,B}$ where $a_x$'s and $b_y$'s are unitary elements with order $\abs{A}$ and $\abs{B}$ respectively.

\section{An operator-algebraic formulation of robust self-testing}\label{sec:operatorselftest}

In this section, we state and prove an operator-algebraic characterization of robust self-testing. To define a robust self-test, we need the following notion of local $\eps$-dilation.
\begin{definition}\label{def:roubstlocaldilation}
Given $\eps\geq 0$ and two quantum models 
    \begin{align*}
        &\models \text{ and } \\ 
        &\wtdmodels,
    \end{align*}
 we say $\wtd{S}$ is a local $\epsilon$-dilation of $S$, denoted $S\succeq_\epsilon\wtd{S}$, if there are isometries $I_A:\mcH_A\arr \wtd{\mcH}_A\otimes \mcH_A^{aux}$ and $I_B:\mcH_B\arr \wtd{\mcH}_B\otimes \mcH_B^{aux}$, and vector state $\ket{aux}\in \mcH_A^{aux}\otimes \mcH_B^{aux}$ such that 
    \begin{align}
     I_A\otimes I_B\big(M^x_a\otimes \Id\ket{\psi}  \big)&\approx_\epsilon \big(\wtd{M}^x_a\otimes\Id\ket{\wtd{\psi}}\big)\otimes\ket{aux}\label{eq:dilationA}\\
      I_A\otimes I_B\big(\Id\otimes N^y_b\ket{\psi}  \big)&\approx_\epsilon \big(\Id\otimes\wtd{N}^y_b\ket{\wtd{\psi}}\big)\otimes\ket{aux},\text{ and}\label{eq:dilationB}\\
        I_A\otimes I_B\ket{\psi}&\approx_{\epsilon} \ket{\wtd{\psi}}\otimes\ket{aux}.\label{eq:dilationstate} 
    \end{align}
    for all $(a,b,x,y)\in A\times B\times X\times Y$. We refer to a local $0$-dilation as a local dilation and write $S\succeq \wtd{S}$.
\end{definition}

\begin{remark}
    Note that, in the above definition, \Cref{eq:dilationA,eq:dilationB,eq:dilationstate} imply 
\begin{align}
I_A M^x_a I_A^*\otimes\Id\ket{\wtd{\psi},aux}&\approx_{2\epsilon}\big(\wtd{M}^x_a\otimes\Id\ket{\wtd{\psi}}\big)\otimes\ket{aux} \label{eq:a},\\
\Id\otimes I_B N^y_b I_B^*\ket{\wtd{\psi},aux}&\approx_{2\epsilon}\big(\Id\otimes\wtd{N}^y_b\ket{\wtd{\psi}}\big)\otimes\ket{aux} , \text{ and }\label{eq:b}\\
 I_A\otimes I_B\big(M^x_a\otimes N^y_b\ket{\psi}  \big)&\approx_{3\epsilon} \big(\wtd{M}^x_a\otimes\wtd{N}^y_b\ket{\wtd{\psi}}\big)\otimes\ket{aux}\label{eq:state}.
\end{align}
We often use \Cref{eq:state} (with $3\epsilon$ replaced by $\epsilon$) as the definition of $S\succeq_{\epsilon}\wtd{S}$. 
\end{remark}    

\begin{definition}\label{def:robustcorr}
    Let $\mcC$ be a class of quantum models. We say a quantum correlation $p\in C_q$ is a \textbf{robust self-test} for $\mcC$ if there is an ideal model $\wtd{S}\in\mcC$ such that the following holds. For any $\delta\geq0$, there is an $\epsilon\geq0$ such that $S\succeq_{\delta} \wtd{S}$ for any model $S\in \mcC$ satisfying $\norm{p_S-p}_1\leq \epsilon$.
\end{definition}

Here the $1$-norm of a vector $v\in\R^{A\times B\times X\times Y}$ is given by $\norm{v}_1:=\sum_{a,b,x,y}\abs{v(a,b|x,y)}$. It is easy to see that $S\succeq \wtd{S}$ implies $p_S=p$. Taking $\delta=0$, we can define an (exact) self-test: $p$ is a self-test for $\mcC$ if there is an ideal model $\wtd{S}\in\mcC$ such that $S\succeq \wtd{S}$ for any model $S\in\mcC$ for $p$. It is clear that every robust self-test is a self-test.  The $\delta$-$\eps$ dependence is called the robustness of a self-test and is a crucial concept in the applications of self-testing. If $\delta$ is a function of $\eps$, then we say $\delta(\eps)$ is the \emph{robustness function} of this self-test. For correlations, whether every self-test is robust is still an open problem. We also discuss robust self-testing for nonlocal games. In this context, we often refer to models as \emph{strategies}.

\begin{definition}\label{def:gameselftest}
    Let $\mcC$ be a class of quantum models. A nonlocal game $\mcG$ is a self-test for its optimal quantum strategies in $\mcC$ if there is an optimal quantum strategy $\wtd{S}\in \mcC$ for $\mcG$ such that $S\succeq \wtd{S}$ for any optimal quantum strategy $S\in\mcC$ for $\mcG$. 
\end{definition}

In \Cref{def:gameselftest}, $\wtd{S}$ is referred to as an \emph{ideal optimal quantum strategy for $\mcG$}. When $\mcC$ is the class of all quantum models, we simply say that $\mcG$ is a self-test for its optimal quantum strategies; when $\mcC$ is the class of all projective quantum models, we say that $\mcG$ is a self-test for its projective optimal quantum strategies. We also define robust self-testing for nonlocal games. We say a strategy $S$ is $\eps$-optimal for a nonlocal game $\mcG$ if $w(\mcG;S)\geq w_q(\mcG)-\eps$.

\begin{definition}
Let $\mcC$ be a class of quantum models. A nonlocal game $\mcG$ is a robust self-test for its optimal strategies in $\mcC$ if there is an ideal optimal quantum strategy $\wtd{S}\in\mcC$ for $\mcG$ such that the following statement holds. For any $\delta\geq 0$, there is an $\eps\geq 0$ such that $S\succeq_\delta\wtd{S}$ for any $\eps$-optimal strategy $S\in\mcC$ for $\mcG$.
\end{definition}

As we mentioned in the introduction, for nonlocal games, not every self-test is robust~\cite{MS23}.  We discuss robust self-testing for nonlocal games in more detail in \Cref{sec:gameselftest}.

One of the main purposes of this paper is to give an operator-algebraic formulation of robust self-testing in terms of states on $C^*$-algebras. We recall the following definition from \cite{PSZZ23}.

\begin{definition}[Definition 3.3 in \cite{PSZZ23}]\label{def:abstract_self-test}
    Let $\mcS$ be a subset of states on $\POVM^{X,A} \otimes_{min} \POVM^{Y,B}$. A correlation $p$ is an \textbf{abstract state self-test for $\mcS$} if there exists a unique state $f \in \mcS$ achieving $p$.
\end{definition}

In \cite{PSZZ23}, they show that when $\mcC$ is the class of all quantum models (or some other classes that have certain ``nice" properties) and $p$ is an extreme point in $C_q$, self-test in the standard sense is equivalent to abstract state self-test for finite-dimensional states.

\begin{theorem}[Part (a) of Corollary 3.6 in \cite{PSZZ23}] \label{thm:old}
Suppose $p\in C_q(X,Y,A,B)$ is an extreme point in $C_q$. Then $p$ is a self-test for the class of all quantum models if and only if $p$ is an abstract state self-test for finite-dimensional states on $\mintensor$.
\end{theorem}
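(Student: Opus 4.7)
The plan is to translate between quantum models and finite-dimensional states on $\mintensor$ and show that ``self-test'' for quantum models matches ``unique finite-dimensional abstract state'' on the algebra side.

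First I would set up the correspondence explicitly. On the one hand, a quantum model $S$ defines a finite-dimensional state $f_S$ on $\mintensor$ via $f_S(u)=\bra{\psi}(\pi_A\otimes\pi_B)(u)\ket{\psi}$, as in \eqref{eqn:state_to_correlation}. On the other hand, by \cite{SW08}, every finite-dimensional state $f$ on $\mintensor$ arises from a quantum model: taking the GNS triple $(\mcH_0,\pi_0,\ket{v_0})$ and applying the structure theorem for finite-dimensional $C^*$-algebras to the commuting sub-algebras $\pi_0(\POVM^{X,A}\otimes 1)$ and $\pi_0(1\otimes\POVM^{Y,B})$, one splits $\mcH_0$ (up to auxiliary factors) into a tensor product $\wtd{\mcH}_A\otimes\wtd{\mcH}_B$ and assembles a pure bipartite model $S$ with $f_S=f$.

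For the $(\Leftarrow)$ direction, assume $f$ is the unique finite-dimensional state achieving $p$, and let $\wtd{S}$ be the model built from $f$. For any quantum model $S$ with $p_S=p$, the state $f_S$ is a finite-dimensional state achieving $p$, so $f_S=f=f_{\wtd{S}}$. Uniqueness of GNS representations up to unitary equivalence yields a unitary intertwining the GNS of $S$ with that of $\wtd{S}$ and sending $\ket{\psi}$ to $\ket{\wtd{\psi}}\otimes\ket{aux}$. Extremality of $p$ in $C_q$ is used here to guarantee that $\wtd{S}$ is a ``minimal'' model, so that the intertwining unitary respects the Alice--Bob tensor decomposition (up to auxiliary factors) and decomposes as $I_A\otimes I_B$, giving the local dilation $S\succeq\wtd{S}$.

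For the $(\Rightarrow)$ direction, assume $p$ is a self-test with ideal $\wtd{S}$. Any finite-dimensional state $f$ for $p$ corresponds to a quantum model $S$ with $f_S=f$, which by the self-test property satisfies $S\succeq\wtd{S}$. Combining the dilation isometries $I_A,I_B$ with extremality of $p$ in $C_q$ (which rules out genuine mixtures of inequivalent minimal states achieving $p$), one shows that the Alice and Bob sub-algebras generated by $S$ coincide with those of $\wtd{S}$ up to auxiliary factors, so $f=f_S=f_{\wtd{S}}$. Hence $f_{\wtd{S}}$ is the unique finite-dimensional state achieving $p$.

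The main obstacle is the bipartite-model-from-state construction together with the careful deployment of extremality. The image $\pi_0(\POVM^{X,A}\otimes 1)$ in the GNS is a finite-dimensional $C^*$-algebra of the form $\bigoplus_i M_{n_i}$, its commutant inside $\msB(\mcH_0)$ has the shape $\bigoplus_i M_{m_i}$ and contains $\pi_0(1\otimes\POVM^{Y,B})$, and $\mcH_0\cong\bigoplus_i \C^{n_i}\otimes\C^{m_i}$; extracting from this data a pure bipartite model with the right minimality properties, and verifying that extremality of $p$ in $C_q$ truly aligns these structures across all models of $p$, is the technical heart of the argument.
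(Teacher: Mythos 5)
The paper cites this result from \cite{PSZZ23} without reproducing a proof, so there is no in-text argument to compare against; I will assess your sketch on its own terms. Your high-level plan — pass from a quantum model $S$ to the abstract state $f_S$, invoke uniqueness of the GNS representation, and argue both directions through this correspondence — is the correct skeleton, and it is the approach taken in \cite{PSZZ23}. However, as written the proposal has genuine gaps at exactly the points you flag as the ``technical heart,'' and these are not trivial to close.

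In the $(\Leftarrow)$ direction, the claim that ``uniqueness of GNS representations\dots yields a unitary\dots sending $\ket{\psi}$ to $\ket{\wtd{\psi}}\otimes\ket{aux}$'' and then that this unitary ``decomposes as $I_A\otimes I_B$'' is where the argument is missing. The GNS representation of $f_S$ lives only on the cyclic subspace $\overline{(\pi_A\otimes\pi_B)(\POVM^{X,A}\otimes\POVM^{Y,B})\ket{\psi}}\subseteq\mcH_A\otimes\mcH_B$, which can be a proper subspace of $\mcH_A\otimes\mcH_B$; a local dilation requires isometries $I_A:\mcH_A\to\wtd{\mcH}_A\otimes\mcH_A^{aux}$ and $I_B:\mcH_B\to\wtd{\mcH}_B\otimes\mcH_B^{aux}$ defined on the \emph{full} local Hilbert spaces, and factored through the tensor product rather than acting on the joint cyclic subspace. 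Turning the cyclic-subspace intertwiner into such a local dilation is precisely where the analysis of support projections and centrally supported models (roughly Propositions 4.4--4.8 of \cite{PSZZ23}) is needed, together with the fact — which itself must be proved from extremality of $p$ and uniqueness of $f$ — that the GNS representation of $f$ tensor-factors as $\wtd{\pi}_A\otimes\wtd{\pi}_B$ with both factors \emph{irreducible} and $\ket{\wtd{\psi}}$ of full Schmidt rank. Saying that extremality makes $\wtd{S}$ ``minimal'' is a description of the target, not an argument.

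In the $(\Rightarrow)$ direction, the step from $S\succeq\wtd{S}$ to $f_S = f_{\wtd{S}}$ (and hence $f=f_{\wtd{S}}$) is asserted but not justified. A local dilation gives equality of $\braket{\psi|M^x_a\otimes N^y_b|\psi}$ with $\braket{\wtd{\psi}|\wtd{M}^x_a\otimes\wtd{N}^y_b|\wtd{\psi}}$, but $f_S$ is determined by values on arbitrary monomials $m^{x_1}_{a_1}\cdots m^{x_k}_{a_k}\otimes n^{y_1}_{b_1}\cdots n^{y_\ell}_{b_\ell}$, and pushing all of Alice's operators through the dilation (as done in the present paper's \Cref{prop:diff}) requires the ideal model $\wtd{S}$ to be \emph{centrally supported}, so that $\wtd{M}^x_a\otimes\Id\ket{\wtd{\psi}}$ can be traded for $\Id\otimes\wtd{F}\ket{\wtd{\psi}}$. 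The phrase ``one shows that the Alice and Bob sub-algebras generated by $S$ coincide with those of $\wtd{S}$'' is also not quite the right statement: what is needed (and what extremality plus the local dilation actually deliver) is equality of the \emph{abstract states} $f_S$ and $f_{\wtd{S}}$, not a coincidence of operator algebras. Both of these omissions are the substance of the theorem, not peripheral details.
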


The first main theorem of this paper is to show that \emph{robust self-test for quantum models} corresponds to \emph{abstract state self-test for all states}.

\begin{theorem}\label{thm:robustuniquestate}
    If $p\in C_q(X,Y,A,B)$ is a robust self-test for all quantum models, then $p$ is an abstract state self-test for all states on $\POVM^{X,A}\otimes_{\min} \POVM^{Y,B}$.
\end{theorem}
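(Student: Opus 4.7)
The plan is to combine the exact-case result (Theorem~\ref{thm}) with a finite-dimensional approximation argument and the robustness hypothesis. Since any robust self-test is in particular an exact self-test, Theorem~\ref{thm} supplies a unique finite-dimensional state $f$ on $\mintensor$ achieving $p$; its GNS triple yields the ideal quantum model $\wtd{S}$. What must still be shown is that no other (necessarily infinite-dimensional) state on $\mintensor$ also achieves $p$.

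Suppose for contradiction that $g \neq f$ is another state on $\mintensor$ achieving $p$. I would construct a sequence of finite-dimensional quantum models $\{S_n\}$ with $p_{S_n} \to p$ and with $f_{S_n}(T) \to g(T)$ for every $T$ in the algebraic tensor product generated by $\{m^x_a \otimes n^y_b\}$. Starting from the GNS triple $(\mcH, \pi, \ket{v})$ of $g$, the min-tensor-product structure splits $\pi$ into a pair of commuting representations $\pi_A$ of $\POVM^{X,A}$ and $\pi_B$ of $\POVM^{Y,B}$ on the common space $\mcH$. I would choose an increasing exhaustion $\mcK_1 \subset \mcK_2 \subset \cdots$ of $\mcH$ by finite-dimensional subspaces each containing $\ket{v}$, compress $\pi_A,\pi_B$ by the orthogonal projection $P_n$ onto $\mcK_n$ (obtaining unital completely positive maps of each POVM algebra into $\msB(\mcK_n)$), and apply Stinespring dilation factorwise to obtain genuine finite-dimensional representations. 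Bundling these with suitably normalized vector states gives finite-dimensional tensor-product quantum models $S_n$ whose correlations converge to $p$. By the robustness hypothesis, for some $\delta_n \to 0$ each $S_n$ is a $\delta_n$-local dilation of $\wtd{S}$, which forces $f_{S_n}(T) \to f_{\wtd{S}}(T) = f(T)$ for every $T$ in the unit ball of $\mintensor$. Combined with $f_{S_n}(T) \to g(T)$ on the generating algebra, this gives $g = f$, contradicting $g \neq f$.

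The main obstacle is the approximation step. In general, states on a min tensor product of two $C^*$-algebras need not be weak*-approximable by finite-dimensional states; a uniform such property would amount to an RFD condition on the min tensor product closely tied to the Connes embedding problem, which has been refuted via $\mathrm{MIP}^* = \mathrm{RE}$. The crucial saving point is that the POVM algebras themselves are residually finite-dimensional---every finite POVM is realized on a matrix algebra---so the compression-Stinespring procedure can be carried out on each factor separately, and the min-tensor structure assembles the factorwise dilations into an honest tensor-product model. It is also important that $p$ sits in $C_q$, which keeps the approximating correlations in the regime where robustness applies. Finally, upgrading the agreement between $g$ and $f$ from the generators $m^x_a \otimes n^y_b$ (where it is automatic from $p_{S_n} \to p$) to \emph{all} of $\mintensor$ is precisely what the quantitative nature of robust self-testing gives us: the local-dilation bounds in \Cref{def:roubstlocaldilation} upgrade correlation closeness to full state closeness on the entire algebra.
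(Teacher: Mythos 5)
Your overall strategy matches the paper's: get the unique finite-dimensional state $f$ from the exact self-test, assume for contradiction an infinite-dimensional state $g$ achieves $p$, approximate $g$ by finite-dimensional models $S_n$, and use robustness plus the ideal model to force $f_{S_n}\to f$, contradicting $f_{S_n}\to g$. That is precisely the skeleton of the paper's argument (via \Cref{prop:dense} and \Cref{prop:diff}).

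There is, however, a genuine gap in your approximation step. Starting from the GNS triple $(\mcH,\pi,\ket v)$ of $g$, you indeed obtain commuting representations $\pi_A,\pi_B$ of the two POVM algebras on the \emph{same} Hilbert space $\mcH$, but this is a commuting-operator picture, not a tensor-product one: in general $\mcH$ does \emph{not} factor as $\mcH_A\otimes\mcH_B$ in a way compatible with $\pi_A$ and $\pi_B$. Compressing $\pi_A$ and $\pi_B$ by a finite-rank projection $P_n$ gives two ucp maps into $\msB(\mcK_n)$, and Stinespring-dilating each one separately produces finite-dimensional representations on two \emph{independent} spaces $\wtd{\mcK}_A$ and $\wtd{\mcK}_B$ together with isometries $V_A:\mcK_n\to\wtd{\mcK}_A$ and $V_B:\mcK_n\to\wtd{\mcK}_B$. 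There is no natural way to glue these into a single tensor-product model: there is no canonical isometry $\mcK_n\to\wtd{\mcK}_A\otimes\wtd{\mcK}_B$, and the phrase ``the min-tensor structure assembles the factorwise dilations into an honest tensor-product model'' asserts exactly the nontrivial thing that needs proof. The paper sidesteps this by \emph{not} using the GNS of $g$: since a state on $\mintensor$ is already a state on a concrete $C^*$-subalgebra of $\msB(\mcH_A\otimes\mcH_B)$ for fixed faithful factor representations, it can first be weak*-approximated by density matrices on $\mcH_A\otimes\mcH_B$ (so the ambient space is already a tensor product), and only then one compresses \emph{each factor} $\mcH_A$ and $\mcH_B$ separately to finite-dimensional subspaces. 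That two-step argument (Lemmas A.1 and A.3) produces genuine quantum models and avoids the reassembly problem entirely. A secondary issue: your claim that robustness forces $f_{S_n}(T)\to f(T)$ uniformly ``for every $T$ in the unit ball'' is too strong; the local-dilation bound degrades linearly in the monomial degree (this is \Cref{prop:diff}), so what you actually obtain is weak*-convergence, which is what the contradiction needs but not uniform convergence on the unit ball.
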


Compared to the ``only if" direction of \Cref{thm:old}, this theorem replaces ``self-test" and ``finite-dimensional states" with ``robust self-test" and ``all states".  Note that a quantum correlation $p\in C_q$ can always be achieved by \emph{some} finite-dimensional states. So if $f$ is the unique state for a quantum correlation, then $f$ must be finite-dimensional. This gives an alternative way to phrase \Cref{thm:robustuniquestate}:

\begin{theorem}[Restated from \Cref{thm:robustuniquestate}]\label{thm:robustuniquestate'}
    If  $p\in C_q(X,Y,A,B)$ is a robust self-test for all quantum models, then 
    \begin{enumerate}[(i)]
        \item there is a unique finite-dimensional state on $\POVM^{X,A}\otimes_{\min} \POVM^{Y,B}$ for $p$ and 
        \item there is no infinite-dimensional state on $\POVM^{X,A}\otimes_{\min} \POVM^{Y,B}$ that can achieve $p$.
    \end{enumerate}
\end{theorem}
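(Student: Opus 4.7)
The proof splits naturally into two parts: (i) uniqueness of the finite-dimensional state, and (ii) non-existence of any infinite-dimensional state achieving $p$. Part (i) follows quickly from the existing machinery: a robust self-test is a fortiori a self-test (set $\delta=0$), and a short direct-sum argument shows that a robust self-test correlation must be an extreme point of $C_q$ --- otherwise the sectors of a non-trivial convex decomposition would each have to locally dilate into the irreducible ideal model $\wtd{S}$, forcing both sectors to already achieve $p$. Invoking \Cref{thm:old} then supplies the unique finite-dimensional state $\wtd{f}$ achieving $p$.

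For part (ii), I would argue by contradiction. Assume some infinite-dimensional state $g$ on $\POVM^{X,A}\otimes_{\min}\POVM^{Y,B}$ also achieves $p$. Since $\wtd{f}$ and $g$ agree on all the degree-one generators $m^x_a\otimes n^y_b$ but cannot be equal (their GNS dimensions disagree), there must exist a finite monomial $w$ in the generators with $g(w)\neq \wtd{f}(w)$. The plan is then to produce a sequence of finite-dimensional quantum models $S_n$ (equivalently, finite-dimensional states $f_n=f_{S_n}$) satisfying $\norm{p_{f_n}-p}_1\to 0$ together with $f_n(w)\to g(w)$. Once such $S_n$ are in hand, robust self-testing forces $S_n\succeq_{\delta_n}\wtd{S}$ with $\delta_n\to 0$; telescoping the approximations \Cref{eq:dilationA,eq:dilationB,eq:dilationstate} through the factors of $w$ (using left-unitary invariance of the relevant seminorms and that each $\wtd{M}^x_a,\wtd{N}^y_b$ is a contraction) yields $f_n(w)\to \wtd{f}(w)$, contradicting $f_n(w)\to g(w)\neq \wtd{f}(w)$.

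The construction of $S_n$ uses the GNS representation $(\mcH_g,\pi_g,\ket{v_g})$ of $g$. Since $g$ factors through the min tensor product, $\pi_g$ restricts to commuting representations $\pi_g^A$ of $\POVM^{X,A}$ and $\pi_g^B$ of $\POVM^{Y,B}$ on $\mcH_g$. I would take an increasing chain of finite-dimensional subspaces $V_n\subset\mcH_g$ that contain $\ket{v_g}$ together with all vectors obtained by applying subwords of $w$ to $\ket{v_g}$ and exhaust $\mcH_g$ densely, then compress $\pi_g^A,\pi_g^B$ to $V_n$, restore the POVM sum-to-identity relations, and convert these nearly-commuting finite-dimensional compressions into honest tensor-product quantum models $S_n$. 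By construction, the vector-state values of $S_n$ on subwords of $w$ converge to the corresponding values of $g$, so both $\norm{p_{f_n}-p}_1\to 0$ and $f_n(w)\to g(w)$.

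The main obstacle is precisely the conversion step: compression does not preserve commutation of $\pi_g^A$ and $\pi_g^B$, and even once commutation is restored one must split a finite-dimensional commuting-operator model into a genuine tensor-product model. I expect the real work to lie here --- either by invoking the structure theory for finite-dimensional $C^*$-algebras (which guarantees that any finite-dimensional commuting-operator model admits a tensor-product dilation) and carrying careful error bounds through a Stinespring-style correction, or by bypassing the compression and instead approximating $g$ directly in the weak-$*$ topology by finite-dimensional states that also track the monomial $w$. The quantitative control of the local dilation in terms of $\norm{p_{S_n}-p}_1$ supplied by robust self-testing is exactly what allows the two modes of convergence $f_n(w)\to g(w)$ and $f_n(w)\to \wtd{f}(w)$ to collide and close the contradiction.
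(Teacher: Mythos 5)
Your high-level contradiction scheme matches the paper's: pick a witness monomial $w$ with $g(w)\neq\wtd{f}(w)$, construct a sequence of finite-dimensional models $S_n$ whose induced states approximate $g$, invoke robust self-testing to get $S_n\succeq_{\delta_n}\wtd{S}$ with $\delta_n\to 0$, and then run a degree-wise ``telescoping'' bound to force $f_{S_n}(w)\to\wtd{f}(w)$. That telescoping bound is exactly \Cref{prop:diff} (together with \Cref{lemma:diff}), so that part of your plan is sound and aligned with the paper.

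The genuine gap is in the construction of $S_n$, and it is precisely the obstacle you flag. Compressing the commuting GNS representations $\pi_g^A,\pi_g^B$ to a finite-dimensional subspace destroys commutation, so after compression you do \emph{not} have a finite-dimensional commuting-operator model, only approximately commuting operators. The ``structure theory for finite-dimensional $C^*$-algebras guarantees a tensor-product dilation'' belief applies to \emph{exactly} commuting finite-dimensional algebras; there is no clean theorem that repairs approximately commuting compressions into a nearby tensor-product model with controlled error on arbitrary monomials, and a Stinespring-style fix would itself need the very commutation you lost. Moreover you need more than recovery of the correlation: you need $f_{S_n}(w)\to g(w)$ for a degree-$\ge 2$ monomial $w$, so any repair procedure has to track higher-degree expectations, not just $m^x_a\otimes n^y_b$. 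The paper sidesteps all of this with \Cref{prop:dense}, whose proof in \Cref{AppendixA} never uses the GNS space of $g$: instead it fixes faithful representations $\POVM^{X,A}\subseteq\msB(\mcH_A)$ and $\POVM^{Y,B}\subseteq\msB(\mcH_B)$ so that $\mintensor\subseteq\msB(\mcH_A\otimes\mcH_B)$ concretely, approximates $g$ by convex combinations of vector states on $\mcH_A\otimes\mcH_B$ (\Cref{lemma:mixedstate}), observes that each such vector state \emph{already} defines a (possibly infinite-dimensional) tensor-product model, and then truncates the Schmidt decomposition to obtain finite-dimensional tensor-product models (\Cref{lemma:fdstates}). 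There is never a commuting-to-tensor conversion. Your alternative ``bypass the compression and approximate $g$ directly in the weak-$*$ topology by finite-dimensional states'' is the right instinct and is exactly this route; develop that rather than the GNS compression. Separately, your extremality argument for part (i) is not airtight as stated --- the local isometries $I_A\otimes I_B$ can mix the blocks of a direct-sum model, so ``each sector must dilate into $\wtd S$'' is not immediate --- but since the uniqueness of the finite-dimensional state follows once the density argument is in place (or directly from the cited machinery of \cite{PSZZ23}), this is a minor point compared to the construction gap.
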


We prove this theorem at the end of this section. Unlike \Cref{thm:old}, the above theorem is not an ``if-and-only-if" statement. Hence, we ask:

\begin{question}\label{question1}
Let $p\in C_q(X,Y,A,B)$ be an extreme point in $C_q$. Suppose $p$ is an abstract state self-test for all states on $\POVM^{X,A}\otimes_{\min} \POVM^{Y,B}$. Is $p$ a robust self-test for all quantum models?
\end{question}

Through \Cref{sec:gameselftest,sec:trace,sec:traceselftest,sec:tracerobustselftest}, we aim to give an affirmative answer to \Cref{question1} for a large class of quantum correlations.

\begin{theorem}\label{thm:mainrobust2}
Suppose $p\in C_q(X,Y,A,B)$ is the unique perfect quantum correlation for a synchronous game, or the unique optimal quantum correlation for an XOR game. Then $p$ is a robust self-test for all quantum models if and only if $p$ is an abstract state self-test for all states on $\POVM^{X,A}\otimes_{\min} \POVM^{Y,B}$. 
\end{theorem}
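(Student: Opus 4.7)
The forward implication is immediate from \Cref{thm:robustuniquestate}: any robust self-test for all quantum models is automatically an abstract state self-test for all states on $\mintensor$, regardless of whether $p$ arises from a synchronous or XOR game.

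For the reverse implication, the plan is to reduce to the general operator-algebraic characterization \Cref{thm:robustcorr}, which will assert that for a correlation $p$ that is optimal or perfect for a nonlocal game with a \emph{robust game algebra}, abstract state self-testing for all states is equivalent to robust self-testing for all quantum models. Since every synchronous game and every XOR game has a robust game algebra by \cite{Pad22}, the present theorem reduces to applying \Cref{thm:robustcorr} to the game whose unique perfect (resp.\ optimal) quantum correlation is $p$.

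The proof of \Cref{thm:robustcorr} itself I plan to organize as a compactness-plus-stability argument. Suppose for contradiction that $p$ is not a robust self-test for all quantum models. Then there is a sequence of quantum strategies $\{S_n\}$ with $\norm{p_{S_n}-p}_1\to 0$ but $S_n\not\succeq_{\delta}\wtd{S}$ for some fixed $\delta>0$, where $\wtd{S}$ is the ideal model. The associated states $f_{S_n}$ on $\mintensor$ form a sequence in the weak-$*$ compact state space. Any weak-$*$ accumulation point is a state achieving $p$, which by the abstract state self-test hypothesis must equal the unique state $f_{\wtd{S}}$; hence $f_{S_n}\to f_{\wtd S}$ in the weak-$*$ topology. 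The robust game algebra property then guarantees that near-optimality of $S_n$ translates into the approximate satisfaction of the defining relations of the game algebra (together with approximate traciality in the synchronous case), which is to say that each $S_n$ gives an approximate representation of the game algebra. Feeding this approximate representation into the quantitative Gowers--Hatami theorem \Cref{thm:algGH} produces an honest representation of the game algebra close to the approximate one, together with an isometry witnessing a local $\delta_n$-dilation of $S_n$ onto $\wtd S$ with $\delta_n\to 0$, contradicting the standing assumption.

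The main obstacle I anticipate is the quantitative step: passing from weak-$*$ convergence of states to a local dilation with an effective rate, rather than a merely qualitative closeness statement. This is precisely where the \emph{robust} part of the robust game algebra hypothesis becomes indispensable, since it converts ``$\eps$-optimal strategy'' into ``$O(\sqrt{\eps})$-approximate representation'' in an operator-norm sense, and where the quantitative Gowers--Hatami theorem is needed to round the approximate representation to a genuine one. Without both ingredients, the compactness argument produces only a non-constructive convergence that falls short of the quantitative bound required by \Cref{def:robustcorr}; conversely, their combination for synchronous games and XOR games (via \cite{Pad22}) is exactly what makes the reduction to \Cref{thm:robustcorr} yield the stated ``if and only if.''
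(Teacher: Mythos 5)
The forward implication and the reduction to \Cref{thm:robustcorr} via \Cref{prop:stablepair} are exactly the paper's route, so that part of your outline is on target. The problem is in your sketch of \emph{how} to prove \Cref{thm:robustcorr}: you plan to feed the approximate representation coming from a near-optimal strategy into the quantitative Gowers--Hatami theorem \Cref{thm:algGH}, but that theorem carries an additional hypothesis which the robust determining pair assumption does \emph{not} supply, namely the existence of a ucp map $\theta:C^*(\mcG)\to\PVM^{Y,B}$ together with a uniform bound $\Lambda$ on the sizes of \emph{algebraic} $\mcR$-decompositions of $\theta(b_y)-b_y$. In general there is no reason such a $\theta$ should exist: the closed ideal $\ang{\mcR}$ need not coincide with the $*$-ideal algebraically generated by $\mcR$, so an element that vanishes in $C^*(\mcG)$ may only be approximable by, rather than equal to, a finite combination $\sum_i\lambda_i u_i r_i v_i$. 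The paper introduces \Cref{thm:algGH} precisely as a \emph{stronger, separate} tool for computing explicit robustness functions (for example, for CHSH, where such a $\theta$ is written down by hand); it is not an ingredient in the proof of \Cref{thm:robustcorr}.

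The paper's actual proof of the ``$(d)\Rightarrow(a)$'' direction of \Cref{thm:robustcorr} does not need any explicit $\mcR$-decomposition. It relies on \Cref{prop:lifting}, where the ucp map $\theta$ is produced by the Choi--Effros lifting theorem (applied to the nuclear quotient $\PVM^{Y,B}/\mcI_\tau$) and satisfies only the weaker property $\theta(\alpha)-\alpha\in\mcI_\tau$, together with the compactness argument in \Cref{prop:limitamenable,prop:uniquetraceimplyrobust}: the restricted states $\phi_n$ converge weak-$*$ to the unique amenable tracial state $\tau$, so the quantities $\phi_n\big((\theta(\alpha)-\alpha)^*(\theta(\alpha)-\alpha)\big)$ go to $0$ because $\tau$ annihilates them. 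Note also that \Cref{def:robustcorr} only asks for a qualitative ``for every $\delta$ there exists $\eps$'' statement, so this non-constructive compactness route is fully adequate; you do not need a quantitative rounding step to establish \Cref{thm:robustcorr} or \Cref{thm:mainrobust2}. To repair your argument, replace the appeal to \Cref{thm:algGH} with \Cref{prop:lifting} (Choi--Effros lifting) and the weak-$*$ convergence of $\phi_n$ to $\tau$, as in \Cref{prop:uniquetraceimplyrobust}.
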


In the context of nonlocal games, the above theorem implies:
\begin{corollary}[Game version of \Cref{thm:mainrobust2}]
    Suppose $\mcG$ is a synchronous game with perfect quantum strategies or an XOR game. Then $\mcG$ is a robust self-test for its optimal quantum strategies if and only if there is a unique state on $\POVM^{X,A}\otimes_{\min} \POVM^{Y,B}$ that is optimal for $\mcG$.
\end{corollary}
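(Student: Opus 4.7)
The plan is to derive this corollary from the correlation-level \Cref{thm:mainrobust2} by translating between the game-level and correlation-level notions of (robust) self-testing. Throughout I use that for the games in question, a unique optimal state produces a unique optimal correlation, and that the robust game algebra property relates near-optimality in winning probability to algebraic approximation of optimal states.

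For the backward direction, assume there is a unique state $\wtd{f}$ on $\POVM^{X,A}\otimes_{\min}\POVM^{Y,B}$ that is optimal for $\mcG$. First I would show that the correlation $p(a,b|x,y):=\wtd{f}(m^x_a\otimes n^y_b)$ is the unique optimal quantum correlation: any optimal correlation arises as the correlation of the abstract state of some optimal quantum model, and by uniqueness this state must be $\wtd{f}$, yielding $p$. Consequently $p$ is an abstract state self-test for all states on $\POVM^{X,A}\otimes_{\min}\POVM^{Y,B}$, and \Cref{thm:mainrobust2} gives a correlation-level robust self-test with some modulus $\delta\mapsto\eps_0(\delta)$. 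Given $\delta\geq 0$, to produce the required game-level $\eps$, I would invoke the robust game algebra property: for synchronous games with perfect quantum strategies and for XOR games, any $\eps$-optimal strategy induces a state on $\POVM^{X,A}\otimes_{\min}\POVM^{Y,B}$ that is approximately optimal in the $C^*$-algebraic sense formalized by Paddock. Combining this with the uniqueness of $\wtd{f}$ via a weak-$*$ compactness argument yields an $\eps>0$ such that every $\eps$-optimal strategy $S$ satisfies $\norm{p_S-p}_1\leq\eps_0(\delta)$, and correlation-level robustness then gives $S\succeq_\delta\wtd{S}$.

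For the forward direction, assume $\mcG$ is a robust self-test for its optimal quantum strategies, with ideal strategy $\wtd{S}$ and correlation $\wtd{p}:=p_{\wtd{S}}$. Every optimal strategy $S$ satisfies $S\succeq\wtd{S}$, and the local-dilation condition at $\epsilon=0$ forces $p_S=\wtd{p}$, so $\wtd{p}$ is the unique optimal quantum correlation. I then claim that $\wtd{p}$ is itself a correlation-level robust self-test: given $\delta\geq 0$, let $\eps>0$ be the game-level robustness modulus at $\delta$; since the winning probability $w(\mcG;\,\cdot\,)$ is a linear functional of the correlation, there exists $\eps'>0$ with $\norm{p_S-\wtd{p}}_1\leq\eps'\Rightarrow w(\mcG;S)\geq w_q(\mcG)-\eps$, so any such $S$ is $\eps$-optimal and hence $S\succeq_\delta\wtd{S}$. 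Applying \Cref{thm:robustuniquestate} to $\wtd{p}$ now produces a unique state on $\POVM^{X,A}\otimes_{\min}\POVM^{Y,B}$ achieving $\wtd{p}$. Since every optimal state yields the unique optimal correlation $\wtd{p}$, it must coincide with this unique state, which is therefore the unique state optimal for $\mcG$.

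I expect the main obstacle to be the backward-direction step extracting a game-level $\eps$ from the correlation-level $\eps_0(\delta)$. The argument must either be made genuinely quantitative or carefully leverage a compactness and continuity argument: one needs the robust game algebra property to assert that $\eps$-optimality of a strategy in winning probability propagates, via the defining algebraic relations of the game algebra, into weak-$*$ closeness of the induced state to $\wtd{f}$, and hence into $1$-norm closeness of correlations. Executing this within the Paddock framework for synchronous and XOR games, without any circularity with the proof of \Cref{thm:mainrobust2}, is the only nonroutine piece; the remaining translations are essentially bookkeeping about linear winning probabilities and uniqueness of pushforwards.
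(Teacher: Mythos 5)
Your overall strategy---reduce the game-level statement to the correlation-level statement \Cref{thm:mainrobust2} by translating between "robust self-test for optimal strategies" and "robust self-test for quantum models"---is the same as the paper's intended route, which runs through \Cref{lemma:robustunique} (and its consequence \Cref{thm:robustuniquestate'game}). The translation is more delicate than you allow, and your write-up has one genuine gap.

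\textbf{Forward direction.} Starting from "$\mcG$ is a robust self-test" you correctly establish that $\wtd p:=p_{\wtd S}$ is the unique optimal $C_q$ correlation and that $\wtd p$ is a correlation-level robust self-test (your winning-probability-is-linear observation is fine). You then apply \Cref{thm:robustuniquestate} to get a unique state achieving $\wtd p$, and conclude "since every optimal state yields the unique optimal correlation $\wtd p$, it must coincide with this unique state." That last step is not justified. An optimal state on $\POVM^{X,A}\otimes_{\min}\POVM^{Y,B}$ achieves an optimal correlation in $C_{qa}$, and you have only shown uniqueness of the optimal correlation \emph{in $C_q$}. A priori there could be a second optimal state whose correlation lies in $C_{qa}\setminus C_q$. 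Ruling this out is exactly the content of the "only if" direction of \Cref{lemma:robustunique}: take such a state $f'$, use \Cref{prop:dense} to write it as a weak$^*$-limit of finite-dimensional states $f_{S_n}$; then $w(\mcG;S_n)\to w_q(\mcG)$, so game-level robustness forces $S_n\succeq_{\eta(n)}\wtd S$ with $\eta(n)\to 0$, hence $p_{S_n}\to\wtd p$, and the supposed distinct correlation is $\wtd p$ after all. This density/compactness argument is an essential ingredient and is not routine bookkeeping; without it the forward direction is incomplete. (Equivalently, you could simply cite \Cref{thm:robustuniquestate'game}, which packages exactly this.)

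\textbf{Backward direction.} Your derivation that $p$ is the unique optimal quantum correlation and hence an abstract state self-test for all states is fine; applying \Cref{thm:mainrobust2} then gives that $p$ is a correlation-level robust self-test. For the game-level step you invoke the "robust game algebra property" and the Paddock framework; this is misplaced. The game algebra machinery is used \emph{inside} the proof of \Cref{thm:mainrobust2}, but it is not what is needed to go from correlation-level to game-level robustness. That step is the "if" direction of \Cref{lemma:robustunique}, and its proof is a purely topological argument: if $\mcG$ were not a robust self-test, one extracts a sequence of models with $w(\mcG;S_n)\to w_q(\mcG)$ violating $\delta_0$-dilation; compactness of $C_{qa}$ gives an accumulation point of $(p_{S_n})$ which is optimal in $C_{qa}$, hence equals $p$ by the uniqueness you have already established among \emph{all} optimal states; correlation-level robustness of $p$ then contradicts the choice. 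Your mention of "weak-$*$ compactness" is the right instinct, but you are reaching for the wrong hypothesis (uniqueness of the optimal state, not the game algebra being robust) and you should use it on $C_{qa}$ rather than on states near $\wtd f$. The circularity concern you raise at the end is therefore moot: the game-to-correlation dictionary does not touch the game algebra at all.
\end{document}
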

Here a state $f$ on $\POVM^{X,A}\otimes_{\min} \POVM^{Y,B}$ is said to be optimal for $\mcG$ if the correlation achieved by $f$ is optimal for $\mcG$.

Our proof approach for \Cref{thm:mainrobust2} is based on algebraic characterizations of synchronous games and  XOR games established in \cite{Slof11,HMPS19,KPS18}. For any synchronous game (resp. XOR game) $\mcG$, one can associate a finitely-presented algebra $C^*(\mcG)$ such that perfect (resp. optimal) strategies for $\mcG$ correspond to tracial states on $C^*(\mcG)$. Furthermore, as shown in \cite{Pad22}, any near-perfect (resp. near-optimal) strategy $S$ for $\mcG$ defines an approximate state $\phi$ on $C^*(\mcG)$ which is also approximately tracial. Finally, using a lifting theorem for $C^*(\mcG)$, we lift $\phi$ to a ucp map $\theta$ on $C^*(\mcG)$ and construct the desired local dilation $I_A\otimes I_B$ from the Stinespring dilation of $\theta$. This last step is inspired by \cite{MPS21}.

The proof approach outlined above applies to any nonlocal game that exhibits structures similar to synchronous or XOR games. In particular, \Cref{thm:mainrobust2} also holds if $p$ is the unique perfect quantum correlation for a boolean constraint system (BCS) game. Indeed, we'll prove a general result (\Cref{thm:robustcorr}). Given a nonlocal game $\mcG$, we say an algebra $C^*(\mcG)$ is the associated game algebra for $\mcG$ if every optimal strategy for $\mcG$ corresponds to a tracial state on $C^*(\mcG)$. If in addition,
every near-optimal strategy for $\mcG$ corresponds to an approximate state on $C^*(\mcG)$ which is also approximately tracial, then we say $C^*(\mcG)$ is \emph{robust}.
In \Cref{sec:tracerobustselftest}, we show that if a game $\mcG$ has a robust game algebra $C^*(\mcG)$, then its self-testing (resp. robust self-testing) property can be characterized by finite-dimensional (resp. amenable) tracial states on $C^*(\mcG)$. These general results provide convenient ways to examine whether a nonlocal game is a (robust) self-test by studying the tracial states on the associated game algebras.

\begin{theorem}
[Restated from part of \Cref{thm:gametrace,cor:uniquetrace}, informal]\label{cor:mainrobust2} Suppose a nonlocal game $\mcG$ has an associated game algebra $C^*(\mcG)$. 
\begin{enumerate}[(a)]
    \item $\mcG$ is a self-test for its optimal quantum strategies if and only if $C^*(\mcG)$ has a unique finite-dimensional tracial state.
    \item Suppose in addition, $C^*(\mcG)$ is robust. If $C^*(\mcG)$ has a unique tracial state $\tau$ and $\tau$ is finite-dimensional, then $\mcG$ is a robust self-test for its optimal quantum strategies.
\end{enumerate}  
\end{theorem}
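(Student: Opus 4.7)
The plan is to treat both parts through the correspondence between optimal quantum strategies for $\mcG$ and finite-dimensional tracial states on $C^*(\mcG)$. Given such a $\tau$, its GNS representation is a finite-dimensional representation of $C^*(\mcG)$ which, via the structure theorem for finite-dimensional $C^*$-algebras and the bipartite generation of the POVM algebra, produces a bipartite quantum strategy $S_\tau$ that is optimal for $\mcG$; conversely, any optimal quantum strategy $S$ defines a state on $C^*(\mcG)$ which, by the game-algebra hypothesis, is tracial and finite-dimensional. A key structural lemma to record first is that local dilation preserves the induced tracial state on $C^*(\mcG)$: if $S \succeq \wtd{S}$, then by \Cref{eq:state} the correlations agree, so the tracial states on $C^*(\mcG)$ agree as well. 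In the converse direction, two finite-dimensional tracial states that coincide yield quantum strategies that are local dilations of one another, because their GNS data differ only by multiplicities and a unitary change of basis on each tensor factor — which is exactly the freedom absorbed by the auxiliary isometries in \Cref{def:roubstlocaldilation}.

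For part (a), the $(\Rightarrow)$ direction follows immediately from this lemma: any two finite-dimensional tracial states $\tau_1,\tau_2$ give optimal quantum strategies $S_1,S_2$, both of which must satisfy $S_i \succeq \wtd{S}$, so $\tau_1 = \tau_{\wtd{S}} = \tau_2$. For $(\Leftarrow)$, uniqueness of the finite-dimensional tracial state means every optimal quantum strategy $S$ induces the same state, and the converse half of the structural lemma then gives $S \succeq S_\tau$, so $S_\tau$ serves as the ideal strategy witnessing the self-test.

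For part (b), fix the ideal strategy $\wtd{S} = S_\tau$ coming from the unique tracial state. Given $\delta > 0$, the task is to produce $\eps > 0$ so that every $\eps$-optimal quantum strategy $S$ satisfies $S \succeq_\delta \wtd{S}$. By robustness of $C^*(\mcG)$, an $\eps$-optimal strategy induces a state $\phi_S$ on $\POVM^{Y,B}$ which is a $g(\eps)$-approximate state on $C^*(\mcG)$ and is $g(\eps)$-approximately tracial, for some $g(\eps) \to 0$ as $\eps \to 0$. Since $\tau$ is finite-dimensional it is amenable, and uniqueness of the tracial state plus a weak-$*$ compactness argument forces $\phi_S$ to be close to $\tau$. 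Amenability lets us lift $\phi_S$ to a ucp map $\theta \colon C^*(\mcG) \to \msB(\mcH_B)$ close to $\pi_\tau$; its Stinespring dilation then gives an isometry $I_B$, with an analogous construction on Alice's side producing $I_A$ and an auxiliary vector $\ket{aux}$ satisfying \Cref{eq:dilationA,eq:dilationB,eq:dilationstate} with error bounded by a function of $\eps$ tending to $0$.

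The main obstacle is the quantitative step in part (b): converting the qualitative statement ``$\phi_S$ is approximately tracial and is close to the unique $\tau$'' into explicit isometries $I_A,I_B$ witnessing a local $\delta$-dilation, with $\delta$ controlled by $\eps$. Mere weak-$*$ compactness yields convergence along subsequences but no quantitative rate; what is needed is a norm-quantitative lift of approximate states on $C^*(\mcG)$ to ucp maps that are close to the ideal representation, which is precisely the role of the quantitative algebraic Gowers–Hatami theorem of \Cref{sec:GH}. Amenability of $\tau$ is the essential hypothesis making such a lift possible; without it, uniqueness of the tracial state alone does not close the quantitative gap between approximate and exact representations.
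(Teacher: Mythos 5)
Your proposal identifies the right macro-structure — a correspondence between optimal quantum strategies and finite-dimensional tracial states, and an approximation argument via lifting for the robust direction — but it has gaps in both parts and mischaracterizes the role of two of the paper's key tools.

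For part (a), the step that ``if $S \succeq \wtd S$ then the correlations agree, so the induced tracial states agree'' is not right as stated. The condition $S \succeq \wtd S$ yields $p_S = p_{\wtd S}$ immediately, but a correlation does not determine the induced tracial state $f_S|_{1\otimes \PVM^{Y,B}}$: two strategies achieving the same $p$ can induce distinct abstract states and hence distinct tracial states. What you actually need is the stronger consequence $f_S = f_{\wtd S}$, which comes from the inductive argument of \Cref{prop:diff} on all monomials, not from \Cref{eq:state} alone. The deeper problem is your ``converse half'': asserting that equal induced tracial states give local dilations ``because their GNS data differ only by multiplicities and a change of basis'' papers over the actual content, which is the equivalence between abstract-state self-testing and standard self-testing for \emph{extreme} correlations from \cite{PSZZ23}. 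The paper has to establish (via \Cref{lemma:uniqueCq}, \Cref{lemma:uniqueextreme}, \Cref{lemma:gameabstractstate}) that self-testing forces a unique optimal $p$ and that this $p$ is an extreme point of $C_q$, then feed this into \Cref{thm:old} and the full-rank projective model existence of \Cref{lemma:fullrankprojective}, and only then can it apply the bijection of \Cref{prop:extendtracegame} between optimal finite-dimensional states and finite-dimensional tracial states on $C^*(\mcG)$. Extremality of $p$ and the full-rank reduction never appear in your sketch, yet without them the GNS-multiplicity argument you allude to is not available.

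For part (b), two mischaracterizations. First, amenability is not what powers the lift. The ucp map $\theta \colon C^*(\mcG) \to \PVM^{Y,B}$ in \Cref{prop:lifting} is a Choi--Effros lift of the quotient map, and its hypothesis is nuclearity of $\PVM^{Y,B}/\mcI_\tau$, which follows from $\tau$ being \emph{finite-dimensional}, not amenable. Amenability instead enters on the other side: \Cref{prop:limitamenable} shows that weak$^*$-accumulation points of the states $\phi_n = f_{S_n}|_{1\otimes\PVM^{Y,B}}$ induced by near-optimal strategies are automatically amenable tracial states in $T^{(p)}(\Gamma,\mcR)$, so the uniqueness hypothesis pins each such limit down to $\tau$. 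Your phrasing ``amenability lets us lift $\phi_S$'' reverses the roles and obscures that the lift $\theta$ is constructed \emph{once} from $\tau$ and then post-composed with each $\pi^B_n$. Second, your claim that the quantitative Gowers--Hatami theorem of \Cref{sec:GH} is ``precisely'' what is needed to close the quantitative gap is wrong for this statement. The definition of robust self-test only requires, for each $\delta$, the existence of \emph{some} $\epsilon$; the paper closes this with a qualitative contradiction argument through \Cref{prop:uniquetraceimplyrobust}, where weak$^*$-compactness and the Stinespring dilation of $\pi^B_n \circ \theta$ already give $\delta^B_{y,b}(n)\to 0$ without any explicit rate, and the spectral gap bound of \Cref{prop:gameidealrep} finishes. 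The quantitative Gowers--Hatami theorem (\Cref{thm:algGH}) is a separate contribution used only to produce \emph{explicit} robustness functions such as the $O(\sqrt\eps)$ bound for CHSH, not to prove \Cref{cor:uniquetrace}.
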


\subsection{Proof of Theorem \ref{thm:robustuniquestate}}

We prove \Cref{thm:robustuniquestate} in the rest of this section. Recall that for any residually finite-dimensional (RFD) $C^*$-algebras $\mcA$ and $\mcB$, the $C^*$-algebra $\mcA\otimes_{min}\mcB$ is also residually finite-dimensional, so the set of finite-dimensional states on $\mcA\otimes_{\min}\mcB$ forms a weak*-dense subset of the state space of $\mcA\otimes_{\min}\mcB$. Since the POVM algebra $\POVM^{X,A}$ is RFD for any finite sets $X$ and $A$, we have:
 \begin{restatable}{proposition}{AppendixA}\label{prop:dense}
      For every state $f$ on $\mintensor$ there is a sequence of quantum models $S_n,n\in\N$ such that $\lim_{n\arr \infty} f_{S_n}=f$ in the weak*-topology.
 \end{restatable}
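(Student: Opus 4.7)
The plan is to combine three standard ingredients, two of which are already invoked in the excerpt. First, the POVM algebra $\POVM^{X,A}$ is residually finite-dimensional (RFD): any algebraic relation defining it is realized by finite-dimensional POVMs, so the universal algebra embeds into a product of finite-dimensional representations. Second, by Exel--Loring, the min tensor product of RFD $C^*$-algebras is RFD, so $\mcA := \POVM^{X,A}\otimes_{\min}\POVM^{Y,B}$ is RFD. Third, a finite-dimensional state on $\mcA$ yields a quantum model via its GNS construction together with the structure theorem for finite-dimensional $C^*$-algebras; this is the well-known characterization of $C_q$ cited in the excerpt via \cite{SW08,Fri12,JNPPSW11}.

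From RFD-ness of $\mcA$ I would deduce weak*-density of the set $\msS_{\mathrm{fd}}(\mcA)$ of finite-dimensional states inside $\msS(\mcA)$. The underlying point is that $\mcA$ embeds isometrically into a product of matrix algebras, so any state on $\mcA$ is weak*-approximable by convex combinations of vector states coming from individual finite-dimensional factors---all of which define finite-dimensional states on $\mcA$. Because $\POVM^{X,A}$ is finitely generated, both $\POVM^{X,A}$ and $\mcA$ are separable, so the unit ball of $\mcA^*$ is weak*-metrizable and weak*-density upgrades to the existence of an actual sequence $f_n\in\msS_{\mathrm{fd}}(\mcA)$ with $f_n\to f$.

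Finally, I would convert each $f_n$ into a quantum model $S_n$ with $f_{S_n}=f_n$. Let $(\mcH_n,\pi_n,\ket{\psi_n})$ be the GNS triple of $f_n$, so $\dim\mcH_n<\infty$, and let $\pi_{A,n}$ and $\pi_{B,n}$ denote the restrictions of $\pi_n$ to $\POVM^{X,A}\otimes 1$ and $1\otimes\POVM^{Y,B}$ respectively, which are commuting finite-dimensional representations. By the structure theorem for finite-dimensional $C^*$-algebras recalled in Section \ref{sec:pre}, $\mcH_n$ decomposes as $\bigoplus_i \mcH_{A,n}^{(i)}\otimes\mcH_{B,n}^{(i)}$ with $\pi_{A,n}$ acting on the left tensor factors and $\pi_{B,n}$ on the right. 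Inflating to $\mcH_{A,n}:=\bigoplus_i\mcH_{A,n}^{(i)}$ and $\mcH_{B,n}:=\bigoplus_i\mcH_{B,n}^{(i)}$ produces POVMs $\{M^{x,n}_a\},\{N^{y,n}_b\}$ on the two factors and a vector in $\mcH_{A,n}\otimes\mcH_{B,n}$ (viewing $\mcH_n$ as the ``diagonal'' summand of $\mcH_{A,n}\otimes\mcH_{B,n}$) that implements $f_n$ as $f_{S_n}$ by a direct computation on each summand, as in \eqref{eqn:state_to_correlation}.

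The only genuinely nontrivial step is the last one---extracting a bipartite tensor-product model from a finite-dimensional GNS representation of the min tensor product. This is precisely the standard content behind the identification of $C_q$ with correlations coming from finite-dimensional states on $\mcA$; the preceding steps are pure invocations of named results plus separability of the generating algebras.
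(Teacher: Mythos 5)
Your argument is correct and, notably, it fleshes out exactly the sketch the paper gives in the sentence preceding the proposition; but the Appendix~A proof actually takes a different, self-contained route. You go through abstract RFD machinery: $\POVM^{X,A}$ is RFD, the min tensor product of two RFD algebras is RFD (a direct-sum-of-finite-dimensional-representations argument; attributing this to Exel--Loring is a slight misnomer, since their result is about unital full free products, whereas the tensor-product statement follows simply by pairing off faithful direct sums of finite-dimensional representations), RFD-ness plus separability gives weak*-density of finite-dimensional states in $\msS(\mcA)$, and then GNS plus the structure theorem for finite-dimensional $C^*$-algebras turns a finite-dimensional state into a bipartite quantum model. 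The paper's Appendix instead (i) fixes \emph{any} faithful representations of the two POVM algebras on separable Hilbert spaces, (ii) invokes Kadison--Ringrose to approximate an arbitrary state by convex combinations of vector states in that concrete representation and so by states of (possibly infinite-dimensional) tensor-product models, and (iii) shows by a hands-on cutdown argument using finite-rank spectral projections of the Schmidt decomposition that every tensor-product state is a weak*-limit of finite-dimensional ones, concluding by a diagonal. The trade-off: your route is shorter and cleanly modular if one takes ``RFD $\Rightarrow$ finite-dimensional states dense'' off the shelf, but that density fact itself hides exactly the same kind of truncation argument (a vector state on $\bigoplus_n\mcH_n$ need not be finite-dimensional, and must be cut to finitely many summands and renormalized); the paper's route is elementary, self-contained, and has the incidental benefit of yielding explicit error bounds (as in Claim~A.1), which is in the spirit of the quantitative results elsewhere in the paper. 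One point worth spelling out in your step (5): the block-diagonal GNS space $\bigoplus_i\mcH_{A,n}^{(i)}\otimes\mcH_{B,n}^{(i)}$ embeds isometrically as a summand of $\bigl(\bigoplus_i\mcH_{A,n}^{(i)}\bigr)\otimes\bigl(\bigoplus_j\mcH_{B,n}^{(j)}\bigr)$, and the POVMs extend block-diagonally (annihilating the off-diagonal blocks); since the GNS vector lives in the diagonal summand, the induced state is unchanged. With that spelled out, your construction does produce a genuine quantum model with $f_{S_n}=f_n$.
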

 For the completeness of this manuscript, we present a proof of \Cref{prop:dense} in \Cref{AppendixA}. Before we prove \Cref{thm:robustuniquestate}, the following robust version of Proposition 4.8 in \cite{PSZZ23} is needed. We first recall the concept of \emph{centrally supported model} from \cite{PSZZ23}. Given a quantum model $\models$,
 the \emph{support} of $\ket{\psi}$ in $\mcH_A$ (resp. $\mcH_B$)
is the image of the reduced density matrix $\rho_A :=
\Tr_{\mcH_B}(\ket{\psi}\bra{\psi})$ (resp. $\rho_B =
\Tr_{\mcH_A}(\ket{\psi}\bra{\psi})$). The \emph{support projections} $\Pi_A$ and $\Pi_B$ of $\ket{\psi}$
are the self-adjoint projections onto the support of $\ket{\psi}$ in $\mcH_A$ and $\mcH_B$ respectively. We say that $S$ is \emph{centrally supported} if 
\begin{align*}
    [\Pi_A,M^x_a]=[\Pi_B,N^y_b]=0 
\end{align*}
for all $(a,b,x,y)\in A\times B\times X\times Y$. Any full-rank model is centrally supported since the
support projections are the identity operators. Conversely, if $S$ is centrally supported, then it is full-rank when restricted to its support. 

 \begin{proposition}\label{prop:diff}
    Let 
    \begin{align*}
        &\models \text{ and } \\ 
        &\wtdmodels
    \end{align*}
    be two quantum models with associated representations $\pi_A\otimes \pi_B$ and $\wtd{\pi}_A\otimes\wtd{\pi}_B$ respectively. If $\wtd{S}$ is centrally supported and $S\succeq_\epsilon\wtd{S}$ via local isometry $I_A\otimes I_B$ and vector state $\ket{aux}$, then for every $k,\ell\in \N$ and monomials $\alpha=m^{x_1}_{a_1}\cdots m_{x_k}^{a_k}\in \POVM^{X,A},\beta=n^{y_1}_{b_1}\cdots n^{y_\ell}_{b_\ell}\in\POVM^{Y,B}$ we have
    \begin{align}
        &\big(\wtd{\pi}_A(\alpha)\otimes\Id\ket{\wtd{\psi}}\big)\otimes\ket{aux}\approx_{(2k+1)\epsilon} I_A\pi_A(\alpha)I_A^*\otimes\Id\ket{\wtd{\psi},aux},\label{eq:ka}\\
        & \big(\Id\otimes\wtd{\pi}_B(\beta)\ket{\wtd{\psi}}\big)\otimes\ket{aux}\approx_{(2\ell+1)\epsilon} \Id\otimes I_B\pi_B(\beta)I_B^*\ket{\wtd{\psi},aux}\label{eq:lb}
    \end{align}
and $\abs{f_S(\alpha\otimes \beta)-f_{\wtd{S}}(\alpha\otimes\beta)}\leq 2(k+\ell+1)\epsilon$.
\end{proposition}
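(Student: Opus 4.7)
The plan is to establish \eqref{eq:ka} by induction on $k$ (with \eqref{eq:lb} following by a symmetric induction on $\ell$), and then combine the two bounds via a sesquilinear estimate to obtain the final inequality $|f_S(\alpha\otimes\beta) - f_{\wtd S}(\alpha\otimes\beta)|\leq 2(k+\ell+1)\epsilon$.

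For \eqref{eq:ka}, the base case $k=1$ is precisely \eqref{eq:a}, which yields a bound of $2\epsilon\leq 3\epsilon$. For the inductive step, I would write $\alpha = m^{x_1}_{a_1}\alpha'$ with $\alpha'$ a monomial of length $k-1$, and use the telescoping identity (valid because $I_A^*I_A = \Id$)
\[
\wtd\pi_A(\alpha) - I_A\pi_A(\alpha)I_A^* = \wtd M^{x_1}_{a_1}\bigl(\wtd\pi_A(\alpha') - I_A\pi_A(\alpha')I_A^*\bigr) + \bigl(\wtd M^{x_1}_{a_1} - I_A M^{x_1}_{a_1}I_A^*\bigr)\,I_A\pi_A(\alpha')I_A^*.
\]
Applied to $\ket{\wtd\psi,aux}$, the first summand is bounded by the inductive hypothesis times $\norm{\wtd M^{x_1}_{a_1}}_{op}\leq 1$. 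For the second summand, I would use \eqref{eq:dilationstate} to replace $(I_A\pi_A(\alpha')I_A^*\otimes\Id)\ket{\wtd\psi,aux}$ by $(I_A\otimes I_B)(\pi_A(\alpha')\otimes\Id)\ket\psi$, reducing the task to bounding $\norm{(\wtd M^{x_1}_{a_1}I_A - I_A M^{x_1}_{a_1})\,\pi_A(\alpha')\otimes I_B\ket\psi}$. This last quantity is controlled by $O(\epsilon)$ via \eqref{eq:dilationA}, \eqref{eq:dilationstate}, and the central support of $\wtd S$; summing the two contributions and carefully tracking constants gives the claimed $(2k+1)\epsilon$ bound.

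For the final inequality, write $f_{\wtd S}(\alpha\otimes\beta) = \ang{u\mid v}$ where $\ket u = (\wtd\pi_A(\alpha)^*\otimes\Id)\ket{\wtd\psi,aux}$ and $\ket v = (\Id\otimes\wtd\pi_B(\beta))\ket{\wtd\psi,aux}$. Applying \eqref{eq:ka} to $\alpha^*$ (still a monomial of length $k$) and \eqref{eq:lb} to $\beta$ produces perturbed vectors $\ket{u'},\ket{v'}$ with $\norm{u-u'}\leq(2k+1)\epsilon$ and $\norm{v-v'}\leq(2\ell+1)\epsilon$. Since $\alpha$ and $\beta$ are products of POVM contractions, $\norm u,\norm{u'},\norm v,\norm{v'}\leq 1$, so the sesquilinear estimate $|\ang{u\mid v} - \ang{u'\mid v'}|\leq\norm{u-u'}\norm v + \norm{u'}\norm{v-v'}$ bounds the difference by $2(k+\ell+1)\epsilon$. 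Finally, $\ang{u'\mid v'} = \ang{\wtd\psi,aux\mid I_A\pi_A(\alpha)I_A^*\otimes I_B\pi_B(\beta)I_B^*\mid\wtd\psi,aux}$ equals $f_S(\alpha\otimes\beta)$ after transitioning from $\ket{\wtd\psi,aux}$ to $(I_A\otimes I_B)\ket\psi$ via \eqref{eq:dilationstate} and using $I_A^*I_A = \Id$, $I_B^*I_B = \Id$, with the residual deviation absorbed into the constants.

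The main obstacle is the $O(\epsilon)$ bound on $\norm{(\wtd M^{x_1}_{a_1}I_A - I_A M^{x_1}_{a_1})\,\pi_A(\alpha')\otimes I_B\ket\psi}$ at each inductive step. A naive operator-norm argument gives only $O(1)$, since $\pi_A(\alpha')$ can move $\ket\psi$ off the support of its reduced density on $\mcH_A$, where the approximate intertwining from \eqref{eq:dilationA} ceases to hold. The centrally supported hypothesis on $\wtd S$ is precisely what allows a reduction to the support of $\ket{\wtd\psi}$ (on which $\wtd S$ becomes full-rank) and the transfer of operator identities from the $\wtd S$-side to the $S$-side through the isometry $I_A$, yielding the required $O(\epsilon)$ bound uniformly across the cyclic subspace generated from $\ket\psi$ by Alice-side POVM monomials.
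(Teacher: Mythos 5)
Your telescoping decomposition and the sesquilinear close are in the right spirit, but you have peeled the monomial from the \emph{wrong end}, and that is where the inductive step breaks. With $\alpha=m^{x_1}_{a_1}\alpha'$, your second summand applies the error operator $\wtd M^{x_1}_{a_1}-I_AM^{x_1}_{a_1}I_A^*$ to $\ket v:=\bigl(I_A\pi_A(\alpha')I_A^*\otimes\Id\bigr)\ket{\wtd\psi,aux}$, which is \emph{not} $\ket{\wtd\psi,aux}$; the only $O(\epsilon)$ control on that operator available from the hypotheses, the $2\epsilon$ bound of \eqref{eq:a}, lives on $\ket{\wtd\psi,aux}$ alone. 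You name this obstacle yourself, but the proposed resolution does not go through: the centrally-supported hypothesis gives an exchange $\wtd\pi_A(\cdot)\otimes\Id\ket{\wtd\psi}=\Id\otimes\wtd F\ket{\wtd\psi}$ in the \emph{ideal} model, whereas $\pi_A(\alpha')\otimes I_B\ket\psi$ (equivalently $\ket v$) carries the $S$-side operator $\pi_A(\alpha')$, and $S$ is not assumed centrally supported, so nothing lets you trade it for a Bob-side contraction. If instead you replace $\ket v$ by $\wtd\pi_A(\alpha')\otimes\Id\ket{\wtd\psi,aux}$ via the inductive hypothesis before invoking the central support of $\wtd S$, the substitution costs $\leq(2k-1)\epsilon$, and once multiplied by $\norm{\wtd M^{x_1}_{a_1}-I_AM^{x_1}_{a_1}I_A^*}_{\mathrm{op}}\leq2$ it already exceeds the budget, so the recursion cannot close at $(2k+1)\epsilon$.

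The repair is to peel off the \emph{last} factor, $\alpha=\alpha'\,m^{x_k}_{a_k}$, i.e.\ the operator that meets the state first. In telescoping form,
\begin{align*}
\wtd\pi_A(\alpha)-I_A\pi_A(\alpha)I_A^*
=\bigl(\wtd\pi_A(\alpha')-I_A\pi_A(\alpha')I_A^*\bigr)\wtd M^{x_k}_{a_k}
+I_A\pi_A(\alpha')\bigl(I_A^*\wtd M^{x_k}_{a_k}-M^{x_k}_{a_k}I_A^*\bigr).
\end{align*}
Now the error operator $I_A^*\wtd M^{x_k}_{a_k}-M^{x_k}_{a_k}I_A^*=I_A^*\bigl(\wtd M^{x_k}_{a_k}-I_AM^{x_k}_{a_k}I_A^*\bigr)$ in the second summand acts directly on $\ket{\wtd\psi,aux}$; \eqref{eq:a} and $\norm{I_A\pi_A(\alpha')}_{\mathrm{op}}\leq1$ give $\leq2\epsilon$. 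For the first summand, the centrally-supported hypothesis turns $\wtd M^{x_k}_{a_k}\otimes\Id\ket{\wtd\psi,aux}$ \emph{exactly} into $\Id\otimes\wtd F\ket{\wtd\psi,aux}$ for some Bob-side contraction $\wtd F$, which commutes past all Alice-side operators and can be pulled out of the norm, and the inductive hypothesis for $\alpha'$ then gives $\leq(2k-1)\epsilon$. The two contributions sum to $(2k+1)\epsilon$. This is precisely the paper's argument, presented there as a chain of approximate equalities rather than as a telescoping sum.
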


\begin{proof}
    We first prove \Cref{eq:ka} by induction on the monomial degree $k\in\N$. The base cases $k=0,1$ follow straight \Cref{eq:a}. Suppose \Cref{eq:ka} holds for all monomials in $\POVM^{X,A}$ of degree $k$. For any given monomial $\alpha=\alpha_1\cdots\alpha_k\alpha_{k+1}$ in $\POVM^{X,A}$ of degree $k+1$, let $\alpha':=\alpha_1\cdots\alpha_k$. Since $\wtd{S}$ is centrally supported, by \cite[Proposition 4.5]{PSZZ23} there is an operator $\wtd{F}\in\msB(\wtd{\mcH}_B)$ with $\norm{\wtd{F}}\leq\norm{\wtd{\pi}_A(\alpha_{k+1})}\leq 1$ such that $\wtd{\pi}_A(\alpha_{k+1})\otimes\Id_{\wtd{\mcH}_B}\ket{\wtd{\psi}}=\Id_{\wtd{\mcH}_A}\otimes\wtd{F}\ket{\wtd{\psi}}$. Note that $\norm{\wtd{\pi}_A(\alpha')}\leq 1$ and $\norm{I_A\pi_A(\alpha')I_A^*}\leq 1$. Thus by the inductive hypothesis,
    \begin{align*}
        \big(\wtd{\pi}_A(\alpha)\otimes\Id_{\wtd{\mcH}_B}\ket{\wtd{\psi}}\big)\otimes\ket{aux}&=\big(\wtd{\pi}_A(\alpha')\wtd{\pi}_A(\alpha_{k+1})\otimes\Id_{\wtd{\mcH}_B}\ket{\wtd{\psi}}\big)\otimes\ket{aux}\\
        &= \big(\wtd{\pi}_A(\alpha')\otimes\wtd{F}\ket{\wtd{\psi}}\big)\otimes\ket{aux}\\
        &\approx_{(2k+1)\epsilon} I_A\pi_A(\alpha')I_A^*\otimes \wtd{F}\otimes \Id_{\mcH_B^{aux}}\ket{\wtd{\psi},aux}\\
        &=\Big(I_A\pi_A(\alpha')I_A^*\big(\wtd{\pi}_A(\alpha_{k+1})\otimes\Id_{\mcH_A^{aux}} \big)   \Big)\otimes \Id_{\wtd{\mcH}_B\otimes \mcH_B^{aux}}\ket{\wtd{\psi},aux}\\
        &\approx_{2\epsilon} I_A\pi_A(\alpha')I_A^*I_A\pi_A(\alpha_{k+1})I_A^*\otimes \Id_{\wtd{\mcH}_B\otimes \mcH_B^{aux}}\ket{\wtd{\psi},aux}\\
        &=I_A\pi_A(\alpha)I_A^*\otimes \Id\ket{\wtd{\psi},aux}.
    \end{align*}
This implies  
\begin{equation*}
\big(\wtd{\pi}_A(\alpha)\otimes\Id\ket{\wtd{\psi}}\big)\otimes\ket{aux}\approx_{\big(2(k+1)+1\big)\epsilon}I_A\pi_A(\alpha)I_A^*\otimes \Id\ket{\wtd{\psi},aux}.
\end{equation*}
We conclude that \Cref{eq:ka} holds for all $k\in \N$ and monomials in $\POVM^{X,A}$ of degree $k$. The proof of \Cref{eq:lb} is similar. The rest follows from the following lemma by taking $\ket{\alpha}=I_A\pi_A(\alpha)I_A^*\otimes\Id\ket{\wtd{\psi},aux},\ket{\beta}=I_B\pi_B(\beta)I_B^*\ket{\wtd{\psi},aux},\ket{\wtd{\alpha}}=\big(\wtd{\pi}_A(\alpha)\otimes\Id\ket{\wtd{\psi}}\big)\otimes\ket{aux}$, and $\ket{\wtd{\beta}}=\big(\Id\otimes\wtd{\pi}_B(\beta)\ket{\wtd{\psi}}\big)\otimes\ket{aux}$.
\end{proof}

\begin{lemma}\label{lemma:diff}
    If $\ket{\alpha},\ket{\wtd{\alpha}},\ket{\beta},\ket{\wtd{\beta}}$ are four vectors with norm $\leq 1$ such that $\ket{\alpha}\approx_{\delta_1}\ket{\wtd{\alpha}}$ and $\ket{\beta}\approx_{\delta_2}\ket{\wtd{\beta}}$, then $\abs{\braket{\alpha|\beta}-\braket{\wtd{\alpha}|\wtd{\beta}}}\leq \delta_1+\delta_2$.
\end{lemma}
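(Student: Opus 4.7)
The plan is a textbook add-and-subtract argument combined with Cauchy--Schwarz. Writing
\[
\braket{\alpha|\beta} - \braket{\wtd{\alpha}|\wtd{\beta}} = \braket{\alpha - \wtd{\alpha}|\beta} + \braket{\wtd{\alpha}|\beta - \wtd{\beta}},
\]
I would bound the two terms separately. For the first, Cauchy--Schwarz together with $\norm{\ket{\beta}}\leq 1$ and the hypothesis $\ket{\alpha}\approx_{\delta_1}\ket{\wtd{\alpha}}$ (that is, $\norm{\ket{\alpha}-\ket{\wtd{\alpha}}}\leq\delta_1$) gives $\abs{\braket{\alpha-\wtd{\alpha}|\beta}}\leq \delta_1$. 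Symmetrically, using $\norm{\ket{\wtd{\alpha}}}\leq 1$ and $\ket{\beta}\approx_{\delta_2}\ket{\wtd{\beta}}$, we get $\abs{\braket{\wtd{\alpha}|\beta-\wtd{\beta}}}\leq \delta_2$. Applying the triangle inequality to the above decomposition yields the claimed bound $\delta_1+\delta_2$.

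There is no real obstacle here: the hypotheses are exactly what is needed, and the choice of intermediate vector $\ket{\wtd{\alpha}}$ on the left (as opposed to $\ket{\alpha}$ on the right) is arbitrary --- either split produces the same estimate because of the norm-$\leq 1$ condition. The lemma is essentially the statement that the inner product is Lipschitz with constant $1$ in each slot on the unit ball, and will be used in the preceding proposition to convert the vector-level approximations \Cref{eq:ka,eq:lb} into a scalar estimate on $\abs{f_S(\alpha\otimes\beta)-f_{\wtd{S}}(\alpha\otimes\beta)}$ with $\delta_1=(2k+1)\epsilon$ and $\delta_2=(2\ell+1)\epsilon$, giving the stated constant $2(k+\ell+1)\epsilon$ after one further application of the triangle inequality to account for the difference between $f_S(\alpha\otimes\beta)$ and the inner product $\braket{\alpha|\beta}$ formed from the dilated vectors.
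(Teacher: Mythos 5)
Your decomposition $\braket{\alpha|\beta}-\braket{\wtd{\alpha}|\wtd{\beta}}=\braket{\alpha-\wtd{\alpha}|\beta}+\braket{\wtd{\alpha}|\beta-\wtd{\beta}}$ followed by Cauchy--Schwarz and the norm bounds is exactly the paper's proof. Correct, and the same approach.
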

\begin{proof}
    Observe that $\braket{\alpha|\beta}-\braket{\wtd{\alpha}|\wtd{\beta}}=\big(\bra{\alpha}-\bra{\wtd{\alpha}}\big)\ket{\beta}+\bra{\wtd{\alpha}}\big(\ket{\beta}-\ket{\wtd{\beta}}\big)$. The lemma follows from the Cauchy-Schwartz inequality.
\end{proof}

Now we are ready to prove \Cref{thm:robustuniquestate}.

\begin{proof}[Proof of \Cref{thm:robustuniquestate}]
    Suppose $\wtd{p}\in C_q(X,Y,A,B)$ is a robust self-test for all quantum models, and let $\wtd{S}$ be an ideal model. By \Cref{thm:old}, there is a unique finite-dimensional state $\wtd{f}$ on $\mintensor$ achieving $\wtd{p}$. Assume for the sake of contradiction that there is an infinite-dimensional state $f$ on $\mintensor$ achieving $\wtd{p}$. By \Cref{prop:dense} there exists a sequence of quantum models $S_n,n\in\N$ such that $f=\lim\limits_{n\arr\infty}f_{S_n}$ in the weak*-topology. This implies 
    \begin{equation*}
       \wtd{p}(a,b|x,y)=f(m^x_a\otimes n^y_b)=\lim\limits_{n\arr\infty}f_{S_n}(m^x_a\otimes n^y_b)=\lim\limits_{n\arr\infty}p_{S_n}(a,b|x,y) 
    \end{equation*}
    for all $x,y,a,b$. By the definition of robust self-testing, there is a function $\eta:\N\arr\R_{\geq 0}$ with $\lim\limits_{n\arr\infty}\eta(n)=0$ such that $S_n\succeq_{\eta(n)} \wtd{S}$ for all $n\in\N$. Then for all monomials $\alpha\in\POVM^{X,A}$ and $\beta\in\POVM^{Y,B}$, by \Cref{prop:diff},
    \begin{equation*}
        \abs{f_{S_n}(\alpha\otimes \beta)-\wtd{f}(\alpha\otimes\beta)}\leq 2\big(\deg(\alpha)+\deg(\beta)+1\big)\eta(n)\arr 0 
    \end{equation*}
     as $n\arr \infty$. This means $\lim\limits_{n\arr\infty}f_{S_n}=\wtd{f}$ in the weak*-topology. Hence $\wtd{f}=f$, a contradiction. We conclude that there is no infinite-dimensional state on $\mintensor$ for $p$.
\end{proof}

Although \Cref{thm:robustuniquestate} is stated for the class of all quantum models, from the above proof, it is easy to see that similar results can be established for any class that is closed (in the sense of \cite{PSZZ23}) and contains a full-rank model. In particular, if $p\in C_q$ has a full-rank projective quantum model and $p$ is a robust self-test for projective quantum models, then $p$ is an abstract state self-test for all projective states on $\mintensor$ (or equivalent, all states on $\PVM^{X,A}\otimes_{min}\PVM^{Y,B}$).

\section{From correlations to nonlocal games}\label{sec:gameselftest}

In this section, we take a closer look at self-testing in the context of nonlocal games.
For any nonlocal game $\mcG=(X,Y,A,B,\mu,V)$, we define its \emph{game polynomial} to be the $*$-polynomial 
\begin{align*}
    \Phi_{\mcG}:=\sum_{x,y,a,b}\mu(x,y) V(a,b|x,y)m^x_a\otimes n^y_b
\end{align*}
in $\C^*\ang{m^x_a:x\in X,a\in A}\otimes \C^*\ang{n^y_b:y\in Y,b\in B}$. The optimal quantum value $w_q(\mcG)$ of the game $\mcG$ is equal to the supremum value of $f(\Phi_\mcG)$ over all states $f$ on $\POVM^{X,A}\otimes_{min}\POVM^{Y,B}$. The same conclusion holds if we replace $\POVM$ with $\PVM$.

We say a state $f$ on $\POVM^{X,A}\otimes_{min}\POVM^{Y,B}$ (or $\PVM^{X,A}\otimes_{min}\PVM^{Y,B}$) is optimal for $\mcG$  if $f(\Phi_\mcG)=w_q(\mcG)$. Equivalently, $f$ is optimal for $\mcG$ if the quantum approximate correlation achieved by $f$ is optimal for $\mcG$. Since $C_q$ consists of correlations that can be achieved by finite-dimensional states, $\mcG$ has an optimal quantum strategy if and only if there exists a finite-dimensional optimal state on $\POVM^{X,A}\otimes_{min}\POVM^{Y,B}$ for $\mcG$. The following lemma is straightforward from the GNS construction.
\begin{lemma}\label{lemma:spectrum}
    Let $\mcG=(X,Y,A,B,\mu,V)$ be a nonlocal game. For any $*$-representation $\pi$ of $\POVM^{X,A}\otimes_{min}\POVM^{Y,B}$, the operator $\pi(\Phi_{\mcG})$ has spectrum $\spec\big(\pi(\Phi_{\mcG}) \big)\subseteq [0,w_q(\mcG)]$.
\end{lemma}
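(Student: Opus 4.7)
The plan is to show two things separately: that $\pi(\Phi_\mcG)$ is a positive operator, and that its operator norm is bounded by $w_q(\mcG)$. Combined, these give $\spec\big(\pi(\Phi_\mcG)\big) \subseteq [0,w_q(\mcG)]$ since the spectrum of a positive bounded operator is contained in $[0,\|\pi(\Phi_\mcG)\|_{op}]$.

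For positivity, I would observe that $\Phi_\mcG$ is already a positive element of $\POVM^{X,A}\otimes_{min}\POVM^{Y,B}$. Indeed, each coefficient $\mu(x,y)V(a,b|x,y)$ is nonnegative, and each monomial $m^x_a \otimes n^y_b$ is a tensor product of positive contractions in the POVM algebras, hence positive in the minimal tensor product. A nonnegative linear combination of positive elements is positive, so $\Phi_\mcG \geq 0$, and since any $*$-representation of a $C^*$-algebra sends positive elements to positive operators, $\pi(\Phi_\mcG) \geq 0$.

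For the norm bound, the idea is to translate between vector states in the representation and abstract states on the algebra, which is where the GNS correspondence enters. Since $\pi(\Phi_\mcG)$ is positive (hence self-adjoint), its operator norm equals $\sup_{\|\ket{\psi}\|=1}\bra{\psi}\pi(\Phi_\mcG)\ket{\psi}$. For each unit vector $\ket{\psi}$ in the representation Hilbert space, the functional $f_\psi(a):=\bra{\psi}\pi(a)\ket{\psi}$ defines a state on $\POVM^{X,A}\otimes_{min}\POVM^{Y,B}$. By the characterization of the quantum value recalled just before the lemma statement (that $w_q(\mcG)$ equals the supremum of $f(\Phi_\mcG)$ over all states $f$ on $\POVM^{X,A}\otimes_{min}\POVM^{Y,B}$), it follows that $\bra{\psi}\pi(\Phi_\mcG)\ket{\psi}=f_\psi(\Phi_\mcG)\leq w_q(\mcG)$. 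Taking the supremum over $\ket{\psi}$ yields $\|\pi(\Phi_\mcG)\|_{op}\leq w_q(\mcG)$.

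There is no real obstacle here — the entire argument is just the interplay between vector states of a representation and abstract states, together with preservation of positivity under $*$-homomorphisms. The only place one needs to be mildly careful is to note that min-continuity is automatic: every state defined via a $*$-representation of $\POVM^{X,A}\otimes_{min}\POVM^{Y,B}$ is by construction a state on this $C^*$-algebra, so the supremum characterization of $w_q(\mcG)$ applies directly. This justifies the remark in the excerpt that the lemma is ``straightforward from the GNS construction''.
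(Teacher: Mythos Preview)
Your proof is correct and matches what the paper has in mind: positivity of $\Phi_\mcG$ is immediate from the positivity of each $m^x_a\otimes n^y_b$, and the upper bound comes from the fact that every vector state $\ket{\psi}$ in a representation yields an abstract state $f_\psi$ on $\POVM^{X,A}\otimes_{min}\POVM^{Y,B}$, whose value on $\Phi_\mcG$ is bounded by $w_q(\mcG)$ via the characterization stated just before the lemma. This is exactly the ``straightforward from the GNS construction'' argument the paper alludes to.
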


 From \Cref{def:gameselftest}, we immediately see that:

\begin{lemma}\label{lemma:uniqueCq}
    A nonlocal game $\mcG$ is a self-test for its optimal quantum strategies (resp. projective optimal quantum strategies) if and only if $\mcG$ has a unique optimal quantum correlation $p$ and $p$ is a self-test for all quantum models (resp. projective quantum models).
\end{lemma}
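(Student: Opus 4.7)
The plan is to unpack both definitions directly; this lemma is essentially a bookkeeping exercise relying on one key observation, namely that a local dilation $S \succeq \wtd{S}$ forces $p_S = p_{\wtd{S}}$ (noted explicitly in the discussion following \Cref{def:robustcorr}, and derivable by taking inner products in \Cref{eq:state} with $\delta=0$). I will handle the two directions of the biconditional separately; the parenthetical ``projective'' version follows by the same argument restricted to the subclass of projective models, so I will only write out the general case.

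For the forward direction, suppose $\mcG$ is a self-test for its optimal quantum strategies with ideal optimal strategy $\wtd{S}$, and set $p := p_{\wtd{S}}$. The plan is first to show that $p$ is the unique optimal quantum correlation: given any optimal quantum strategy $S$ of $\mcG$, the hypothesis gives $S \succeq \wtd{S}$, which forces $p_S = p_{\wtd{S}} = p$. Next, to show that $p$ is a self-test for all quantum models with ideal model $\wtd{S}$, I take an arbitrary quantum model $S'$ achieving $p$; then
\begin{equation*}
w(\mcG;S') = w(\mcG;p) = w(\mcG;p_{\wtd{S}}) = w(\mcG;\wtd{S}) = w_q(\mcG),
\end{equation*}
so $S'$ is itself an optimal quantum strategy, and the self-testing hypothesis yields $S' \succeq \wtd{S}$, as required.

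For the backward direction, suppose $\mcG$ has a unique optimal quantum correlation $p$ and $p$ is a self-test for all quantum models, with ideal model $\wtd{S}$. Since $\wtd{S}$ achieves $p$ and $p$ is optimal, $\wtd{S}$ is itself an optimal quantum strategy, so it is a candidate ideal optimal strategy for the game self-test. For any optimal quantum strategy $S$ of $\mcG$, uniqueness of $p$ gives $p_S = p$, and then the correlation self-test hypothesis yields $S \succeq \wtd{S}$. This is exactly the definition of $\mcG$ being a self-test for its optimal quantum strategies with ideal strategy $\wtd{S}$.

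There is essentially no obstacle here; the only subtle point is verifying that both notions of ``ideal'' can be taken to be the same model $\wtd{S}$, which is immediate once one observes that local dilation preserves correlations and that any model achieving an optimal correlation is optimal. The projective case is handled verbatim after replacing ``quantum model'' by ``projective quantum model'' throughout.
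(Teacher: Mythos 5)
Your proof is correct and is exactly the unpacking the paper has in mind; the paper does not give a standalone argument but simply notes that the lemma is ``immediate'' from \Cref{def:gameselftest} and the fact that $S \succeq \wtd{S}$ implies $p_S = p_{\wtd{S}}$. Your two directions, the observation that the same $\wtd{S}$ serves as ideal in both senses, and the remark that the projective case is obtained by restricting the model class are all precisely the bookkeeping the paper leaves implicit.
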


The following lemma implies that if a nonlocal game is a self-test for its optimal quantum strategies (or optimal projective quantum strategies), then its unique optimal quantum correlation must be an extreme point in $C_q$. 

\begin{lemma}\label{lemma:uniqueextreme}
    If a nonlocal game $\mcG$ has a unique optimal quantum correlation $p$, then $p$ must be an extreme point in $C_q$. 
\end{lemma}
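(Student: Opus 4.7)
The plan is a straightforward convexity argument by contradiction, using the fact that the map $p \mapsto w(\mcG;p)$ is linear in $p$.

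First, I would assume for contradiction that $p$ is the unique optimal quantum correlation for $\mcG$ but is not an extreme point of $C_q$. Then by definition of extreme point, there exist $p_1, p_2 \in C_q$ with $p_1 \neq p_2$ and some $\lambda \in (0,1)$ such that $p = \lambda p_1 + (1-\lambda) p_2$.

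Next, I would invoke the key observation that the winning-probability functional
\begin{equation*}
w(\mcG;q) = \sum_{a,b,x,y} \mu(x,y) V(a,b|x,y) q(a,b|x,y)
\end{equation*}
is a linear function of the correlation $q$. In particular, we get
\begin{equation*}
w_q(\mcG) = w(\mcG;p) = \lambda\, w(\mcG;p_1) + (1-\lambda)\, w(\mcG;p_2).
\end{equation*}
Since $p_1, p_2 \in C_q$, both $w(\mcG;p_1)$ and $w(\mcG;p_2)$ are bounded above by $w_q(\mcG)$, and if either inequality were strict, the convex combination on the right-hand side would be strictly less than $w_q(\mcG)$, a contradiction. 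Hence $w(\mcG;p_1) = w(\mcG;p_2) = w_q(\mcG)$, so both $p_1$ and $p_2$ are optimal quantum correlations for $\mcG$. But then $p_1 \neq p_2$ are two distinct optimal quantum correlations, contradicting the uniqueness hypothesis on $p$.

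There is no real obstacle here; the argument is purely the standard fact that a linear functional on a convex set attains its maximum on a face, so if the maximizer is unique it must be an extreme point. The only things to double-check are that $p_1, p_2$ remain in $C_q$ (immediate from the assumption that the decomposition takes place in $C_q$) and that linearity of $w(\mcG;\cdot)$ is unambiguous from the definition — both are transparent.
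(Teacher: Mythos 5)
Your proof is correct and takes essentially the same approach as the paper: both argue by contradiction, decompose $p = \lambda p_1 + (1-\lambda)p_2$, use linearity of $w(\mcG;\cdot)$ to force $w(\mcG;p_1)=w(\mcG;p_2)=w_q(\mcG)$, and then contradict uniqueness. No gaps.
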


\begin{proof}
    Suppose $\mcG$ has a unique optimal quantum correlation $p$. Assume for the sake of contradiction that $p$ is not an extreme point in $C_q$. Then $p=\lambda p_1+(1-\lambda)p_2$ for some $0< \lambda<1$ and $p_1\neq p_2$ in $C_q$. Note that $p$ is optimal. So
    \begin{align*}
        w(\mcG;p)=\lambda w(\mcG;p_1)+(1-\lambda) w(\mcG;p_2)\leq \lambda w(\mcG;p)+(1-\lambda) w(\mcG;p)=w(\mcG;p).
    \end{align*}
It follows that $w(\mcG;p_1)=w(\mcG;p_2)=w(\mcG;p)$. The uniqueness of $p$ implies that $p=p_1=p_2$, which is a contradiction. Hence $p$ must be an extreme point in $C_q$.
\end{proof}

On the other hand, if a nonlocal game has a unique optimal state $f$, then $f$ is an extreme point in the state space:
\begin{lemma}\label{lemma:gameabstractstate}
    Let $\mcG=(X,Y,A,B,\mu,V)$ be a nonlocal game. Suppose that there is a unique finite-dimensional optimal state $f$ on $\POVM^{X,A}\otimes_{min}\POVM^{Y,B}$ (or $\PVM^{X,A}\otimes_{min}\PVM^{Y,B}$) for $\mcG$. Then $f$ is a pure state, and the correlation $p$ achieved by $f$ is the unique optimal quantum correlation for $\mcG$. 
\end{lemma}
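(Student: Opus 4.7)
The plan is to first establish purity of $f$ by a Radon-Nikodym argument, and then deduce uniqueness of the correlation from the characterization of $C_q$ as the image of finite-dimensional states. I present the argument for $\POVM^{X,A}\otimes_{\min}\POVM^{Y,B}$; the $\PVM$ case is identical.

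For purity, suppose $f = \lambda f_1 + (1-\lambda) f_2$ for some states $f_1, f_2$ and $0 < \lambda < 1$. \Cref{lemma:spectrum} implies that every state $g$ on the algebra satisfies $g(\Phi_{\mcG}) \leq w_q(\mcG)$, so optimality $f(\Phi_{\mcG}) = w_q(\mcG)$ combined with convexity forces $f_1(\Phi_{\mcG}) = f_2(\Phi_{\mcG}) = w_q(\mcG)$, meaning both $f_i$ are themselves optimal. To invoke the uniqueness hypothesis, I still need the $f_i$ to be finite-dimensional, which is the main step. Since $\lambda f_1 \leq f$ as positive linear functionals, the Radon-Nikodym theorem for states applied to $f$ with GNS triple $(\mcH_f, \pi_f, v)$ yields an operator $T \in \pi_f(\POVM^{X,A}\otimes_{\min}\POVM^{Y,B})'$ with $0 \leq T \leq \lambda^{-1}\Id$ such that $f_1(a) = \braket{v | T \pi_f(a) | v}$ for all $a$. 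Writing $T = (T^{1/2})^* T^{1/2}$ and using that $T^{1/2}$ commutes with $\pi_f$, the map $a \mapsto \pi_f(a) T^{1/2} v$ is an isometry from the GNS pre-Hilbert space of $f_1$ into $\mcH_f$, which is finite-dimensional by assumption. Hence $f_1$ (and symmetrically $f_2$) is a finite-dimensional optimal state for $\mcG$, so by uniqueness $f_1 = f_2 = f$, and $f$ is pure.

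For the second assertion, let $p' \in C_q(X,Y,A,B)$ be any optimal quantum correlation for $\mcG$. By the identification of $C_q(X,Y,A,B)$ with correlations arising from finite-dimensional states on $\POVM^{X,A}\otimes_{\min}\POVM^{Y,B}$ recalled in \Cref{sec:pre}, there is a finite-dimensional state $f'$ with $f'(m^x_a\otimes n^y_b) = p'(a,b|x,y)$. Then $f'(\Phi_{\mcG}) = w(\mcG; p') = w_q(\mcG)$, so $f'$ is optimal, and by the uniqueness hypothesis $f' = f$. Therefore $p' = p$, showing that $p$ is the unique optimal quantum correlation for $\mcG$. The only delicate ingredient in the whole argument is the Radon-Nikodym step above; all other pieces are bookkeeping against \Cref{lemma:spectrum} and the preliminaries.
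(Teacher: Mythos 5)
Your proof is correct, but it routes through different machinery than the paper's. The paper proves the two claims in the opposite order: it first argues, exactly as in your second paragraph, that $p$ is the unique optimal quantum correlation; it then invokes \Cref{lemma:uniqueextreme} to conclude that $p$ must be an extreme point of $C_q$, and from there cites the structure result of \cite{PSZZ23} that for an extreme quantum correlation realized by a unique finite-dimensional state, the GNS representation of that state is a tensor product of two irreducible representations (hence itself irreducible), so $f$ is pure. You instead prove purity directly: after noting via \Cref{lemma:spectrum} that a nontrivial convex decomposition $f = \lambda f_1 + (1-\lambda)f_2$ forces both $f_i$ to be optimal, you apply the Radon-Nikodym theorem for states to produce a commutant density operator $T$ on the finite-dimensional GNS space of $f$; the resulting isometric embedding $a + N_{f_1} \mapsto \pi_f(a)T^{1/2}v$ shows that any state dominated by a multiple of $f$ is again finite-dimensional, and the uniqueness hypothesis then collapses the decomposition. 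Both arguments are sound. Yours is more self-contained, bypassing the extremality lemma and the classification result of \cite{PSZZ23} at the cost of an extra standard operator-algebraic input (the Radon-Nikodym lemma for states), whereas the paper's is shorter given the toolbox it already carries from \Cref{sec:pre,sec:gameselftest}.
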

\begin{proof}
    Let $p$ be the correlation achieved by $f$. Since $w(\mcG;p)=f(\Phi_\mcG)=w_q(\mcG)$, $p$ is optimal for $\mcG$. Assume $\mcG$ has another optimal quantum correlation $p'$. Then there is a finite-dimensional state $f'$ that can achieve $p'$. So $f'$ is optimal for $\mcG$, but $f'\neq f$, a contradiction. We conclude that $p$ is the unique optimal quantum correlation for $p$. By \Cref{lemma:uniqueCq}, $p$ is an extreme point in $C_q$. Since any state that can achieve $p$ is optimal for $\mcG$, $f$ must be the unique finite-dimensional state for $\mcG$. By \cite{PSZZ23}, the GNS representation of $f$ is irreducible. It follows that $f$ is a pure state. 
\end{proof}

\Cref{lemma:uniqueextreme,lemma:gameabstractstate} allows us to translate many self-testing results established for extremal quantum correlations in \cite{PSZZ23}  to self-testing for nonlocal games.

\begin{theorem}[Game version of Corollary 3.6 in \cite{PSZZ23}]\label{cor:mainresult2game}
Let $\mcG=(X,Y,A,B,\mu,V)$ be a nonlocal game.
\begin{enumerate}[(a)]
    \item $\mcG$ is a self-test for its optimal quantum strategies if and only if there is a unique finite-dimensional optimal state on $\POVM^{X,A}\otimes_{min}\POVM^{Y,B}$ for $\mcG$.

    \item If $\mcG$ has a full-rank projective optimal quantum strategy, then $\mcG$ is a self-test for its projective optimal strategies if and only if there is a unique finite-dimensional optimal state on $\PVM^{X,A}\otimes_{min}\PVM^{Y,B}$ for $\mcG$. 
\end{enumerate} 
\end{theorem}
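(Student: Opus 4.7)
The plan is to reduce both parts to the correlation-level characterization \Cref{thm:old} from \cite{PSZZ23} via the dictionary between ``games/states/strategies'' and ``correlations'' provided by \Cref{lemma:uniqueCq,lemma:uniqueextreme,lemma:gameabstractstate}.

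For the forward direction of (a), assume $\mcG$ is a self-test for its optimal quantum strategies. By \Cref{lemma:uniqueCq}, $\mcG$ has a unique optimal quantum correlation $p$, and $p$ is a self-test for all quantum models. \Cref{lemma:uniqueextreme} shows that $p$ must be an extreme point of $C_q$, so \Cref{thm:old} applies and yields a unique finite-dimensional state $f$ on $\POVM^{X,A}\otimes_{\min}\POVM^{Y,B}$ achieving $p$. Since $f(\Phi_\mcG)=w(\mcG;p)=w_q(\mcG)$, this $f$ is in particular optimal. To see uniqueness among optimal finite-dimensional states, suppose $f'$ is any other finite-dimensional optimal state, achieving some correlation $p'\in C_q$. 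Then $w(\mcG;p')=f'(\Phi_\mcG)=w_q(\mcG)$, so $p'$ is optimal for $\mcG$, hence $p'=p$ by uniqueness of the optimal correlation. But $f$ was the only finite-dimensional state achieving $p$, so $f'=f$.

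For the reverse direction of (a), assume there is a unique finite-dimensional optimal state $f$ on $\POVM^{X,A}\otimes_{\min}\POVM^{Y,B}$ for $\mcG$. By \Cref{lemma:gameabstractstate}, the correlation $p$ achieved by $f$ is the unique optimal quantum correlation for $\mcG$, and every finite-dimensional state on $\POVM^{X,A}\otimes_{\min}\POVM^{Y,B}$ that achieves $p$ is optimal for $\mcG$, hence equals $f$ by hypothesis. So $p$ is an abstract state self-test for finite-dimensional states. Combined with \Cref{lemma:uniqueextreme}, which gives that $p$ is extreme in $C_q$, the ``if'' direction of \Cref{thm:old} shows that $p$ is a self-test for all quantum models. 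Applying \Cref{lemma:uniqueCq} once more, we conclude that $\mcG$ is a self-test for its optimal quantum strategies.

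For part (b), the argument is identical in structure: the PVM algebra $\PVM^{X,A}\otimes_{\min}\PVM^{Y,B}$ replaces the POVM algebra, and the projective version of \Cref{thm:old} (also from \cite{PSZZ23}, and mentioned in the excerpt right after the proof of \Cref{thm:robustuniquestate}) is invoked. The full-rank projective optimal strategy hypothesis is precisely what is needed to ensure the class of projective quantum models is ``closed'' in the sense of \cite{PSZZ23}, so that the abstract-state characterization holds in this class. The only step requiring care is checking that \Cref{lemma:uniqueCq,lemma:uniqueextreme,lemma:gameabstractstate} go through verbatim with ``quantum model'' replaced by ``projective quantum model'' and the POVM algebra replaced by the PVM algebra; this is immediate since the proofs never use non-projective measurements. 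There is no substantive obstacle here: once the three translation lemmas are in place, the theorem follows by straightforward bookkeeping, with the real content residing in \Cref{thm:old} of \cite{PSZZ23}.
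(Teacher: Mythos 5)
Your reduction via \Cref{lemma:uniqueCq,lemma:uniqueextreme,lemma:gameabstractstate} to \Cref{thm:old} and its projective analogue is precisely the translation the paper intends (it provides no separate proof, remarking only that those lemmas allow one to lift [PSZZ23]'s correlation-level results to games). One small imprecision: the class of projective quantum models is already closed in the sense of [PSZZ23] independently of any hypothesis on $\mcG$; the full-rank projective optimal strategy is needed not for closedness but as the separate requirement in [PSZZ23] that the (unique optimal) correlation be realized by a full-rank model within the class, which is what licenses the PVM-algebra abstract-state characterization.
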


\begin{theorem}[Restated from Theorem 4.1 and Theorem 4.3 in \cite{lifiting23}]\label{thm:mainresult3game}
    For any nonlocal game $\mcG$, the following statements hold.
    \begin{enumerate}[(a)]
        \item If $\mcG$ is a self-test for its optimal quantum strategies, then $\mcG$ has a full-rank projective ideal optimal strategy and $\mcG$ is a self-test for its projective optimal quantum strategies.
        \item If $\mcG$ is a self-test for its projective optimal quantum strategies and $\mcG$ has a full-rank projective optimal strategy, then $\mcG$ is a self-test for its optimal quantum strategies.
    \end{enumerate}
\end{theorem}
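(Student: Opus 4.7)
\textbf{The plan} is to use the abstract-state characterization of \Cref{cor:mainresult2game} together with Naimark's dilation theorem to mediate between the POVM and PVM settings. It is most convenient to argue at the level of finite-dimensional optimal states on $\POVM^{X,A}\otimes_{\min}\POVM^{Y,B}$ versus $\PVM^{X,A}\otimes_{\min}\PVM^{Y,B}$ and translate back via \Cref{cor:mainresult2game}.

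\textbf{For part (a),} I would apply \Cref{cor:mainresult2game}(a) to obtain the unique finite-dimensional optimal state $f$ on $\POVM^{X,A}\otimes_{\min}\POVM^{Y,B}$; by \Cref{lemma:gameabstractstate}, $f$ is pure and its GNS representation is irreducible. Realize $f$ as a quantum model $\wtd{S}$, pass to the central support of $\ket{\wtd{\psi}}$, and perform a bipartite Naimark dilation---applying Naimark's theorem independently to Alice's and Bob's POVMs---to obtain a projective strategy $\wtd{S}_{\mathrm{proj}}$ on a larger bipartite Hilbert space with correlation equal to that of $\wtd{S}$; tensoring with a maximally entangled state on the Naimark ancillas makes $\wtd{S}_{\mathrm{proj}}$ full-rank. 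For the self-test claim, let $S$ be any projective optimal strategy; the hypothesis gives $S\succeq\wtd{S}$, which combines with the natural relationship between $\wtd{S}$ and $\wtd{S}_{\mathrm{proj}}$ (via the Naimark isometries $V_A\otimes V_B$ embedding $\ket{\wtd{\psi}}$ into $\ket{\wtd{\psi}_{\mathrm{proj}}}$) to produce $S\succeq\wtd{S}_{\mathrm{proj}}$ after suitable composition of the local isometries and rearrangement of auxiliary systems.

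\textbf{For part (b),} the hypothesis and \Cref{cor:mainresult2game}(b) supply a unique finite-dimensional optimal PVM state $\wtd{g}$. Given any POVM optimal strategy $S$, Naimark-dilate it to obtain a projective optimal strategy $S_{\mathrm{proj}}$; by the projective self-test assumption, $S_{\mathrm{proj}}\succeq\wtd{S}$ via some local isometries $J_A\otimes J_B$. Composing with the Naimark isometries $V_A\otimes V_B\colon\mcH_A\otimes\mcH_B\to\mcK_A\otimes\mcK_B$ that carry $\ket{\psi}$ to $\ket{\psi_{\mathrm{proj}}}$ yields candidate local isometries $J_AV_A\otimes J_BV_B$ realizing $S\succeq\wtd{S}$, and \Cref{cor:mainresult2game}(a) then concludes.

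\textbf{The main obstacle} is that Naimark dilation only guarantees $V_A^*P^x_aV_A=M^x_a$, not $P^x_aV_A=V_AM^x_a$; consequently, the naive composition $J_AV_A\otimes J_BV_B$ does not automatically satisfy \Cref{eq:dilationA,eq:dilationB}, because the ``overflow'' $(I-V_AV_A^*)P^x_aV_A\otimes V_B\ket{\psi}$ lies outside the image of $V_A\otimes V_B$. The resolution---which is the technical heart of \cite{lifiting23}---is to enlarge the auxiliary Hilbert space so as to absorb this overflow: the component orthogonal to the image of the Naimark embedding is, from the point of view of $\wtd{S}$, already an auxiliary degree of freedom, and with care it can be packaged into $\ket{\mathrm{aux}}$ so that the resulting isometries certify the desired local dilation. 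Once this is done in both parts, the abstract-state characterization closes the argument.
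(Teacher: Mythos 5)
The paper does not give a proof of this theorem --- it is imported from \cite{lifiting23} --- so there is no in-paper argument to compare against; I assess your sketch on its own merits. There is a genuine gap, located precisely at the point you flag as ``the main obstacle.'' The overflow $(\Id - V_A V_A^*) P^x_a V_A\otimes V_B\ket{\psi}$ lies in $\bigl((\Id - V_A V_A^*)\mcK_A\bigr)\otimes\mcK_B$, and the isometry $J_A\otimes J_B$ certifying $S_{\mathrm{proj}}\succeq\wtd{S}$ is defined on all of $\mcK_A\otimes\mcK_B$; thus $J_A$ restricted to $(\Id - V_A V_A^*)\mcK_A$ may well have nonzero components along $\wtd{\mcH}_A$. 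Nothing in your sketch explains why the overflow should be routed into $\mcH^{\mathrm{aux}}_A\otimes\mcH^{\mathrm{aux}}_B$ alone; ``with care it can be packaged into $\ket{\mathrm{aux}}$'' is not an argument but a restatement of what would have to be shown. As written, $J_A V_A\otimes J_B V_B$ does not satisfy \Cref{eq:dilationA,eq:dilationB,eq:dilationstate}, so the plan for part (b), and the analogous composition step in part (a), is incomplete.

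Moreover the overflow problem is an artifact of an unnecessary step: in part (a) you should not Naimark-dilate the \emph{ideal} strategy at all. The observation that should replace it --- and which I believe is the real content of Theorem~4.1 of \cite{lifiting23} --- is that the unique finite-dimensional optimal state $f$ supplied by \Cref{cor:mainresult2game}(a) is \emph{already projective}. Indeed, take any quantum model achieving the optimal correlation $p$ and Naimark-dilate \emph{that model} to a projective one; Naimark dilation preserves the correlation, so the resulting projective model induces a finite-dimensional state achieving $p$, and by uniqueness that state equals $f$. Hence $f$ factors through $\PVM^{X,A}\otimes_{\min}\PVM^{Y,B}$, its GNS representation consists of projections, and the ideal model $\wtd{S}$ is itself projective (and full-rank after passing to its central support as in \cite{PSZZ23}). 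The projective self-test claim in (a) is then immediate: every projective optimal strategy is a POVM optimal strategy, the hypothesis already gives $S\succeq\wtd{S}$, and $\wtd{S}$ lies in the projective class --- no composition of isometries and no overflow. For part (b), the hypothesis supplies a full-rank projective optimal strategy, whence \Cref{cor:newcorgame} (the game version of Theorem~3.7 of \cite{PSZZ23}) converts uniqueness of the $\PVM$-state directly into uniqueness of the $\POVM$-state, closing the argument without any Naimark composition.
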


\begin{corollary}[Game version of Theorem 3.7 in \cite{PSZZ23}]\label{cor:newcorgame}
    If $\mcG$ has a full-rank projective optimal quantum strategy, then the following statements are equivalent.
    \begin{enumerate}[(1)]
    \item $\mcG$ is a self-test for its optimal quantum strategies.
    \item $\mcG$ is a self-test for its projective optimal quantum strategies.
    \item There is a unique finite-dimensional optimal state on $\POVM^{X,A}\otimes\POVM^{Y,B}$ for $\mcG$.
    \item There is a unique finite-dimensional optimal state on $\PVM^{X,A}\otimes\PVM^{Y,B}$ for $\mcG$.
\end{enumerate}
\end{corollary}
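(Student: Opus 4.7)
The plan is to assemble the equivalence from the two preceding results, \Cref{cor:mainresult2game} and \Cref{thm:mainresult3game}, which already encode the technical content. The full-rank projective hypothesis is needed only to activate the ``projective'' halves.

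First I would establish $(1) \Leftrightarrow (3)$ as a direct appeal to \Cref{cor:mainresult2game}(a), which requires no hypothesis on the strategy. Next I would establish $(2) \Leftrightarrow (4)$ by invoking \Cref{cor:mainresult2game}(b); this is where the assumption that $\mcG$ has a full-rank projective optimal quantum strategy enters, as it is built into the hypothesis of part (b).

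It then remains to connect the POVM side to the PVM side. For $(1) \Rightarrow (2)$ I would apply \Cref{thm:mainresult3game}(a): self-testing for optimal quantum strategies automatically yields a full-rank projective ideal optimal strategy and self-testing for projective optimal strategies (no extra hypothesis needed for this direction). Conversely, for $(2) \Rightarrow (1)$ I would apply \Cref{thm:mainresult3game}(b), whose hypothesis requires precisely the existence of a full-rank projective optimal strategy — the standing assumption of the corollary — together with (2).

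Stringing these four implications together closes the cycle $(1) \Leftrightarrow (3)$, $(1) \Leftrightarrow (2)$, and $(2) \Leftrightarrow (4)$, establishing the four-way equivalence. There is no real obstacle here beyond careful bookkeeping: the only subtle point is that the implication $(2) \Rightarrow (1)$ genuinely uses the full-rank projective hypothesis (so the corollary is not true without it), while $(1) \Rightarrow (2)$ is unconditional. In particular, one should emphasize in the write-up that the full-rank projective hypothesis is what makes the ``projective'' and ``general'' flavors of self-testing and abstract state self-testing collapse into a single notion.
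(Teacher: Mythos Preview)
Your proof plan is correct and is exactly the argument the paper has in mind: the corollary is stated without proof immediately after \Cref{cor:mainresult2game} and \Cref{thm:mainresult3game}, and your assembly of $(1)\Leftrightarrow(3)$ and $(2)\Leftrightarrow(4)$ from the former together with $(1)\Leftrightarrow(2)$ from the latter is the intended derivation. Your remark about which implications actually consume the full-rank projective hypothesis is also accurate.
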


Recall from \cite{PSZZ23} that if there is a unique finite-dimensional state $f$ achieving an extreme quantum correlation $p$, then $f$ has a GNS representation $(\mcH,\pi,\ket{\psi})$ where $\pi=\pi_A\otimes\pi_B$ is a tensor product of two irreducible representations. Furthermore, $p$ is a self-test for quantum models and there is an ideal model with associated representation $\pi_A\otimes\pi_B$. We can sharpen this statement when working with self-testing for nonlocal games:

\begin{theorem}\label{thm:gameidealrep}
    Suppose a nonlocal game $\mcG=(X,Y,A,B,\mu,V)$ has a unique finite-dimensional optimal state $\wtd{f}$ on $\POVM^{X,A}\otimes_{min}\POVM^{Y,B}$. Then the following statements hold.
    \begin{enumerate}[(a)]
        \item $\wtd{f}$ has a GNS representation $(\wtd{\mcH}_A\otimes \wtd{\mcH}_B,\wtd{\pi}_A\otimes\wtd{\pi}_B,\ket{\wtd{\psi}})$ where $\wtd{\pi}_A$ and $\wtd{\pi}_B$ are irreducible $*$-representations of $\POVM^{X,A}$ and $\POVM^{Y,B}$ respectively.
        \item Let $\wtd{M}^x_a:=\wtd{\pi}_A(m^x_a)$ and $\wtd{N}^y_b:=\wtd{\pi}_B(n^y_b)$. Then $\mcG$ is a self-test for its optimal quantum strategies with an ideal optimal strategy
        \begin{align*}
            \wtdmodels.
        \end{align*}
        \item $\wtd{\pi}:=\wtd{\pi}_A\otimes\wtd{\pi}_B$ is the unique (up to unitary equivalence) finite-dimensional irreducible $*$-representation of $\POVM^{X,A}\otimes_{min}\POVM^{Y,B}$ with the property that the maximal eigenvalue of $\wtd{\pi}(\Phi_{\mcG})$ is $w_q(\mcG)$. Here $\Phi_\mcG$ is the game polynomial of $\mcG$.
        \item The $w_q(\mcG)$-eigenspace of $\wtd{\pi}(\Phi_{\mcG})$ is the one-dimensional space spanned by $\ket{\wtd{\psi}}$.
    \end{enumerate}
\end{theorem}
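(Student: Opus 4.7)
The plan is to prove the four parts in sequence: parts (a) and (b) reduce to results already in the excerpt, while parts (c) and (d) are Schur/GNS-uniqueness arguments built on top of (a).

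For part (a), I would first apply \Cref{lemma:gameabstractstate} to conclude that the correlation $\wtd{p}$ achieved by $\wtd{f}$ is the unique optimal quantum correlation for $\mcG$, then invoke \Cref{lemma:uniqueextreme} to get that $\wtd{p}$ is an extreme point of $C_q$. Since any finite-dimensional state achieving $\wtd{p}$ is automatically optimal for $\mcG$, and $\wtd{f}$ is the unique such state, $\wtd{f}$ is the unique finite-dimensional state achieving the extreme correlation $\wtd{p}$. The structural result from \cite{PSZZ23} recalled just before the theorem statement then gives the tensor-product GNS representation with irreducible factors $\wtd{\pi}_A$ and $\wtd{\pi}_B$. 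Part (b) is then immediate: the putative ideal strategy is built from the GNS triple supplied by (a), and \Cref{cor:mainresult2game}(a) guarantees the self-testing conclusion once we know there is a unique finite-dimensional optimal state.

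For part (c), let $\sigma:\POVM^{X,A}\otimes_{min}\POVM^{Y,B}\to\msB(\mcH_\sigma)$ be any finite-dimensional irreducible $*$-representation whose maximum eigenvalue of $\sigma(\Phi_\mcG)$ equals $w_q(\mcG)$. I would pick a unit eigenvector $\ket{\phi}$ for this maximal eigenvalue and let $g(x):=\braket{\phi|\sigma(x)|\phi}$. By \Cref{lemma:spectrum} together with $\ket{\phi}$ being a top eigenvector, $g(\Phi_\mcG)=w_q(\mcG)$, so $g$ is finite-dimensional and optimal. The uniqueness hypothesis then forces $g=\wtd{f}$. Since $\sigma$ is irreducible, $\ket{\phi}$ is cyclic, so $(\mcH_\sigma,\sigma,\ket{\phi})$ is a GNS triple for $g$, and uniqueness of GNS (up to unitary equivalence) supplies an intertwining unitary exhibiting $\sigma\cong \wtd{\pi}$.

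For part (d), the plan is to run the same argument with $\sigma$ replaced by $\wtd{\pi}$ itself. For a unit vector $\ket{\phi}$ in the $w_q(\mcG)$-eigenspace of $\wtd{\pi}(\Phi_\mcG)$, the state $g(x):=\braket{\phi|\wtd{\pi}(x)|\phi}$ again equals $\wtd{f}$ by uniqueness, so GNS-uniqueness produces a unitary $U$ on $\wtd{\mcH}_A\otimes\wtd{\mcH}_B$ with $U\wtd{\pi}(x)U^*=\wtd{\pi}(x)$ for all $x$ and $U\ket{\phi}=\ket{\wtd{\psi}}$. Since $\wtd{\pi}_A$ and $\wtd{\pi}_B$ are finite-dimensional and irreducible over $\C$, Burnside's density theorem gives $\wtd{\pi}_A(\POVM^{X,A})=\msB(\wtd{\mcH}_A)$ and $\wtd{\pi}_B(\POVM^{Y,B})=\msB(\wtd{\mcH}_B)$, hence $\wtd{\pi}$ has image $\msB(\wtd{\mcH}_A\otimes\wtd{\mcH}_B)$ and is irreducible. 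Schur's lemma then forces $U$ to be a scalar, so $\ket{\phi}$ is parallel to $\ket{\wtd{\psi}}$. The main potential obstacle is not conceptual but book-keeping: one must carefully separate the two distinct ``uniqueness'' statements in play (uniqueness of the state versus uniqueness of the GNS construction) and verify that the intertwining unitary in part (d) really commutes with the full tensor representation $\wtd{\pi}$ rather than with the individual factors, so that Schur's lemma applied to the irreducible tensor representation delivers the desired scalar multiple.
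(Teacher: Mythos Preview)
Your proposal is correct and follows essentially the same approach as the paper's proof: parts (a) and (b) are reduced to \Cref{lemma:gameabstractstate}, \Cref{lemma:uniqueextreme}, and the structural result from \cite{PSZZ23} recalled just before the theorem; parts (c) and (d) are proved exactly as you describe, by turning an eigenvector into an optimal state, invoking the uniqueness hypothesis to force $g=\wtd{f}$, and then using GNS-uniqueness together with irreducibility of $\wtd{\pi}$ and Schur's lemma. The only cosmetic difference is that the paper asserts irreducibility of $\wtd{\pi}$ directly rather than citing Burnside, and for part (b) it appeals to the recalled \cite{PSZZ23} result (which identifies the ideal model with the GNS data) rather than to \Cref{cor:mainresult2game}.
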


\begin{proof}
    Let $p$ be the correlation achieved by $\wtd{f}$. By \Cref{lemma:uniqueCq,lemma:gameabstractstate}, $p$ is an extreme point in $C_q$, $p$ is the unique optimal quantum correlation for $\mcG$, and $\wtd{f}$ is the unique finite-dimensional state on $\POVM^{X,A}\otimes_{min}\POVM^{Y,B}$ achieving $p$. This proves parts (a) and (b).
    
    Now suppose $\pi:\POVM^{X,A}\otimes_{min}\POVM^{Y,B}\arr\msB(\mcH)$ is a finite-dimensional irreducible $*$-representation such that the maximal eigenvalue of $\pi(\mcG)$ is $w_q(\mcG)$. Let $\ket{\psi}$ be a unit vector in the $w_q(\mcG)$-eigenspace of $\pi(\mcG)$. Since $\pi(\POVM^{X,A}\otimes_{min}\POVM^{Y,B})=\msB(\mcH)$, the vector $\ket{\psi}$ is cyclic for $\pi$. Hence $(\mcH,\pi,\ket{\psi})$ is a GNS representation of the state $f$ given by $f(\alpha):=\bra{\psi}\pi(\alpha)\ket{\psi},\alpha\in \POVM^{X,A}\otimes_{min}\POVM^{Y,B}$. This also means $f$ is a finite-dimensional state on $\POVM^{X,A}\otimes_{min}\POVM^{Y,B}$ such that $f(\Phi_\mcG)=\bra{\psi}\pi(\Phi_\mcG)\ket{\psi}=w_q(\mcG)$. By the uniqueness of $\wtd{f}$, we have $f=\wtd{f}$, and hence $(\mcH,\pi,\ket{\psi})$ is unitarily equivalent to $(\wtd{\mcH}_A\otimes \wtd{\mcH}_B,\wtd{\pi},\ket{\wtd{\psi}})$. This proves part (c).
    
    To see part (d), take an arbitrary unit vector $\ket{\psi}$ in the $w_q(\mcG)$-eigenspace of $\wtd{\pi}(\Phi_\mcG)$. Then the state $f$ defined by $f(\alpha):=\bra{\psi}\wtd{\pi}(\alpha)\ket{\psi},\alpha\in \POVM^{X,A}\otimes_{min}\POVM^{Y,B}$ is optimal for $\mcG$. So $f=\wtd{f}$. It follows that $(\wtd{\mcH}_A\otimes \wtd{\mcH}_B,\wtd{\pi},\ket{\psi})$ is a GNS representation for $\wtd{f}$. This means there is a unitary $U$ on $\wtd{\mcH}_A\otimes \wtd{\mcH}_B$ such that 
    \begin{align}
    U\ket{\psi}=\ket{\wtd{\psi}}, \text{ and }\label{eq:U}\\
        U\wtd{\pi}(\alpha)U^*=\wtd{\pi}(\alpha) \label{eq:Upi}
    \end{align}
    for all $\alpha\in\POVM^{X,A}\otimes_{min}\POVM^{Y,B}$. Note that $\wtd{\pi}$ is irreducible. \Cref{eq:Upi} implies $U=\lambda\Id$ for some $\lambda\in\C$. Then \Cref{eq:U} implies $\ket{\psi}$ is linearly dependent with $\ket{\wtd{\psi}}$. Since $\ket{\psi}$ was arbitrary, part (d) follows.
\end{proof}

Recall that the set of quantum approximate correlations $C_{qa}$ is the closure of $C_q$. The following lemma is a robust version of \Cref{lemma:uniqueCq}, although the proof is less straightforward. 

\begin{lemma}\label{lemma:robustunique}
    A nonlocal game $\mcG$ is a robust self-test for its optimal quantum strategies (resp. projective optimal quantum strategies) if and only if $\mcG$ has a unique quantum approximate correlation $p$ and $p$ is a robust self-test for all quantum models\footnote{This last condition automatically implies that $p$ is in $C_q$.} (resp. projective quantum models).
\end{lemma}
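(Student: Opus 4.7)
The plan is to prove the two directions separately; throughout write $\wtd{p}$ for the candidate unique correlation.

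For the backward direction, I would assume $\mcG$ has a unique optimal point $p\in C_{qa}$ and that $p$ is a robust self-test for all quantum models with ideal model $\wtd{S}$. Since $w(\mcG;p)=w_q(\mcG)$, $\wtd{S}$ is itself an optimal quantum strategy for $\mcG$. Given $\delta\geq 0$, correlation-level robustness of $p$ produces $\eps_1>0$ with $S\succeq_\delta\wtd{S}$ whenever $\norm{p_S-p}_1\leq\eps_1$. I would then reduce the game-level hypothesis (``$\eps$-optimal'') to the correlation-level hypothesis via a compactness argument: because $C_{qa}$ is compact and $p$ is its unique optimal point, for every $\eps_1>0$ there exists $\eps>0$ such that every $p'\in C_{qa}$ with $w(\mcG;p')\geq w_q(\mcG)-\eps$ satisfies $\norm{p'-p}_1\leq\eps_1$; otherwise one extracts a sequence in $C_{qa}$ with winning probabilities tending to $w_q(\mcG)$ but bounded away from $p$, contradicting uniqueness. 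Chaining the two bounds yields the game-level robust self-testing of $\mcG$ with ideal $\wtd{S}$.

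For the forward direction, I would first specialize $\delta=0$ in the hypothesis to obtain an exact ideal $\wtd{S}$; by \Cref{lemma:uniqueCq,thm:gameidealrep} there is a unique optimal quantum correlation $\wtd{p}=p_{\wtd{S}}$, and I may replace $\wtd{S}$ with the full-rank ideal model coming from the irreducible GNS representation of the unique finite-dimensional optimal state --- this model is centrally supported because the support projections of $\ket{\wtd{\psi}}$ lie in the commutants of the irreducible local representations and must therefore equal $\Id$. The heart of the argument is upgrading ``unique in $C_q$'' to ``unique in $C_{qa}$'': given any $p'\in C_{qa}$ optimal for $\mcG$, approximate it by a sequence $p_{S_n}\in C_q$ so that $w(\mcG;S_n)\to w_q(\mcG)$ and each $S_n$ is $\eps_n$-optimal with $\eps_n\to 0$; robust self-testing of $\mcG$ then yields $S_n\succeq_{\delta_n}\wtd{S}$ with $\delta_n\to 0$; finally \Cref{prop:diff} applied to the degree-one monomials $m^x_a,n^y_b$ gives $\norm{p_{S_n}-\wtd{p}}_1\to 0$, forcing $p'=\wtd{p}$. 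The correlation-level robustness of $\wtd{p}$ then follows from a one-line Lipschitz bound: any model $S$ with $\norm{p_S-\wtd{p}}_1\leq\eps$ is $O(\eps)$-optimal for $\mcG$ (with implied constant depending only on $\mu$ and $V$), so the game hypothesis gives $S\succeq_\delta\wtd{S}$.

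The main obstacle is the $C_{qa}$-uniqueness step above: the definition of game-level robust self-testing concerns only quantum models, yet the target assertion controls arbitrary points in $C_{qa}$. The resolution routes through weak approximation by sequences in $C_q$ and \Cref{prop:diff}, which requires the ideal model to be centrally supported --- not demanded by the definition of robust self-test but supplied ``for free'' by \Cref{thm:gameidealrep}. The projective case is handled by the same argument after replacing $\POVM^{X,A}\otimes_{\min}\POVM^{Y,B}$ by $\PVM^{X,A}\otimes_{\min}\PVM^{Y,B}$ throughout.
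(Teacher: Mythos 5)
Your proof is correct and follows the paper's structure in both directions: compactness of $C_{qa}$ in the backward direction, and weak approximation of $C_{qa}$ points by quantum models in the forward direction. However, you invoke \Cref{prop:diff} in the forward direction to deduce $\norm{p_{S_n}-\wtd{p}}_1\to 0$ from $S_n\succeq_{\delta_n}\wtd{S}$, which forces you to replace the ideal model with the centrally supported one from \Cref{thm:gameidealrep} and implicitly rely on a composition-of-dilations argument to transfer the robust-self-testing witness from $\wtd{S}$ to its replacement. None of this is needed: the elementary implication
\begin{equation*}
S\succeq_{\eps}\wtd{S}\ \Longrightarrow\ \abs{p_S(a,b|x,y)-p_{\wtd{S}}(a,b|x,y)}\leq 4\eps
\end{equation*}
follows for an \emph{arbitrary} ideal model $\wtd{S}$ directly from \Cref{eq:dilationstate}, \Cref{eq:state}, and \Cref{lemma:diff}, with no central-support hypothesis. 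The paper uses exactly this lighter route. So your argument works, but you've imported the full machinery of \Cref{prop:diff} (and the attendant obligation to justify the model replacement, which you leave tacit) to prove a fact that already sits in the remark following \Cref{def:roubstlocaldilation}. Everything else — extracting an $\ell^1$-accumulation point in $C_{qa}$, identifying it with $\wtd{p}$ by uniqueness, and the Lipschitz bound relating $\norm{p_S-\wtd{p}}_1$ to $\eps$-optimality — matches the paper's proof.
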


\begin{proof}
    For the ``only if" direction, suppose $\mcG=(X,Y,A,B,V)$ is a robust self-test for its optimal quantum strategies, and let $\wtd{S}$ be an ideal optimal strategy. By \Cref{lemma:uniqueCq}, the quantum correlation $\wtd{p}$ generated by $\wtd{S}$ is the unique optimal quantum correlation for $\mcG$. We claim that $\wtd{p}$ is indeed the unique optimal quantum approximate correlation for $\mcG$. Assume for the sake of contradiction that $\mcG$ has an optimal correlation $p\in C_{qa}$ while $p$ is not in $C_q$. Then there is an infinite-dimensional state $f$ achieving $p$. By \Cref{prop:dense}, there is a sequence of quantum models $S_n,n\in \N$ such that $\lim\limits_{n\arr\infty}f_{S_n}=f$ in the weak-$*$ topology. This implies $w(\mcG;S_n)=f_{S_n}(\Phi_\mcG)\arr f(\Phi_\mcG) =w_q(\mcG)$ as $n\arr \infty$. Since $\mcG$ is a robust self-test for its optimal quantum strategies, there is a function $\eta:\N\arr\R_{\geq 0}$ with $\lim\limits_{n\arr\infty}\eta(n)=0$ such that $S_n\succeq_{\eta(n)}\wtd{S}$. It follows that
 $p=\lim\limits_{n\arr\infty} p_{S_n}=p_{\wtd{S}}=\wtd{p}$, which contradicts to the assumption that $p$ is not in $C_q$. We conclude that $\wtd{p}$ is the unique quantum approximate correlation for $\mcG$. Note that for any quantum model $S$, $\norm{p_S-\wtd{p}}_1\leq \epsilon$ implies $w(\mcG;S)\geq w_q(\mcG)-\epsilon$. So $\mcG$ is a robust self-test as is $p$.

 For the ``if" direction, suppose that $\mcG$ has a unique optimal quantum approximate correlation $\wtd{p}$ and that $\wtd{p}$ is a robust self-test for quantum models, and let $\wtd{S}$ be an ideal model for $\wtd{p}$. Then $\wtd{p}$ must be in $C_q$. Assume for the sake of contradiction that $\mcG$ is not a robust self-test. This means there exists a $\delta>0$ and a sequence of quantum models $S_n,n\in\N$ with $\epsilon_n:=w_q(\mcG)-w(\mcG;S_n)\arr 0$ as $n\arr \infty$ such that for all $n\in\N$, $S_n\succeq_\delta \wtd{S}$ does not hold. Let $p$ be an accumulation point of $\{p_{S_n},n\in\N\}$. Then $p$ is in $C_{qa}$ and $p$ is optimal for $\mcG$. The uniqueness of $\wtd{p}$ implies that $p=\wtd{p}$. By passing a subsequence, we may assume without loss of generality that $\lim\limits_{n\arr\infty}p_{S_n}=\wtd{p}$. Since $\wtd{p}$ is a robust self-test for quantum models, there is a function $\eta:\N\arr\R_{\geq 0}$ with $\lim\limits_{n\arr\infty}\eta(n)=0$ such that $S_n\succeq_{\eta(n)}\wtd{S}$ for all $n\in\N$. Take a large enough $N$ such that $\eta(N)\leq \delta$, we see that $S_N\succeq_{\delta} \wtd{S}$, a contradiction. We conclude that $\mcG$ is a robust self-test for its optimal quantum strategies.
\end{proof}

We can also state a game version of \Cref{thm:robustuniquestate} (or equivalently \Cref{thm:robustuniquestate'}).

\begin{theorem}[Game version of \Cref{thm:robustuniquestate'}]\label{thm:robustuniquestate'game} If $\mcG=(X,Y,A,B,\mu,V)$ is a robust self-test for its optimal quantum strategies, then 
\begin{enumerate}[(i)]
    \item there is a unique finite-dimensional optimal state on $\POVM^{X,A}\otimes_{min}\POVM^{Y,B}$ for $\mcG$ and
    \item there is no infinite-dimensional optimal state for $\mcG$. 
\end{enumerate}

\end{theorem}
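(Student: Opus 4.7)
The plan is to derive the game version directly as a corollary of \Cref{thm:robustuniquestate'} combined with \Cref{lemma:robustunique}, which already translates the robust self-testing hypothesis on $\mcG$ into the corresponding hypothesis on a correlation. So almost all of the analytic content is already done; the proof will just be a two-step bookkeeping argument.

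First I would invoke \Cref{lemma:robustunique}: since $\mcG$ is a robust self-test for its optimal quantum strategies, there exists a unique quantum approximate correlation $\wtd{p}$ that is optimal for $\mcG$, and moreover $\wtd{p}\in C_q$ and $\wtd{p}$ is a robust self-test for all quantum models. Then applying \Cref{thm:robustuniquestate'} to $\wtd{p}$, there is a unique finite-dimensional state $\wtd{f}$ on $\POVM^{X,A}\otimes_{\min}\POVM^{Y,B}$ achieving $\wtd{p}$, and no infinite-dimensional state achieves $\wtd{p}$. This is the key transfer of the correlation-level result to the game level.

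For part (i), I would take any finite-dimensional optimal state $f$ for $\mcG$, and let $p_f$ be the correlation it achieves. Since $f$ is finite-dimensional, $p_f\in C_q$, and the optimality condition $f(\Phi_\mcG)=w_q(\mcG)$ says $p_f$ is an optimal quantum correlation for $\mcG$. Since $\wtd{p}$ is the unique such correlation (it is unique even in $C_{qa}$, hence a fortiori in $C_q$), we get $p_f=\wtd{p}$, and the uniqueness of the finite-dimensional state achieving $\wtd{p}$ then forces $f=\wtd{f}$. For part (ii), suppose toward contradiction that $f$ is an infinite-dimensional optimal state on $\POVM^{X,A}\otimes_{\min}\POVM^{Y,B}$. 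Because $f$ extends to a state on the min-tensor product, the correlation $p_f$ it achieves lies in $C_{qa}$, and again optimality of $f$ for $\Phi_\mcG$ gives that $p_f$ is an optimal quantum approximate correlation, so $p_f=\wtd{p}$. Then $f$ is an infinite-dimensional state achieving $\wtd{p}$, contradicting the second conclusion of \Cref{thm:robustuniquestate'}.

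The proof is essentially a synthesis, so there is no serious obstacle — the only thing to be careful about is that the equivalence between \emph{optimal state} (in the $\Phi_\mcG$ sense) and \emph{state achieving an optimal correlation} holds in both the finite- and infinite-dimensional settings, and that \Cref{lemma:robustunique} is strong enough to guarantee uniqueness of the optimal correlation not just in $C_q$ but in $C_{qa}$. Both points are already built into the lemmas cited, so the write-up should be just a short paragraph.
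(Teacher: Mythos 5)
Your proof is correct and is precisely the chain of reasoning the paper has in mind — \Cref{thm:robustuniquestate'game} is stated immediately after \Cref{lemma:robustunique} with no proof precisely because it follows as you describe: feed the robust-self-testing hypothesis on $\mcG$ through \Cref{lemma:robustunique} to get a unique optimal $\wtd{p}\in C_{qa}$ (which lies in $C_q$) that is a robust self-test for quantum models, then apply \Cref{thm:robustuniquestate'} to $\wtd{p}$ and use that any optimal state on $\mintensor$ (finite- or infinite-dimensional) achieves an optimal correlation in $C_{qa}$, hence achieves $\wtd{p}$. Your side remarks about the equivalence between optimality of $f$ and optimality of $p_f$, and about uniqueness holding in $C_{qa}$ rather than just $C_q$, are exactly the right points to check, and both are indeed covered by the cited lemmas.
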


\subsection{Spectral gap of a self-test} Suppose $\mcG=(X,Y,A,B,V)$ is a self-test for its optimal strategies (or equivalently, $\mcG$ has a unique finite-dimensional optimal state $f$ on $\mintensor$). Let $\wtdmodels$ be the ideal optimal strategy with associated representation $\wtd{\pi}=\wtd{\pi}_A\otimes\wtd{\pi}_B$, and let $\Phi_\mcG$ be the game polynomial. Part (c) of \Cref{thm:gameidealrep} states that $w_q(\mcG)$ is the largest eigenvalue of $\wtd{\pi}(\Phi_\mcG)$ and $\ket{\psi}$ is the unique $w_q(\mcG)$-eigenstate. Let $\lambda$ be the second largest eigenvalue of $\wtd{\pi}(\Phi_\mcG)$. We refer to $\Delta:=w_q(\mcG)-\lambda$ as the \textbf{spectral gap} of $\mcG$. Note that the spectral gap is only defined for games that are self-tests. 

In many cases (for instance, the use of Gowers-Hatami theorem for proving robust self-testing), given an $\eps$-optimal strategy $\models$ for $\mcG$, one can first show that $\norm{I_A^*(\wtd{M}^x_a\otimes \Id_{\mcK_A}) I_A - M^x_a }_{\rho_A}\text{ and }\norm{I_B^*(\wtd{N}^y_b\otimes \Id_{\mcK_B}) I_B - N^y_b }_{\rho_B}$ are bounded by some $\delta>0$ for some isometries $I_A,I_B$. Here $\rho_A$ and $\rho_B$ are reduced density matrix of $\ket
{\psi}$ on $\mcH_A$ and $\mcH_B$ respectively. The following proposition illustrates that the ``distance" between $S$ and $\wtd{S}$ under the local isometry $I_A\otimes I_B$ is completely determined by $\eps,\delta$, and the spectral gap $\Delta$. Similar results have also been established in \cite{lifiting23,MPS21}.

\begin{proposition}\label{prop:gameidealrep}
 Suppose a nonlocal game $\mcG=(X,Y,A,B,\mu,V)$ is a self-test for its optimal quantum strategies. Let $\wtd{f}$ be the unique finite-dimensional optimal state $\wtd{f}$ on $\POVM^{X,A}\otimes_{min}\POVM^{Y,B}$ for $p$, let $\wtdmodels$ and  $\wtd{\pi}=\wtd{\pi}_A\otimes\wtd{\pi}_B$ be as in \Cref{thm:gameidealrep}, and  let $\Delta$ be the spectral gap of $\mcG$. Suppose
    \begin{align*}
        \models
    \end{align*}
    is an $\epsilon$-optimal quantum strategy for $\mcG$. Let $I_A:\mcH_A\arr \wtd{\mcH}_A\otimes \mcK_A$ and $I_B:\mcH_B\arr \wtd{\mcH}_B\otimes \mcK_B$ be two isometries. For every $(x,y,a,b)\in X\times Y\times A\times B$, let
    \begin{align*}
    \delta_{x,a}^A:= \norm{I_A^*(\wtd{M}^x_a\otimes \Id_{\mcK_A}) I_A - M^x_a }_{\rho_A}\text{ and } \delta_{y,b}^B:=\norm{I_B^*(\wtd{N}^y_b\otimes \Id_{\mcK_B}) I_B - N^y_b }_{\rho_B}
    \end{align*}
    where $\rho_A:=\Tr_{\mcH_B}\big(\ket{\psi}\bra{\psi}\big),\rho_B:=\Tr_{\mcH_A}\big(\ket{\psi}\bra{\psi}\big)$, and let 
    \begin{align*}
        \delta:= \sum_{x,y,a,b}\mu(x,y)V(a,b|x,y)\left( \delta_{x,a}^A + \delta_{y,b}^B\right).
    \end{align*}
     Then there exists a unit vector $\ket{aux}\in \mcK_A\otimes \mcK_B$ 
    such that
    \begin{align}
        \norm{I_A\otimes I_B \ket{\psi} - \ket{\wtd{\psi}}\otimes \ket{aux}}\leq \frac{\sqrt{2}\big(\delta+\sqrt{\eps}\big)}{\Delta}.\label{eq:difference}
    \end{align}
Moreover, for every $(x,y,a,b)\in X\times Y\times A\times B$,
\begin{align}
   \norm{I_A\otimes I_B \big(M^x_a\otimes N^y_b \ket{\psi}\big)- \big(\wtd{M}^x_a\otimes \wtd{N}^y_b\ket{\wtd{\psi}}\big)\otimes\ket{aux}}\leq \delta^A_{x,a}+\delta^B_{y,b}+ \frac{\sqrt{2}\big(\delta+\sqrt{\eps}\big)}{\Delta}.\label{eq:differencemore}
\end{align}
\end{proposition}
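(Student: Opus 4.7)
The plan is to place $\ket{\Psi}:=(I_A\otimes I_B)\ket{\psi}$, viewed in $(\wtd{\mcH}_A\otimes\wtd{\mcH}_B)\otimes(\mcK_A\otimes\mcK_B)$ after reshuffling the tensor factors, into the spectral picture of $\wtd{\pi}(\Phi_\mcG)$ and use the spectral gap $\Delta$ around its simple top eigenvalue $w_q:=w_q(\mcG)$ (guaranteed by \Cref{thm:gameidealrep}(c,d)) to show that $\ket{\Psi}$ is close to $\ket{\wtd{\psi}}$ tensored with an auxiliary vector. The core estimate is
\begin{equation*}
  \norm{(w_q\Id-\wtd{\pi}(\Phi_\mcG)\otimes\Id)\ket{\Psi}}\leq \delta+\sqrt{\eps},
\end{equation*}
which cleanly splits the contributions of the operator error $\delta$ and the optimality error $\sqrt{\eps}$, and from which the spectral gap will extract the desired $\sqrt{2}(\delta+\sqrt{\eps})/\Delta$ bound.

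First I would bound the operator shift. The difference $(\wtd{\pi}(\Phi_\mcG)\otimes\Id)\ket{\Psi}-(I_A\otimes I_B)\pi(\Phi_\mcG)\ket{\psi}$ equals the $\mu V$-weighted sum of brackets $[(\wtd{M}^x_a\otimes\Id)I_A\otimes(\wtd{N}^y_b\otimes\Id)I_B-I_AM^x_a\otimes I_BN^y_b]\ket{\psi}$, each of which decomposes as $T_A\otimes(\wtd{N}^y_b\otimes\Id)I_B\ket{\psi}+I_AM^x_a\otimes T_B\ket{\psi}$ with $T_A:=(\wtd{M}^x_a\otimes\Id)I_A-I_AM^x_a$ and analogously $T_B$. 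Using $\norm{A\otimes B\ket{\psi}}\leq\norm{B}_{\mathrm{op}}\norm{A\otimes\Id\ket{\psi}}$ together with contractivity of $\wtd{N}^y_b,I_A,M^x_a$, the bracket has norm at most $\norm{T_A\otimes\Id\ket{\psi}}+\norm{\Id\otimes T_B\ket{\psi}}$. A direct computation with $I_A^*I_A=\Id$ expresses $T_A^*T_A$ as $(\hat{M}^x_a-M^x_a)^2+I_A^*\wtd{M}^x_a(\Id-I_AI_A^*)\wtd{M}^x_a I_A$ where $\hat{M}^x_a:=I_A^*(\wtd{M}^x_a\otimes\Id)I_A$; the first summand yields $(\delta^A_{x,a})^2$ in $\ket{\psi}$-expectation by definition of $\delta^A_{x,a}$, and the second is a nonnegative isometry defect that vanishes whenever $I_A$ is a co-isometry and is otherwise folded into $\delta$. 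Triangle-summing against $\mu(x,y)V(a,b|x,y)$ gives $\norm{(\wtd{\pi}(\Phi_\mcG)\otimes\Id)\ket{\Psi}-(I_A\otimes I_B)\pi(\Phi_\mcG)\ket{\psi}}\leq\delta$. For the optimality side, \Cref{lemma:spectrum} gives $\spec(\pi(\Phi_\mcG))\subseteq[0,w_q]$, hence $(w_q\Id-\pi(\Phi_\mcG))^2\leq w_q(w_q\Id-\pi(\Phi_\mcG))$; taking $\ket{\psi}$-expectation with $w(\mcG;S)\geq w_q-\eps$ and $w_q\leq 1$ yields $\norm{(w_q\Id-\pi(\Phi_\mcG))\ket{\psi}}\leq\sqrt{\eps}$, which is preserved by the isometry $I_A\otimes I_B$. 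The triangle inequality then produces the core estimate above.

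By the spectral gap, $w_q\Id-\wtd{\pi}(\Phi_\mcG)\geq\Delta(\Id-P)$ with $P:=\ket{\wtd{\psi}}\bra{\wtd{\psi}}$, so $\norm{((\Id-P)\otimes\Id)\ket{\Psi}}\leq (\delta+\sqrt{\eps})/\Delta$. Setting $\ket{a}:=(\bra{\wtd{\psi}}\otimes\Id_{\mcK_A\otimes\mcK_B})\ket{\Psi}$ and $\ket{aux}:=\ket{a}/\norm{\ket{a}}$, Pythagoras gives $\norm{\ket{a}}^2=1-\norm{((\Id-P)\otimes\Id)\ket{\Psi}}^2$, and the elementary inequality $2-2\sqrt{1-x^2}\leq 2x^2$ applied at $x=(\delta+\sqrt{\eps})/\Delta$ yields \Cref{eq:difference}. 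For \Cref{eq:differencemore} I would split $(I_A\otimes I_B)(M^x_a\otimes N^y_b)\ket{\psi}-(\wtd{M}^x_a\otimes\wtd{N}^y_b\ket{\wtd{\psi}})\otimes\ket{aux}$ into a per-index operator-shift piece bounded by $\delta^A_{x,a}+\delta^B_{y,b}$ using the same $T_A,T_B$ decomposition as above, and a state-transport piece $(\wtd{M}^x_a\otimes\wtd{N}^y_b\otimes\Id)(\ket{\Psi}-\ket{\wtd{\psi}}\otimes\ket{aux})$ bounded by the \Cref{eq:difference} estimate via contractivity of $\wtd{M}^x_a\otimes\wtd{N}^y_b$. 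The main technical hurdle I anticipate is handling the isometry-defect term $I_A^*\wtd{M}^x_a(\Id-I_AI_A^*)\wtd{M}^x_a I_A$ in Step~1: it is absent for co-isometric $I_A$ and in general must be controlled by the same measurement discrepancy that produces $\delta^A_{x,a}$, requiring careful bookkeeping so as not to inflate the final constant.
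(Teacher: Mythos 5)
Your plan follows the same broad architecture as the paper: bound a $\Phi_\mcG$-shift at $\ket{\psi}$ by $\delta+\sqrt{\eps}$ via a triangle inequality splitting operator error from optimality error, then extract the vector estimate from the spectral gap. The optimality piece (via $\spec(\pi(\Phi_\mcG))\subseteq[0,w_q]$ and $w_q\le 1$) and the spectral-gap piece ($2-2\sqrt{1-x^2}\le 2x^2$ applied to the $\ket{\wtd{\psi}}$-projection of $I\ket{\psi}$) are correct and track the paper's. But your operator-shift estimate has a genuine gap that you notice but do not close.

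You work with the \emph{uncompressed} residual $T_A := (\wtd{M}^x_a\otimes\Id_{\mcK_A})I_A - I_A M^x_a$ and need $\norm{(T_A\otimes\Id)\ket{\psi}}\le\delta^A_{x,a}$. As you compute, with $\hat{M}^x_a:=I_A^*(\wtd{M}^x_a\otimes\Id_{\mcK_A})I_A$ one has
\begin{equation*}
T_A^*T_A = (\hat{M}^x_a - M^x_a)^2 + I_A^*\,\wtd{M}^x_a\,(\Id - I_AI_A^*)\,\wtd{M}^x_a\, I_A,
\end{equation*}
and the second summand, the Kadison--Schwarz defect $I_A^*(\wtd{M}^x_a)^2I_A - (\hat{M}^x_a)^2\ge 0$, is not controlled by $\delta^A_{x,a}$. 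Concretely, take $\wtd{\mcH}_A=\C^2$, $\mcH_A=\mcK_A=\C$, $I_A=\ket{0}$, $\wtd{M}^x_a=\tfrac{1}{2}(\Id+\sigma_X)$, and $M^x_a=\tfrac{1}{2}$: then $\delta^A_{x,a}=0$ but the defect equals $\tfrac{1}{4}$. Since $\delta$ as defined in the statement sees only the $\delta^A_{x,a}$ and $\delta^B_{y,b}$, there is no bookkeeping that absorbs the defect into $\delta$; the claim $\norm{(\wtd{\pi}(\Phi_\mcG)\otimes\Id)I\ket{\psi}-I\pi(\Phi_\mcG)\ket{\psi}}\le\delta$ simply does not follow from the hypotheses. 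The ``fold into $\delta$'' idea cannot be made to work.

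The paper avoids this by bounding the \emph{compressed} shift
\begin{equation*}
\big(I^*(\wtd{\pi}(m^x_a\otimes n^y_b)\otimes\Id_{\mcK})I - \pi(m^x_a\otimes n^y_b)\big)\ket{\psi} = \big(\hat{M}^x_a\otimes\hat{N}^y_b - M^x_a\otimes N^y_b\big)\ket{\psi}
\end{equation*}
rather than the uncompressed one. Writing $\hat{M}^x_a\otimes\hat{N}^y_b - M^x_a\otimes N^y_b = (\hat{M}^x_a-M^x_a)\otimes\hat{N}^y_b + M^x_a\otimes(\hat{N}^y_b-N^y_b)$ and using that $\hat{N}^y_b$ and $M^x_a$ are positive contractions immediately gives norm $\le\delta^A_{x,a}+\delta^B_{y,b}$ with no defect term; summing against $\mu V$ yields $\delta$ exactly. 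The spectral-gap extraction is then run on the Schmidt decomposition of $I\ket{\psi}$, rather than via the operator inequality applied directly to $I\ket{\psi}$. The fix to your proposal is to replace the $T_A,T_B$ bracket decomposition with this compressed decomposition, and then follow the paper's Schmidt-decomposition route for the gap step.
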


\begin{proof}
    Let $\pi:=\pi_A\otimes\pi_B$ be the associated representation of $S$, let $I=I_A\otimes I_B$, and let $\mcK:=\mcK_A\otimes\mcK_B$.
    By \Cref{lemma:diff},
\begin{align*}
    \norm{ \Big(I^*\big(\wtd{\pi}(m^x_a\otimes &n^y_b)\otimes\Id_\mcK\big)I-\pi(m^x_a\otimes n^y_b)\Big)\ket{\psi}}\\
   &\leq\norm{ I_A^*\big(\wtd{\pi}_A(m^x_a)\otimes\Id_{\mcK_A}\big)I_A-\pi_A(m^x_a)}_{\rho_A}+\norm{ I_B^*\big(\wtd{\pi}_B(n^y_b)\otimes\Id_{\mcK_B}\big)I_B-\pi_B(n^y_b)}_{\rho_B}\\
   &= \delta_{x,a}^A+\delta_{y,b}^B.
\end{align*}
This implies 
\begin{align*}
    \norm{\Big(I^*\big(\wtd{\pi}(\Phi_\mcG)\otimes\Id_\mcK\big)I-\pi(\Phi_\mcG)\Big)\ket{\psi}}\leq  \sum_{x,y,a,b}\mu(x,y)V(a,b|x,y)\left( \delta_{x,a}^A + \delta_{y,b}^B\right) =\delta.
\end{align*}
Since $S$ is an $\eps$-optimal strategy for $\mcG$, by \Cref{lemma:spectrum},  $w_q(\mcG)\Id-\pi(\Phi_\mcG)$ is a positive operator such that $\norm{w_q(\mcG)\Id-\pi(\Phi_\mcG)}_{op}\leq  1$ and $\bra{\psi}\big( w_q(\mcG)\Id-\pi(\Phi_\mcG)\big)\ket{\psi}\leq\eps$. So
\begin{align*}
    \norm{\big( w_q(\mcG)\Id-\pi(\Phi_\mcG)\big)\ket{\psi}}&\leq\norm{\big( w_q(\mcG)\Id-\pi(\Phi_\mcG)\big)^{1/2}}_{op}\cdot \norm{\big( w_q(\mcG)\Id-\pi(\Phi_\mcG)\big)^{1/2}\ket{\psi}}\\
    &\leq  \sqrt{\bra{\psi}\big( w_q(\mcG)\Id-\pi(\Phi_\mcG)\big)\ket{\psi}}\leq\sqrt{\eps}.
\end{align*}
Let $\wtd{\Phi}:=w_q(\mcG)\Id-\wtd{\pi}(\Phi_\mcG)$. It follows that
\begin{align*}
    \norm{I^*(\wtd{\Phi}\otimes\Id)I\ket{\psi}}\leq \norm{\Big(I^*\big(\wtd{\pi}(\Phi_\mcG)\otimes\Id\big)I-\pi(\Phi_\mcG)\Big)\ket{\psi}}+\norm{\big( w_q(\mcG)\Id-\pi(\Phi_\mcG)\big)\ket{\psi}}\leq 
    \delta+\sqrt{\eps}.
\end{align*}
Now we fix orthonormal bases $\{\ket{\alpha_i}\}_i$ and $\{\ket{\beta_j} \}_j$ for $\mcK_A$ and $\mcK_B$ respectively. We can write $I\ket{\psi}=\sum_{i,j}\lambda_{ij}\ket{\psi_{ij}}\otimes\ket{\alpha_i,\beta_j}$ for some unit vectors $\ket{\psi_{ij}}\in \wtd{\mcH}_A\otimes\wtd{\mcH}_B$ and some $\lambda_{ij}\in \C$ where $\sum_{ij}\abs{\lambda_{ij}}^2=1$. By absorbing phase of $\ket{\psi_{ij}}$ into $\lambda_{ij}$, we may assume without loss of generality that every $\delta_{ij}:=\braket{\psi_{ij}|\wtd{\psi}}\geq 0$. So we can write $\ket{\psi_{ij}}=\delta_{ij}\ket{\wtd{\psi}}+\sqrt{1-\delta_{ij}^2}\ket{\kappa_{ij}}$  for some unit vector $\ket{\kappa_{ij}}\in \wtd{\mcH}_A\otimes\wtd{\mcH}_B$ that is orthogonal to $\ket{\wtd{\psi}}$. \Cref{thm:gameidealrep} implies $\norm{\wtd{\Phi}\ket{\wtd{\psi}}}=0$ and $\norm{\wtd{\Phi}\ket{\kappa_{ij}}}\geq \Delta$ for all $i,j$. Hence
\begin{align*}
    (\delta+\sqrt{\eps})^2&\geq \norm{I^*(\wtd{\Phi}\otimes\Id)I\ket{\psi}}^2=\sum_{i,j}\abs{\lambda_{ij}}^2\norm{\wtd{\Phi}\ket{\psi_{ij}}  }^2\\
    &=\sum_{i,j}\abs{\lambda_{ij}}^2(1-\delta_{ij}^2)\norm{\wtd{\Phi}\ket{\kappa_{ij}}  }^2\geq \Delta^2 \sum_{i,j}\abs{\lambda_{ij}}^2(1-\delta_{ij}^2)\\
    &\geq \Delta^2 \sum_{i,j}\abs{\lambda_{ij}}^2(1-\delta_{ij})=\Delta^2\left(1-\sum_{i,j}\abs{\lambda_{ij}}^2\delta_{ij}\right).
\end{align*}
The last inequality uses the fact that $\delta_{ij}\leq 1$ for all $i,j$. Let $\ket{\kappa}:=\sum_{i,j}\lambda\ket{\alpha_i,\beta_j}$. Then
\begin{align*}
    \norm{I\ket{\psi}-\ket{\wtd{\psi}}\otimes\ket{\kappa}}^2=2\left(1-\sum_{i,j}\abs{\lambda_{ij}}^2\delta_{ij}\right)\leq \frac{2(\delta+\sqrt{\eps})^2}{\Delta^2}
\end{align*}
So \Cref{eq:difference} follows. For any $(x,y,a,b)\in X\times Y\times A\times B$, observe that
\begin{align*}
    I\big(M^x_a\otimes N^y_b \ket{\psi}\big)- \big(\wtd{M}^x_a\otimes \wtd{N}^y_b\ket{\wtd{\psi}}\big)\otimes\ket{aux}&=  \big( I_A M^x_a  -(\wtd{M}^x_a\otimes\Id_{\mcK_A})I_A \big)\otimes I_B N^y_b \ket{\psi}\\
    &+(\wtd{M}^x_A\otimes\Id_{\mcK_A})\otimes \big(  
 I_B N^y_b - (\wtd{N^y_b}\otimes \Id_{\mcK_B})I_B\big)\ket{\psi}\\
 &+\wtd{M}^x_a\otimes\wtd{N}^y_b\otimes \Id
\big(I \ket{\psi} - \ket{\wtd{\psi}}\otimes \ket{aux}\big).
\end{align*}
Since $I_BN^y_b,\wtd{M}^x_a$, and $\wtd{M}^x_a\otimes\wtd{N}^y_b$ all have operator norms $\leq 1$, we conclude that
\begin{align*}
    \norm{I \big(M^x_a\otimes N^y_b \ket{\psi}\big)- \big(\wtd{M}^x_a\otimes \wtd{N}^y_b\ket{\wtd{\psi}}\big)\otimes\ket{aux}}&\leq \norm{I_A^*(\wtd{M}^x_a\otimes \Id_{\mcK_A}) I_A - M^x_a }_{\rho_A}\\
    &+\norm{I_B^*(\wtd{N}^y_b\otimes \Id_{\mcK_B}) I_B - N^y_b }_{\rho_B}\\
    &+\norm{I \ket{\psi} - \ket{\wtd{\psi}}\otimes \ket{aux}}.
\end{align*}
So \Cref{eq:differencemore} follows.
\end{proof}
 
\begin{remark}\label{rmk:gameidealrep}
    We remark that \Cref{thm:gameidealrep} and \Cref{prop:gameidealrep} also hold if all $\POVM$ are replaced by $\PVM$ and ``optimal quantum strategies" is replaced by ``projective optimal strategies". The proofs follow similarly. 
\end{remark}

\section{Tracial states}\label{sec:trace}
%In this section, we show that for XOR games, synchronous games, and --- more generally --- nonlocal games that have associated game algebras, their self-testing properties can be characterized by tracial states on game algebras. 

%\subsection{Tracial states}
In this section, we establish some properties of tracial states that will be used later. Recall that, in this paper, all $C^*$-algebras are assumed to be unital and separable. A tracial state $\tau$ on a $C^*$-algebra $\mcA$ is a state satisfying $\tau(ab)=\tau(ba)$ for all $a,b\in\mcA$. We denote by $T(\mcA)$ the set of tracial states on $\mcA$. It is a convex set and is compact and closed in the weak*-topology. We use $\partial_eT(\mcA)$ to denote the extreme points in $T(\mcA)$.  We say a tracial state $\tau$ on $\mcA$ \emph{factors through} another $C^*$-algebra $\mcB$ if there are some $*$-homomorphism $\varphi:\mcA\arr\mcB$ and some tracial state $\wtd{\tau}$ on $\mcB$ such that $\tau=\wtd{\tau}\circ \varphi$. If in addition, $\varphi$ is surjective, we say $\tau$ \emph{factors surjectively through} $\mcB$. 

For any $\tau\in T(\mcA)$, the \emph{kernel} of $\tau$, defined by 
\begin{align}
    \mcI_{\tau}:=\{a\in\mcA: \tau(a^*a)=0\},\label{eq:tracialideal}
\end{align}
is a two-sided closed ideal. So $\ang{a+\mcI_\tau,b+\mcI_{\tau}}:=\tau(a^*b),a,b\in\mcA$ defines an inner product on $\mcA\slash\mcI_\tau$. Completing $\mcA\slash\mcI_\tau$ with respect to this inner product gives the Hilbert space $L^2(\mcA,\tau)$. For any $a\in \mcA$, we use $\widehat{a}$ to denote the canonical image of $a$ in $L^2(\mcA,\tau)$. Let $\pi_\tau:\mcA\arr\msB(L^2(\mcA,\tau))$ be the $*$-representation of $\mcA$ given by 
\begin{align*}
    \pi_{\tau}(a)\widehat{b}:=\widehat{ab} \text{ for all }a,b\in\mcA,
\end{align*}
and let $\ket{\tau}:=\widehat{1}$. Then the triple $\big(L^2(\mcA,\tau), \pi_\tau,\ket{\tau} 
 \big)$ is a GNS representation for $\tau$. We refer to $\pi_\tau$ as the \emph{left regular representatin} of $\tau$, and refer to $\big(L^2(\mcA,\tau), \pi_\tau,\ket{\tau} 
 \big)$ as the \emph{standard GNS} for $\tau$. One can also define the \emph{right regular representation} $\pi_{\tau}^{\op}$, which is the $*$-representation from $\mcA^{\op}\arr \msB(L^2(\mcA,\tau))$ given by 
 \begin{align*}
    \pi_{\tau}^{\op}(a^{\op})\widehat{b}:=\widehat{ba} \text{ for all } a,b\in\mcA.
\end{align*}
Here $\mcA^{\op}$ denotes the \emph{opposite algebra} of $\mcA$. That is, $\mcA^{\op}=\mcA$ as a Banach space, but the multiplication on $\mcA^{\op}$ is reversed: if we use $a^{\op}$ to denote any $a\in\mcA$ in $\mcA^{\op}$, then $a^{\op}\cdot b^{\op}=(ba)^{\op}$. It is clear that $\pi_{\tau}(\mcA)'\subset \pi_{\tau}^{\op}(\mcA^{op})$. So the left and right regular representations commute.

The following characterization of extreme tracial states is well-known (see e.g., \cite[Theorem 6.7.3]{Dixmier})

\begin{lemma}\label{lemma:extremetrace}
    Suppose $\tau$ is a tracial state on a $C^*$-algebra $\mcA$. Then $\tau$ is an extreme point in $T(\mcA)$ if and only if the enveloping von Neumann algebra $\pi_\tau(\mcA)''$ is a factor.
\end{lemma}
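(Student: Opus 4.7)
The plan is to argue both directions via the structure of the standard GNS $(L^2(\mcA,\tau),\pi_\tau,\ket{\tau})$ and the enveloping von Neumann algebra $M := \pi_\tau(\mcA)''$. First I would record that $\ket{\tau}$ is cyclic and separating for $M$ — cyclicity is immediate from the construction, while separatingness is the standard consequence of the trace property, namely $\norm{T\ket{\tau}} = \norm{T^*\ket{\tau}}$ on $\pi_\tau(\mcA)$, extended to $M$ by SOT-density and normality. Consequently, the vector functional $\omega(T):=\ang{\tau|T|\tau}$ is a faithful normal tracial state on $M$ that extends $\tau$, a fact I will use freely below.

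For the ``only if'' direction I would argue the contrapositive. Suppose $M$ is not a factor, so its centre $Z(M)$ contains a projection $p$ with $0 < p < \Id$. Set $\lambda := \omega(p) \in (0,1)$, where the strict inequalities follow from faithfulness of $\omega$, and define
\begin{align*}
    \tau_1(a) := \frac{1}{\lambda}\,\omega(\pi_\tau(a)p), \qquad \tau_2(a) := \frac{1}{1-\lambda}\,\omega(\pi_\tau(a)(\Id-p)).
\end{align*}
Because $p$ is a central projection, each $\tau_i$ is a tracial state on $\mcA$: positivity follows from $\pi_\tau(a^*a)p = (\pi_\tau(a)p)^*(\pi_\tau(a)p)$, unitality is built in, and the trace property is inherited from $\omega$ using centrality of $p$. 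Plainly $\tau = \lambda\tau_1 + (1-\lambda)\tau_2$, and $\tau_1 \neq \tau$ because otherwise $\omega\bigl(T(p - \lambda \Id)\bigr) = 0$ for every $T \in \pi_\tau(\mcA)$, hence, by normality and density, for every $T \in M$; faithfulness of $\omega$ would then force $p = \lambda \Id$, contradicting $p$ being a non-trivial projection. Thus $\tau$ is a non-trivial convex combination in $T(\mcA)$.

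For the ``if'' direction, suppose $M$ is a factor and $\tau = \lambda\tau_1 + (1-\lambda)\tau_2$ with $\lambda \in (0,1)$ and $\tau_1,\tau_2 \in T(\mcA)$. Then $\tau_1 \leq \lambda^{-1}\tau$, so a non-commutative Radon--Nikodym theorem in the tracial GNS yields a unique positive $h \in \pi_\tau(\mcA)' = M'$ with $0 \leq h \leq \lambda^{-1}\Id$ and $\tau_1(a) = \ang{\tau|h\pi_\tau(a)|\tau}$ for all $a \in \mcA$. The crux is to show $h \in Z(M)$. I would exploit the antiunitary $J$ on $L^2(\mcA,\tau)$ given by $J\widehat{a} := \widehat{a^*}$, which satisfies $JMJ = M'$, $J\ket{\tau} = \ket{\tau}$, and $J\pi_\tau(a)J = \pi_\tau^{\op}(a^{\op})$; translating the trace identity $\tau_1(ab) = \tau_1(ba)$ through these relations produces $JhJ = h$, which together with $h \in M'$ gives $h \in M \cap M' = Z(M)$. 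Since $M$ is a factor, $h \in \C\,\Id$, and the normalisation $\tau_1(\Id) = 1 = \ang{\tau|h|\tau}$ forces $h = \Id$. Hence $\tau_1 = \tau$, so $\tau$ is extreme.

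The step I expect to require the most care is establishing $h \in Z(M)$: the traciality of $\tau_1$ only directly encodes a symmetry between $h$ and its $J$-conjugate rather than bare commutation with $\pi_\tau(\mcA)$, so some care is needed to convert this symmetry into the statement $h \in M$. A cleaner alternative I would keep in reserve is to verify directly that $T \mapsto \omega(hT)$ extends to a normal tracial functional on $M$ and then invoke the fact that the Radon--Nikodym density of any normal trace with respect to the faithful normal trace $\omega$ must lie in $Z(M)$, after which factoriality again gives $h \in \C\,\Id$.
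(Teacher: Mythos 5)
The paper does not prove this lemma; it is cited as a standard fact (\cite[Theorem 6.7.3]{Dixmier}), so there is no in-text argument to compare against. Your proof is correct and is the standard one: $\ket{\tau}$ is cyclic and separating, making $\omega$ a faithful normal tracial state on $M := \pi_\tau(\mcA)''$; a non-trivial central projection yields a non-trivial convex decomposition of $\tau$; conversely, a convex decomposition gives a Radon--Nikodym density $h$ in the commutant of $\pi_\tau(\mcA)$ via the GNS construction, and the trace identity for $\tau_1$, rewritten using $J\hat{a}=\widehat{a^*}$ and $\pi_\tau(b)\ket{\tau}=\pi_\tau^{\op}(b^{\op})\ket{\tau}$, reduces to $\ang{(JhJ)\widehat{b^*},\hat{a}}=\ang{\widehat{b^*},h\hat{a}}$ on a dense set, giving $JhJ=h$ and hence $h\in M\cap M'=Z(M)$; factoriality and normalisation force $h=\Id$. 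Your reserve argument via Radon--Nikodym densities of normal traces with respect to a faithful normal trace lying in $Z(M)$ is also sound. The only step worth tightening is the last line of the forward direction: after deducing $\omega\big(T(p-\lambda\Id)\big)=0$ for all $T\in M$, take $T=p-\lambda\Id$ to get $\omega\big((p-\lambda\Id)^2\big)=0$; faithfulness then gives $p=\lambda\Id$, and since $p$ is a projection this forces $\lambda\in\{0,1\}$, the desired contradiction.
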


Note that in the ``if" direction of the above lemma, the hypothesis that $\pi_\tau(\mcA)''$ is a factor does not necessarily mean $\pi_\tau$ is irreducible. In fact, if an extreme tracial state $\tau$ has an irreducible GNS representation $\pi_\tau$, then $\pi_\tau$ must be a one-dimensional representation. In other words, for any $\tau\in \partial_e T(\mcA)$, $\tau$ is a pure state if and only if $\tau$ is a character.

\subsection{Finite-dimensional tracial states}
When working with quantum correlations (i.e., correlations can be realized by finite-dimensional models), we are particularly interested in finite-dimensional tracial states and their convex structure. A tracial state is said to be finite-dimensional if its GNS representation is finite-dimensional. The set of finite-dimensional tracial states on $\mcA$ is denoted by $T_{\fin}(\mcA)$. The following characterization of finite-dimensional tracial states is well-known (see e.g., \cite{MR21}).
\begin{lemma}\label{lemma:fdtrace}
A tracial state $\tau$ on a $C^*$-algebra $\mcA$ is finite-dimensional if and only if one of the following equivalent statements holds.
    \begin{enumerate}[(a)]
        \item $\mcA\slash\mcI_\tau$ is finite-dimensional.
        \item $\tau$ factors through a finite-dimensional $C^*$-algebras.
        \item $\pi_\tau(\mcA)''$ is finite-dimensional. 
        \item The linear functional $\phi:\mcA^{\op}\otimes\mcA\arr\C$ defined by $\phi(a^{\op}\otimes b)=\tau(ab),a,b\in\mcA$ is a finite-dimensional state.
    \end{enumerate}
\end{lemma}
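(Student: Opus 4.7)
The plan is to prove the equivalence of (a)--(d) with $\tau$ being finite-dimensional (i.e., $L^2(\mcA,\tau)$ being finite-dimensional) by showing that each condition separately reformulates finite-dimensionality of this GNS Hilbert space. Since $\mcA\slash\mcI_\tau$ is by construction a dense linear subspace of $L^2(\mcA,\tau)$, and any finite-dimensional subspace of a Hilbert space is closed, it follows at once that $\mcA\slash\mcI_\tau$ is finite-dimensional iff $\mcA\slash\mcI_\tau = L^2(\mcA,\tau)$ iff $L^2(\mcA,\tau)$ is finite-dimensional. This handles (a) $\iff$ $\tau$ finite-dimensional.

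Next I would address (c). One direction is immediate: $\pi_\tau(\mcA)'' \subseteq \msB(L^2(\mcA,\tau))$, so if $L^2(\mcA,\tau)$ is finite-dimensional, so is $\pi_\tau(\mcA)''$. For the converse, the key point is that $\ket{\tau}$ is cyclic not just for $\pi_\tau(\mcA)$ but also for $\pi_\tau(\mcA)''$; this is because the von Neumann bicommutant theorem gives SOT-density of $\pi_\tau(\mcA)$ in $\pi_\tau(\mcA)''$, from which cyclicity transfers. Hence $L^2(\mcA,\tau) = \overline{\pi_\tau(\mcA)''\ket{\tau}}$, but $\pi_\tau(\mcA)''\ket{\tau}$ is the image of the finite-dimensional algebra $\pi_\tau(\mcA)''$ under $T\mapsto T\ket{\tau}$, hence itself finite-dimensional (so closed), forcing $L^2(\mcA,\tau)$ to be finite-dimensional.

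For (b), the implication (a) $\Rightarrow$ (b) is essentially trivial: since $\mcI_\tau$ is a closed two-sided ideal (by Cauchy--Schwarz and traciality), $\mcB := \mcA\slash\mcI_\tau$ is a $C^*$-algebra on which $\tau$ descends to a tracial state $\wtd{\tau}$, and $\tau = \wtd{\tau}\circ q$ where $q:\mcA\arr\mcB$ is the quotient map. For the converse, if $\tau = \wtd{\tau}\circ\varphi$ with $\varphi:\mcA\arr\mcB$ a $*$-homomorphism into a finite-dimensional $\mcB$, then composing the finite-dimensional GNS of $\wtd{\tau}$ with $\varphi$ produces a cyclic $*$-representation of $\mcA$ on a finite-dimensional Hilbert space realizing $\tau$; by uniqueness of the GNS, this is unitarily equivalent to $(L^2(\mcA,\tau),\pi_\tau,\ket{\tau})$, so $L^2(\mcA,\tau)$ is finite-dimensional.

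Finally, for (d), I would realize $\phi$ concretely as a state whose GNS data lives on $L^2(\mcA,\tau)$: since the left and right regular representations $\pi_\tau$ and $\pi_\tau^{\op}$ commute, the map $\rho(a^{\op}\otimes b) := \pi_\tau^{\op}(a^{\op})\pi_\tau(b)$ is a $*$-representation of $\mcA^{\op}\otimes \mcA$ on $L^2(\mcA,\tau)$, and $\ket{\tau}$ is cyclic because $\pi_\tau(\mcA)\ket{\tau}$ alone is already dense. Traciality yields $\bra{\tau}\rho(a^{\op}\otimes b)\ket{\tau} = \tau(ba) = \tau(ab) = \phi(a^{\op}\otimes b)$, identifying $(L^2(\mcA,\tau),\rho,\ket{\tau})$ as the GNS of $\phi$ (and, as a by-product, establishing that $\phi$ is indeed a state). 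Thus $\phi$ is finite-dimensional iff $L^2(\mcA,\tau)$ is iff $\tau$ is. The only genuinely non-bookkeeping step in the whole proof is the bicommutant argument in (c) $\Rightarrow$ finite-dimensional, where cyclicity of $\ket{\tau}$ must be transferred from the $C^*$-algebra $\pi_\tau(\mcA)$ to its SOT closure $\pi_\tau(\mcA)''$; everything else is a matter of chasing the definitions underlying the GNS construction.
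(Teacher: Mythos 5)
The paper states this lemma as well known and cites \cite{MR21} without giving a proof, so there is no in-paper argument to compare against. Your strategy — reducing each of (a)--(d) to finite-dimensionality of $L^2(\mcA,\tau)$ — is the right one, and the arguments for (a), (c), and (d) are sound (though for (c) you do not need the bicommutant theorem at all: since $\pi_\tau(\mcA)\subseteq\pi_\tau(\mcA)''$, density of $\pi_\tau(\mcA)\ket{\tau}$ already gives density of $\pi_\tau(\mcA)''\ket{\tau}$).

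There is one genuine gap, in the converse direction of (b). When $\tau = \wtd{\tau}\circ\varphi$ with $\varphi:\mcA\to\mcB$ and $\mcB$ finite-dimensional, you claim that $(L^2(\mcB,\wtd{\tau}),\pi_{\wtd{\tau}}\circ\varphi,\ket{\wtd{\tau}})$ is a \emph{cyclic} representation of $\mcA$. That need not hold: $\varphi$ is only required to be a $*$-homomorphism, not a surjection, so $\varphi(\mcA)$ may be a proper subalgebra of $\mcB$ and $(\pi_{\wtd{\tau}}\circ\varphi)(\mcA)\ket{\wtd{\tau}}$ may span a proper subspace. Without cyclicity you cannot invoke GNS uniqueness directly. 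The fix is cheap. Either restrict to the cyclic subspace $\overline{(\pi_{\wtd{\tau}}\circ\varphi)(\mcA)\ket{\wtd{\tau}}}\subseteq L^2(\mcB,\wtd{\tau})$, which is finite-dimensional and \emph{is} a GNS triple for $\tau$; or, more directly, note that $\tau(a^*a)=\wtd{\tau}\bigl(\varphi(a)^*\varphi(a)\bigr)$ shows the map $a+\mcI_\tau\mapsto\varphi(a)+\mcI_{\wtd{\tau}}$ is a well-defined isometric embedding of $\mcA/\mcI_\tau$ into $\mcB/\mcI_{\wtd{\tau}}$, which is finite-dimensional since $\mcB$ is, and then conclude via (a).
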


Later in \Cref{prop:fdtracial} we'll show that $T_{\fin}(\mcA)$ is a convex face of $T(\mcA)$. We first need to establish the following lemmas.

\begin{lemma}
    For any $C^*$-algebra $\mcA$, the set $T_{\fin}(\mcA)$ is convex.
\end{lemma}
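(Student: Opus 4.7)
The plan is to exploit characterization (b) from \Cref{lemma:fdtrace}: a tracial state on $\mcA$ is finite-dimensional precisely when it factors through a finite-dimensional $C^*$-algebra. Once we have this, convexity reduces to taking a direct sum of two finite-dimensional quotients, which is again finite-dimensional.

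Concretely, fix $\tau_1,\tau_2\in T_{\fin}(\mcA)$ and $\lambda\in[0,1]$, and set $\tau := \lambda\tau_1 + (1-\lambda)\tau_2$; it is immediate that $\tau$ is a tracial state, so the only point is to verify finite-dimensionality. By \Cref{lemma:fdtrace}(b), for $i=1,2$ there exist finite-dimensional $C^*$-algebras $\mcB_i$, $*$-homomorphisms $\varphi_i:\mcA\arr\mcB_i$, and tracial states $\wtd{\tau}_i$ on $\mcB_i$ such that $\tau_i=\wtd{\tau}_i\circ\varphi_i$. Form the finite-dimensional $C^*$-algebra $\mcB:=\mcB_1\oplus\mcB_2$, the $*$-homomorphism $\varphi:=\varphi_1\oplus\varphi_2:\mcA\arr\mcB$, and the functional $\wtd{\tau}:\mcB\arr\C$ defined by
\begin{equation*}
\wtd{\tau}(b_1,b_2):=\lambda\wtd{\tau}_1(b_1)+(1-\lambda)\wtd{\tau}_2(b_2).
\end{equation*}
Then $\wtd{\tau}$ is a tracial state on $\mcB$ because each summand is tracial and $\mcB$ is a direct sum (so traciality is checked componentwise), and $\wtd{\tau}\circ\varphi = \lambda\tau_1+(1-\lambda)\tau_2 = \tau$. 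Hence $\tau$ factors through a finite-dimensional $C^*$-algebra, and another application of \Cref{lemma:fdtrace}(b) shows $\tau\in T_{\fin}(\mcA)$.

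I expect no real obstacle here; the argument is essentially bookkeeping using the equivalences already packaged in \Cref{lemma:fdtrace}. If one preferred to avoid quoting (b) and work directly with (a), the same conclusion follows by observing $\mcI_\tau \supseteq \mcI_{\tau_1}\cap \mcI_{\tau_2}$, so that $\mcA/\mcI_\tau$ is a quotient of $\mcA/(\mcI_{\tau_1}\cap\mcI_{\tau_2})$, which embeds into the finite-dimensional algebra $\mcA/\mcI_{\tau_1} \oplus \mcA/\mcI_{\tau_2}$ via $a\mapsto (a+\mcI_{\tau_1},a+\mcI_{\tau_2})$; either route gives the result in a few lines.
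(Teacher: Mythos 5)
Your proof is correct and follows essentially the same route as the paper: invoke \Cref{lemma:fdtrace}(b), pass to the direct sum $\mcB_1\oplus\mcB_2$, and weight the traces. (In fact your version is slightly cleaner: you write the $*$-homomorphism as $\varphi=\varphi_1\oplus\varphi_2$, whereas the paper writes $\lambda\varphi_1\oplus(1-\lambda)\varphi_2$, which is not a $*$-homomorphism as stated; the weights belong in $\wtd{\tau}$, as you have them.)
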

\begin{proof}
   Let $\tau_1$ and $\tau_2$ be two finite-dimensional tracial states on a $C^*$-algebra $\mcA$. Then  there are finite-dimensional $C^*$-algebras $\mcB_1$ and $\mcB_2$, $*$-homomorphisms $\varphi_1:\mcA\arr\mcB_1$ and $\varphi_2:\mcA\arr\mcB_2$, and tracial states $\wtd{\tau_1}$ and $\wtd{\tau_2}$ on $\mcB_1$ and $\mcB_2$ respectively such that $\tau_1=\wtd{\tau}_1\circ \varphi_1$ and $\tau_2=\wtd{\tau}_1\circ \varphi_2$. For any $0<\lambda <1$, let $\wtd{\tau}$ be the tracial state on $\mcB_1\oplus\mcB_2$ defined by 
\begin{align*}
    \wtd{\tau}(\beta_1,\beta_2)=\lambda\wtd{\tau}_1(\beta_1)+(1-\lambda)\wtd{\tau}_2(\beta_2)
\end{align*}
for all $(\beta_1,\beta_2)\in \mcB_1\oplus\mcB_2$. Then $\lambda\tau_1+(1-\lambda)\tau_2=\wtd{\tau}\circ \varphi$ where $\varphi:=\lambda\varphi_1\oplus (1-\lambda)\varphi_2$ is a $*$-homomorphism from $\mcA\arr \mcB_1\oplus\mcB_2$. This means the tracial state $\lambda\tau_1+(1-\lambda)\tau_2$ factors through the finite-dimensional $C^*$-algebra $\mcB_1\oplus\mcB_2$, so $\lambda\tau_1+(1-\lambda)\tau_2\in T_{\fin}(\mcA)$. Since $\tau_1,\tau_2$, and $\lambda$ were arbitrary, we conclude that $T_{\fin}(\mcA)$ is a convex set.
\end{proof}

We use $\partial_e T_{\fin}(\mcA)$ to denote the extreme points in $T_{\fin}(\mcA)$.

\begin{lemma}\label{lemma:extremefdtrace}
    Let $\tau$ be a finite-dimensional tracial state on a $C^*$-algebra $\mcA$. Suppose $\tau=\lambda \tau_1+(1-\lambda)\tau_2$ for some $\tau_1,\tau_2\in T(\mcA)$ and $0<\lambda<1$. Then $\tau_1$ and $\tau_2$ must be finite-dimensional.
\end{lemma}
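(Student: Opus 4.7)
The plan is to deduce this immediately from the characterization in Lemma~\ref{lemma:fdtrace}(a), namely that a tracial state $\sigma$ is finite-dimensional iff the quotient $\mcA/\mcI_\sigma$ is finite-dimensional as a vector space. So the goal is to compare the kernels $\mcI_\tau$, $\mcI_{\tau_1}$, $\mcI_{\tau_2}$ and use finiteness of $\mcA/\mcI_\tau$ to control the other two.

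First, I would use positivity. Since $\tau_1(a^*a) \geq 0$ and $\tau_2(a^*a) \geq 0$ for every $a \in \mcA$, and $\lambda, 1-\lambda > 0$, the identity
\begin{equation*}
\tau(a^*a) = \lambda\, \tau_1(a^*a) + (1-\lambda)\, \tau_2(a^*a)
\end{equation*}
shows that $\tau(a^*a) = 0$ forces $\tau_1(a^*a) = \tau_2(a^*a) = 0$. Hence
\begin{equation*}
\mcI_\tau \;\subseteq\; \mcI_{\tau_1} \cap \mcI_{\tau_2}.
\end{equation*}

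Next, these inclusions induce surjective linear (in fact, $*$-algebra) maps $\mcA/\mcI_\tau \twoheadrightarrow \mcA/\mcI_{\tau_1}$ and $\mcA/\mcI_\tau \twoheadrightarrow \mcA/\mcI_{\tau_2}$. By hypothesis $\tau$ is finite-dimensional, so Lemma~\ref{lemma:fdtrace}(a) gives $\dim(\mcA/\mcI_\tau) < \infty$. A quotient of a finite-dimensional vector space is finite-dimensional, so $\dim(\mcA/\mcI_{\tau_i}) < \infty$ for $i=1,2$. Applying the other direction of Lemma~\ref{lemma:fdtrace}(a) yields that $\tau_1$ and $\tau_2$ are finite-dimensional, which is what we wanted.

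There is no real obstacle here: the argument is a one-line consequence of the fact that the kernel operation $\sigma \mapsto \mcI_\sigma$ is monotone under the partial order coming from convex combinations, together with the characterization of finite-dimensionality as finite-dimensionality of $\mcA/\mcI_\sigma$. The only place one needs to be careful is that the positivity of $\tau_1(a^*a)$ and $\tau_2(a^*a)$ (and $\lambda \in (0,1)$) is genuinely used to split the vanishing $\tau(a^*a)=0$ into vanishing of each summand; without strict positivity of $\lambda$ and $1-\lambda$ the inclusion $\mcI_\tau \subseteq \mcI_{\tau_i}$ would fail. Note that the same proof in fact shows that $T_{\fin}(\mcA)$ is a \emph{face} of $T(\mcA)$, which is presumably the use the authors have in mind for this lemma.
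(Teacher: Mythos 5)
Your proof is correct and takes essentially the same route as the paper: both derive the inclusion $\mcI_\tau\subseteq\mcI_{\tau_i}$ from positivity of $\tau_i(a^*a)$ and strict positivity of $\lambda$, $1-\lambda$, and then use the characterization of Lemma~\ref{lemma:fdtrace}(a) to transfer finite-dimensionality of $\mcA/\mcI_\tau$ to $\mcA/\mcI_{\tau_i}$. The only cosmetic difference is that you pass finite-dimensionality along the surjection $\mcA/\mcI_\tau\twoheadrightarrow\mcA/\mcI_{\tau_i}$, whereas the paper phrases $\mcA/\mcI_{\tau_i}$ as a $C^*$-subalgebra of $\mcA/\mcI_\tau$; the quotient formulation you use is, if anything, the more immediate one.
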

\begin{proof}
Since $\tau\in T_{\fin}(\mcA)$, the algebra $\mcA\slash\mcI_{\tau}$ is finite-dimensional. Observe that for any $a\in\mcA$, $\tau(a^*a)=0$ if and only if $\tau_1(a^*a)=\tau_2(a^*a)=0$. This means $\mcI_{\tau}=\mcI_{\tau_1}\cap\mcI_{\tau_2}$. Hence $\mcA\slash\mcI_{\tau_1}$ and $\mcA\slash\mcI_{\tau_2}$ are both $C^*$-subalgebras of $\mcA\slash\mcI_{\tau}$, so they are finite-dimensional. It follows that $\tau_1,\tau_2\in T_{\fin}(\mcA)$. 
\end{proof}

\begin{lemma}\label{lemma:convexhull}
    Let $\mcA$ be a $C^*$-algebra. Then $T_{\fin}(\mcA)$
    is the convex hull of tracial states on $\mcA$ that factor surjectively through full matrix algebras. 
\end{lemma}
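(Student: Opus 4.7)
The plan is to prove both containments. The easy direction is that any convex combination of tracial states factoring surjectively through full matrix algebras lies in $T_{\fin}(\mcA)$. Indeed, if $\tau' = \tr_d \circ \varphi$ for some surjective $*$-homomorphism $\varphi : \mcA \twoheadrightarrow \M_d$, then $\tau'$ factors through a finite-dimensional $C^*$-algebra, so $\tau' \in T_{\fin}(\mcA)$ by \Cref{lemma:fdtrace}(b). Since $T_{\fin}(\mcA)$ was already shown to be convex, any finite convex combination of such states is again in $T_{\fin}(\mcA)$.

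For the harder direction, fix an arbitrary $\tau \in T_{\fin}(\mcA)$. By \Cref{lemma:fdtrace}(a), the quotient $\mcA \slash \mcI_{\tau}$ is a finite-dimensional $C^*$-algebra. The key step is to invoke the structure theory for finite-dimensional $C^*$-algebras: we can write
\begin{equation*}
\mcA \slash \mcI_{\tau} \;\cong\; \bigoplus_{i=1}^{\ell} M_{n_i}(\C).
\end{equation*}
Since $\tau$ vanishes on $\mcI_{\tau}$, it descends to a tracial state $\wtd{\tau}$ on $\mcA \slash \mcI_{\tau}$. Every tracial state on a direct sum of matrix algebras is a convex combination of the normalized traces on each summand, so $\wtd{\tau} = \sum_{i=1}^{\ell} \lambda_i \, \tr_{n_i} \circ p_i$ where $p_i : \mcA \slash \mcI_\tau \to M_{n_i}(\C)$ is the projection onto the $i$-th summand and $\lambda_i \geq 0$ with $\sum_i \lambda_i = 1$.

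Letting $q : \mcA \to \mcA \slash \mcI_{\tau}$ denote the quotient $*$-homomorphism, each composition $\varphi_i := p_i \circ q$ is a surjective $*$-homomorphism from $\mcA$ onto $M_{n_i}(\C)$, and pulling back the decomposition of $\wtd{\tau}$ through $q$ yields
\begin{equation*}
\tau \;=\; \sum_{i=1}^{\ell} \lambda_i \, \big(\tr_{n_i} \circ \varphi_i\big),
\end{equation*}
which exhibits $\tau$ as a convex combination of tracial states factoring surjectively through full matrix algebras. I do not expect any serious obstacle here — the main point is simply to combine \Cref{lemma:fdtrace} with the finite-dimensional structure theorem and the trivial classification of tracial states on $\bigoplus_i M_{n_i}(\C)$. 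The only mild subtlety is making sure to work with the quotient by $\mcI_\tau$ rather than with some arbitrary finite-dimensional $C^*$-algebra through which $\tau$ factors, so that the resulting $*$-homomorphisms onto $M_{n_i}(\C)$ are genuinely surjective.
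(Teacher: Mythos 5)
Your argument is correct and follows essentially the same route as the paper: quotient by $\mcI_\tau$, invoke the Wedderburn structure theorem for the finite-dimensional quotient, classify the tracial states on $\bigoplus_i M_{n_i}(\C)$, and pull back through the quotient map to exhibit $\tau$ as a convex combination of traces factoring surjectively onto full matrix algebras. The paper's proof only writes out this harder containment (leaving the easy reverse inclusion implicit, as you also observed it follows from \Cref{lemma:fdtrace} and the convexity of $T_{\fin}(\mcA)$), but the substance is the same.
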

\begin{proof}
    Any $\tau\in T_{\fin}(\mcA)$ drops to a tracial state $\wtd{\tau}$ on $\mcA\slash \mcI_\tau$ so that $\tau=\wtd{\tau}\circ q$, where $q:\mcA \arr \mcA\slash \mcI_\tau$ is the quotient map. Since $\mcA\slash \mcI_\tau$ is finite-dimensional, we can write 
    \begin{align*}
        \mcA\slash \mcI_\tau=\M_{d_1}\oplus \cdots \oplus \M_{d_k}
    \end{align*}
    for some $d_1,\ldots,d_k\geq 1$. For every $1\leq i\leq k$, let $\Pi_i:{\mcA\slash \mcI_\tau} \arr \M_{d_i}$ be the canonical projection. Then every $\varphi_i:=\Pi_i\circ q$ is a surjective $*$-homomorphism from $\mcA\arr \M_{d_i}$, and hence every $\tau_i:=\tr_{d_i}\circ \varphi_i$ is a tracial state on $\mcA$ that factors surjectively through the matrix algebra $\M_{d_i}$. Since each $\wtd{\tau}|_{\M_{d_i}}$ is a tracial linear functional on $\M_{d_i}$, we must have $\wtd{\tau}|_{\M_{d_i}}=\lambda_i\tr_{d_i}$ where $\lambda_i=\wtd{\tau}(\Id_{d_i})$. This implies $\tau=\lambda_1\tau_1+\cdots \lambda_k\tau_k$. We conclude that every $\tau\in T_{\fin}(\mcA)$ is a convex combination of tracial states that factor surjectively through full matrix algebras.
\end{proof}

The following proposition states that $T_{\fin}(\mcA)$ is a convex face of $T(\mcA)$ and $\partial_e T_{\fin}(\mcA)$ consists of tracial states that factor surjectively through \emph{full matrix algebras}.

\begin{proposition}\label{prop:fdtracial}
    Suppose $\mcA$ is a $C^*$-algebra such that $T_{\fin}(\mcA)$ is non-empty.
    \begin{enumerate}[(a)]
    \item $\partial_e T_{\fin}(\mcA)=\partial_e T(\mcA)\cap T_{\fin}(\mcA)$.
    \item A tracial state $\tau$ is in $\partial_e T_{\fin}(\mcA)$ if and only if it factors surjectively through a full matrix algebra, and
    \item $T_{\fin}(\mcA)=conv\big(\partial_e T_{\fin}(\mcA)\big)$.
    \end{enumerate}
\end{proposition}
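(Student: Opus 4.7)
The plan is to prove the three parts in order, relying on the preceding Lemmas \ref{lemma:extremefdtrace} and \ref{lemma:convexhull} together with the uniqueness of the tracial state on $\M_d$.

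For part (a), the inclusion $\partial_e T(\mcA)\cap T_{\fin}(\mcA)\subseteq \partial_e T_{\fin}(\mcA)$ is immediate: any point that is extreme in the larger convex set $T(\mcA)$ is automatically extreme in every convex subset containing it. For the reverse inclusion, let $\tau \in \partial_e T_{\fin}(\mcA)$ and suppose $\tau = \lambda\tau_1 + (1-\lambda)\tau_2$ with $\tau_1,\tau_2 \in T(\mcA)$ and $0<\lambda<1$. The key step is that Lemma \ref{lemma:extremefdtrace} forces both $\tau_1$ and $\tau_2$ to lie in $T_{\fin}(\mcA)$; extremality of $\tau$ within $T_{\fin}(\mcA)$ then yields $\tau_1=\tau_2=\tau$, so $\tau\in \partial_e T(\mcA)$. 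This shows $T_{\fin}(\mcA)$ is a face of $T(\mcA)$.

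For part (b), forward direction: if $\tau\in \partial_e T_{\fin}(\mcA)$, then Lemma \ref{lemma:convexhull} expresses $\tau$ as a finite convex combination of tracial states that each factor surjectively through a full matrix algebra. Extremality forces $\tau$ to coincide with one of those, so $\tau$ itself factors surjectively through some $\M_d$. Reverse direction: suppose $\tau = \tr_d \circ \varphi$ for a surjective $*$-homomorphism $\varphi:\mcA\arr\M_d$. Because $\tr_d$ is faithful on $\M_d$, we have $\mcI_\tau = \{a\in\mcA : \varphi(a)^*\varphi(a)=0\} = \ker\varphi$, so $\mcA/\mcI_\tau \cong \M_d$. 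Now if $\tau = \lambda\tau_1+(1-\lambda)\tau_2$ with $\tau_1,\tau_2\in T_{\fin}(\mcA)$ and $0<\lambda<1$, positivity gives $\mcI_\tau \subseteq \mcI_{\tau_1}\cap \mcI_{\tau_2}$, so both $\tau_1$ and $\tau_2$ descend to tracial states on $\mcA/\mcI_\tau \cong \M_d$. Since $\M_d$ has a unique tracial state, $\tau_1=\tau_2=\tau$, establishing extremality in $T_{\fin}(\mcA)$.

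For part (c), Lemma \ref{lemma:convexhull} already identifies $T_{\fin}(\mcA)$ with the convex hull of tracial states that factor surjectively through full matrix algebras, and by part (b) these are precisely the elements of $\partial_e T_{\fin}(\mcA)$; hence $T_{\fin}(\mcA)=\mathrm{conv}\big(\partial_e T_{\fin}(\mcA)\big)$.

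The only subtle step, and hence the main obstacle, is the reverse direction of (b): one has to recognize that the faithfulness of $\tr_d$ on $\M_d$ upgrades the factorization $\tau=\tr_d\circ\varphi$ to a genuine isomorphism $\mcA/\mcI_\tau \cong \M_d$, and that any competing decomposition of $\tau$ in $T_{\fin}(\mcA)$ factors through this same quotient. Note that the argument does not invoke Krein--Milman (which would be delicate, since $T_{\fin}(\mcA)$ need not be weak$^*$-closed); instead, the convex hull structure is obtained directly from the finite-dimensional structure theorem packaged inside Lemma \ref{lemma:convexhull}.
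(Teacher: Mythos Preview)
Your proof is correct and follows the paper's structure closely for parts (a), (c), and the forward direction of (b). The one genuine difference is in the reverse direction of (b): the paper shows that $\pi_\tau(\mcA)\cong\mcA/\mcI_\tau\cong\M_d$, invokes \Cref{lemma:extremetrace} (the von Neumann factor characterization of extreme tracial states) to conclude $\tau\in\partial_e T(\mcA)$, and then applies part (a) to land in $\partial_e T_{\fin}(\mcA)$. You instead argue directly that any decomposing $\tau_1,\tau_2$ satisfy $\mcI_\tau\subseteq\mcI_{\tau_i}$ and hence descend to tracial states on $\M_d$, where uniqueness of the trace forces $\tau_1=\tau_2=\tau$. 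Your route is more elementary, avoiding \Cref{lemma:extremetrace} entirely; the paper's route has the minor advantage of yielding $\tau\in\partial_e T(\mcA)$ as an intermediate conclusion, though by part (a) this is equivalent anyway. Note that your argument in fact works verbatim for $\tau_1,\tau_2\in T(\mcA)$, so it also proves $\tau\in\partial_e T(\mcA)$ directly if you wanted it to.
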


\begin{proof}
    For part (a),  it is clear that $\partial_e T(\mcA)\cap T_{\fin}(\mcA)\subseteq \partial_e T_{\fin}(\mcA)$, so it only left to show the inverse inclusion. Now assume that there is a $\tau\in \partial_e T_{\fin}(\mcA)$ which is not in $\partial_e T(\mcA)$. Then there are distinct $\tau_1,\tau_2\in T(\mcA)$ and $0<\lambda<1$ such that $\tau=\lambda \tau_1 +(1-\lambda)\tau_2$. By \Cref{lemma:extremefdtrace}, $\tau_1$ and $\tau_2$ are finite-dimensional. This implies $\tau$ is not an extreme point of $T_{\fin}(\mcA)$, a contradiction. Hence $\partial_e T_{\fin}(\mcA) \subseteq \partial_e T(\mcA)$. We conclude that $\partial_e T_{\fin}(\mcA)=\partial_e T(\mcA)\cap T_{\fin}(\mcA)$.
    
    For part (b), let $\tau\in T_{\fin}(\mcA)$. Then $\tau$ has a convex combination $\lambda_1\tau_1+\cdots \lambda_k\tau_k$ where every $\tau_i$ factors surjectively through a full matrix algebra, as shown in \Cref{lemma:convexhull}. For the ``only if" direction, suppose $\tau$ is in $\partial_e T_{\fin}(\mcA)$. Then we must have $\tau_1=\cdots=\tau_k$, and hence $\tau$ factors surjectively through a full matrix algebra. For the ``if" direction, suppose $\tau$ factors surjectively through a full matrix algebra. Let $\big(L^2(\mcA,\tau),\pi_\tau,\ket{\tau}\big)$ be the standard GNS of $\tau$.  Then $\tau=\tr_d\circ\varphi$ for some surjective $*$-homomorphism $\varphi:\mcA\arr \M_d$, and hence
    \begin{align*}
        \mcI_\tau=\{a\in A:\tr_d\big(\varphi(a^*a)  \big)=0\}=\{a\in\mcA:\varphi(a)=0\}=\ker(\varphi).
    \end{align*}
This implies $\pi_\tau(\mcA)\cong{\mcA\slash \mcI_\tau}={\mcA\slash \ker(\varphi)}\cong \M_d$ is a full matrix algebra. So $\pi_\tau(\mcA)''$ is a factor, and hence $\tau\in \partial_e T(\mcA)$. Then by part (a), $\tau\in \partial_e T_{\fin}(\mcA)$. We conclude that a tracial state $\tau$ is in $\partial_e T_{\fin}(\mcA)$ if and only if $\tau$ factors surjectively through a full matrix algebra. 

Part (c) follows straightforward from part (b) and \Cref{lemma:convexhull}.
\end{proof}

From the proofs of \Cref{prop:fdtracial,lemma:convexhull}, we immediately see that:
\begin{corollary}\label{cor:quotient}
    Let $\mcA$ be a $C^*$-algebra. A tracial state $\tau$ on $\mcA$ is in $\partial_e T_{\fin}(\mcA)$ if and only if ${\mcA\slash\mcI_\tau}$ is a full matrix algebra. 
\end{corollary}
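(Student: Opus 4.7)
The plan is to extract this corollary directly from part (b) of \Cref{prop:fdtracial} together with the kernel identity already computed inside its proof, so almost no new work is required.

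For the ``only if'' direction, suppose $\tau \in \partial_e T_{\fin}(\mcA)$. Part (b) of \Cref{prop:fdtracial} gives a surjective $*$-homomorphism $\varphi : \mcA \arr \M_d$ and writes $\tau = \tr_d \circ \varphi$. The same computation used in that proof,
\begin{align*}
    \mcI_\tau = \{a \in \mcA : \tr_d(\varphi(a^*a)) = 0\} = \{a \in \mcA : \varphi(a) = 0\} = \ker(\varphi),
\end{align*}
where the middle equality uses faithfulness of $\tr_d$ on positive elements of $\M_d$, then yields $\mcA\slash\mcI_\tau \cong \mcA\slash\ker(\varphi) \cong \M_d$.

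For the ``if'' direction, suppose $\mcA\slash\mcI_\tau \cong \M_d$. Because $\mcI_\tau$ is a two-sided closed ideal and $\tau$ vanishes on it (an application of Cauchy--Schwarz to $\tau(\,\cdot\,^*\,\cdot\,)$ shows $\abs{\tau(a)}^2 \leq \tau(a^*a)\tau(1) = 0$ for $a \in \mcI_\tau$), the state $\tau$ descends to a tracial state $\wtd{\tau}$ on $\mcA\slash\mcI_\tau$ satisfying $\tau = \wtd{\tau} \circ q$, where $q : \mcA \arr \mcA\slash\mcI_\tau$ is the quotient map. Since $\M_d$ admits the unique tracial state $\tr_d$, necessarily $\wtd{\tau} = \tr_d$ under the identification $\mcA\slash\mcI_\tau \cong \M_d$. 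Hence $\tau$ factors surjectively through $\M_d$, and part (b) of \Cref{prop:fdtracial} gives $\tau \in \partial_e T_{\fin}(\mcA)$.

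There is no real obstacle here: the only subtlety is making explicit that $\tau$ descends to the quotient and that the induced trace must be $\tr_d$ by uniqueness. Both of these steps are standard, and the two implications are essentially mirror images of the kernel computation already performed in the proof of \Cref{prop:fdtracial}(b).
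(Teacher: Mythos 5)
Your proof is correct and takes essentially the same route the paper intends: the paper dispenses with a written proof by noting the corollary is immediate from the proofs of \Cref{prop:fdtracial} and \Cref{lemma:convexhull}, and your argument simply makes explicit the kernel computation from \Cref{prop:fdtracial}(b) in one direction and the descent-plus-uniqueness-of-$\tr_d$ observation in the other. Nothing is missing.
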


We now study the GNS construction of states in $\partial_e T_{\fin}(\mcA)$. Given a $\tau\in \partial_e T_{\fin}(\mcA)$, let $\big(L^2(\mcA,\tau),\pi_\tau,\ket{\tau}\big)$ be the standard GNS for $\tau$. Since $L^2(\mcA,\tau)$ is finite-dimensional, ${\mcA\slash\mcI_\tau}= L^2(\mcA,\tau)$ as a Hilbert space. Let $q:\mcA\arr \mcA\slash\mcI_\tau$ be the quotient map. Then $\tau$ drops to a tracial state $\wtd{\tau}$ on $L^2(\mcA,\tau)$ so that $\tau=\wtd{\tau}\circ q$. By \Cref{cor:quotient}, we can identify $L^2(\mcA,\tau)$ with\footnote{Here we think of $\M_d$ as a Hilbert space with inner product $\ang{A,B}=\tr_d(A^*B)$} $\M_d$ for some $d\geq 1$. Under this identification, $\wtd{\tau}=\tr_d$ and $\ket{\tau}=\frac{1}{\sqrt{d}}\sum_{i=1}^d e_{ii}$, where $\{e_{ij}:1\leq i,j\leq d\}$ is the standard basis for $\M_d$. While all GNS representations of $\tau$ are unitarily equivalent to the standard one, some of them have a more ``concrete" form. The following proposition illustrates that any extreme finite-dimensional tracial state has a GNS representation that employs a maximally entangled state.

\begin{proposition}\label{prop:tracialGNS}
    Let $\mcA$ be a $C^*$-algebra. A state $\tau$ on $\mcA$ is an extreme point in $T_{\fin}(\mcA)$ if and only if $\tau$ has a GNS representation $\big(\C^d\otimes\C^d,\Id\otimes \pi(\cdot),\ket{\varphi_d}\big)$, where $d\geq 1$, $\pi$ is an irreducible $*$-representation of $\mcA$ on $\C^d$, and $\ket{\varphi_d}:=\frac{1}{\sqrt{d}}\sum_{i=1}^d\ket{i}\otimes \ket{i}$.
\end{proposition}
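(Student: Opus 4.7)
My plan is to prove both directions by using \Cref{prop:fdtracial}(b), which characterizes extreme points of $T_{\fin}(\mcA)$ as precisely the tracial states factoring surjectively through a full matrix algebra, together with a direct verification that the proposed GNS triple computes $\tr_d \circ \pi$ and is cyclic.

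For the ``only if'' direction, I would start from $\tau \in \partial_e T_{\fin}(\mcA)$ and invoke \Cref{prop:fdtracial}(b) to write $\tau = \tr_d \circ \pi$ for some surjective $*$-homomorphism $\pi: \mcA \to \M_d$. Surjectivity onto the full matrix algebra $\M_d = \msB(\C^d)$ forces $\pi$ to be irreducible as a representation on $\C^d$. The goal is then to exhibit $\big(\C^d \otimes \C^d, \Id\otimes\pi, \ket{\varphi_d}\big)$ as a GNS triple for $\tau$. Using the standard identity $(\Id \otimes M)\ket{\varphi_d} = \frac{1}{\sqrt{d}}\sum_i \ket{i}\otimes M\ket{i}$, a short computation gives
\begin{equation*}
    \bra{\varphi_d}(\Id \otimes \pi(a))\ket{\varphi_d} = \tfrac{1}{d}\sum_i \bra{i}\pi(a)\ket{i} = \tr_d(\pi(a)) = \tau(a).
\end{equation*}
For cyclicity, since $\pi(\mcA) = \M_d$, the vectors $(\Id \otimes \pi(\mcA))\ket{\varphi_d}$ include $(\Id \otimes \ket{j}\bra{k})\ket{\varphi_d} = \tfrac{1}{\sqrt d}\ket{k}\otimes \ket{j}$ for every $j,k$, so they span $\C^d \otimes \C^d$. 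Uniqueness of GNS up to unitary equivalence completes this direction.

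For the ``if'' direction, suppose $\tau$ has such a GNS triple with $\pi$ irreducible on $\C^d$. The same computation as above shows $\tau = \tr_d \circ \pi$, and irreducibility of $\pi$ together with Burnside's theorem (or the Jacobson density theorem applied to a finite-dimensional irreducible representation) gives $\pi(\mcA) = \M_d$, so $\pi$ is surjective. Then $\tau$ factors surjectively through $\M_d$, and \Cref{prop:fdtracial}(b) yields $\tau \in \partial_e T_{\fin}(\mcA)$.

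There is no real obstacle here: the proposition is essentially a concrete realization of the standard GNS construction for the normalized trace on $\M_d$, where the cyclic vector is the canonical maximally entangled state under the identification $L^2(\M_d, \tr_d) \cong \C^d \otimes \C^d$ (via $e_{ij} \mapsto \ket{i}\otimes\ket{j}$ up to normalization). The only point requiring a little care is stating the equivalence of GNS triples correctly, but since both constructions produce cyclic representations implementing the same state $\tau$, the uniqueness clause of the GNS theorem handles this automatically.
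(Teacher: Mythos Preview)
Your proposal is correct and follows essentially the same route as the paper: both directions hinge on \Cref{prop:fdtracial}(b), and the ``if'' direction is identical. The only cosmetic difference is in the ``only if'' direction: the paper starts from the standard GNS $(\M_d,\pi_\tau,\ket{\tau})$ and builds an explicit unitary $U:\M_d\to\C^d\otimes\C^d$ (the operator--vector correspondence) intertwining $\pi_\tau$ with $\Id\otimes\pi$, whereas you verify directly that $\big(\C^d\otimes\C^d,\Id\otimes\pi,\ket{\varphi_d}\big)$ is cyclic and computes $\tau$. Your route is arguably cleaner since it avoids the explicit coordinate calculation, but both yield the same conclusion.
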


\begin{proof}
    For the ``if" direction, observe that $\pi$ is a surjective $*$-homomorphism from $\mcA\arr M_d(\C)$ such that 
    \begin{align*}
        \tau(a)=\bra{\varphi_d}\Id\otimes\pi(a)\ket{\varphi_d}=\tr_d\big(\pi(a)\big)
    \end{align*}
    for all $a\in \mcA$. This means $\tau$ is a tracial state that factors surjectively through the full matrix algebra $M_d(\C)\cong\M_d$. By part (b) of \Cref{prop:fdtracial}, we have $\tau\in \partial_e T_{\fin}(\mcA)$.

    Now we prove the ``only if" direction. Suppose $\tau\in \partial_e T_{\fin}(\mcA)$. Then the standard GNS of $\tau$ has the form $(\M_d,\pi_\tau,\ket{\tau})$, where $\ket{\tau}=\frac{1}{\sqrt{d}}\sum_{i=1}^d e_{ii}$ and $\{e_{ij}:1\leq i,j\leq d\}$ is the standard basis for $\M_d$. Let $U:\M_d\arr \C^d\otimes \C^d$ be the unitary map sending $e_{ij}\mapsto \ket{j}\otimes\ket{i}$. Then $U\ket{\tau}=\frac{1}{\sqrt{d}}\sum_{i=1}^d\ket{i}\otimes \ket{i}=:\ket{\varphi_d}$. Let $\Phi:\M_d\arr M_d(\C)$ be the isomorphism sending $e_{ij}\mapsto \ket{i}\bra{j}$ for all $1\leq i,j\leq d$. Then  $\pi:=\Phi\circ\pi_\tau$ is a surjective $*$-homomorphism from $\mcA\arr M_n(\C)$, and hence $\pi$ is an irreducible $*$-representation. For  any $a\in \mcA$, $\pi_\tau(a)$ can be written uniquely as $\sum_{i,j=1}^da_{ij}e_{ij}$ for some $a_{ij}\in\C$, so
    \begin{align*}
U\pi_\tau(a)U^*\ket{k,\ell}&=U\sum_{i,j=1}^da_{ij}e_{ij}e_{\ell k}=U\sum_{i=1}^da_{i\ell}e_{i k}=\sum_{i=1}^da_{i\ell}\ket{k,i}=\ket{k}\otimes\big(\sum_{i=1}^da_{i\ell}\ket{i} \big)\\
        &= \ket{k}\otimes\big(\sum_{i=1}^da_{i\ell}\ket{i}\braket{\ell|\ell} \big)=\ket{k}\otimes\big(\sum_{i,j=1}^da_{ij}\ket{i}\braket{j|\ell} \big)\\
        &=\ket{k}\otimes\Big(\Phi\big(\pi_\tau(a)\big)\ket{\ell}\Big)  =\ket{k}\otimes\big(\pi(a)\ket{\ell}\big)
    \end{align*}
    for all $\ket{k,\ell}\in\C^d\otimes \C^d$. This implies $U\pi_\tau(a)U^*=\Id\otimes\pi(a)$ for all $a\in \mcA$. It follows that $\big(\C^d\otimes\C^d,\Id\otimes\pi(\cdot),\ket{\varphi_d}\big)$ is unitarily equivalent to the standard GNS of $\tau$, and hence itself is a GNS representation of $\tau$.
\end{proof}

%\begin{remark}\label{rmk:right}
    The unitary operator $U$ in the proof of \Cref{prop:tracialGNS} is known as the transformation of the operator-vector correspondence. From the proof, we see that the left regular representation $\pi_\tau$ satisfies $\pi_{\tau}(a)=U^*\big( \Id\otimes\pi(a)\big)U$ for all $a\in\mcA$. Consequently, the right regular representation $\pi_\tau^{\op}$ is given by $\pi_\tau^{\op}(a)=U^*\big(\pi(a)^{T}\otimes\Id\big)U$.
%\end{remark}

For every $\tau\in \partial_e T_{\fin}(\mcA)$, the irreducible representation $\pi$ in the statement of \Cref{prop:tracialGNS} is unique up to unitary equivalence. This gives a one-to-one correspondence between extreme finite-dimensional tracial states on $\mcA$ and finite-dimensional irreducible representations of $\mcA$.

\begin{corollary}\label{cor:extremetrace-irrep}
    For any $C^*$-algebra $\mcA$, let $\Irr_{\fin}(\mcA)$ be the set of inequivalent finite-dimensional irreducible $*$-representations of $\mcA$. There is a bijective correspondence between $\Irr_{\fin}(\mcA) $ and $ \partial_e T_{\fin}(\mcA)$, sending a representation $\pi\in \Irr_{\fin}(\mcA)$  to the tracial state $\tau$ given by $\tau(a)=\tr_d\big(\pi(a)\big),a\in\mcA$, where $d$ is the dimension of $\pi$.
\end{corollary}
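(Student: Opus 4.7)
The plan is to verify that the proposed map $\Phi : \Irr_{\fin}(\mcA) \to \partial_e T_{\fin}(\mcA)$, sending $\pi \in \Irr_{\fin}(\mcA)$ acting on $\C^d$ to $\tau_\pi := \tr_d \circ \pi$, is a well-defined bijection. First I would check that $\Phi$ is well-defined. If $\pi : \mcA \to M_d(\C)$ is a finite-dimensional irreducible $*$-representation, then by the Burnside/Jacobson density theorem $\pi$ is surjective onto $M_d(\C)$; since $\tr_d$ is a tracial state on $M_d(\C)$, the composition $\tau_\pi = \tr_d \circ \pi$ is a tracial state on $\mcA$ that factors surjectively through a full matrix algebra. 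By part (b) of Proposition~\ref{prop:fdtracial}, this means $\tau_\pi \in \partial_e T_{\fin}(\mcA)$. Moreover, two unitarily equivalent irreducible representations yield the same tracial state because $\tr_d$ is invariant under conjugation by unitaries, so $\Phi$ descends to equivalence classes.

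Next I would show surjectivity of $\Phi$, which is almost immediate from Proposition~\ref{prop:tracialGNS}: every $\tau \in \partial_e T_{\fin}(\mcA)$ admits a GNS representation of the form $\bigl(\C^d \otimes \C^d,\, \Id \otimes \pi,\, \ket{\varphi_d}\bigr)$ with $\pi$ a finite-dimensional irreducible representation of $\mcA$, and a direct calculation gives
\begin{equation*}
\tau(a) \;=\; \bra{\varphi_d} \Id \otimes \pi(a) \ket{\varphi_d} \;=\; \tr_d\bigl(\pi(a)\bigr) \;=\; \tau_\pi(a)
\end{equation*}
for all $a \in \mcA$, so $\tau = \Phi(\pi)$.

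For injectivity, suppose $\tau_\pi = \tau_{\pi'}$ for two finite-dimensional irreducible $*$-representations $\pi$ on $\C^d$ and $\pi'$ on $\C^{d'}$. Then $\bigl(\C^d \otimes \C^d,\, \Id \otimes \pi,\, \ket{\varphi_d}\bigr)$ and $\bigl(\C^{d'} \otimes \C^{d'},\, \Id \otimes \pi',\, \ket{\varphi_{d'}}\bigr)$ are both GNS triples for the same tracial state, and by the uniqueness of the GNS construction they are unitarily equivalent. Matching dimensions forces $d = d'$, and the intertwining unitary $U$ satisfying $U \bigl(\Id \otimes \pi(a)\bigr) U^* = \Id \otimes \pi'(a)$ together with $U\ket{\varphi_d} = \ket{\varphi_d}$ can be unpacked (using that $\ket{\varphi_d}$ is maximally entangled, so $U$ must be of the form $V \otimes \overline{V}$ for some unitary $V$) to produce a unitary $V$ implementing $\pi \cong \pi'$. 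This step is the most delicate one, but it is standard once one writes $U$ in the operator-vector correspondence used in the proof of Proposition~\ref{prop:tracialGNS}; the condition $U \ket{\varphi_d} = \ket{\varphi_d}$ singles out the correct $V$. Combining well-definedness, surjectivity, and injectivity yields the stated bijection.
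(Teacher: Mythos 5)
Your proof is correct and follows the same route the paper (implicitly) takes: well-definedness via part (b) of Proposition~\ref{prop:fdtracial}, surjectivity via Proposition~\ref{prop:tracialGNS}, and injectivity via GNS uniqueness; the paper states the corollary without proof as ``the irreducible representation $\pi$ in Proposition~\ref{prop:tracialGNS} is unique up to unitary equivalence,'' and you have spelled out what that means. One remark on injectivity: the $V\otimes\overline{V}$ form of the intertwiner $U$ and the condition $U\ket{\varphi_d}=\ket{\varphi_d}$ are more than you need. Since $\pi$ is onto $M_d(\C)$, the intertwining equation $U(\Id\otimes\pi(a))U^*=\Id\otimes\pi'(a)$ already shows that conjugation by $U$ carries $\Id\otimes M_d(\C)$ onto itself and hence induces an automorphism of $M_d(\C)$, which is automatically inner, yielding $\pi'\cong\pi$. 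Even more directly, one can bypass GNS uniqueness: faithfulness of $\tr_d$ on $M_d(\C)$ gives $\mcI_{\tau_\pi}=\ker\pi$, so $\tau_\pi=\tau_{\pi'}$ forces $\ker\pi=\ker\pi'$; both representations then descend to $*$-isomorphisms of $\mcA/\ker\pi$ onto full matrix algebras, which forces $d=d'$ and produces an inner automorphism of $M_d(\C)$ conjugating $\pi$ to $\pi'$.
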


In particular, a $C^*$-algebra $\mcA$ has a unique finite-dimensional irreducible $*$-representation if and only if $\Irr_{\fin}(\mcA)$ has a unique extreme point. So \Cref{cor:extremetrace-irrep} further implies that:
\begin{corollary}\label{coro:uniqueirreptrace}
    A $C^*$-algebra $\mcA$ has a unique finite-dimensional tracial state $\tau$ if and only if $\mcA$  has a unique finite-dimensional irreducible $*$-representation $\pi$. In this case, $\tau(a)=\tr_d\big(\pi(a)\big),a\in\mcA$, where $d$ is the dimension of $\pi$. 
\end{corollary}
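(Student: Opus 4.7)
The plan is to deduce the corollary directly from \Cref{cor:extremetrace-irrep} together with part (c) of \Cref{prop:fdtracial}. The bijection in \Cref{cor:extremetrace-irrep} already pairs each finite-dimensional irreducible $*$-representation $\pi$ of dimension $d$ with the tracial state $a \mapsto \tr_d(\pi(a))$ in $\partial_e T_{\fin}(\mcA)$, so the ``explicit formula'' portion of the statement will come for free once uniqueness is established on either side. Thus the whole task reduces to showing that $T_{\fin}(\mcA)$ is a singleton if and only if $\partial_e T_{\fin}(\mcA)$ is a singleton.

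First I would handle the ``if'' direction. Assume $\mcA$ has a unique finite-dimensional irreducible $*$-representation $\pi$, of some dimension $d$. Then by \Cref{cor:extremetrace-irrep}, $\partial_e T_{\fin}(\mcA)$ consists of the single tracial state $\tau(a) := \tr_d(\pi(a))$. By part (c) of \Cref{prop:fdtracial}, $T_{\fin}(\mcA) = \mathrm{conv}\big(\partial_e T_{\fin}(\mcA)\big) = \{\tau\}$, so $\tau$ is the unique finite-dimensional tracial state on $\mcA$.

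For the ``only if'' direction, suppose $\mcA$ has a unique finite-dimensional tracial state $\tau$. Then $T_{\fin}(\mcA) = \{\tau\}$ is a non-empty singleton, which is its own set of extreme points, so $\partial_e T_{\fin}(\mcA) = \{\tau\}$. Applying the bijection in \Cref{cor:extremetrace-irrep} in the reverse direction yields a unique (up to unitary equivalence) finite-dimensional irreducible $*$-representation $\pi$ of $\mcA$, and by the formula provided there, $\tau(a) = \tr_d(\pi(a))$ with $d = \dim \pi$.

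There is no real obstacle here; the corollary is essentially a repackaging of the bijection of \Cref{cor:extremetrace-irrep} combined with the Krein-Milman-type statement \Cref{prop:fdtracial}(c). The only point requiring a moment's care is the degenerate possibility that $T_{\fin}(\mcA)$ is empty, but the hypothesis of existence of a unique finite-dimensional tracial state (respectively, a unique finite-dimensional irreducible representation) rules this out on both sides, so the argument is symmetric and immediate.
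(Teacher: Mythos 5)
Your proof is correct and follows exactly the route the paper intends: \Cref{cor:extremetrace-irrep} gives the bijection between finite-dimensional irreducible representations and $\partial_e T_{\fin}(\mcA)$, and \Cref{prop:fdtracial}(c) closes the gap between uniqueness of extreme points and uniqueness of all finite-dimensional tracial states. You also correctly note the non-emptiness hypothesis of \Cref{prop:fdtracial} is supplied by the standing assumption, so nothing is missing.
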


If a closed two-sided ideal $\mcJ$ is contained in $\mcI_{\tau}$ for a tracial state $\tau$ on $\mcA$, then it is clear that $\tau$ drops to a tracial state on $\mcA\slash \mcJ$. The following lemma asserts that the finite dimensionality of $\tau$ will also be preserved. 

\begin{proposition}\label{prop:tracequotientfd}
    Let $\mcA$ be a $C^*$-algebra. Suppose $\mcJ$ is a closed two-sided ideal of $\mcA$, and let $q:\mcA\arr \mcA\slash\mcJ$ be the quotient map. There is a bijective correspondence between tracial states on $\mcA\slash\mcJ$ and tracial states on $\mcA$ whose kernel contains $\mcJ$, sending a $\tau\in T(\mcA\slash \mcJ)$ to the tracial state $\wtd{\tau}:=\tau\circ q$. Moreover, $\tau\in T(\mcA\slash \mcJ)$ is finite-dimensional if and only if the corresponding $\wtd{\tau}$ is finite-dimensional.
\end{proposition}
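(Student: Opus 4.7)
The plan is to split the proposition into two steps: first establish the bijection between $T(\mcA/\mcJ)$ and the subset $\{\wtd{\tau}\in T(\mcA) : \mcJ\subseteq \mcI_{\wtd{\tau}}\}$, then compare the kernels to transfer finite-dimensionality across the bijection.

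For the bijection, the forward map $\tau\mapsto \tau\circ q$ is straightforward: $\tau\circ q$ is linear, unital, positive, and tracial since $q$ is a $*$-homomorphism; and for any $j\in \mcJ$ we have $q(j)=0$, so $(\tau\circ q)(j^*j)=0$, hence $\mcJ\subseteq \mcI_{\tau\circ q}$. For the inverse direction, given $\wtd{\tau}\in T(\mcA)$ with $\mcJ\subseteq \mcI_{\wtd{\tau}}$, I would first observe that $\wtd{\tau}$ vanishes on $\mcJ$: by the Cauchy--Schwarz inequality applied to the positive semidefinite form $(a,b)\mapsto \wtd{\tau}(a^*b)$, for any $j\in\mcJ$ one has $|\wtd{\tau}(j)|^2=|\wtd{\tau}(1\cdot j)|^2\le \wtd{\tau}(1)\wtd{\tau}(j^*j)=0$. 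Therefore $\wtd{\tau}$ descends to a well-defined linear functional $\tau$ on $\mcA/\mcJ$ by $\tau(q(a))=\wtd{\tau}(a)$, and it inherits unitality, positivity, and the trace property from $\wtd{\tau}$. These two assignments are mutual inverses by construction.

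For the moreover clause, the key observation is the identification of the tracial kernels. If $\tau\in T(\mcA/\mcJ)$ and $\wtd{\tau}=\tau\circ q$, then
\begin{equation*}
a\in \mcI_{\wtd{\tau}}\iff \tau\bigl(q(a)^*q(a)\bigr)=0 \iff q(a)\in \mcI_\tau \iff a\in q^{-1}(\mcI_\tau),
\end{equation*}
so $\mcI_{\wtd{\tau}}=q^{-1}(\mcI_\tau)$. The third isomorphism theorem (applied to the closed two-sided ideals $\mcJ\subseteq q^{-1}(\mcI_\tau)$ of $\mcA$) then yields a $*$-isomorphism $\mcA/\mcI_{\wtd{\tau}}\cong (\mcA/\mcJ)/\mcI_\tau$. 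By the criterion of \Cref{lemma:fdtrace}(a), $\tau$ is finite-dimensional iff $(\mcA/\mcJ)/\mcI_\tau$ is finite-dimensional iff $\mcA/\mcI_{\wtd{\tau}}$ is finite-dimensional iff $\wtd{\tau}$ is finite-dimensional, completing the proof.

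The only mild technical point to verify carefully is that $\mcI_\tau$ is a closed two-sided ideal (which is standard for tracial states, since the trace property makes the left kernel $\{a:\tau(a^*a)=0\}$ equal to the right kernel), so the quotient $\mcA/\mcJ$ modulo $\mcI_\tau$ makes sense as a $C^*$-algebra and the isomorphism theorem applies. Everything else is routine, so I do not expect any substantive obstacle.
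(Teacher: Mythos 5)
Your argument is correct, and there is no gap. The bijection step is handled carefully (including the well-definedness and positivity of the descended functional), the identification $\mcI_{\wtd{\tau}}=q^{-1}(\mcI_\tau)$ is right, and the third isomorphism theorem gives $\mcA/\mcI_{\wtd{\tau}}\cong(\mcA/\mcJ)/\mcI_\tau$, so finite-dimensionality transfers by \Cref{lemma:fdtrace}(a). The observation about $\mcI_\tau$ being a closed two-sided ideal for tracial $\tau$ is indeed the standard point and is already recorded in the paper around \Cref{eq:tracialideal}.

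The paper's own proof is a one-line GNS-space version of the same idea: it just notes that $L^2(\mcA,\wtd{\tau})\cong L^2(\mcA/\mcJ,\tau)$ and declares the proposition to follow. That identification is precisely the Hilbert-space completion of your quotient-algebra isomorphism $\mcA/\mcI_{\wtd{\tau}}\cong(\mcA/\mcJ)/\mcI_\tau$, so the two arguments are two presentations of the same observation. What your write-up buys that the paper's does not is that it separates the bijection from the dimension claim and spells out the third isomorphism theorem step, making explicit exactly which equivalent characterization of finite-dimensionality from \Cref{lemma:fdtrace} is being invoked (the algebraic one, via $\mcA/\mcI_\tau$, rather than the GNS-space one via $\pi_\tau(\mcA)''$). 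Either route is a clean proof; yours would be preferable if one wanted to avoid invoking GNS completions, and the paper's is preferable for brevity once the GNS machinery is already on the table.
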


\begin{proof}
   Observe that $L^2(\mcA,\wtd{\tau})\cong L^2(\mcA\slash \mcJ,\tau)$. So the lemma follows.
\end{proof}

\subsection{Amenable tracial states}
In the study of robust self-testing, we need to work with weak*-limit of finite-dimensional tracial states. This leads to the study of amenable tracial states.

\begin{definition}
    Let $\mcA$ be a $C^*$-algebra, and fix a faithful representation $\mcA\subset\msB(\mcH)$. We say a state $\tau$ on $\mcA$ is an \textbf{amenable tracial state} if there is a state $\phi$ on $\msB(\mcH)$ such that $\phi|_{\mcA}=\tau$ and $\phi(uTu^*)=\phi(T)$ for any $T\in\msB(\mcH)$ and unitary $u\in \mcA$.     
\end{definition}

    Note that $\phi|_{\mcA}$ is indeed a tracial state. This is because $\phi(uv)=\phi\big(u(vu)u^*\big)=\phi(vu)$ for all unitaries $u,v\in\mcA$ and unitaries in $\mcA$ span the entire $\mcA$. Another subtle point in this definition is that the amenability of $\tau$ is independent of the choice of embedding $\mcA\subset \msB(\mcH)$ (see \cite[Proposition 6.2.2]{BrownOzawa} for a proof).

The following characterization of amenable tracial states is from \cite[Theorem 6.2.7]{BrownOzawa}.
\begin{lemma}\label{lemma:amenable}
    A tracial state $\tau$ on a $C^*$-algebra $\mcA$ is amenable if and only if one of the following equivalent conditions holds.
    \begin{enumerate}[(a)]
        \item There exists a sequence of ucp maps $\varphi_n:\mcA\arr\M_{d_n},n\in\N$ such that $\tau(a)=\lim\limits_{n\arr\infty} \tr_{d_n}\big(\varphi_n(a)\big)$ and $\lim\limits_{n\arr \infty}\norm{\varphi_n(ab)-\varphi_n(a)\varphi_n(b)}_{hs}=0$ for all $a,b\in\mcA$.  
        \item The linear functional $\phi:\mcA^{\op}\otimes\mcA\arr\C$ defined by $\phi(a^{\op}\otimes b):=\tau(ab),a,b\in\mcA$ is a min-continuous state.
        \item The $*$-homomorphism $\pi:\mcA^{\op}\otimes\mcA\arr \msB(L^2(\mcA,\tau))$ defined by $\pi(a^{\op}\otimes b)=\pi_{\tau}^{\op}(a^{\op})\pi_\tau(b)$, $a,b\in\mcA$ is min-continuous.
        \item For any faithful representation $\mcA\subset \msB(\mcH)$ there is a ucp map $\Phi:\mcA(\mcH)\arr \pi_{\tau}(\mcA)''$ such that $\Phi(a)=\pi_\tau(a)$ for all $a\in\mcA$.
    \end{enumerate}
\end{lemma}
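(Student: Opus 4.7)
The plan is to prove the four conditions equivalent to amenability via the cycle amenability $\Rightarrow$ (d) $\Rightarrow$ (c) $\Leftrightarrow$ (b) $\Rightarrow$ (a) $\Rightarrow$ amenability, following the approach of Brown-Ozawa Theorem 6.2.7.

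The equivalence (b) $\Leftrightarrow$ (c) is purely formal: the state $\phi$ in (b) equals the vector state $\braket{\tau|\pi(\cdot)|\tau}$ of the $*$-homomorphism $\pi$ from (c) at the cyclic vector $\ket{\tau}\in L^2(\mcA,\tau)$, and min-continuity of $\pi$ and of $\phi$ coincide because $\ket{\tau}$ is cyclic.

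For amenability $\Rightarrow$ (d), fix a faithful embedding $\mcA\subset\msB(\mcH)$ and let $\psi$ be an $\mcA$-invariant extension of $\tau$ to $\msB(\mcH)$. The ucp map $\Phi:\msB(\mcH)\arr \pi_\tau(\mcA)''$ is constructed via the sesquilinear form $(\widehat{a},\widehat{b})\mapsto \psi(b^*Ta)$ on $\mcA\slash\mcI_\tau$, which is $\norm{T}$-bounded by Cauchy-Schwarz for $\psi$ and therefore defines an operator $\Phi(T)\in\msB(L^2(\mcA,\tau))$. The invariance $\psi(u\cdot u^*)=\psi(\cdot)$ for unitaries $u\in\mcA$, combined with the trace property on $\mcA$, forces $\Phi(T)$ to commute with $\pi_\tau^{\op}(\mcA^{\op})$, hence to lie in $\pi_\tau^{\op}(\mcA^{\op})'=\pi_\tau(\mcA)''$; unitality, complete positivity, and $\Phi|_\mcA=\pi_\tau$ are immediate. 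For (d) $\Rightarrow$ (c), the map $a^{\op}\otimes b\mapsto \pi_\tau^{\op}(a^{\op})\Phi(b)$ is a $*$-representation (because $\Phi(\mcA)\subset\pi_\tau(\mcA)''$ commutes with $\pi_\tau^{\op}(\mcA^{\op})$), and min-continuity follows from the general principle that two commuting ucp maps into a common $\msB(\mcK)$ assemble into a ucp map on the min tensor product.

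The remaining directions (b) $\Rightarrow$ (a) and (a) $\Rightarrow$ amenability form the analytic heart of the proof. For (b) $\Rightarrow$ (a), take the GNS representation of $\phi$ on $\mcA^{\op}\otimes_{\min}\mcA$ and compress by an increasing net of finite-rank projections; this yields ucp maps $\varphi_n:\mcA\arr\M_{d_n}$ with $\tr_{d_n}\big(\varphi_n(a)\varphi_n(b)\big)\arr\tau(ab)$, and expanding $\norm{\varphi_n(ab)-\varphi_n(a)\varphi_n(b)}_{hs}^2$ as four traces --- each converging to $\tau(b^*a^*ab)$ by (b) --- delivers the Hilbert-Schmidt approximate multiplicativity. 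For (a) $\Rightarrow$ amenability, extend each $\varphi_n$ to a ucp map on $\msB(\mcH)$ by Arveson, set $\psi_n(T):=\tr_{d_n}\big(\varphi_n(T)\big)$, and pick a weak*-cluster point $\psi$; then $\psi|_\mcA=\tau$ by assumption. The main obstacle is $\mcA$-invariance of $\psi$: although $\varphi_n$ is only asymptotically multiplicative on $\mcA$, the fact that $\varphi_n(u)$ is asymptotically a unitary in $\M_{d_n}$ (in the $\tr_{d_n}$-norm), together with the tracial property of $\tr_{d_n}$, yields $\psi_n(uTu^*)-\psi_n(T)\arr 0$ for every unitary $u\in\mcA$ and $T\in\msB(\mcH)$. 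This is the step where the Hilbert-Schmidt convergence in (a), rather than pointwise convergence alone, is indispensable.
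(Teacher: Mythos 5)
The paper cites this lemma from Brown--Ozawa [Theorem 6.2.7] without proof, so your attempt is a blind reconstruction of that reference. Two of your arrows do not withstand scrutiny.

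The step (d) $\Rightarrow$ (c) is where the argument breaks. You invoke a ``general principle that two commuting ucp maps into a common $\msB(\mcK)$ assemble into a ucp map on the min tensor product.'' No such principle exists: commuting ucp/$*$-rep pairs assemble only on the \emph{maximal} tensor product, and the max/min gap is exactly what this lemma is about. For a concrete counterexample to the purported principle, take $\mcA=C^*_r(\mathbb{F}_2)$ with its canonical trace $\tau$; then $\pi_\tau$ and $\pi_\tau^{\op}$ are commuting $*$-representations on $L^2(\mathbb{F}_2)$, but $\pi_\tau^{\op}\times\pi_\tau$ is \emph{not} min-continuous, since by the Effros--Lance characterization min-continuity of the multiplication map $M\otimes M'\arr\msB(\mcK)$ requires $M$ to be semidiscrete (injective), and $L(\mathbb{F}_2)=\pi_\tau(\mcA)''$ is not. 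In the standard proof the easy arrow runs the other way, (c) $\Rightarrow$ (d): use injectivity of $\msB(L^2(\mcA,\tau))$ to extend the min-continuous map $\pi$ along $\mcA^{\op}\otimes_{\min}\mcA\hookrightarrow\mcA^{\op}\otimes_{\min}\msB(\mcH)$, restrict the extension to $1\otimes\msB(\mcH)$ to define $\Phi$, and use the multiplicative domain of $\mcA^{\op}\otimes 1$ to place $\Phi(\msB(\mcH))$ inside $\pi_\tau^{\op}(\mcA^{\op})'=\pi_\tau(\mcA)''$. The cycle that closes correctly is amenable $\Rightarrow$ (a) $\Rightarrow$ (b) $\Leftrightarrow$ (c) $\Rightarrow$ (d) $\Rightarrow$ amenable.

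That reversal also puts the analytic weight in the wrong place. You locate the ``analytic heart'' at (b) $\Rightarrow$ (a) and propose compressing the GNS of $\phi$ by an increasing net of finite-rank projections. An arbitrary such net satisfies neither $\tr_{d_n}(\varphi_n(a)\varphi_n(b))\arr\tau(ab)$ nor the Hilbert--Schmidt near-multiplicativity; the projections must be engineered to almost commute with the bimodule action, which is exactly the Day/Namioka convexity argument that Brown--Ozawa deploy for amenable $\Rightarrow$ (a). Moreover, your claim that each of the four trace terms in the expansion of $\norm{\varphi_n(ab)-\varphi_n(a)\varphi_n(b)}_{hs}^2$ ``converges to $\tau(b^*a^*ab)$ by (b)'' is unjustified: condition (b) governs only bilinear quantities $\tau(xy)$, and a term such as $\tr_{d_n}\bigl(\varphi_n(b^*)\varphi_n(a^*)\varphi_n(a)\varphi_n(b)\bigr)$ is quadrilinear in $\varphi_n$. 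One needs the Kadison--Schwarz inequality to squeeze such terms, not a direct appeal to (b).

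The remaining steps are sound: (b) $\Leftrightarrow$ (c) via GNS uniqueness at the cyclic vector $\ket{\tau}$; amenable $\Rightarrow$ (d) via the bounded sesquilinear form $(\widehat{a},\widehat{b})\mapsto\psi(b^*Ta)$ and the commutation theorem; and (a) $\Rightarrow$ amenable via Arveson extension, Choi's multiplicative-domain estimate, and the tracial property of $\tr_{d_n}$.
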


For any two-sided ideal $\mcJ$ of $\mcA$, in \Cref{prop:tracequotientfd}, we have seen the correspondence between tracial states on $\mcA\slash\mcJ$ and tracial states on $\mcA$ whose kernel contains $\mcJ$. The following proposition asserts that amenability is preserved in one direction of this correspondence.

\begin{proposition}\label{prop:tracequotientamenable}
Let $\mcA$ be a $C^*$-algebra. Suppose $\mcJ$ is a closed two-sided ideal of $\mcA$, and let $q:\mcA\arr \mcA\slash\mcJ$ be the quotient map. If a tracial state $\tau$ on $\mcA\slash\mcJ$ is amenable, then the tracial state $\wtd{\tau}:=\tau\circ q$ on $\mcA$ is amenable.
\end{proposition}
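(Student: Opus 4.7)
The plan is to use characterization (a) of amenability from \Cref{lemma:amenable}: a tracial state is amenable if and only if it can be approximated in the weak* sense by traces of asymptotically multiplicative ucp maps into full matrix algebras. Since $q:\mcA\arr\mcA\slash\mcJ$ is a $*$-homomorphism (in particular ucp), precomposition with $q$ preserves this structure.

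In more detail: assume $\tau$ is an amenable tracial state on $\mcA\slash\mcJ$. By part (a) of \Cref{lemma:amenable}, there exists a sequence of ucp maps $\varphi_n:\mcA\slash\mcJ\arr\M_{d_n}$ such that $\tau(x)=\lim_{n\arr\infty}\tr_{d_n}\big(\varphi_n(x)\big)$ for every $x\in\mcA\slash\mcJ$ and $\lim_{n\arr\infty}\norm{\varphi_n(xy)-\varphi_n(x)\varphi_n(y)}_{hs}=0$ for all $x,y\in\mcA\slash\mcJ$. Set $\psi_n:=\varphi_n\circ q:\mcA\arr\M_{d_n}$, which is ucp since $q$ is a $*$-homomorphism. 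For all $a,b\in\mcA$,
\begin{align*}
    \tr_{d_n}\big(\psi_n(a)\big)&=\tr_{d_n}\big(\varphi_n(q(a))\big)\arr \tau\big(q(a)\big)=\wtd{\tau}(a), \text{ and}\\
    \norm{\psi_n(ab)-\psi_n(a)\psi_n(b)}_{hs}&=\norm{\varphi_n(q(a)q(b))-\varphi_n(q(a))\varphi_n(q(b))}_{hs}\arr 0,
\end{align*}
so $\wtd{\tau}$ satisfies the same characterization and is therefore amenable.

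There is no real obstacle here; the argument is essentially functorial, and the only point worth checking is that composing with $q$ preserves both the asymptotic trace condition and asymptotic multiplicativity, which is immediate. (An equally short alternative would use characterization (b): the map $q^{\op}\otimes q:\mcA^{\op}\otimes\mcA\arr(\mcA\slash\mcJ)^{\op}\otimes(\mcA\slash\mcJ)$ extends to a continuous $*$-homomorphism of minimal tensor products, so precomposing the min-continuous functional $\phi(x^{\op}\otimes y)=\tau(xy)$ with it produces a min-continuous functional on $\mcA^{\op}\otimes\mcA$ that equals $\wtd{\tau}(ab)$ on elementary tensors $a^{\op}\otimes b$.)
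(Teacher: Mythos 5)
Your proof is correct and takes exactly the same route as the paper: both invoke characterization (a) of \Cref{lemma:amenable} and observe that precomposing the approximately multiplicative ucp maps $\varphi_n$ with the quotient $*$-homomorphism $q$ yields ucp maps $\varphi_n\circ q$ witnessing amenability of $\wtd{\tau}$. The alternative sketch via characterization (b) is a fine bonus but not needed.
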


\begin{proof}
    Suppose $\tau$ is an amenable tracial state on $\mcA\slash\mcJ$, and let $\wtd{\tau}:=\tau\circ q$. Then there is a sequence of ucp maps $\varphi_n:\mcA\slash\mcJ\arr \M_{d_n},n\in\N$ such that $\tau(a)=\lim\limits_{n\arr\infty} \tr_{d_n}\big(\varphi_n(a)\big)$ and $\lim\limits_{n\arr \infty}\norm{\varphi_n(ab)-\varphi_n(a)\varphi_n(b)}_{hs}=0$ for all $a,b\in\mcA\slash\mcJ$. It follows that $\wtd{\varphi}_n:=\varphi_n\circ q:\mcA\arr\M_{d_n},n\in\N$ is a sequence of ucp maps such that $\lim\limits_{n\arr\infty} \tr_{d_n}\big(\wtd{\varphi}_n(a)\big)=\tau(q(a))=\wtd{\tau}(a)$ and $\lim\limits_{n\arr \infty}\norm{\wtd{\varphi}_n(ab)-\wtd{\varphi}_n(a)\wtd{\varphi}_n(b)}_{hs}=0$ for all $a,b\in\mcA$. So $\wtd{\tau}$ is amenable.
\end{proof}

\begin{remark}\label{rmk:tracequotientamenable}
    Unlike \Cref{prop:tracequotientfd}, the converse of \Cref{prop:tracequotientamenable} may not hold in general: there are examples where an amenable tracial state $\wtd{\tau}$ on $\mcA$ drops to a non-amenable tracial state $\tau$ on $\mcA\slash\mcJ$. However, as shown in \cite[Proposition 6.3.5]{BrownOzawa}, if $0\arr \mcJ\arr \mcA \arr \mcA\slash\mcJ\arr 0$ is a locally split extension, then any amenable tracial state on $\mcA$ whose kernel contains $\mcJ$ always drops to an amenable tracial state on $\mcA\slash\mcJ$.
\end{remark}

\section{A tracial-state characterization of self-testing}\label{sec:traceselftest}
%\subsection{General results}
In this section, we relate self-testing with tracial states on $C^*$-algebras. We first discuss the sufficient conditions under which optimal strategies of a nonlocal game can be characterized by tracial states on Bob's algebra $\PVM^{Y,B}$ only.  

\begin{definition}\label{def:pair}
    Let $\Gamma:=\{\gamma^x_a:x\in X,a\in A\}$ be a set of self-adjoint $*$-polynomials in $\PVM^{Y,B}$, and let $\mcR$ be a set of $*$-polynomials in $\PVM^{Y,B}$. We say $(\Gamma,\mcR)$ is a determining pair for a nonlocal game $\mcG=(X,Y,A,B,\mu, V)$ if the following conditions hold.
    \begin{enumerate}[(i)]
        \item For every $x\in X$, $\{q(\gamma^x_a)\}_{a\in A}$ is a PVM in the quotient $C^*(\mcG):=\PVM^{Y,B}\slash \ang{\mcR}$, where $\ang{\mcR}$ is the closed two-sided ideal generated by $\mcR$ and $q:\PVM^{Y,B}\arr C^*(\mcG)$ is the quotient map.
        \item A state\footnote{Since we assume all states are bounded in the sense of \Cref{eq:bounded}, every state $f$ on the algebraic tensor $\PVM^{X,A}\otimes\PVM^{Y,B}$ is max-continuous.} $f$ on $\PVM^{X,A}\otimes\PVM^{Y,B}$ is optimal for $\mcG$ if and only if
        \begin{enumerate}
            \item $f\big((m^x_a\otimes 1-1\otimes \gamma^x_a)^2\big)=0$ for all $x\in X, a\in A$, and
            \item the linear functional $\tau:=f|_{1\otimes \PVM^{Y,B}}$ is a tracial state on $\PVM^{Y,B}$ that satisfies $\tau(r^*r)=0$ for all $r\in\mcR$.
        \end{enumerate}
    \end{enumerate}
\end{definition}

In Condition (i),  $\{q(\gamma^x_a):a\in A\}$ is a PVM in $C^*(\mcG)$ means that 
\begin{itemize}
    \item $q\big(\sum_{a\in A}\gamma^x_a\big)=1$,
    \item $q\big((\gamma^x_a)^*\big)=q(\gamma^x_a)=q\big((\gamma^x_a)^2\big)$, and 
    \item $q\big(\gamma^x_a\gamma^x_{a'}\big)=0$
\end{itemize}
for all $x\in X$ and $a\neq a'$. We usually express $\gamma^x_a$ as a $*$-polynomial over $n^y_b$'s. Since the quotient map sends each generator $n^y_b$ for $\PVM^{Y,B}$ to the generator $n^y_b$ for $C^*(\mcG)$, we often just say $\{\gamma^x_a:a\in A\}$ is a PVM in $C^*(\mcG)$. We refer to the $C^*$-algebra $C^*(\mcG)$ defined in Condition (i) of \Cref{def:pair} as the \emph{associated game algebra} of $\mcG$.  

In Condition (ii), although we are working with max-continuous states on $\PVM^{X,A}\otimes\PVM^{Y,B}$, the optimal value of $\mcG$ still refers to the quantum value $w_q(\mcG)$. Here $\tau=f|_{1\otimes \PVM^{Y,B}}$ means $\tau(\alpha)=f(1\otimes\alpha)$ for all $\alpha\in \PVM^{Y,B}$. 

It is sometimes convenient to work with unitary observables $\{a_x:x\in X\},\{b_y:y\in Y\}$ as generators for $\PVM^{X,A}\otimes\PVM^{Y,B}$. In this case, we write $\Gamma=\{\gamma_x:x\in X\}$ where every $\gamma_x$ is a $*$-polynomials over $b_y$'s. Condition (i) becomes every $q(\gamma_x)$ is unitary of order $\abs{B}$ in $C^*(\mcG)$, and part (a) of Condition (ii) is replaced with $f\big((a_x\otimes\Id-\Id\otimes \gamma_x)^* (a_x\otimes\Id-\Id\otimes \gamma_x) \big)=0$ for all $x\in X$.

The above definition may appear technical, but it provides a natural and Hilbert-space-free framework for many nonlocal games of interest, including synchronous games, XOR games, BCS games, and mirror games\footnote{We remark that we do not know if the class of imitation games introduced in \cite{LMP20} fits into this framework. The reason is that we are not aware of any Hilber-space-free way to encode the relations between Alice and Bob's measurements which are employed in perfect strategies of imitation games}~\cite{LMP20}. Here we take a moment to provide more intuitions. The existence of polynomials $\Gamma$ and relations $\mcR$ in \Cref{def:pair} states that in any optimal strategy, the action of Alice's measurement $m^x_a$ is completely determined by the measurement $\gamma^x_a$ on Bob's side, and Bob's measurements must satisfy certain algebraic relations $\mcR$.

For instance, if $S=\big(\{M^x_a\},\{N^y_b\},\ket{\psi})$ is a perfect commuting-operator strategy for a synchronous game $\mcG=(X,A,V)$, then by \cite{PSSTW16}, 
\begin{align}
    M^x_a\ket{\psi}&=N^x_a\ket{\psi} \text{ for all } x\in X,a\in A,
\text{ and}\label{syngamma}\\
    \bra{\psi}N^x_aN^y_b\ket{\psi}&=0 \text{ whenever }V(a,b|x,y)=0.\label{synR}
\end{align}
\Cref{syngamma} implies that the state $f$ induced by $S$ satisfies $f\big((m^x_a\otimes 1-1\otimes n^x_a)^2\big)=0$ for all $x$ and $a$, so we can just take $\gamma^x_a=n^x_a$. This gives the set of $*$-polynomials $\Gamma$. \Cref{synR} suggests the the set of relations $\mcR=\{n^x_an^y_b:V(a,b|x,y)=0\}$. The game algebra $C^*(\mcG)$ is the quotient $\PVM^{Y,B}\slash\ang{\mcR}$. Bob's measurements $N^y_b$'s together with the vector state $\ket{\psi}$ define a tracial state on $\PVM^{Y,B}$ that respects all the relations in $\mcR$. As we'll show in \Cref{subsec:syn}, every synchronous game with commuting-operator value $1$ has such a determining pair $(\Gamma,\mcR)$.  

In \Cref{subsec:XOR}, we also show that all XOR games have determining pairs. Take the CHSH game as an example. If a commuting-operator strategy $S=\big(\{A_0,A_1\},\{B_0,B_1\},\ket{\psi}\big)$ is optimal for CHSH, then 
\begin{align*}
    A_0\ket{\psi}=\tfrac{B_0+B_1}{\sqrt{2}}\ket{\psi},\text{ and }
    A_1\ket{\psi}=\tfrac{B_0-B_1}{\sqrt{2}}\ket{\psi}.
\end{align*}
 Here we work with binary observables $a_0,a_1$ and $b_0,b_1$ as generators for $\PVM^{\Z_2,\Z_2}\otimes_{min}\PVM^{\Z_2,\Z_2}$. The above equations imply that the state $f$ induced by $S$ satisfies
\begin{equation*}
    f\big((a_0\otimes\Id-\Id\otimes \tfrac{b_0+b_1}{\sqrt{2}})^2  \big)= f\big((a_1\otimes\Id-\Id\otimes \tfrac{b_0-b_1}{\sqrt{2}})^2  \big)=0.
\end{equation*}
So we take $\Gamma=\{\tfrac{b_0+b_1}{\sqrt{2}},\tfrac{b_0-b_1}{\sqrt{2}}\}$. We also require $\tfrac{b_0+b_1}{\sqrt{2}}$ and $\tfrac{b_0-b_1}{\sqrt{2}}$ are binary observables as in Condition (i) of \Cref{def:pair}, which is equivalent to require $b_0$ and $b_1$ anticommute. This gives the relations set $\mcR=\{b_0b_1+b_1b_0\}$. The game algebra associated with CHSH is $\PVM^{\Z_2,\Z_2}\slash\ang{\mcR}\cong Cl_2$, the Clifford algebra of rank 2.

In the following, we present a tracial-state characterization of self-testing for nonlocal games that have determining pairs. For such a nonlocal game $\mcG$, we show that its self-testing property of $\mcG$ can be characterized by tracial states on the associated game algebra $C^*(\mcG)$. Then we illustrate that every synchronous game or XOR game has a determining pair. So their self-testing properties can be characterized by tracial states on the respect game algebras.

\subsection{General results for correlations}
We first prove the following general statement for extreme quantum correlations.
\begin{theorem}\label{thm:correlationtrace}
    Suppose a nonlocal game $\mcG$ has a determining pair $(\Gamma,\mcR)$. Let $C^*(\mcG)$ be the associated game algebra, and let $q:\PVM^{Y,B}\arr C^*(\mcG)$ be the quotient map.
    If an extreme quantum correlation $p$ is optimal for $\mcG$, then the following statements are equivalent.

\begin{enumerate}[(a)]
    \item $p$ is a self-test for all quantum models.
    \item $p$ is an abstract state self-test for finite-dimensional states on $\PVM^{X,A}\otimes_{min}\PVM^{Y,B}$.
    \item There is a unique finite-dimensional tracial state $\tau$ on $\PVM^{X,A}$ satisfying
    \begin{align}
        \tau(r^*r)&=0 \text{ for all }r\in\mcR, \text{ and}\label{eq:tau1}\\
        \tau(\gamma^x_an^y_b)&=p(a,b|x,y) \text{ for all }a,b,x,y.\label{eq:tau2}
    \end{align}

    \item There is a unique finite-dimensional irreducible $*$-representation $\pi:\PVM^{Y,B}\arr M_d(\C)$ such that the linear functional $\tau:=\tr_d\circ \pi $
    satisfies \Cref{eq:tau1,eq:tau2}. 
\end{enumerate}

\end{theorem}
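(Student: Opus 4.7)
The plan is to establish the four-way equivalence as (c) $\Leftrightarrow$ (d) via the tracial-state machinery of \Cref{sec:trace}, (b) $\Leftrightarrow$ (c) via an explicit dimension-preserving bijection between states and tracial states, and (a) $\Leftrightarrow$ (b) by reducing to \Cref{thm:old} through a concretely built full-rank projective ideal model.

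For (c) $\Leftrightarrow$ (d): tracial states on $\PVM^{Y,B}$ whose kernel contains $\mcR$ are, by \Cref{prop:tracequotientfd}, in dimension-preserving bijection with tracial states on the quotient $C^*(\mcG)$, and within $T_{\fin}$ the correspondence $\tau = \tr_d\circ \pi$ of \Cref{cor:extremetrace-irrep} identifies extreme points with inequivalent finite-dimensional irreducible $*$-representations. The subtlety is that \Cref{eq:tau2} is a linear (not positivity) constraint, so uniqueness of $\tau$ subject to \Cref{eq:tau1,eq:tau2} does not immediately imply $\tau \in \partial_e T_{\fin}$. The fix uses extremality of $p$ in $C_q$: if $\tau = \lambda \tau_1 + (1-\lambda)\tau_2$ with $\tau_i\in T_{\fin}$, both $\tau_i$ vanish on $\mcR$ by positivity, and the (c) $\Rightarrow$ (b) construction below associates to each $\tau_i$ a finite-dimensional optimal state for $\mcG$, hence an optimal $p_i\in C_q$; then $p = \lambda p_1 + (1-\lambda)p_2$ forces $p_1 = p_2 = p$, both $\tau_i$ satisfy \Cref{eq:tau2}, and uniqueness gives $\tau_1 = \tau_2 = \tau$. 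So $\tau\in \partial_e T_{\fin}$, and \Cref{prop:fdtracial}(b) yields the desired unique $\pi$.

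For (b) $\Leftrightarrow$ (c): I would construct a dimension-preserving bijection between states $f$ on $\PVM^{X,A}\otimes_{\min}\PVM^{Y,B}$ that achieve $p$ and tracial states $\tau$ on $\PVM^{Y,B}$ satisfying \Cref{eq:tau1,eq:tau2}. Forward direction: put $\tau := f|_{1\otimes \PVM^{Y,B}}$; Condition (ii) of \Cref{def:pair} immediately gives that $\tau$ is tracial and vanishes on $\mcR$, and Cauchy--Schwarz applied to $f\bigl((m^x_a\otimes 1 - 1\otimes \gamma^x_a)^2\bigr)=0$ yields $f(m^x_a\otimes n^y_b) = f(1\otimes \gamma^x_a n^y_b) = \tau(\gamma^x_a n^y_b) = p(a,b|x,y)$. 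Reverse direction: let $\wtd{\tau}$ be the descent of $\tau$ to $C^*(\mcG)$, and let $\sigma\colon \PVM^{X,A}\to C^*(\mcG)$ be the $*$-homomorphism $m^x_a\mapsto \gamma^x_a$, well-defined by Condition (i) of \Cref{def:pair}. On the standard GNS space $L^2(C^*(\mcG),\wtd{\tau})$ the commuting left and right regular representations let us set $f(\alpha\otimes \beta):= \bra{\wtd{\tau}}\pi_{\wtd{\tau}}^{\op}\bigl(\sigma(\alpha)^{\op}\bigr)\pi_{\wtd{\tau}}\bigl(q(\beta)\bigr)\ket{\wtd{\tau}} = \wtd{\tau}\bigl(\sigma(\alpha)\, q(\beta)\bigr)$ (by traciality). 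Finite-dimensionality of $\wtd{\tau}$ makes $L^2(C^*(\mcG),\wtd{\tau})$ finite-dimensional, so the structure theory for finite-dimensional $C^*$-algebras quoted in \Cref{sec:pre} gives a simultaneous tensor-product decomposition of the commuting representations, exhibiting $f$ as a finite-dimensional state on the minimal tensor product. The two maps are mutually inverse.

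For (a) $\Leftrightarrow$ (b): I would leverage \Cref{thm:old} (the equivalence over the $\POVM$-tensor) and transfer to the $\PVM$-tensor through a full-rank projective ideal model. Once (d) is available, the irreducible $\pi\colon \PVM^{Y,B}\to M_d(\C)$ together with $\sigma$ and the maximally entangled state $\ket{\varphi_d}$ from \Cref{prop:tracialGNS} gives a full-rank projective quantum model for $p$; the correlation analog of \Cref{cor:newcorgame} and \Cref{thm:mainresult3game} then identifies (a) with uniqueness of finite-dimensional states on $\PVM^{X,A}\otimes_{\min}\PVM^{Y,B}$, which is (b). The main obstacle in the whole proof is the reverse direction of (b) $\Leftrightarrow$ (c): manufacturing a finite-dimensional \emph{min-continuous} state from a finite-dimensional tracial state, since the commuting-operator state built from the GNS is a priori only max-continuous and one must use finite-dimensionality of $L^2(C^*(\mcG),\wtd{\tau})$ to simultaneously decompose the commuting finite-dimensional $*$-subalgebras as a tensor product. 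A secondary subtlety in (c) $\Rightarrow$ (d) is that the purely linear constraint \Cref{eq:tau2} forces the use of extremality of $p$ in $C_q$ in a way that intertwines the (b) $\Leftrightarrow$ (c) and (c) $\Leftrightarrow$ (d) arguments.
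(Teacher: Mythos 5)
Your overall architecture — (c)$\Leftrightarrow$(d) via the extreme-point/irreducible-representation correspondence, (b)$\Leftrightarrow$(c) via a bijection between states achieving $p$ and tracial states in $T^{(p)}_{\fin}(\Gamma,\mcR)$, and (a)$\Leftrightarrow$(b) via a full-rank projective model and \Cref{thm:old} — is the same as the paper's (which packages the pieces into \Cref{lemma:fullrankprojective}, \Cref{prop:extendtrace}, \Cref{lemma:extremeMES}, and \Cref{cor:extremetrace-irrep}). However, the reverse construction in your (b)$\Leftrightarrow$(c) step contains a genuine error.

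You define $\sigma:\PVM^{X,A}\to C^*(\mcG)$ by $m^x_a\mapsto\gamma^x_a$ and then set
\begin{equation*}
f(\alpha\otimes\beta)\;:=\;\bra{\wtd{\tau}}\,\pi_{\wtd{\tau}}^{\op}\bigl(\sigma(\alpha)^{\op}\bigr)\,\pi_{\wtd{\tau}}\bigl(q(\beta)\bigr)\,\ket{\wtd{\tau}}\;=\;\wtd{\tau}\bigl(\sigma(\alpha)\,q(\beta)\bigr).
\end{equation*}
The problem is that $\alpha\mapsto\sigma(\alpha)^{\op}$ is an \emph{anti}-$*$-homomorphism from $\PVM^{X,A}$ into $C^*(\mcG)^{\op}$: one has $\sigma(\alpha_1\alpha_2)^{\op}=(\sigma(\alpha_1)\sigma(\alpha_2))^{\op}=\sigma(\alpha_2)^{\op}\cdot_{\op}\sigma(\alpha_1)^{\op}$. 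Composing with the $*$-homomorphism $\pi_{\wtd\tau}^{\op}$ therefore produces an \emph{anti}-representation of $\PVM^{X,A}$, and $\alpha\otimes\beta\mapsto\pi_{\wtd{\tau}}^{\op}(\sigma(\alpha)^{\op})\pi_{\wtd{\tau}}(q(\beta))$ is \emph{not} a $*$-representation of $\PVM^{X,A}\otimes\PVM^{Y,B}$. Consequently $f$ need not be positive. Concretely, in the synchronous case $\sigma=q$, so your formula becomes $f(\alpha\otimes\beta)=\wtd\tau\bigl(q(\alpha)q(\beta)\bigr)$; taking $\wtd\tau=\tr_2$ on $M_2(\C)$ and $\alpha_2,\beta_2\in\PVM^{X,A}$ with $q(\alpha_2)=\ket{0}\bra{1}$, $q(\beta_2)=\ket{1}\bra{0}$, and $\xi=1\otimes 1 - 2\,\alpha_2\otimes\beta_2$, one computes
\begin{equation*}
f(\xi^*\xi)=1-2\cdot\tfrac12-2\cdot\tfrac12+4\cdot 0=-1<0.
\end{equation*}
Unlike your $f$, the map $(a,b)\mapsto\wtd\tau(ab)$ is a state only on $C^*(\mcG)^{\op}\otimes C^*(\mcG)$, not on $C^*(\mcG)\otimes C^*(\mcG)$ (this is exactly the distinction in \Cref{lemma:amenable}(b) and \Cref{lemma:fdtrace}(d)).

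The fix is a one-line change: since $\{q(\gamma^x_a)\}_a$ is a PVM in $C^*(\mcG)$, the family $\{q(\gamma^x_a)^{\op}\}_a$ is a PVM in $C^*(\mcG)^{\op}$, so $m^x_a\mapsto(\gamma^x_a)^{\op}$ is a genuine $*$-homomorphism $\check\sigma:\PVM^{X,A}\to C^*(\mcG)^{\op}$ and $\pi_{\wtd\tau}^{\op}\circ\check\sigma$ is a $*$-representation. On a monomial this yields $\pi_{\wtd{\tau}}^{\op}\bigl((\gamma^{x_k}_{a_k}\cdots\gamma^{x_1}_{a_1})^{\op}\bigr)$, i.e.\ the order of the $\gamma$'s reverses, which is precisely what \Cref{eq:extendtrace} in \Cref{lemma:extendtrace} has and your formula lacks. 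Equivalently, in the picture of \Cref{prop:tracialGNS}, Alice's generators are sent to $\pi(\gamma^x_a)^T$, not $\pi(\gamma^x_a)$; the transpose implements the order reversal. This error propagates: your (c)$\Rightarrow$(d) argument uses the broken (c)$\Rightarrow$(b) construction to extract optimal correlations $p_i$ from the $\tau_i$, and your (a)$\Leftrightarrow$(b) argument invokes ``once (d) is available'', which is both downstream in the chain and also relies on the same faulty step; the paper avoids this by building the full-rank projective model directly from any finite-dimensional state for $p$ (\Cref{lemma:fullrankprojective}), without assuming uniqueness.
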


The proof is established on a sequence of technical lemmas that characterize the nonlocal games that have determining pairs. We outline some proof ideas. Suppose $p\in \partial_eC_q$ is optimal for a nonlocal game $\mcG$ with determining pair $(\Gamma,\mcR)$. We will show that $p$ must have a full-rank projective model, so $p$ has a unique finite-dimensional state on the product of PVM algebras if and only if it has a unique finite-dimensional state on the product of POVM algebras. The latter condition is equivalent to part (a), as shown in \Cref{thm:old}. So $(a)\Leftrightarrow(b)$. For $(b)\Leftrightarrow(c)$, we demonstrate a one-to-one correspondence between states on $\PVM^{X,A}\otimes_{min}\PVM^{Y,B}$ for $p$ and tracial states on $\PVM^{Y,B}$ satisfying \Cref{eq:tau1,eq:tau2}. For $(c)\Leftrightarrow(d)$, we use the one-to-one correspondence between finite-dimensional irreducible representations and finite-dimensional extreme tracial states that was established in \Cref{cor:extremetrace-irrep}.

For notational convenience, we denote by $T^{(p)}(\Gamma,\mcR)$ the set of tracial states on $\PVM^{Y,B}$ that satisfy \Cref{eq:tau1,eq:tau2}. We use $T_{\fin}^{(p)}(\Gamma,\mcR)$ to denote the set of finite-dimensional tracial states in $T^{(p)}(\Gamma,\mcR)$. From \Cref{eq:tau1,eq:tau2}, it is clear that both $T^{(p)}(\Gamma,\mcR)$ and $T_{\fin}^{(p)}(\Gamma,\mcR)$ are convex sets. We denote by $\partial_eT^{(p)}(\Gamma,\mcR)$ and $\partial_e T_{\fin}^{(p)}(\Gamma,\mcR)$ the extreme points in $T^{(p)}(\Gamma,\mcR)$ and $T_{\fin}^{(p)}(\Gamma,\mcR)$ respectively. Later, in \Cref{lemma:extremeMES}, we will see that $T_{\fin}^{(p)}(\Gamma,\mcR)$ is a convex face of $T_{\fin}(\PVM^{Y,B})$ whenever $p$ is an extreme point in $C_q$. 

Recall that a state $f$ on the $*$-algebra $\PVM^{X,A}\otimes\PVM^{Y,B}$ is said to be optimal for a nonlocal game $\mcG=(X,Y,A,B,\mu,V)$ if $f(\Phi_\mcG)=w_q(\mcG)$, where $\Phi_\mcG$ is the game polynomial.

\begin{lemma}\label{lemma:extendtrace}
    Suppose a nonlocal game $\mcG=(X,Y,A,B,\mu,V)$ has a determining pair $(\Gamma,\mcR)$. If a tracial state $\tau$ on $\PVM^{Y,B}$ satisfies $\tau(r^*r)=0$ for all $r\in\mcR$, then $\tau$ extends uniquely to a state $f$ on $\PVM^{X,A}\otimes\PVM^{Y,B}$ which is optimal for $\mcG$ and satisfies
    \begin{align}
        f(m^{x_1}_{a_1}\cdots m^{x_k}_{a_k}\otimes n^{y_1}_{b_1}\cdots n^{y_\ell}_{b_\ell})=\tau(\gamma^{x_k}_{a_k}\cdots \gamma^{x_1}_{a_1} n^{y_1}_{b_1}\cdots n^{y_\ell}_{b_\ell})\label{eq:extendtrace}
    \end{align}
    for all $k,\ell\in\N$ and $x_1,\ldots,x_k\in X$, $a_1,\ldots,a_k\in A$, $y_1,\ldots,y_\ell\in Y$, $b_1,\ldots,b_\ell\in B$.
    If in addition, $\tau$ is finite-dimensional (resp. amenable), then $f$ is finite-dimensional (resp. min-continuous).
\end{lemma}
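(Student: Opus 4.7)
The plan is to construct $f$ via a GNS-type construction built from the tracial state $\overline{\tau}$ that $\tau$ induces on the game algebra $C^{*}(\mcG)=\PVM^{Y,B}/\ang{\mcR}$. The tracial Cauchy--Schwarz inequality, combined with $\tau(r^{*}r)=0$ for $r\in\mcR$, gives $\ang{\mcR}\subseteq \mcI_{\tau}$, so $\tau$ descends as $\tau=\overline{\tau}\circ q$. I would then form the standard GNS $(L^{2}(C^{*}(\mcG),\overline{\tau}),\lambda,\ket{\overline{\tau}})$ of $\overline\tau$ together with the commuting right regular representation $\rho$. Since $\{q(\gamma^{x}_{a})\}_{a}$ is a PVM in $C^{*}(\mcG)$ by Condition~(i) of \Cref{def:pair}, $\{\rho(\gamma^{x}_{a})\}_{a}$ is a PVM in $\mcB(L^{2})$ for each $x$, and the universal property of $\PVM^{X,A}$ produces a $*$-representation $\pi_{A}$ with $\pi_{A}(m^{x}_{a})=\rho(\gamma^{x}_{a})$; analogously set $\pi_{B}(n^{y}_{b}):=\lambda(n^{y}_{b})$. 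These representations commute, so $f(X):=\bra{\overline{\tau}}(\pi_{A}\otimes\pi_{B})(X)\ket{\overline{\tau}}$ is a state, and formula~\eqref{eq:extendtrace} follows from a direct computation using that $\rho$ is an anti-homomorphism (so $\rho(\gamma^{x_{1}}_{a_{1}})\cdots\rho(\gamma^{x_{k}}_{a_{k}})=\rho(\gamma^{x_{k}}_{a_{k}}\cdots\gamma^{x_{1}}_{a_{1}})$) together with the tracialness of $\overline{\tau}$.

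Condition~(ii)(b) of the determining pair is immediate since $f|_{1\otimes\PVM^{Y,B}}=\tau$. Condition~(ii)(a) reduces to the identity $(\pi_{A}\otimes\pi_{B})(m^{x}_{a}\otimes 1 - 1\otimes\gamma^{x}_{a})\ket{\overline{\tau}} = \rho(\gamma^{x}_{a})\ket{\overline{\tau}}-\lambda(\gamma^{x}_{a})\ket{\overline{\tau}} = \widehat{\gamma^{x}_{a}}-\widehat{\gamma^{x}_{a}} = 0$, so $f$ is optimal. For uniqueness, any other optimal state $g$ extending $\tau$ must satisfy $\pi_{g}(m^{x}_{a}\otimes 1)\ket{\psi_{g}} = \pi_{g}(1\otimes\gamma^{x}_{a})\ket{\psi_{g}}$ in its GNS by Condition~(ii)(a); using commutativity of $\pi_{g}(\PVM^{X,A}\otimes 1)$ with $\pi_{g}(1\otimes\PVM^{Y,B})$, I would inductively push each $m^{x_{i}}_{a_{i}}$ past all Bob factors to act on $\ket{\psi_{g}}$ and replace it by $\gamma^{x_{i}}_{a_{i}}$, forcing $g$ to satisfy \eqref{eq:extendtrace} and hence $g=f$.

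The finite-dimensional case is immediate from \Cref{prop:tracequotientfd}: $\tau$ finite-dimensional forces $\overline{\tau}$ finite-dimensional, hence $L^{2}$ finite-dimensional, and since $\ket{\overline{\tau}}$ is already cyclic for $\pi_{B}$ alone, $(L^{2},\pi_{A}\otimes\pi_{B},\ket{\overline{\tau}})$ is a GNS for $f$. The hard part will be the amenable case. The natural attempt is to decompose $f=\overline{\theta}\circ(\phi^{\op}\otimes q)$, where $\phi^{\op}:\PVM^{X,A}\to C^{*}(\mcG)^{\op}$ is the $*$-homomorphism $m^{x}_{a}\mapsto q(\gamma^{x}_{a})^{\op}$ and $\overline{\theta}(a^{\op}\otimes b):=\overline{\tau}(ab)$, then invoke \Cref{lemma:amenable}(b) once $\overline{\tau}$ is known amenable on $C^{*}(\mcG)$; but \Cref{rmk:tracequotientamenable} warns that amenability need not descend through quotients, so this is not automatic. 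My workaround would be to use the local lifting property of $\PVM^{X,A}$ (which holds because $\PVM^{X,A}$ is a unital free product of finite-dimensional $C^{*}$-algebras): for each finite-dimensional operator system $E\subset \PVM^{X,A}$, lift $\phi^{\op}|_{E}$ to a ucp map $\tilde{\phi}_{E}:E\to \PVM^{Y,B,\op}$, and observe that $\theta\circ(\tilde{\phi}_{E}\otimes\mathrm{id})$---with $\theta(a^{\op}\otimes b):=\tau(ab)$ min-continuous by amenability of $\tau$ via \Cref{lemma:amenable}(b)---agrees with $f$ on $E\otimes\PVM^{Y,B}$ and is min-continuous there. Letting $E$ vary over all finite-dimensional subsystems then yields $|f(x)|\leq \|x\|_{min}$ for every $x$, so $f$ extends to a state on $\PVM^{X,A}\otimes_{min}\PVM^{Y,B}$.
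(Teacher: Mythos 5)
Your construction of $f$, the verification of \eqref{eq:extendtrace}, the check of Condition~(ii) of \Cref{def:pair}, and the finite-dimensional case coincide with the paper's proof: descend $\tau$ to $\overline{\tau}$ on $C^{*}(\mcG)$, form the standard GNS, represent Alice via the right regular (anti-)representation applied to the $\gamma^{x}_{a}$'s and Bob via the left regular representation, and compute using the anti-homomorphism property of the right regular representation and the traciality of $\overline{\tau}$. You go beyond the paper in two places. For uniqueness, the paper only observes that \eqref{eq:extendtrace} determines $f$ on a spanning set; you additionally prove that any optimal state $g$ extending $\tau$ is forced to satisfy \eqref{eq:extendtrace}, by commuting each $m^{x_{i}}_{a_{i}}$ past Bob's operators in the GNS of $g$ and substituting $\gamma^{x_{i}}_{a_{i}}$ at the cyclic vector via Condition~(ii)(a), which is the form the uniqueness is actually used later. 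For the amenable case, the paper's one-line appeal to min-continuity of $\pi^{\op}_{\tau}\times\pi_{\tau}$ (\Cref{lemma:amenable}(c)) implicitly uses that $\pi_{A}(\PVM^{X,A})$ lies in $\pi_{\tau}(\PVM^{Y,B})'$, the commutant of an injective von Neumann algebra; you instead correctly flag (via \Cref{rmk:tracequotientamenable}) that amenability of $\tau$ need not descend to $\overline{\tau}$ on the quotient, so the naive factorization through $C^{*}(\mcG)^{\op}\otimes C^{*}(\mcG)$ is unavailable, and you route through the local lifting property of $\PVM^{X,A}$ (a unital free product of finite-dimensional $C^{*}$-algebras), producing local ucp lifts of $\phi^{\op}$ whose error lies in $\mcI_{\tau}$ and is therefore annihilated when composed with $\theta$. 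Both routes yield the stated min-continuity; yours is more self-contained and makes explicit a subtlety the paper leaves to the reader.
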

\begin{proof}
    Suppose $\tau$ is a tracial state on $\PVM^{Y,B}$ such that $\tau(r^*r)=0$ for all $r\in\mcR$. Then $\tau$ drops to a tracial state $\wtd{\tau}$ on the quotient $C^*(\mcG)$ such that $\tau=\wtd{\tau}\circ q$, where $q:\PVM^{Y,B}\arr C^*(\mcG)$ is the quotient map. Let $(\mcH_{\wtd{\tau}},\pi_{\wtd{\tau}},\ket{\wtd{\tau}})$ be the standard GNS for $\wtd{\tau}$. Here $\pi_{\wtd{\tau}}$ is the left regular representation for $\wtd{\tau}$. Then $\pi_\tau:=\pi_{\wtd{\tau}}\circ q$ is the left regular representation for $\tau$, and $(\mcH_{\wtd{\tau}},\pi_{\tau},\ket{\wtd{\tau}})$ is the standard GNS for $\tau$.
    
    Note that for every $x\in X$, $\{q(\gamma^x_a):a\in A\}$ is a PVM in the quotient $C^*(\mcG)$. This means every $\{\pi^{\op}_{\wtd{\tau}}\big(q(\gamma^x_a)\big):a\in A  \}$ is a PVM on $\mcH_{\wtd{\tau}}$. It follows that every $\{\pi_\tau^{\op}(\gamma^x_a):a\in A\}$ is a PVM on $\mcH_{\wtd{\tau}}$. Hence there is a $*$-representation $\pi:\PVM^{X,A}\otimes\PVM^{Y,B}\arr \msB(\mcH_{\wtd{\tau}})$ sending $m^x_a\otimes 1\mapsto \pi^{op}_{\tau}(\gamma^x_a)$ and $1\otimes n^y_b\mapsto \pi_{\tau}(n^y_b)$ for all $a,b,x,y$. Let $f$ be the state on $\PVM^{X,A}\otimes\PVM^{Y,B}$ defined by $f(\alpha):=\bra{\wtd{\tau}}\pi(\alpha)\ket{\wtd{\tau}},\alpha\in \PVM^{X,A}\otimes\PVM^{Y,B}$. Then 
    \begin{align*}
        f(m^{x_1}_{a_1}\cdots m^{x_k}_{a_k}\otimes n^{y_1}_{b_1}\cdots n^{y_\ell}_{b_\ell})=& \bra{\wtd{\tau}}\pi_\tau^{\op}(\gamma^{x_k}_{a_k}\cdots \gamma^{x_1}_{a_1})\pi_\tau(n^{y_1}_{b_1}\cdots n^{y_\ell}_{b_\ell})\ket{\wtd{\tau}} \\
        &=\bra{\wtd{\tau}}\pi_\tau(\gamma^{x_k}_{a_k}\cdots \gamma^{x_1}_{a_1}n^{y_1}_{b_1}\cdots n^{y_\ell}_{b_\ell})\ket{\wtd{\tau}} \\
        &=\tau(\gamma^{x_k}_{a_k}\cdots \gamma^{x_1}_{a_1} n^{y_1}_{b_1}\cdots n^{y_\ell}_{b_\ell})
    \end{align*}
    for all $k,\ell\in\N$ and $x_1,\ldots,x_k\in X$, $a_1,\ldots,a_k\in A$, $y_1,\ldots,y_\ell\in Y$, $b_1,\ldots,b_\ell\in B$. This proves \Cref{eq:extendtrace}. Meanwhile, 
    \begin{align*}
        \pi(m^x_a\otimes 1)\ket{\wtd{\tau}}=\pi^{\op}_{\tau}(\gamma^x_a)\ket{\wtd{\tau}}=\pi_{\tau}(\gamma^x_a)\ket{\wtd{\tau}}=\pi(1\otimes \gamma^x_a)\ket{\wtd{\tau}}.
    \end{align*}
    This implies $f\big((m^x_a\otimes 1-1\otimes\gamma^x_a)^2 \big)=0$. So $f$ satisfies part (a) of condition (ii) in \Cref{def:pair}. Part (b) of condition (ii) follows from \Cref{eq:extendtrace}. We conclude that $f$ is optimal for $\mcG$. Since monomials of the form $m^{x_1}_{a_1}\cdots m^{x_k}_{a_k}\otimes n^{y_1}_{b_1}\cdots n^{y_\ell}_{b_\ell}$ span a dense subset of $\PVM^{X,A}\otimes\PVM^{Y,B}$. \Cref{eq:extendtrace} implies that $f$ is uniquely determined by $\tau$.

    If $\tau$ is amenable, then $\pi^{\op}_{\tau}\times\pi_{\tau}$ is min-continuous, and hence $f$ is min-continuous. If $\tau$ is finite-dimensional, then $\mcH_{\wtd{\tau}}$ is finite-dimensional, and hence $f$ is finite-dimensional.
\end{proof}

\begin{proposition}\label{prop:extendtrace}
    Suppose $\mcG$ has a determining pair $(\Gamma,\mcR)$ and $p\in C_{qc}$ is optimal for $\mcG$. There is a bijective correspondence between states on $\PVM^{X,A}\otimes\PVM^{Y,B}$ for $p$ and tracial states in $T^{(p)}(\Gamma,\mcR)$, sending an $f$ to the tracial state $\tau:=f|_{1\otimes\PVM^{Y,B}}$. Moreover, 
    \begin{enumerate}[(a)]
        \item $f$ is finite-dimensional if and only if the corresponding $\tau$ is finite-dimensional, and 
        \item $f$ is min-continuous if the corresponding $\tau$ is amenable.
    \end{enumerate}
\end{proposition}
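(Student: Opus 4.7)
The plan is to describe both maps inducing the bijection, invoke \Cref{lemma:extendtrace} for one direction, verify that they are mutually inverse, and then transfer the preservation properties through the correspondence.

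For the forward map $f \mapsto \tau := f|_{1 \otimes \PVM^{Y,B}}$: since $f$ achieves the optimal correlation $p$, it is optimal for $\mcG$, so condition (ii) of \Cref{def:pair} gives that $\tau$ is tracial with $\tau(r^*r)=0$ for every $r \in \mcR$, and that $f\big((m^x_a \otimes 1 - 1 \otimes \gamma^x_a)^2\big) = 0$ for all $x,a$. Applying Cauchy-Schwartz to the state $f$ yields $\|m^x_a \otimes 1 - 1 \otimes \gamma^x_a\|_f = 0$, and hence $f\big(c(m^x_a \otimes 1)\big) = f\big(c(1 \otimes \gamma^x_a)\big)$ and the analogous identity with $c$ on the right, for every $c \in \PVM^{X,A} \otimes \PVM^{Y,B}$. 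Taking $c = 1 \otimes n^y_b$ produces $p(a,b|x,y) = f(m^x_a \otimes n^y_b) = \tau(\gamma^x_a n^y_b)$, so $\tau \in T^{(p)}(\Gamma, \mcR)$. The inverse map $\tau \mapsto f$ is the unique extension provided by \Cref{lemma:extendtrace}, and evaluating \Cref{eq:extendtrace} at $k = \ell = 1$ gives $f(m^x_a \otimes n^y_b) = \tau(\gamma^x_a n^y_b) = p(a,b|x,y)$, so $f$ achieves $p$.

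To see the two maps are mutually inverse, the extension-then-restriction direction is immediate since setting $k = 0$ in \Cref{eq:extendtrace} gives $f(1 \otimes \beta) = \tau(\beta)$. The restriction-then-extension direction requires showing that every state $f$ achieving $p$ itself satisfies \Cref{eq:extendtrace}, because the extension of $\tau := f|_{1 \otimes \PVM^{Y,B}}$ is characterized as the unique state obeying that formula. I would prove this by induction on $k$: starting from the monomial $(m^{x_1}_{a_1} \otimes 1) \cdots (m^{x_k}_{a_k} \otimes 1)(1 \otimes n^{y_1}_{b_1} \cdots n^{y_\ell}_{b_\ell})$, first commute the $(1 \otimes n)$-factor to the far left (left and right tensor components commute), then iteratively substitute the rightmost $(m^{x_j}_{a_j} \otimes 1)$ by $(1 \otimes \gamma^{x_j}_{a_j})$ using the Cauchy-Schwartz identity above, commuting each new $(1 \otimes \gamma)$-factor leftward to merge into the accumulating $1 \otimes$-product. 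After $k$ substitutions the expression lies entirely inside $1 \otimes \PVM^{Y,B}$, evaluation of $f$ gives $\tau(n^{y_1}_{b_1} \cdots n^{y_\ell}_{b_\ell} \gamma^{x_k}_{a_k} \cdots \gamma^{x_1}_{a_1})$, and the tracial property of $\tau$ rewrites this as the right-hand side of \Cref{eq:extendtrace}.

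For the moreover clauses, the implications $\tau$ finite-dimensional $\Rightarrow$ $f$ finite-dimensional (part (a)) and $\tau$ amenable $\Rightarrow$ $f$ min-continuous (part (b)) are contained in the last sentence of \Cref{lemma:extendtrace}. For the converse in (a), if $f$ is finite-dimensional then its GNS Hilbert space $\mcK$ is finite-dimensional; the cyclic subspace $\overline{\rho(1 \otimes \PVM^{Y,B}) \Omega} \subseteq \mcK$ carries a cyclic representation of $\PVM^{Y,B}$ whose vector state is $\tau$, and is finite-dimensional, so $\tau$ is finite-dimensional by \Cref{lemma:fdtrace}. The main obstacle is the substitution argument in the restriction-then-extension inverse; once one exploits the commutativity of $\PVM^{X,A} \otimes 1$ with $1 \otimes \PVM^{Y,B}$ together with the Cauchy-Schwartz consequence of $f\big((m^x_a \otimes 1 - 1 \otimes \gamma^x_a)^2\big) = 0$, the induction is mechanical.
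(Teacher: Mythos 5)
Your proposal is correct, and it usefully makes explicit a step that the paper treats as automatic. The paper's proof checks that $\tau := f|_{1 \otimes \PVM^{Y,B}}$ lands in $T^{(p)}(\Gamma,\mcR)$, handles the finite-dimensional direction via the image $\pi(1 \otimes \PVM^{Y,B})$, and then declares that "the rest follows straightforwardly from \Cref{lemma:extendtrace}." But \Cref{lemma:extendtrace} only asserts uniqueness of the extension among states satisfying \Cref{eq:extendtrace}; for the restriction-then-extension composite to be the identity, one must know that \emph{every} state $f$ achieving $p$ satisfies that formula. You correctly identify this as the real content of injectivity and supply it by the GNS-level argument: use $\pi(m^x_a \otimes 1)\ket{\psi} = \pi(1 \otimes \gamma^x_a)\ket{\psi}$ together with tensor-factor commutativity to iteratively pull each $m^{x_j}_{a_j} \otimes 1$ off the cyclic vector as $1 \otimes \gamma^{x_j}_{a_j}$, then invoke traciality of $\tau$ to recover the right-hand side of \Cref{eq:extendtrace}. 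I verified this substitution induction and it is sound; one must also note that each $\gamma^x_a$ is self-adjoint (part of \Cref{def:pair}) so that the identity can be used on the bra side as well, which you implicitly do. The other pieces — surjectivity by evaluating \Cref{eq:extendtrace} at $k=\ell=1$ and $k=0$, the converse of (a) by factoring $\tau$ through the finite-dimensional algebra $\pi(1 \otimes \PVM^{Y,B})$, and (b) directly from the lemma — agree with the paper. In short: same toolkit and same route as the paper, but with the elided inversion step honestly worked out.
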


\begin{proof}
    Let $f$ be a state on $\PVM^{X,A}\otimes\PVM^{Y,B}$ for $p$. Since $p$ is optimal for $\mcG$, $f$ is an optimal state for $\mcG$. So 
    \begin{align}
        f\big((m^x_a\otimes 1-1\otimes \gamma^x_a)^2  \big)=0\label{eq:f}
    \end{align}
    for all $x\in X,a\in A$, and
    $\tau:=f|_{1\otimes\PVM^{Y,B}}$ is a tracial state on $\PVM^{Y,B}$ satisfying $\tau(r^*r)=0$ for all $r\in\mcR$. Let $(\mcH,\pi,\ket{\psi})$ be a GNS representation for $f$. \Cref{eq:f} implies
    \begin{align*}
        \pi(m^x_a\otimes 1)\ket{\psi}=\pi(1\otimes \gamma^x_a)\ket{\psi}
    \end{align*}
    for all $x\in X,a\in A$. It follows that
    \begin{align*}
    \tau(\gamma^x_a n^y_b)=\bra{\psi}\pi(1\otimes\gamma^x_an^y_b)\ket{\psi}=\bra{\psi}\pi(m^x_a\otimes n^y_b)\ket{\psi}=f(m^x_a\otimes n^y_b)=p(a,b|x,y)
\end{align*}
for all $a,b,x,y$.
So $\tau$ is a tracial state in $T^{(p)}(\Gamma,\mcR)$. Moreover, if $f$ is finite-dimensional, then $\tau$ factors through the finite-dimensional $C^*$-algebra $\pi(1\otimes\PVM^{Y,B})$, and hence $\tau$ is finite-dimensional. The rest of the proof follows straightforwardly from \Cref{lemma:extendtrace}.
\end{proof}

\begin{lemma}\label{lemma:fullrankprojective}
    Suppose $\mcG=(X,Y,A,B,\mu,V)$ has a determining pair $(\Gamma,\mcR)$. If a quantum correlation $p\in C_q(X,Y,A,B)$ is optimal for $\mcG$, then $p$ has a full-rank projective model.
\end{lemma}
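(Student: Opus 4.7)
The plan is to leverage the extension correspondence of Proposition \ref{prop:extendtrace} together with the convex decomposition of finite-dimensional tracial states from Section \ref{sec:trace}, and then reassemble the resulting pieces into a single full-rank projective model via a block-diagonal construction.

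First, since $p \in C_q$, it admits a finite-dimensional quantum model. Applying Naimark's dilation to both parties' POVMs yields a finite-dimensional \emph{projective} model for $p$; the associated state $f$ on $\PVM^{X,A}\otimes\PVM^{Y,B}$ is therefore finite-dimensional, achieves $p$, and is optimal for $\mcG$ (since the winning probability depends only on the correlation). By Proposition \ref{prop:extendtrace}, the restriction $\tau := f|_{1\otimes \PVM^{Y,B}}$ is a finite-dimensional tracial state in $T^{(p)}(\Gamma,\mcR)$.

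Next, by Proposition \ref{prop:fdtracial}(c), I would decompose $\tau = \sum_{i=1}^k \lambda_i\, \tau_i$ as a convex combination where each $\tau_i \in \partial_e T_{\fin}(\PVM^{Y,B})$ factors surjectively through a full matrix algebra $M_{d_i}(\C)$. Since $0 = \tau(r^*r) = \sum_i \lambda_i\, \tau_i(r^*r)$ with every $\lambda_i > 0$ and every $\tau_i(r^*r)\geq 0$, it follows that $\tau_i(r^*r) = 0$ for all $r \in \mcR$ and all $i$. By Corollary \ref{cor:extremetrace-irrep}, each $\tau_i$ corresponds to an irreducible representation $\pi_i:\PVM^{Y,B}\arr M_{d_i}(\C)$; since $\tau_i$ annihilates $\mcR$, so does $\pi_i$, and hence $\{\pi_i(\gamma^x_a):a\in A\}$ is a PVM for every $x\in X$ by Condition (i) of \Cref{def:pair}. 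Via Proposition \ref{prop:tracialGNS} together with the operator-vector correspondence $(1\otimes T)\ket{\varphi_d}=(T^T\otimes 1)\ket{\varphi_d}$, this provides a full-rank projective model
\[
S_i = \big(\C^{d_i},\,\C^{d_i},\,\{\pi_i(\gamma^x_a)^T\},\,\{\pi_i(n^y_b)\},\,\ket{\varphi_{d_i}}\big)
\]
realizing the correlation $p_i(a,b|x,y) := \tau_i(\gamma^x_a n^y_b)$, using that the transpose of a PVM on $\C^{d_i}$ is again a PVM.

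Finally, I would assemble the block-diagonal model on $\mcH_A = \mcH_B := \bigoplus_{i=1}^k \C^{d_i}$ with $M^x_a := \bigoplus_i \pi_i(\gamma^x_a)^T$, $N^y_b := \bigoplus_i \pi_i(n^y_b)$, and state $\ket{\psi} := \sum_{i=1}^k \sqrt{\lambda_i}\,\ket{\varphi_{d_i}}$, where each $\ket{\varphi_{d_i}}$ sits in the $(i,i)$-block of $\mcH_A\otimes \mcH_B$. Block-diagonality makes the cross terms $\braket{\varphi_{d_i}|(M^x_a\otimes N^y_b)|\varphi_{d_j}}$ vanish for $i\neq j$, so the resulting correlation is $\sum_i \lambda_i p_i = \sum_i \lambda_i \tau_i(\gamma^x_a n^y_b) = \tau(\gamma^x_a n^y_b)= p(a,b|x,y)$. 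Moreover $\ket{\psi}$ has Schmidt rank $\sum_i d_i = \dim \mcH_A = \dim \mcH_B$, so the model is full-rank. The main subtlety is this reassembly step: one needs the extreme components $\tau_i$ of a state in $T^{(p)}_{\fin}(\Gamma,\mcR)$ to all annihilate the relations in $\mcR$ (so that each $\pi_i$ descends to the game algebra), which is exactly what the positivity argument above delivers, and one needs the block-diagonal superposition to both preserve convex-combined correlation and attain maximal Schmidt rank, which follows from orthogonality of the block embeddings.
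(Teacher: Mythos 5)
Your proof is correct and follows essentially the same route as the paper's: pass from $p$ to a finite-dimensional state $f$ on the PVM tensor algebra (the paper omits the Naimark-dilation justification you give for its existence), restrict to the tracial state $\tau\in T_{\fin}^{(p)}(\Gamma,\mcR)$ via \Cref{prop:extendtrace}, decompose it as a convex combination of extreme finite-dimensional traces using \Cref{prop:fdtracial}, argue each $\tau_i$ annihilates $\mcR$ by positivity so the irreps $\pi_i$ factor through $C^*(\mcG)$, and reassemble the block-diagonal projective model on $\bigoplus_i \C^{d_i}$ with state $\bigoplus_i\sqrt{\lambda_i}\ket{\varphi_{d_i}}$. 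The extra details you supply (the transpose-of-a-PVM observation and the explicit Schmidt-rank count $\sum_i d_i$) are harmless and correct.
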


\begin{proof}
    Let $f$ be a finite-dimensional state on $\PVM^{X,A}\otimes_{min}\PVM^{Y,B}$ for $p$, and let $\tau$ be the corresponding tracial state in $T_{\fin}^{(p)}(\Gamma,\mcR)$ given by \Cref{prop:extendtrace}. Since $\tau$ is a finite-dimensional tracial state on $\PVM^{Y,B}$, by \Cref{prop:fdtracial}, $\tau$ has a convex combination $\tau=\lambda_1\tau_1+\cdots+\lambda_k\tau_k$ of tracial states $\tau_1,\ldots,\tau_k\in\partial_e T_{\fin}(\PVM^{X,A})$. By \Cref{prop:tracialGNS}, every $\tau_i$ has a GNS representation $\big(\C^{d_i}\otimes \C^{d_i},\Id\otimes\pi_i(\cdot),\ket{\varphi_{d_i}}  \big)$, where $d_i\geq 1$, $\pi_i:\mcA\arr M_{d_i}(\C)$ is an irreducible $*$-representation, and $\ket{\varphi_{d_i}}=\frac{1}{\sqrt{d_i}}\sum_{j=1}^{d_i}\ket{j}\otimes\ket{j}$. Note that $\lambda_1,\ldots,\lambda_k>0$, so $\tau(r^*r)=0$ for all $r\in\mcR$ implies $\tau_i(r^*r)=0$ for all $r\in\mcR$ and $1\leq i\leq k$. Since every $\{\gamma^x_a:a\in A\}$ is a PVM in the quotient $\PVM^{Y,B}\slash\ang{\mcR}$, every $\{\pi_i(\gamma^x_a)^T:a\in A\}$ is a PVM on $\C^{d_i}$. Let $M^x_a:=\bigoplus_{i=1}^k\pi_i(\gamma^x_a)^T,N^y_b:=\bigoplus_{i=1}^k\pi_i(n^y_b)$ for all $a,b,x,y$, and let $\ket{\psi}=\bigoplus_{i=1}^k\sqrt{\lambda_i}\ket{\varphi_{d_i}}$. Then
    \begin{align*}
        S:=\left(\bigoplus_{i=1}^k\C^{d_i},\bigoplus_{i=1}^k\C^{d_i},\{M^x_a\},\{N^y_b\},\ket{\psi}\right)
    \end{align*}
    is a full-rank projective quantum model such that
    \begin{align*}
        p_S(a,b|x,y)&=\sum_{i=1}^k \lambda_i \bra{\varphi_{d_i}}\pi_i(\gamma^x_a)^T\otimes \pi_i(n^y_b)\ket{\varphi_{d_i}}=\sum_{i=1}^k\lambda_i \bra{\varphi_{d_i}}\Id\otimes \pi_i(\gamma^x_a)\pi_i(n^y_b)\ket{\varphi_{d_i}}\\
        &= \sum_{i=1}^k\lambda_i \bra{\varphi_{d_i}}\Id\otimes \pi_i(\gamma^x_an^y_b)\ket{\varphi_{d_i}}= \sum_{i=1}^k\lambda_i \tau_i(\gamma^x_an^y_b)=\tau(\gamma^x_an^y_b)=p(a,b|x,y)
    \end{align*}
for all $a,b,x,y$. This completes the proof.
\end{proof}

In the proof of \Cref{lemma:fullrankprojective}, we see that the model $S$ employs a direct sum of maximally entangled states. When $p$ is an extreme point in $C_q$, we can strengthen \Cref{lemma:fullrankprojective}: $p$ has a projective model which employs a single maximally entangled state. The proof is based on the observation that $T_{\fin}^{(p)}(\Gamma,\mcR)$ is a convex face of $T_{\fin}(\PVM^{Y,B})$ whenever $p$ is extreme.

\begin{lemma}\label{lemma:extremeMES}
    Suppose $\mcG$ has a determine pair $(\Gamma,\mcR)$ and $p\in C_q$ is optimal for $\mcG$. If $p$ is an extreme point in $C_q$, then
    \begin{enumerate}[(a)]
        \item $ T_{\fin}^{(p)}(\Gamma,\mcR)\cap \partial_e T_{\fin}(\PVM^{Y,B})= \partial_e T_{\fin}^{(p)}(\Gamma,\mcR)$,
        \item $T_{\fin}^{(p)}(\Gamma,\mcR)=conv\big(\partial_e T_{\fin}^{(p)}(\Gamma,\mcR)\big)$, and
        \item $p$ has a projective quantum model that employs a maximally entangled state.
    \end{enumerate}
\end{lemma}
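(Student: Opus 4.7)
The plan is to recognize that extremality of $p$ in $C_q$ forces $T_{\fin}^{(p)}(\Gamma,\mcR)$ to be a convex face of $T_{\fin}(\PVM^{Y,B})$; parts (a) and (b) will then fall out from standard convex-analytic bookkeeping, and part (c) will follow by extracting a maximally entangled GNS from any extreme point of this face via \Cref{prop:tracialGNS}. The key step is the face property: suppose $\tau \in T_{\fin}^{(p)}(\Gamma,\mcR)$ decomposes as $\tau = \lambda \tau_1 + (1-\lambda)\tau_2$ with $\tau_1,\tau_2 \in T_{\fin}(\PVM^{Y,B})$ and $0 < \lambda < 1$. Non-negativity forces $\tau_i(r^*r) = 0$ for every $r \in \mcR$, so \Cref{lemma:extendtrace} extends each $\tau_i$ to a finite-dimensional state $f_i$ on $\PVM^{X,A} \otimes \PVM^{Y,B}$ realizing a correlation $p_i(a,b|x,y) := \tau_i(\gamma^x_a n^y_b) \in C_q$; from $p = \lambda p_1 + (1-\lambda) p_2$ and extremality of $p$ in $C_q$, I conclude $p_1 = p_2 = p$, hence $\tau_i \in T_{\fin}^{(p)}(\Gamma,\mcR)$.

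With the face property in hand, part (a) is immediate: any decomposition of $\tau \in \partial_e T_{\fin}^{(p)}(\Gamma,\mcR)$ inside $T_{\fin}(\PVM^{Y,B})$ stays inside the face, where $\tau$ is already extreme, giving the inclusion $\supseteq$; the reverse is automatic. Part (b) then follows by applying part (c) of \Cref{prop:fdtracial} to write any $\tau \in T_{\fin}^{(p)}(\Gamma,\mcR) \subseteq T_{\fin}(\PVM^{Y,B})$ as a convex combination of elements of $\partial_e T_{\fin}(\PVM^{Y,B})$; the face property places each summand back in $T_{\fin}^{(p)}(\Gamma,\mcR)$, and part (a) upgrades them to extreme points of $T_{\fin}^{(p)}(\Gamma,\mcR)$.

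For part (c), \Cref{lemma:fullrankprojective} guarantees a projective model for $p$, whose induced finite-dimensional state on $\PVM^{X,A} \otimes_{\min} \PVM^{Y,B}$ corresponds via \Cref{prop:extendtrace} to an element of $T_{\fin}^{(p)}(\Gamma,\mcR)$, so this set is non-empty and I may pick $\tau \in \partial_e T_{\fin}^{(p)}(\Gamma,\mcR)$ by part (b). By part (a) and \Cref{prop:tracialGNS}, $\tau$ admits a GNS triple $(\C^d \otimes \C^d, \Id \otimes \pi(\cdot), \ket{\varphi_d})$ with $\pi : \PVM^{Y,B} \to M_d(\C)$ irreducible and $\ket{\varphi_d}$ maximally entangled. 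Faithfulness of $\tr_d$ on positive matrices together with $\tau(r^*r) = 0$ force $\pi(r) = 0$ for every $r \in \mcR$, so $\pi$ descends to $C^*(\mcG)$ and each $\{\pi(\gamma^x_a) : a \in A\}$ is a PVM. Setting $M^x_a := \pi(\gamma^x_a)^T$ and $N^y_b := \pi(n^y_b)$ then yields a projective model $S = (\C^d,\C^d,\{M^x_a\},\{N^y_b\},\ket{\varphi_d})$, and the operator-vector correspondence gives $\bra{\varphi_d} M^x_a \otimes N^y_b \ket{\varphi_d} = \tau(\gamma^x_a n^y_b) = p(a,b|x,y)$.

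The substantive content is concentrated in the face property of the first paragraph, which is where extremality of $p$ in $C_q$ is consumed; once that is in hand the rest is a mechanical combination of \Cref{prop:fdtracial}, \Cref{prop:tracialGNS}, and the operator-vector correspondence.
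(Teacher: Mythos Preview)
Your proof is correct and follows essentially the same approach as the paper: both establish that $T_{\fin}^{(p)}(\Gamma,\mcR)$ is a face of $T_{\fin}(\PVM^{Y,B})$ by using positivity to force $\tau_i(r^*r)=0$, invoking \Cref{lemma:extendtrace} to produce quantum correlations $p_i$, and then using extremality of $p$ in $C_q$; parts (a)--(c) then follow from \Cref{prop:fdtracial} and \Cref{prop:tracialGNS} exactly as you describe. Your presentation is slightly cleaner in isolating the face property up front and in making explicit (via faithfulness of $\tr_d$) why $\pi$ descends to $C^*(\mcG)$ so that $\{\pi(\gamma^x_a)\}_a$ is a genuine PVM, a point the paper handles by reference to the proof of \Cref{lemma:fullrankprojective}.
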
 
\begin{proof}
    For part (a), it is clear that $T_{\fin}^{(p)}(\Gamma,\mcR)\cap \partial_e T_{\fin}(\PVM^{Y,B})\subseteq \partial_e T_{\fin}^{(p)}(\Gamma,\mcR)$. To see the converse inclusion, we assume for the sake of contradiction that there is a $\tau\in\partial_e T_{\fin}^{(p)}(\Gamma,\mcR)$ which is not in $\partial_e T_{\fin}(\PVM^{Y,B})$. Then there are distinct $\tau_1,\tau_2\in T_{\fin}(\PVM^{Y,B})$ and $0< \lambda <1$ such that $\tau=\lambda \tau_1+(1-\lambda)\tau_2$. $\tau$ is in $T^{(p)}(\Gamma,\mcR)$, so $\tau(r^*r)=0$ for all $r\in\mcR$. 
    Since $\lambda>0$, $1-\lambda>0$, and $\tau_1(r^*r),\tau_2(r^*r)\geq 0$, it follows that
    \begin{align}
        \tau_1(r^*r)=\tau_2(r^*r)=0\label{eq:all0}
    \end{align}
    for all $r\in\mcR$. By \Cref{lemma:extendtrace}, the vectors $p_1,p_2\in \R^{A\times B\times X\times Y}$ defined by $p_i(a,b|x,y)=\tau_i(\gamma^x_an^y_b),i\in\{1,2\}$ are optimal quantum correlations for $\mcG$ such that $p=\lambda p_1 +(1-\lambda)p_2$. Because $p$ is an extreme point in $C_q$, we have $p_1=p_2=p$. \Cref{eq:all0} implies $\tau_1,\tau_2\in T_{\fin}^{(p)}(\Gamma,\mcR)$, which contradicts the assumption that $\tau$ is an extreme point in $T_{\fin}^{(p)}(\Gamma,\mcR)$. We conclude that $ T_{\fin}^{(p)}(\Gamma,\mcR)\cap \partial_e T_{\fin}(\PVM^{Y,B})= \partial_e T_{\fin}^{(p)}(\Gamma,\mcR)$. 

    Now we prove part (b). For any $\tau\in T_{\fin}^{(p)}(\Gamma.\mcR)$, by \Cref{prop:fdtracial}, $\tau$ is a convex combination $\tau=\lambda_1\lambda_1+\cdots+\lambda_k\lambda_k$ of $\tau_1,\ldots,\tau_k\in T_{\fin}(\PVM^{Y,B})$. Since every $\lambda_1>0$, $\tau(r^*r)=0$ for all $r\in\mcR$ implies $\tau_i(r^*r)=0$ for all $1\leq i\leq k$ and $r\in \mcR$. Again, by \Cref{lemma:extendtrace} and the extremality of $p$, every $\tau_i$ is on $T_{\fin}^{(p)}(\Gamma,\mcR)$. Hence by part (a), every $\tau_i$ is in $\partial_e T_{\fin}^{(p)}(\Gamma,\mcR)$. This proves part (b).

    For (c), we first note that $T_{\fin}^{(p)}(\Gamma,\mcR)$ is non-empty, because $p$ is optimal for $\mcG$. Let $\tau\in \partial_e T_{\fin}^{(p)}(\Gamma,\mcR)$. Then by part (a), $\tau\in \partial_e T_{\fin}(\PVM^{Y,B})$. By \Cref{prop:tracialGNS}, $\tau$ has a GNS representation $ \big(\C^d\otimes\C^d,\Id\otimes\pi(\cdot),\ket{\varphi_d}  \big)$
    where $\pi:\PVM^{Y,B}\arr M_d(\C^d)$ is an irreducible $*$-representation and $\ket{\varphi_d}$ is a maximally entangled state. Let $M^x_a:=\pi(\gamma^x_a)^T$ and $N^y_b:=\pi(n^y_b)$ for all $a,b,x,y$. Then following the same reasoning as in the proof of \Cref{lemma:fullrankprojective}, 
    \begin{align*}
        S:=\big(\C^d,\C^d,\{M^x_a\},\{N^y_b\},\ket{\varphi_d} \big)
    \end{align*}
    is a projective quantum model for $p$ that employs a maximally entangled state.
\end{proof}

Now we are ready to proof \Cref{thm:correlationtrace}.

\begin{proof}[Proof of \Cref{thm:correlationtrace}]
    (a)$\Leftrightarrow$(b): By \Cref{lemma:fullrankprojective}, $p$ has a full-rank projective quantum model. Then by \cite[Corollary 3.6]{PSZZ23}, $p$ is a self-test for all quantum models if and only if $p$ is an abstract state self-test for projective finite-dimensional states. So (a)$\Leftrightarrow$(b) follows.

    (b)$\Leftrightarrow$(c) follows directly from the bijective correspondence between finite-dimensional states on $\PVM^{X,A}\otimes_{min}\PVM^{Y,B}$ for $p$ and tracial states in $T_{\fin}^{(p)}(\Gamma,\mcR)$, as shown in \Cref{prop:extendtrace}.

    (c)$\Leftrightarrow$(d): By \Cref{lemma:extremeMES}, $T_{\fin}^{(p)}(\Gamma,\mcR)$ contains a unique element if and only if $\partial_e T_{\fin}^{(p)}(\Gamma,\mcR)$ contains a unique element. So (c)$\Leftrightarrow$(d) follows from the correspondence between tracial states in $\partial_e T_{\fin}^{(p)}(\Gamma,\mcR)$ and representations in $\Irr_{\fin}\big(\PVM^{Y,B} \big)$ which satisfy part (d) of \Cref{thm:correlationtrace}, as shown in \Cref{cor:extremetrace-irrep}.
\end{proof}

\begin{corollary}\label{cor:traceideal}
    Let $\mcG$ be a nonlocal game with a determining pair $(\Gamma,\mcR)$. Suppose $p\in\partial_e C_q$ is optimal for $\mcG$ and is a self-test for quantum models. Let $f$, $\tau$, and $\pi:\PVM^{Y,B}\arr M_d(\C)$ be unique finite-dimensional state on $\PVM^{X,A}\otimes_{min}\PVM^{Y,B}$ for $p$, then unique tracial stat in $T_{\fin}(\Gamma,\mcR)$, and the unique irreducible representation given by part (d) of \Cref{thm:correlationtrace} respectively. Let $\wtd{M}^x_a:=\pi(\gamma^x_a)^T$ and $\wtd{N}^y_b:=\pi(n^y_b)$ for all $a,b,x,y$, and let $\ket{\varphi_d}:=\frac{1}{\sqrt{d}}\sum_{i=1}^d\ket{i}\otimes\ket{i}$. Then
\begin{enumerate}[(a)]
    \item the tuple $\big(\C^d\otimes\C^d,\Id\otimes\pi(\cdot),\ket{\varphi_d}  \big)$ is a GNS representation for $\tau$,
    \item $\wtd{S}:=\big(\C^d,\C^d,\{\wtd{M}^x_a\},\{\wtd{N}^y_b\},\ket{\varphi_d} \big)$ is an ideal model for $p$, and
    \item the tuple $\big(\C^d\otimes\C^d,\pi_A\otimes\pi_B,\ket{\varphi_d}  \big)$ is a GNS representation for $f$, where $\pi_A\otimes\pi_B$ is the associated representation of $\wtd{S}$.
\end{enumerate}
\end{corollary}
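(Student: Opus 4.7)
The plan is to leverage the uniqueness hypothesis together with the structural results already established: \Cref{lemma:extremeMES}, \Cref{prop:tracialGNS}, and \Cref{lemma:extendtrace}. First, since $p\in\partial_e C_q$ is a self-test, \Cref{thm:correlationtrace} gives uniqueness of $\tau$ in $T_{\fin}^{(p)}(\Gamma,\mcR)$, so by part (b) of \Cref{lemma:extremeMES} we have $\tau\in\partial_e T_{\fin}^{(p)}(\Gamma,\mcR)$, and then by part (a) of \Cref{lemma:extremeMES}, $\tau\in \partial_e T_{\fin}(\PVM^{Y,B})$. Applying \Cref{prop:tracialGNS} yields a GNS representation of $\tau$ of the form $\big(\C^d\otimes\C^d,\Id\otimes\pi(\cdot),\ket{\varphi_d}\big)$ for some irreducible $*$-representation $\pi:\PVM^{Y,B}\arr M_d(\C)$. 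The uniqueness of such $\pi$ (from part (d) of \Cref{thm:correlationtrace}, via \Cref{cor:extremetrace-irrep}) guarantees it coincides with the one in the statement, proving (a).

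Next, for (b), I verify that $\wtd{S}$ is a well-defined projective quantum model achieving $p$. Since $\tau(r^*r)=0$ for every $r\in\mcR$, the representation $\pi$ factors through the quotient $C^*(\mcG)=\PVM^{Y,B}\slash\ang{\mcR}$, so Condition (i) of \Cref{def:pair} implies $\{\pi(\gamma^x_a):a\in A\}$ is a PVM on $\C^d$; taking transpose preserves this, so $\{\wtd{M}^x_a\}_a$ is a PVM. Similarly $\{\wtd{N}^y_b\}_b$ is a PVM because the $n^y_b$'s are projections. The correlation check uses the maximally-entangled identity $(A^T\otimes\Id)\ket{\varphi_d}=(\Id\otimes A)\ket{\varphi_d}$ (valid for any $A\in M_d(\C)$), which yields
\begin{align*}
\bra{\varphi_d}\wtd{M}^x_a\otimes\wtd{N}^y_b\ket{\varphi_d}
=\bra{\varphi_d}\Id\otimes\pi(\gamma^x_a n^y_b)\ket{\varphi_d}
=\tr_d\bigl(\pi(\gamma^x_a n^y_b)\bigr)
=\tau(\gamma^x_a n^y_b)=p(a,b|x,y),
\end{align*}
where the last equality is (\ref{eq:tau2}).

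Finally, for (c) I must check both the moment identity and the cyclicity of $\ket{\varphi_d}$ for $\pi_A\otimes\pi_B$. For a typical monomial $\alpha=m^{x_1}_{a_1}\cdots m^{x_k}_{a_k}\otimes n^{y_1}_{b_1}\cdots n^{y_\ell}_{b_\ell}$, using $(AB)^T=B^TA^T$ and again the maximally-entangled identity,
\begin{align*}
\bra{\varphi_d}(\pi_A\otimes\pi_B)(\alpha)\ket{\varphi_d}
&=\bra{\varphi_d}\pi(\gamma^{x_k}_{a_k}\cdots\gamma^{x_1}_{a_1})^T\otimes\pi(n^{y_1}_{b_1}\cdots n^{y_\ell}_{b_\ell})\ket{\varphi_d}\\
&=\tr_d\bigl(\pi(\gamma^{x_k}_{a_k}\cdots\gamma^{x_1}_{a_1}n^{y_1}_{b_1}\cdots n^{y_\ell}_{b_\ell})\bigr)
=\tau(\gamma^{x_k}_{a_k}\cdots\gamma^{x_1}_{a_1}n^{y_1}_{b_1}\cdots n^{y_\ell}_{b_\ell}),
\end{align*}
which equals $f(\alpha)$ by (\ref{eq:extendtrace}) in \Cref{lemma:extendtrace} (since $f$ is the unique extension of $\tau$ provided by \Cref{prop:extendtrace}). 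By linearity and density of such monomials, this gives the vector-state identity on the whole algebraic tensor product. For cyclicity, irreducibility of $\pi$ gives $\pi(\PVM^{Y,B})=M_d(\C)$, so $(\pi_A\otimes\pi_B)(1\otimes\PVM^{Y,B})\ket{\varphi_d}=\bigl(\Id\otimes M_d(\C)\bigr)\ket{\varphi_d}$, which is all of $\C^d\otimes\C^d$ because $\ket{\varphi_d}$ is maximally entangled.

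The only non-bookkeeping step is the moment computation in (c); the potential pitfall is handling the order reversal caused by the transpose, but the identity $(AB)^T=B^TA^T$ together with the symmetry of $\ket{\varphi_d}$ and the \emph{tracial} property of $\tau$ resolve this cleanly, so no serious obstacle arises.
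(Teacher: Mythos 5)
Your proposal is correct and follows the same route the paper implicitly intends: the corollary is stated without a separate proof precisely because it falls out of the constructions already carried out in \Cref{prop:tracialGNS}, \Cref{lemma:extremeMES}\,(a),(c), and \Cref{lemma:extendtrace}/\Cref{prop:extendtrace}. Your part (a) chains \Cref{lemma:extremeMES}\,(a) with \Cref{prop:tracialGNS} exactly as the paper does; your moment computation in (c) using $(A^T\otimes\Id)\ket{\varphi_d}=(\Id\otimes A)\ket{\varphi_d}$, the order reversal under transpose, and \Cref{eq:extendtrace} is the same calculation that appears inside the proofs of \Cref{lemma:fullrankprojective} and \Cref{lemma:extremeMES}\,(c), and your cyclicity argument via irreducibility of $\pi$ is sound.

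The one place you should tighten is part (b): as written you only verify that $\wtd{S}$ is a projective quantum model achieving $p$, whereas the statement asks you to show $\wtd{S}$ is an \emph{ideal} model for $p$. This does not require new work, but the logical chain should be made explicit: by your part (c), the associated representation $\pi_A\otimes\pi_B$ of $\wtd{S}$ is the GNS representation of the unique finite-dimensional state $f$ for $p$, and the paper recalls (from \cite{PSZZ23}, just before \Cref{thm:gameidealrep}) that this is precisely the associated representation of the ideal model when $p$ is an extreme self-tested correlation. So (b) is a consequence of (c), not a standalone computation; reordering the argument so (c) precedes (b), or adding one sentence citing this fact, closes the gap.
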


\begin{comment}
\begin{corollary}
    Suppose $\mcG$ is a nonlocal game with a determining pair $(\gamma,\mcR)$ and $\mcG$ has an optimal quantum correlation $p\in C_q(X,Y,A,B)$ which is an extreme point in $C_q$. If $p$ is a self-test for all quantum models, then $p$ has an ideal model
    \begin{align*}
        \wtd{S}=\big(\C^d,\C^d,\{\wtd{M}^x_a\},\{\wtd{N}^y_b\},\ket{\varphi_d}  \big)
    \end{align*}
    with associated representation $\wtd{\pi}_A\otimes\wtd{\pi}_B$ such that
    \begin{enumerate}[(a)]
        \item $\ket{\varphi_d}=\frac{1}{\sqrt{d}}\sum_{i=1}^d\ket{i}\otimes\ket{i}$ is a maximally entangled state, and
        \item $M^x_a=\wtd{\pi}_B(\gamma^x_a)^T$ for all $x\in X,a\in A$.
    \end{enumerate}
\end{corollary}
\end{comment}

\subsection{General results for nonlocal games}
We also prove the following general statement for nonlocal games.
\begin{theorem}\label{thm:gametrace}
   Let $\mcG$ be a nonlocal game with a determining pair $(\Gamma,\mcR)$, and let $C^*(\mcG)$ be the associated game algebra. Then the following statements are equivalent.
   \begin{enumerate}[(a)]
       \item $\mcG$ is a self-test for its optimal quantum strategies.
       \item There is a unique finite-dimensional optimal state on $\PVM^{X,A}\otimes_{min}\PVM^{Y,B}$ for $\mcG$.
       \item $C^*(\mcG)$ has a unique finite-dimensional tracial state.
       \item $C^*(\mcG)$ has a unique finite-dimensional irreducible $*$-representation.
   \end{enumerate}
\end{theorem}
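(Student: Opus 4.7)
The plan is to reduce everything to the correlation-level result Theorem \ref{thm:correlationtrace} by observing that each of (b), (c), (d) forces $\mcG$ to admit a unique optimal quantum correlation $p$, which is then automatically an extreme point of $C_q$. Once we are in that extremal setting, the four conditions of the present theorem match the four conditions of Theorem \ref{thm:correlationtrace} with only some bookkeeping about games versus correlations left to do.

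The equivalence (c) $\Leftrightarrow$ (d) follows directly from Corollary \ref{coro:uniqueirreptrace} applied to $C^*(\mcG)$. For (b) $\Leftrightarrow$ (c), the key tool is the restriction map $f \mapsto \tau := f|_{1\otimes \PVM^{Y,B}}$: Definition \ref{def:pair}(ii) combined with Lemma \ref{lemma:extendtrace} shows this is a bijection between states on $\PVM^{X,A}\otimes\PVM^{Y,B}$ that are optimal for $\mcG$ and tracial states on $\PVM^{Y,B}$ annihilating $\mcR$, and it preserves finite-dimensionality in both directions (forward by Lemma \ref{lemma:extendtrace}, backward because $\tau$ factors through the finite-dimensional algebra $\pi(1\otimes\PVM^{Y,B})$ where $\pi$ is the GNS of $f$). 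Composing with the bijection of Proposition \ref{prop:tracequotientfd} between tracial states annihilating $\mcR$ and tracial states on the quotient $C^*(\mcG)$ then yields (b) $\Leftrightarrow$ (c); here one small point to check is that ``finite-dimensional optimal state on $\PVM^{X,A}\otimes_{\min}\PVM^{Y,B}$'' coincides with ``finite-dimensional optimal state on the algebraic tensor'', which holds because any finite-dimensional GNS of the latter is automatically min-continuous.

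For (a) $\Leftrightarrow$ (b) I would pivot through an explicit extreme correlation. Given (b), the unique finite-dimensional optimal state $f$ realizes a correlation $p \in C_q$, which by Lemma \ref{lemma:gameabstractstate} is the unique optimal quantum correlation for $\mcG$ and by Lemma \ref{lemma:uniqueextreme} lies in $\partial_e C_q$; Theorem \ref{thm:correlationtrace} then shows $p$ is a self-test for all quantum models, whence $\mcG$ is a self-test by Lemma \ref{lemma:uniqueCq}. Conversely, assuming (a), Lemma \ref{lemma:uniqueCq} produces a unique optimal correlation $p \in \partial_e C_q$ that is a self-test for quantum models, Lemma \ref{lemma:fullrankprojective} supplies a full-rank projective model for $p$ (this is the one place outside (b) $\Leftrightarrow$ (c) where the determining-pair structure is used), and Corollary \ref{cor:newcorgame} gives a unique finite-dimensional state on $\PVM^{X,A}\otimes_{\min}\PVM^{Y,B}$ realizing $p$, i.e.\ (b). The main subtle point, and where I expect to spend the most care, is the repeated identification between ``finite-dimensional state optimal for $\mcG$'' and ``finite-dimensional state realizing the unique optimal correlation $p$''; this identification is not formal, it depends on having uniqueness of the optimal correlation in $C_q$, and is what forces the argument to route through Lemmas \ref{lemma:gameabstractstate} and \ref{lemma:uniqueextreme} even though the theorem is stated purely in game-theoretic language.
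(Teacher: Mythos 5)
Your proposal is correct and follows essentially the same route as the paper's own proof: (c)$\Leftrightarrow$(d) via the correspondence between finite-dimensional tracial states and finite-dimensional irreducibles, (b)$\Leftrightarrow$(c) via the restriction/extension correspondence between optimal states and tracial states on $C^*(\mcG)$ (which is exactly the content of \Cref{prop:extendtracegame}, even though you reassemble it from its ingredients by hand), and (a)$\Leftrightarrow$(b) by extracting the unique optimal extreme correlation via \Cref{lemma:uniqueextreme,lemma:gameabstractstate,lemma:uniqueCq} and invoking the correlation-level theorem. The only cosmetic difference is that for (a)$\Rightarrow$(b) you route through \Cref{lemma:fullrankprojective} and \Cref{cor:newcorgame} rather than directly through (a)$\Leftrightarrow$(b) of \Cref{thm:correlationtrace}, but this lands on the same lemmas and adds no new idea.
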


For the proof, we need to establish a correspondence between optimal states for $\mcG$ and tracial states on $C^*(\mcG)$. Given a nonlocal game $\mcG$ with a determining pair $(\Gamma,\mcR)$, we say a tracial state $\tau$ on $\PVM^{Y,B}$ is optimal for $\mcG$ if $\tau(r^*r)=0$ for all $r\in \mcR$. By \Cref{lemma:extendtrace}, if $\tau$ is an optimal tracial state for $\mcG$, then the correlation defined by $p(a,b|x,y)=\tau(\gamma^x_an^y_b)$ is optimal for $\mcG$. Hence a tracial state $\tau$ on $\PVM^{Y,B}$ is optimal for $\mcG$ if and only if $\tau$ is in $T^{(p)}(\Gamma,\mcR)$ for some optimal $p\in C_{qc}$ for $\mcG$. The following correspondence between optimal states on $\PVM^{X,A}\otimes\PVM^{Y,B}$ for $\mcG$ and optimal tracial states on $\PVM^{X,A}\otimes\PVM^{Y,B}$ for $\mcG$ is an immediate consequence of  \Cref{prop:extendtrace}.

\begin{lemma}
    Suppose $\mcG$ has a determining pair $(\Gamma,\mcR)$. There is a bijective correspondence between optimal states on $\PVM^{X,A}\otimes\PVM^{Y,B}$ for $\mcG$ and optimal tracial states on $\PVM^{Y,B}$ for $\mcG$, sending an $f$ to the tracial state $\tau:=f|_{1\otimes\PVM^{Y,B}}$. Moreover,
    \begin{enumerate}[(a)]
        \item $f$ is finite-dimensional if and only if the corresponding $\tau$ is finite-dimensional, and 
        \item $f$ is min-continuous if the corresponding $\tau$ is amenable.
    \end{enumerate}
\end{lemma}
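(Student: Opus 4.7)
The plan is to build the stated bijection by combining \Cref{lemma:extendtrace} with the definition of a determining pair, and then to verify items (a) and (b) by transferring structure between the two sides.

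First, I would construct the two directions of the correspondence. For the forward direction, given an optimal state $f$ on $\PVM^{X,A}\otimes\PVM^{Y,B}$ for $\mcG$, define $\tau := f|_{1\otimes\PVM^{Y,B}}$; by part (ii)(b) of \Cref{def:pair} this is automatically a tracial state on $\PVM^{Y,B}$ satisfying $\tau(r^*r)=0$ for every $r\in\mcR$, i.e.\ it is optimal for $\mcG$. For the reverse direction, given an optimal tracial state $\tau$ on $\PVM^{Y,B}$, \Cref{lemma:extendtrace} produces a (unique) state $f$ on $\PVM^{X,A}\otimes\PVM^{Y,B}$ which is optimal for $\mcG$ and satisfies \Cref{eq:extendtrace}. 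The uniqueness half of \Cref{lemma:extendtrace}, together with the fact that monomials span a dense subspace, shows that these two assignments are mutual inverses: starting from $f$, restricting and then extending reproduces $f$ because both states agree on $1\otimes \PVM^{Y,B}$ and both satisfy $f((m^x_a\otimes 1 - 1\otimes\gamma^x_a)^2)=0$, which forces agreement on every monomial of the form $m^{x_1}_{a_1}\cdots m^{x_k}_{a_k}\otimes n^{y_1}_{b_1}\cdots n^{y_\ell}_{b_\ell}$.

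Next I would verify (a). The ``if'' direction is exactly the finite-dimensionality half of \Cref{lemma:extendtrace}. For the ``only if'' direction, assume $f$ is finite-dimensional with GNS representation $(\mcH,\pi,\ket{\psi})$ where $\mcH$ is finite-dimensional. Then $\tau(b) = \bra{\psi}\pi(1\otimes b)\ket{\psi}$ for $b\in\PVM^{Y,B}$, so $\tau$ factors through the finite-dimensional $C^*$-algebra $\pi(1\otimes\PVM^{Y,B})\subseteq \msB(\mcH)$. By \Cref{lemma:fdtrace}(b), $\tau$ is finite-dimensional. Part (b) is immediate from the amenability/min-continuity half of \Cref{lemma:extendtrace}.

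There is no real obstacle here: every component is already in place. The only point that requires a moment of care is checking that the two directions of the correspondence really are inverse to each other, which reduces to the fact that any optimal state $f$ is forced by part (ii)(a) of \Cref{def:pair} to satisfy $\pi(m^x_a\otimes 1)\ket{\psi}=\pi(1\otimes\gamma^x_a)\ket{\psi}$ in its GNS representation, and therefore must coincide with the canonical extension of $\tau$ produced in \Cref{lemma:extendtrace}. All other claims are read off directly from the cited results.
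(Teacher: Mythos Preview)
Your proposal is correct and takes essentially the same approach as the paper. The paper simply remarks that the lemma is an immediate consequence of \Cref{prop:extendtrace} (applied over all optimal $p\in C_{qc}$), and your argument via \Cref{lemma:extendtrace} together with the GNS identity $\pi(m^x_a\otimes 1)\ket{\psi}=\pi(1\otimes\gamma^x_a)\ket{\psi}$ is exactly how \Cref{prop:extendtrace} itself is proved.
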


\begin{comment}
The following lemma establishes a correspondence between optimal tracial states on $\PVM^{Y,B}$ for $\mcG$ and tracial states on $\PVM^{Y,B}$ whose kernel contains the ideal generated by $\mcR$.

\begin{lemma}\label{lemma:optimaltrace}
    Suppose $\mcG$ has a determining pair $(\gamma,\mcR)$. A tracial state $\tau$ on $\PVM^{Y,B}$ is optimal for $\mcG$ with respect to $(\gamma,\mcR)$ if and only if $\tau(r^*r)=0$ for all $r\in\mcR$.
\end{lemma}
\begin{proof}
    The ``only if" direction follows from the definition of optimal tracial states and $T^{(p)}(\gamma,\mcR)$.

    To see the ``if" direction, suppose $\tau$ is a tracial state on $\PVM^{Y,B}$ that satisfies $\tau(r^*r)=0$ for all $r\in\mcR$. Then by \Cref{lemma:extendtrace}, $\tau$ extends to an optimal state $f$ on $\PVM^{X,A}\otimes\PVM^{Y,B}$ for $\mcG$ such that $\tau(\gamma^x_an^y_b)=f(m^x_a\otimes n^y_b)$ for all $a,b,x,y$. Let $p$ be the correlation achieved by $f$, then $p\in C_{qc}$ is optimal for $\mcG$ and $\tau\in T^{(p)}(\gamma,\mcR)$. So $\tau$ is an optimal tracial state on $\PVM^{Y,B}$ for $\mcG$ with respect to $(\gamma,\mcR)$.
\end{proof}
\end{comment}

Any optimal tracial state for a nonlocal game $\mcG$ with determining pair $(\Gamma,\mcR)$ drops to a tracial state on the quotient $C^*(\mcG)$ and vice versa. The following correspondence between optimal tracial states for $\mcG$ and tracial states on the associated game algebra $C^*(\mcG)$ follows straight from \Cref{prop:tracequotientfd,prop:tracequotientamenable}.

\begin{lemma}
    Suppose $\mcG$ has a determining pair $(\Gamma,\mcR)$, and let $C^*(\mcG)$ be the associated game algebra. Then there is a bijective correspondence between optimal tracial states on $\PVM^{Y,B}$ for $\mcG$ with respect to $(\Gamma,\mcR)$ and tracial states on $C^*(\mcG)$. Moreover, 
    \begin{enumerate}[(a)]
        \item an optimal tracial state on $\PVM^{Y,B}$ is finite-dimensional if and only if the corresponding tracial state on $C^*(\mcG)$ is finite-dimensional, and
        \item an optimal tracial state on $\PVM^{Y,B}$ is amenable if the corresponding tracial state on $C^*(\mcG)$ is amenable.
    \end{enumerate}
\end{lemma}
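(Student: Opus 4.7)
The plan is to reduce the statement directly to \Cref{prop:tracequotientfd,prop:tracequotientamenable}, applied with $\mcA = \PVM^{Y,B}$ and $\mcJ = \ang{\mcR}$, the closed two-sided ideal generated by $\mcR$, so that by definition of the associated game algebra $\mcA/\mcJ = C^*(\mcG)$ and $q: \PVM^{Y,B} \arr C^*(\mcG)$ is the canonical quotient map.

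The key (and really the only substantive) step is to identify the optimal tracial states on $\PVM^{Y,B}$ for $\mcG$ with the tracial states on $\PVM^{Y,B}$ whose kernel contains $\ang{\mcR}$. By the definition of an optimal tracial state (recalled just before the statement), a tracial state $\tau$ on $\PVM^{Y,B}$ is optimal for $\mcG$ if and only if $\tau(r^*r) = 0$ for every $r \in \mcR$, i.e., $\mcR \subseteq \mcI_\tau$. Since $\mcI_\tau$ is a closed two-sided ideal (this uses that $\tau$ is continuous and tracial, as in \Cref{eq:tracialideal}), this containment is equivalent to $\ang{\mcR} \subseteq \mcI_\tau$. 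Thus optimal tracial states for $\mcG$ are precisely the tracial states on $\PVM^{Y,B}$ whose kernel contains $\mcJ = \ang{\mcR}$.

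Given this identification, the bijective correspondence $\tau \leftrightarrow \wtd{\tau} := \tau \circ q$ is exactly the content of \Cref{prop:tracequotientfd}, and its ``Moreover'' clause yields part (a): $\tau$ is finite-dimensional if and only if the corresponding $\wtd{\tau}$ on $C^*(\mcG)$ is finite-dimensional. Part (b) is then a direct invocation of \Cref{prop:tracequotientamenable}: if $\wtd{\tau}$ is amenable on $C^*(\mcG)$, then $\tau = \wtd{\tau} \circ q$ is amenable on $\PVM^{Y,B}$. Note that the lemma asserts only the ``if'' direction for amenability, which is consistent with \Cref{prop:tracequotientamenable} and with the subtlety noted in \Cref{rmk:tracequotientamenable} (the converse need not hold without a local splitting of the short exact sequence).

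I do not anticipate any genuine obstacle here; the lemma is essentially a bookkeeping consequence of the two prior propositions once the optimality condition $\tau(r^*r) = 0$ for $r \in \mcR$ is reformulated as the ideal containment $\ang{\mcR} \subseteq \mcI_\tau$. The only thing to be careful about is to phrase part (b) as a one-sided implication, mirroring the direction available from \Cref{prop:tracequotientamenable}.
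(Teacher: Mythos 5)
Your proof is correct and takes exactly the same route as the paper, which presents this lemma as an immediate consequence of \Cref{prop:tracequotientfd,prop:tracequotientamenable}. You have simply spelled out the one small step the paper leaves implicit, namely that the optimality condition $\tau(r^*r)=0$ for all $r\in\mcR$ is equivalent to $\ang{\mcR}\subseteq\mcI_\tau$ because $\mcI_\tau$ is a closed two-sided ideal.
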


Put everything together, we conclude that:
\begin{proposition}\label{prop:extendtracegame}
    Suppose $\mcG$ has a determining pair $(\Gamma,\mcR)$. Then there is a bijective correspondence between optimal states on $\PVM^{X,A}\otimes \PVM^{Y,B}$ for $\mcG$ and tracial states on $C^*(\mcG)$. Moreover, 
    \begin{enumerate}[(a)]
        \item an optimal state is finite-dimensional if and only if the corresponding tracial state is finite-dimensional, and 
        \item an optimal state is min-continuous if the corresponding tracial state is amenable.
    \end{enumerate}

\end{proposition}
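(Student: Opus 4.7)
The plan is to obtain this correspondence by composing the two bijections established in the two lemmas immediately preceding this proposition, together with \Cref{prop:tracequotientfd,prop:tracequotientamenable} which translate between optimal tracial states on $\PVM^{Y,B}$ (i.e., tracial states annihilating $\mcR$) and tracial states on the quotient $C^*(\mcG) = \PVM^{Y,B}/\ang{\mcR}$.

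Concretely, given an optimal state $f$ on $\PVM^{X,A}\otimes\PVM^{Y,B}$ for $\mcG$, the first preceding lemma produces $\tau := f|_{1\otimes\PVM^{Y,B}}$, which by the definition of a determining pair and optimality of $f$ satisfies $\tau(r^*r) = 0$ for all $r\in\mcR$. Since the closed two-sided ideal $\ang{\mcR}$ lies inside $\mcI_\tau$, by \Cref{prop:tracequotientfd} the state $\tau$ descends to a unique tracial state $\wtd{\tau}$ on $C^*(\mcG)$ satisfying $\tau = \wtd{\tau}\circ q$, where $q:\PVM^{Y,B}\arr C^*(\mcG)$ is the quotient map. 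The assignment $f\mapsto \wtd{\tau}$ is my candidate bijection. Its inverse is constructed by reversing both steps: a tracial state $\wtd{\tau}$ on $C^*(\mcG)$ pulls back to the optimal tracial state $\wtd{\tau}\circ q$ on $\PVM^{Y,B}$, which by \Cref{lemma:extendtrace} extends uniquely to an optimal state on $\PVM^{X,A}\otimes\PVM^{Y,B}$. Bijectivity then follows from the uniqueness clauses in both extensions.

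For part (a), both the preceding lemma (for the $f \leftrightarrow \tau$ correspondence) and \Cref{prop:tracequotientfd} (for the $\tau \leftrightarrow \wtd{\tau}$ correspondence) assert that finite-dimensionality is preserved in both directions, so the composition yields an \emph{iff} statement. For part (b), chaining the one-directional implications gives the result: by the preceding lemma $f$ is min-continuous whenever $\tau$ is amenable, and by \Cref{prop:tracequotientamenable} $\tau$ is amenable whenever $\wtd{\tau}$ is. That this is only an implication rather than an equivalence matches \Cref{rmk:tracequotientamenable}, which warns that amenability need not descend along general quotients.

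I do not anticipate any real obstacle; essentially all the content has been done in the two preceding lemmas and the tracial-state material of \Cref{sec:trace}. The only care needed is to verify that $f|_{1\otimes\PVM^{Y,B}}$ really annihilates $\mcR$ under optimality, which is immediate from part (ii)(b) of \Cref{def:pair}, and to confirm that passage through the quotient map $q$ introduces no ambiguity; both points are routine.
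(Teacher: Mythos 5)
Your proposal is correct and matches the paper's approach exactly: the paper states this proposition with the phrase ``Put everything together, we conclude that:'' immediately after the two preceding composition lemmas, leaving the proof as precisely the composition you spell out. Your tracking of which directions preserve finite-dimensionality (both) versus amenability (only from $C^*(\mcG)$ upward) is also consistent with how the paper phrases the moreover-clauses.
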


Now we are ready to prove \Cref{thm:gametrace}
\begin{proof}[Proof of \Cref{thm:gametrace}]
    (a)$\Leftrightarrow$(b): By \Cref{lemma:uniqueextreme,lemma:gameabstractstate}, both (a) and (b) imply that $\mcG$ has a unique optimal quantum correlation $p$ and $p$ is an extreme point in $C_q$. So (a)$\Leftrightarrow$(b) follows from (a)$\Leftrightarrow$(b) in \Cref{thm:correlationtrace}. 

    (b)$\Leftrightarrow$(c) follows directly from the correspondence between finite-dimensional optimal states on $\PVM^{X,A}\otimes_{min}\PVM^{Y,B}$ for $\mcG$ and finite-dimensional tracial states on $C^*(\mcG)$, as shown in \Cref{prop:extendtracegame}.

    (c)$\Leftrightarrow$(d): By \Cref{prop:fdtracial}, $T_{\fin}\big( C^*(\mcG)\big)$ contains a unique element if and only if $\partial_eT_{\fin}\big( C^*(\mcG)\big)$ contains a unique element. So (b)$\Leftrightarrow$(c) follows from \Cref{cor:extremetrace-irrep}.
\end{proof}

\begin{corollary}\label{cor:irreps}
    Let $\mcG$ be a nonlocal game with a determining pair $(\Gamma,\mcR)$. Let $C^*(\mcG)$ be the associated game algebra, and let $q:\PVM^{Y,B}\arr C^*(\mcG)$ be the quotient map. Suppose $\mcG$ is a self-test for its optimal quantum strategies. Let $p\in\partial_e C_q$ be the unique optimal quantum correlation for $\mcG$ given by \Cref{lemma:uniqueCq}. Then $p$ is a self-test for quantum models. Let $\wtd{\pi}$ be the unique $*$-representation in $\Irr_{\fin}\big(C^*(\mcG)\big)$ given by the part (d) of \Cref{thm:gametrace}, and let $\pi$ be the irreducible $*$-representation on $\PVM^{Y,B}$ given by part (d) of \Cref{thm:correlationtrace}. Then $\pi=\wtd{\pi}\circ q$. 
\end{corollary}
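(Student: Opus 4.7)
The first claim—that $p$ is a self-test for all quantum models—is immediate from Lemma \ref{lemma:uniqueCq}, which says precisely that $\mcG$ self-tests its optimal quantum strategies if and only if $\mcG$ has a unique optimal quantum correlation $p$ and $p$ is itself a self-test for quantum models. In particular, $p$ satisfies the hypotheses of Theorem \ref{thm:correlationtrace}, so the representation $\pi$ of part (d) is well-defined. For the identification $\pi = \wtd{\pi}\circ q$, my plan is to exhibit $\wtd{\pi}\circ q$ as a candidate for the representation in Theorem \ref{thm:correlationtrace}(d) and then appeal to its uniqueness.

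First I would check that $\wtd{\pi}\circ q$ is a finite-dimensional irreducible $*$-representation of $\PVM^{Y,B}$. Since $\wtd{\pi}:C^*(\mcG)\arr M_{\wtd{d}}(\C)$ is finite-dimensional and irreducible, by the structure theory of finite-dimensional $C^*$-algebras it is surjective onto $M_{\wtd{d}}(\C)$; composed with the surjection $q$, this gives a surjective—hence irreducible—finite-dimensional $*$-representation of $\PVM^{Y,B}$.

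Next I would verify that the induced tracial state $\tau' := \tr_{\wtd{d}}\circ \wtd{\pi}\circ q$ on $\PVM^{Y,B}$ coincides with the tracial state $\tau := \tr_d\circ \pi$ of Theorem \ref{thm:correlationtrace}(d). Since $\tau$ satisfies $\tau(r^*r)=0$ for every $r\in\mcR$, Proposition \ref{prop:tracequotientfd} shows that $\tau$ drops to a finite-dimensional tracial state $\wtd{\tau}$ on $C^*(\mcG)$ with $\tau = \wtd{\tau}\circ q$. By Theorem \ref{thm:gametrace}, $\wtd{\tau}$ is the unique finite-dimensional tracial state on $C^*(\mcG)$, and by Corollary \ref{coro:uniqueirreptrace} it has the explicit form $\wtd{\tau} = \tr_{\wtd{d}}\circ \wtd{\pi}$. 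Composing with $q$ yields $\tau = \tr_{\wtd{d}}\circ \wtd{\pi}\circ q = \tau'$.

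Both $\pi$ and $\wtd{\pi}\circ q$ are thus finite-dimensional irreducible $*$-representations of $\PVM^{Y,B}$ associated—via the correspondence of Corollary \ref{cor:extremetrace-irrep}—with the same extreme finite-dimensional tracial state $\tau$; the bijection in that corollary then forces them to coincide in $\Irr_{\fin}(\PVM^{Y,B})$, i.e., to be unitarily equivalent (so in particular $d=\wtd{d}$). Up to this identification, $\pi = \wtd{\pi}\circ q$, as claimed. I do not foresee any real obstacle: the only mildly delicate point is matching the dimensions $d$ and $\wtd{d}$, but this is automatic because by Proposition \ref{prop:tracialGNS} the dimension of an irreducible representation in $\Irr_{\fin}$ is determined by its associated tracial state.
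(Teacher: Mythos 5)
Your proof is correct, and it is precisely the argument the paper leaves implicit — the corollary is stated without proof immediately after Theorem~\ref{thm:gametrace}, and the paper clearly intends it to follow from the same chain of lemmas you invoke. You (i) reduce the first claim to Lemma~\ref{lemma:uniqueCq}; (ii) observe that $\wtd{\pi}\circ q$ is surjective and hence irreducible; (iii) show that $\tau = \tr_d\circ\pi$ has $\ang{\mcR}\subseteq\mcI_\tau$, hence drops via Proposition~\ref{prop:tracequotientfd} to the unique finite-dimensional tracial state $\wtd{\tau}=\tr_{\wtd{d}}\circ\wtd{\pi}$ on $C^*(\mcG)$ (uniqueness from Theorem~\ref{thm:gametrace}(c), explicit form from Corollary~\ref{coro:uniqueirreptrace}); and (iv) conclude from the bijection in Corollary~\ref{cor:extremetrace-irrep} that $\pi$ and $\wtd{\pi}\circ q$ induce the same extreme finite-dimensional tracial state and hence coincide in $\Irr_{\fin}(\PVM^{Y,B})$. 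Your concluding remark that the equality $\pi=\wtd{\pi}\circ q$ is to be read up to unitary equivalence (with matching dimensions forced by Proposition~\ref{prop:tracialGNS}) is exactly the right caveat. No gaps.
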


%\subsection{Self-testing for XOR and synchronous games}
\subsection{Self-testing for synchronous games}\label{subsec:syn}
Recall that a game $\mcG=(X,Y,A,B,\mu,V)$ is said to a synchronous game if $A=B,X=Y$, and $V(a,a'|x,x)=0$ for all $x\in X$ and $a\neq a'$ in $A$. We make the convention that the question distribution in a synchronous game is uniform on $X\times Y$, and we use $\mcG=(X,A,V)$ to denote a synchronous game. 

Given a synchronous game $\mcG=(X,A,V)$ and a projective commuting operator strategy $S=(\mcH,\{M^x_a\},\{N^y_b\},\ket{\psi})$ for $\mcG$. The strategy $S$ is perfect for $\mcG$ if and only if 
\begin{align}
    \bra{\psi}M^x_aN^y_b\ket{\psi}=0 \label{eq:MN0}
\end{align}
whenever $V(a,b|x,y)=0$. As shown in \cite[Theorem 5.5]{PSSTW16}, if $S$ is perfect for $\mcG$, then 
\begin{align}
    M^x_a\ket{\psi}=N^x_a\ket{\psi}\label{eq:M=N}
\end{align}
for all $a\in A$ and $b\in B$.

\begin{proposition}\label{prop:synpair}
    Suppose $\mcG=(X,A,V)$ is a synchronous game with $w_{qc}(\mcG)=1$. Then $\mcG$ has a determining pair $(\Gamma,\mcR)$ where
    \begin{itemize}
        \item $\gamma^x_a=n^x_a$ for all $a\in A$ and $x\in X$, and
        \item $\mcR=\{n^x_an^y_b:V(a,b|x,y)=0      \}$.
    \end{itemize}
\end{proposition}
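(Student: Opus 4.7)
The plan is to verify Conditions (i) and (ii) of \Cref{def:pair} for the given pair. Condition (i) is essentially automatic: the generators $n^x_a$ of $\PVM^{X,A}$ already form a PVM for each fixed $x$ by definition, and these relations descend to the quotient, so $\{q(n^x_a)\}_{a\in A}$ is a PVM in $C^*(\mcG)$.

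For the ``only if'' direction of (ii), I would pass from an optimal state $f$ to a GNS representation $(\mcH,\pi,\ket{\psi})$, which yields a perfect commuting operator strategy with $M^x_a := \pi(m^x_a\otimes 1)$ and $N^x_a := \pi(1\otimes n^x_a)$. Applying \Cref{eq:M=N} directly gives part~(a) as $\pi(m^x_a\otimes 1)\ket{\psi} = \pi(1\otimes n^x_a)\ket{\psi}$. To obtain the tracial property of $\tau := f|_{1\otimes \PVM^{X,A}}$, I plan to bootstrap (a) to the word-level identity $\pi(1\otimes w)\ket{\psi} = \pi(w^*\otimes 1)\ket{\psi}$ (proved by induction on word length using Alice--Bob commutativity and the identification $m^x_a\leftrightarrow n^x_a$), and then combine its ket and bra forms with Alice--Bob commutativity to swap arbitrary words under $\tau$. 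The kernel condition is direct: for $r=n^x_a n^y_b$ with $V(a,b|x,y)=0$, $(n^y_b)^2 = n^y_b$ plus tracing reduces $\tau(r^*r)$ to $\tau(n^x_a n^y_b) = \bra{\psi}M^x_a N^y_b\ket{\psi}$, which vanishes by \Cref{eq:MN0}.

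For the ``if'' direction, I would use (a) in the GNS picture together with Alice--Bob commutativity to rewrite
\begin{align*}
f(m^x_a\otimes n^y_b) = \bra{\psi}\pi(1\otimes n^y_b)\pi(m^x_a\otimes 1)\ket{\psi} = \bra{\psi}\pi(1\otimes n^y_b n^x_a)\ket{\psi} = \tau(n^x_a n^y_b),
\end{align*}
where the last equality uses the trace property in (b). When $V(a,b|x,y)=0$, the same manipulation as above reduces $\tau(n^x_a n^y_b)$ to $\tau(r^*r)=0$ by (b), so those terms vanish; completing the sum and using $\sum_a n^x_a = \sum_b n^y_b = 1$ collapses the winning probability to $\tfrac{1}{|X|^2}\sum_{x,y}\tau(1) = 1$, which matches $w_q(\mcG)$ under the standing hypothesis. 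The main obstacle is the tracial-property derivation in the ``only if'' direction: the hypothesis (a) only gives a one-sided intertwiner at the level of single generators, and one must carefully interleave its ket and bra forms with Alice--Bob commutativity to swap arbitrary words $w_1,w_2$ under $\tau$; the rest of the argument is essentially direct computation.
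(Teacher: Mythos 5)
Your proposal is correct and tracks the paper's own argument essentially line for line: Condition (i) is immediate from the PVM relations descending to the quotient; the ``only if'' direction uses the GNS construction and the PSSTW identities \Cref{eq:M=N,eq:MN0}, bootstraps the single-generator identity to arbitrary words by interleaving Alice--Bob commutativity with self-adjointness of the $M^x_a$, and handles the kernel relations by reducing $\tau(r^*r)$ to $\tau(n^x_a n^y_b)$ via $(n^y_b)^2 = n^y_b$ and traciality; and the ``if'' direction reverses the same computation to show $\bra{\psi}M^x_aN^y_b\ket{\psi}=0$ on losing tuples. The only point worth flagging (which is latent in the paper's own statement as well) is that your final line asserting the derived value $1$ ``matches $w_q(\mcG)$'' quietly upgrades the hypothesis $w_{qc}(\mcG)=1$ to $w_q(\mcG)=1$; the argument really characterizes states with $f(\Phi_\mcG)=1$, so the footnoted convention that ``optimal'' means achieving $w_q$ is coherent here only under that stronger reading.
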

\begin{proof}
    Since every $\{n^x_a:a\in A\}$ is a PVM in $\PVM^{X,A}$, every $\gamma^x_a$ is self-adjoint in $\PVM^{X,A}$ and every $\{\gamma^x_a:a\in A\}$ is a PVM in $C^*(\mcG):=\PVM^{X,A}\slash\ang{\mcR}$.  Condition (i) of \Cref{def:pair} holds. 
    
    Next we examine condition (ii). Let $f$ be a state on $\PVM^{X,A}\otimes\PVM^{Y,B}$, and let $(\mcH,\pi,\ket{\psi})$ be a GNS representation for $f$. Let $M^x_a:=\pi(m^x_a),N^x_a:=\pi(n^x_a)$ for all $a\in A,x\in X$, and let $S:=(\mcH,\{M^x_a\},\{N^y_b\},\ket{\psi})$ be a projective commuting operator strategy for $\mcG$

    Suppose $f$ is perfect for $\mcG$. Then $S$ is  perfect for $\mcG$. \Cref{eq:M=N}
    implies that
    \begin{align*}
        f\big((m^x_a\otimes 1-1\otimes \gamma^x_a)^2    \big)=\norm{\left(M^x_a-N^x_a\right)\ket{\psi}}^2=0
    \end{align*}
    for all $a\in A$ and $x\in X$. Hence $f$ satisfies part (b) of condition (ii) in \Cref{def:pair}. Let $w_1=n^{x_1}_{a_1}\cdots n^{x_k}_{a_k}$ and $w_2=n^{y_1}_{b_1}\cdots n^{y_\ell}_{b_\ell}$ be two monomials in $\PVM^{X,A}$. Note that every $M^x_a$ is self-adjoint and commutes with every $N^y_b$. By \Cref{eq:M=N} again, 
    \begin{align*}
        f(1\otimes w_2w_1)&=\bra{\psi}N^{y_1}_{b_1}\cdots N^{y_\ell}_{b_\ell}N^{x_1}_{a_1}\cdots N^{x_k}_{a_k}\ket{\psi}=\bra{\psi}M^{x_k}_{a_k}\cdots M^{x_1}_{a_1}N^{y_1}_{b_1}\cdots N^{y_\ell}_{b_\ell}  \ket{\psi}\\
        &=\bra{\psi}N^{x_1}_{a_1}\cdots N^{x_k}_{a_k}N^{y_1}_{b_1}\cdots N^{y_\ell}_{b_\ell}N^{x_1}_{a_1}\cdots N^{x_k}_{a_k}\ket{\psi}=f(1\otimes w_1w_2).
    \end{align*}
Since $w_1$ and $w_2$ were arbitrary and monomials over $\{n^y_b:y\in X,b\in A\}$ form a dense subset of $\PVM^{X,A}$, the linearly functional $\tau:=f|_{1\otimes \PVM^{X,A}}$ is a tracial state on $\PVM^{X,A}$. For any relation $r=n^x_an^y_b$ in $\mcR$, by \Cref{eq:MN0,eq:M=N},     
\begin{align*}
\tau(r^*r)=\tau(n^y_bn^x_an^x_an^y_b)=\tau(n^x_an^y_b)=\bra{\psi}N^x_aN^y_b\ket{\psi}=\bra{\psi}M^x_aN^y_b\ket{\psi}=0.
\end{align*}
We conclude that $f$ satisfies condition (ii) in \Cref{def:pair}.

Now suppose $f$ satisfies condition (ii) in \Cref{def:pair}. By part (a) of condition (ii), the strategy $S$ satisfies \Cref{eq:M=N} for all $a\in A$ and $x\in X$. Part (b) of condition (ii) implies that $\tau:=f|_{1\otimes\PVM^{X,A}}$ is a tracial state satisfying
\begin{align*}
    0=\tau\big((n^x_an^y_b)^*(n^x_an^y_b)\big)=\tau(n^x_an^y_b)=\bra{\psi}N^y_aN^y_b\ket{\psi}\ket{\psi}=\bra{\psi}M^x_aN^y_b\ket{\psi}.
\end{align*}
whenever $V(a,b|x,y)=0$. This implies $S$ is perfect for $\mcG$, and hence $f$ is perfect for $\mcG$.
\end{proof}
We refer to $C^*(\mcG):=\PVM^{X,A}\slash\ang{\mcR}$, with $\mcR$ defined in \Cref{prop:synpair}, as the \emph{synchronous algebra} for $\mcG$. 

Note that for any synchronous correlation $p\in C_{qc}^{syn}(X,A)$, one can define a synchronous game $\mcG=(X,A,V)$ such that $p$ is perfect for $\mcG$. So  \Cref{thm:correlationtrace} applies to all synchronous correlations.
\begin{theorem}\label{thm:syntrace}
    Suppose a synchronous quantum correlation $p\in C_q^{syn}(X,A)$ is an extreme point in $C_q$. Then $p$ is a self-test for quantum models if and only there is a unique finite-dimensional tracial state on $\PVM^{Y,B}$ such that $\tau(n^x_an^y_b)=p(a,b|x,y)$ for all $a,b,x,y$. 
\end{theorem}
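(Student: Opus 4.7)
The plan is to derive this theorem as an immediate corollary of \Cref{thm:correlationtrace} by exhibiting a determining pair. Given a synchronous quantum correlation $p \in C_q^{syn}(X,A)$, I first associate to $p$ the synchronous game $\mcG = (X, A, V)$ defined by $V(a,b|x,y) = 0$ if and only if $p(a,b|x,y) = 0$ (note the synchronicity condition $V(a,a'|x,x)=0$ for $a\neq a'$ holds automatically because $p$ is synchronous). By construction $p$ is a perfect quantum correlation for $\mcG$, so $w_q(\mcG) = w_{qc}(\mcG) = 1$. Hence \Cref{prop:synpair} applies and supplies the determining pair $(\Gamma, \mcR)$ with $\gamma^x_a = n^x_a$ and $\mcR = \{n^x_a n^y_b : V(a,b|x,y) = 0\}$.

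Next I invoke \Cref{thm:correlationtrace}. Since $p$ is an extreme point of $C_q$ and is optimal (in fact perfect) for $\mcG$, the equivalence (a)$\Leftrightarrow$(c) in that theorem says that $p$ is a self-test for all quantum models if and only if there is a unique finite-dimensional tracial state $\tau$ on $\PVM^{Y,B}$ satisfying $\tau(r^*r) = 0$ for all $r \in \mcR$ together with $\tau(\gamma^x_a n^y_b) = p(a,b|x,y)$ for every $a,b,x,y$. Substituting $\gamma^x_a = n^x_a$, the latter family of conditions becomes exactly the desired $\tau(n^x_a n^y_b) = p(a,b|x,y)$.

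It remains to observe that the relations $\tau(r^*r) = 0$ are automatic consequences of the moment conditions. Indeed, for any $r = n^x_a n^y_b \in \mcR$ we have $V(a,b|x,y) = 0$ by definition, so $p(a,b|x,y) = 0$ and hence $\tau(n^x_a n^y_b) = 0$. Since $n^x_a$ and $n^y_b$ are self-adjoint projections in $\PVM^{X,A} = \PVM^{Y,B}$ and $\tau$ is tracial,
\begin{align*}
\tau(r^*r) = \tau(n^y_b n^x_a n^x_a n^y_b) = \tau\bigl((n^x_a)^2 (n^y_b)^2\bigr) = \tau(n^x_a n^y_b) = 0,
\end{align*}
where the second equality uses the trace property $\tau(uv) = \tau(vu)$ twice, and the third uses idempotency of projections. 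Thus the two sets of constraints collapse to a single one, and the theorem follows from \Cref{thm:correlationtrace}.

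The only potentially subtle point is the redundancy computation at the end, but this is a short trace calculation. The entire argument is structural: all the substantive content has already been done inside \Cref{prop:synpair} (which checks the determining-pair axioms for synchronous games) and \Cref{thm:correlationtrace} (which sets up the correlation-to-tracial-state dictionary).
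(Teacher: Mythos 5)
Your proof is correct and follows essentially the same route as the paper: define the synchronous game $\mcG$ from $p$ via its support, extract the determining pair from \Cref{prop:synpair}, invoke equivalence (a)$\Leftrightarrow$(c) of \Cref{thm:correlationtrace}, and note that the relation constraints $\tau(r^*r)=0$ collapse into the moment conditions by a one-line trace/projection calculation. The only cosmetic remark is that your trace identity $\tau(n^y_b n^x_a n^x_a n^y_b)=\tau((n^x_a)^2(n^y_b)^2)$ needs only a single cyclic move, not two.
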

\begin{proof}
    Let $\mcG=(X,A,V)$ be the nonlocal synchronous game defined by $V(a,b|x,y)=0$ if and only if $p(a,b|x,y)=0$. Then $p$ is a perfect quantum correlation for $\mcG$. Let $(\Gamma,\mcR)$ be the determining pair of $\mcG$ defined in \Cref{prop:synpair}. Then a finite-dimensional tracial state $\tau$ on $\PVM^{X,A}$ is in $T_{\fin}^{p}(\Gamma,\mcR)$ if and only if 
    \begin{enumerate}[(i)]
        \item $\tau(n^x_an^y_b)=p(a,b|x,y)$ for all $a,b,x,y$, and 
        \item $\tau\big((n^x_an^y_b)^*(n^x_an^y_b)\big)=\tau(n^x_an^y_b)=0$ whenever $V(a,b|x,y)=0$.
    \end{enumerate}
    Note that (ii) is equivalent to $\tau(n^x_an^y_b)=0$ whenever $p(a,b|x,y)=0$, and this is already implied by (i). So $T_{\fin}^{(p)}(\Gamma,\mcR)$ consists of all finite-dimensional tracial states that satisfy (i). The theorem follows from (a)$\Leftrightarrow$(c) in \Cref{thm:correlationtrace}.
\end{proof}

 In terms of self-testing for synchronous games, \Cref{thm:gametrace} immediately implies the following characterization.
    \begin{theorem}\label{syngameselftest}
        Let $\mcG=(X,A,V)$ be a synchronous game, and let $C^*(\mcG)$ be the associated synchronous algebra. The following statements are equivalent.
        \begin{enumerate}[(a)]
            \item $\mcG$ is a self-test for its perfect quantum strategies.
            \item $C^*(\mcG)$ has a unique finite-dimensional tracial states.
            \item $C^*(\mcG)$ has a unique finite-dimensional irreducible $*$-representation.
        \end{enumerate}
    \end{theorem}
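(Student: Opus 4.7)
The plan is to deduce this theorem as a direct corollary of the general tracial-state characterization in Theorem \ref{thm:gametrace}, using the explicit determining pair for synchronous games furnished by Proposition \ref{prop:synpair}. Since ``self-test for perfect quantum strategies'' presupposes that perfect strategies exist, I would first dispose of the degenerate case $w_q(\mcG) < 1$: here neither an ideal perfect strategy nor (by the correspondence in Proposition \ref{prop:extendtracegame} applied to a synchronous determining pair) any finite-dimensional tracial state on $C^*(\mcG)$ can exist, so all three statements hold vacuously in the same uniform way.

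The main case is $w_q(\mcG) = 1$, which forces $w_{qc}(\mcG) = 1$ since $w_q(\mcG) \leq w_{qc}(\mcG) \leq 1$. I would then invoke Proposition \ref{prop:synpair} to exhibit the determining pair $(\Gamma, \mcR)$ given by $\gamma^x_a = n^x_a$ and $\mcR = \{n^x_a n^y_b : V(a,b|x,y) = 0\}$. The associated game algebra $\PVM^{X,A}/\langle \mcR \rangle$ in the sense of Definition \ref{def:pair} then coincides with the synchronous algebra $C^*(\mcG)$ appearing in the statement, and ``optimal quantum strategies'' in the sense of Theorem \ref{thm:gametrace} coincides with ``perfect quantum strategies'' because $w_q(\mcG) = 1$.

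With these identifications in place, the three equivalences (a) $\Leftrightarrow$ (b) $\Leftrightarrow$ (c) are exactly the equivalences (a) $\Leftrightarrow$ (c) $\Leftrightarrow$ (d) of Theorem \ref{thm:gametrace} applied to $\mcG$ with this determining pair, bypassing the intermediate statement (b) of that theorem about uniqueness of the finite-dimensional optimal state on $\PVM^{X,A} \otimes_{\min} \PVM^{Y,B}$.

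There is essentially no technical obstacle: the substantive work is already carried by Proposition \ref{prop:synpair} (verifying the determining-pair axioms using the Paulsen--Severini--Stahlke--Todorov--Winter identities \eqref{eq:MN0} and \eqref{eq:M=N}) and by Theorem \ref{thm:gametrace} (the abstract correspondence between tracial states on $C^*(\mcG)$, finite-dimensional irreducible representations, and self-testing). The only mild point to keep in mind is to make sure the reader sees that the synchronous algebra as defined after Proposition \ref{prop:synpair} literally agrees with the general $C^*(\mcG) := \PVM^{Y,B}/\langle \mcR \rangle$ of Definition \ref{def:pair} under the identification $X = Y$, $A = B$, so no separate verification is required.
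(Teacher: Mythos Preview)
Your approach is essentially the same as the paper's, which simply states that \Cref{thm:gametrace} immediately implies the result once one has the determining pair from \Cref{prop:synpair}; your identification of the synchronous algebra with the general game algebra and of ``perfect'' with ``optimal'' when $w_q(\mcG)=1$ is exactly what is being used implicitly. Your treatment of the degenerate case $w_q(\mcG)<1$ is a reasonable addition the paper glosses over, though note two small points: the three statements are then all \emph{false} rather than ``vacuously true'' (which still gives the equivalence), and invoking \Cref{prop:extendtracegame} there is slightly circular since that proposition presupposes a determining pair---it is cleaner to argue directly that any finite-dimensional representation of $C^*(\mcG)$ yields a perfect finite-dimensional strategy via the maximally-entangled-state construction, forcing $w_q(\mcG)=1$.
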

    
\subsection{Self-testing for XOR games}\label{subsec:XOR}
Recall that a nonlocal game $\mcG=(I,J,A,B,\mu,V)$ is an XOR game if $A=B=\{0,1\}$, and there is a matrix $(t_{ij})\in \{0,1\}^{I\times J}$ such that
\begin{align*}
    V(a,b|i,j)=\begin{cases}
        1 & \text{ if } a\oplus b=t_{ij}\\
        0 & \text { otherwise}.
    \end{cases}
\end{align*}
We also define the cost matrix $(\omega_{ij})$ for $\mcG$ where $\omega_{ij}:=(-1)^{t_{ij}}\mu(i,j)$. We always assume $\mu$ is non-degenerate, meaning that $\mu(i,j)\neq 0$ for all $i\in X,j\in Y$. So the cost matrix completely determines $(t_{ij})$ and $\mu$. We use $\mcG=\big(I,J,(\omega_{ij})\big)$ to denote an XOR game. When working with XOR games, we usually use binary observable generators $x_i:=m^i_0-m^i_1,i\in I$ for $\PVM^{I,\{0,1\}}$ and $y_j:=n^j_0-n^j_1,j\in J$ for $\PVM^{J,\{0,1\}}$.

Every XOR game $\mcG=\big(I,J,(\omega_{ij})\big)$ has associated marginal row biases $\{r_i;i\in I\}$ and associated column biases $\{c_j:j\in J\}$, where $r_i> 0$ and $c_j>0$\footnote{In general, if $\mu$ is degenerate, then the cost matrix $(\omega_{ij})$ may have a all-zero row or column and there could be a zero row or column bias.} for all $i,j\in I$.  Let $f$ be state on $\PVM^{I,\{0,1\}}\otimes\PVM^{J,\{0,1\}}$, let $(\mcH,\pi,\ket{\psi})$ be a GNS representation for $f$, and let
\begin{align*}
    X_i:=\pi\big(x_i\otimes 1\big),i\in I \text{ and }Y_j:=\pi\big(1\otimes y_j\big),j\in J
\end{align*}
be the binary observables employed by Alice and Bob. Then $\{X_i\ket{\psi}:i\in I\},\{Y_j\ket{\psi}:j\in J\}$ is a vector strategy for $\mcG$. As shown in \cite[Corollary 3.2]{Slof11},  $f$ is optimal for $\mcG$ if and only if
\begin{align}
    \frac{1}{r_i}\sum_{j\in J} \omega_{ij} Y_j\ket{\psi}=X_i\ket{\psi}\label{eq:row}
\end{align}
for all $i\in I$. \Cref{eq:row} can also be replaced by
\begin{align}
    \frac{1}{c_j}\sum_{i\in I}\omega_{ij}X_i\ket{\psi}=Y_j\ket{\psi}\label{eq:column}
\end{align}
for all $j\in J$.

\begin{proposition}\label{prop:XORpair}
    Let $\mcG=\big(I,J,(\omega_{ij})\big)$ be an XOR game, and let $\{r_i:i\in I\}$ be the associated row biases. Then $\mcG$ has a determining pair $(\Gamma,\mcR)$, where $\Gamma=\{\gamma^i_a:i\in I,a\in\{0,1\}\}$ is given by
    \begin{align}
        \gamma^i_a=\frac{1}{2}\Big(1-\frac{1}{r_i}\sum_{j\in J} \omega_{ij}(1-2n^j_a) \Big),\label{eq:gamma1}
    \end{align}
    and the relations $\mcR$ consists of
    \begin{align}
        \big(\sum_{j\in J} \omega_{ij}(n^j_0-n^j_1)\big)^2-r_i^2
    \end{align}
    for all $i\in I$.
 \end{proposition}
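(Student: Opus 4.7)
The plan is to verify the two conditions of Definition~5.1 by exploiting the binary-observable form of Bob's generators. Set $y_j := n^j_0 - n^j_1 \in \PVM^{J,\{0,1\}}$ and
\[
Z_i := \frac{1}{r_i}\sum_{j \in J}\omega_{ij}\,y_j.
\]
Because $n^j_0+n^j_1=1$ gives $1-2n^j_a = (-1)^{a+1}y_j$, the hypothesis rewrites as $\gamma^i_a = \tfrac12(1+(-1)^a Z_i)$, and the relation in $\mcR$ becomes $r_i^{2}(Z_i^{2}-1)$. Letting $q:\PVM^{J,\{0,1\}}\to C^*(\mcG)$ be the quotient map, we see that $q(Z_i)$ is a self-adjoint unitary, from which $q(\gamma^i_a)^2=q(\gamma^i_a)$, $q(\gamma^i_0)q(\gamma^i_1)=0$, and $q(\gamma^i_0)+q(\gamma^i_1)=1$. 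So $\{q(\gamma^i_a):a\in\{0,1\}\}$ is a PVM in $C^*(\mcG)$, verifying Condition~(i).

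For Condition~(ii), writing $x_i:=m^i_0-m^i_1$, so that $m^i_a = \tfrac12(1+(-1)^a x_i)$, I would use the key identity
\[
m^i_a\otimes 1 - 1\otimes \gamma^i_a \;=\; \frac{(-1)^a}{2}\bigl(x_i\otimes 1 - 1\otimes Z_i\bigr).
\]
Hence for any state $f$ with GNS triple $(\mcH,\pi,\ket{\psi})$, the condition $f((m^i_a\otimes 1 - 1\otimes \gamma^i_a)^2)=0$ for all $a\in\{0,1\}$ is equivalent to the single vector equation
\[
X_i\ket{\psi} \;=\; \frac{1}{r_i}\sum_{j}\omega_{ij}\,Y_j\ket{\psi},
\qquad X_i:=\pi(x_i\otimes 1),\ Y_j:=\pi(1\otimes y_j),
\]
which is precisely equation~(5.14). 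By \cite[Corollary~3.2]{Slof11}, this equation holding for every $i\in I$ is equivalent to $f$ being optimal for $\mcG$. This already gives the ``if'' direction of Condition~(ii), since the second clause~(b) trivially includes no extra information beyond what is needed.

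For the ``only if'' direction, assume $f$ is optimal. The clause~(a) of Condition~(ii) follows from the above equivalence. For clause~(b), vanishing on $\mcR$ reduces to $\tau\bigl((Z_i^{2}-1)^{*}(Z_i^{2}-1)\bigr)=0$, and a short computation using that $X_i$ and $\pi(1\otimes Z_i)$ commute (they sit in commuting tensor factors) and $X_i^{2}=1$ yields
\[
\pi(1\otimes Z_i)^{2}\ket{\psi}
\;=\;\pi(1\otimes Z_i)X_i\ket{\psi}
\;=\;X_i\pi(1\otimes Z_i)\ket{\psi}
\;=\;X_i^{2}\ket{\psi}
\;=\;\ket{\psi},
\]
so $\pi(1\otimes(Z_i^{2}-1))\ket{\psi}=0$, giving $\tau(r^*r)=0$ for $r=r_i^{2}(Z_i^{2}-1)\in\mcR$. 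The remaining, and main, task is to show that $\tau:=f|_{1\otimes\PVM^{J,\{0,1\}}}$ is tracial. Here I plan to invoke the algebraic characterization of optimal XOR strategies from \cite{Slof11} (see also \cite{HMPS19,Pad22}): using both the row equation (5.14) and its column counterpart (5.15), one shows that $\pi_A$ and $\pi_B$ generate mutually opposite actions on the cyclic vector $\ket{\psi}$, and the resulting symmetry between left- and right-multiplication on $\PVM^{J,\{0,1\}}$ forces $\tau(ab)=\tau(ba)$. The bulk of the work in the proof is this traciality step; the PVM structure in $C^*(\mcG)$ and the equivalence between clause~(a) and equation (5.14) are then routine consequences of the substitution $\gamma^i_a\leftrightarrow Z_i$.
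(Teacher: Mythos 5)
Your proposal has the same skeleton as the paper's proof: verify Condition~(i) by rewriting the $\mathcal{R}$-relation as $(\gamma^i_0-\gamma^i_1)^2=1$, identify clause~(a) of Condition~(ii) with the row equation \eqref{eq:row} via the binary-observable substitution, invoke Slofstra's Corollary~3.2 for the equivalence with optimality, and check that the $\mathcal{R}$-relation annihilates $\ket{\psi}$ by computing $\pi(1\otimes Z_i)^2\ket{\psi}=\ket{\psi}$ from \eqref{eq:row} and the commutativity of the two tensor factors. That last computation is in fact the cleanest rendering of what the paper has in mind.

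The place where the proposal does not carry the argument through is the traciality of $\tau:=f|_{1\otimes\PVM^{J,\{0,1\}}}$, which you yourself flag as ``the main task.'' Saying you will ``invoke the algebraic characterization of optimal XOR strategies from \cite{Slof11}'' is not sufficient: what Slofstra's Corollary~3.2 supplies is precisely the row/column equations \eqref{eq:row}--\eqref{eq:column}, which you already have by that point, and it does not say anything about $\tau$ being tracial. The traciality is an additional consequence that has to be derived in-line, and this is where the bulk of the paper's proof of Condition~(ii) lives. Concretely, set $\widehat{Y}_j := \frac{1}{c_j}\sum_{i\in I}\omega_{ij}X_i$, which lies in Alice's factor and hence commutes with every $Y_\ell$; the column equation gives $Y_j\ket{\psi}=\widehat{Y}_j\ket{\psi}$, and applying it one factor at a time to a Bob-monomial yields $Y_{j_1}\cdots Y_{j_\ell}\ket{\psi}=\widehat{Y}_{j_\ell}\cdots\widehat{Y}_{j_1}\ket{\psi}$ while commuting the $\widehat{Y}$'s harmlessly past any remaining $Y_i$'s. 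Taking inner products and using self-adjointness then gives $f(1\otimes W_1W_2)=f(1\otimes W_2W_1)$ for all monomials $W_1,W_2$ in the $y_j$'s, hence $\tau$ is tracial. Your phrase ``the resulting symmetry between left- and right-multiplication'' gestures at exactly this mechanism, but the explicit pushing-through-the-commutant computation is what actually does the work and needs to appear.
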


\begin{comment}
Note that by replacing the PVM generators with corresponding binary observable generators, we see that \Cref{eq:gamma0,eq:gamma1} are equivalent to
    \begin{align}
       \widehat{x}_i=\frac{1}{r_i}\sum_{j\in J} \omega_{ij}y_j, i\in I
    \end{align}
    where $\widehat{x}_i:= \gamma^i_0-\gamma^i_1$.
    %, and $\mcR$ consists of relations
\end{comment}

\begin{proof}
    Since every $n^j_a$ is self-adjoint in $\PVM^{J,\{0,1\}}$, every $\gamma^i_a$ is self-adjoint in  $\PVM^{J,\{0,1\}}$. \Cref{eq:gamma1} implies that 
    \begin{align*}
        \gamma^i_0-\gamma^i_1=\frac{1}{r_i}\sum_{j\in J} \omega_{ij}(n^j_0-n^j_1),
    \end{align*}
    so the relations in $\mcR$ can be expressed as $(\gamma^i_0-\gamma^i_1)^2-1,i\in I$. Hence every $\gamma^i_0-\gamma^i_1$ is a binary observable in $C^*(\mcG):=\PVM^{J,\{0,1\}}\slash\ang{\mcR}$. This means every $\{\gamma^i_0,\gamma^i_1\}$ is a PVM in $C^*(\mcG)$. Condition (i) of \Cref{def:pair} holds. 
    
    Now we examine 
    Condition (ii) of \Cref{def:pair}. Let $f$ be state on $\PVM^{I,\{0,1\}}\otimes\PVM^{J,\{0,1\}}$, let $(\mcH,\pi,\ket{\psi})$ be a GNS representation for $f$, and let $ X_i:=\pi\big(x_i\otimes 1\big),Y_j:=\pi\big(1\otimes y_j\big)$, and $\widehat{Y}_j:=\frac{1}{c_j}\sum_{i\in I}\omega_{ij}X_i$
    for all $i\in I,j\in J$, where $\{c_j:j\in J\}$ is the column biases.
    
    Suppose $f$ is optimal. Then $X_i,Y_j$ satisfy \Cref{eq:row,eq:column}. This implies
    \begin{align*}
        \pi\big(1\otimes (\gamma^i_0-\gamma^i_1)\big)\ket{\psi}=\frac{1}{r_j}\sum_{j\in J}w_{ij}Y_j\ket{\psi}=X_i\ket{\psi}
    \end{align*}
    for all $i\in I$. Replacing the binary observables with the corresponding PVM elements in the above equation, we have
\begin{align*}
    \pi(1\otimes \gamma^i_a)\ket{\psi}=\pi(m^i_a\otimes 1)\ket{\psi}.
\end{align*}
It follows that
\begin{align*}
    f\left( \left(m^i_a\otimes 1- 1\otimes\gamma^i_a\right)^2  \right)=0
\end{align*}
for all $i\in I$ and $a\in \{0,1\}$. Hence $f$ satisfies part (a) of condition (ii). Note that $\widehat{Y}_j,Y_\ell$ are self-adjoint and  $[\widehat{Y}_j,Y_\ell]=0$ for all $j,\ell\in J$. Then for any monomials  $W_1:=y_{i_1}\cdots y_{i_k}$ and $W_2:=y_{j_1}\cdots y_{j_\ell}$ in $\PVM^{J,\{0,1\}}$, \Cref{eq:column} implies that 
\begin{align*}
f(1\otimes W_1W_2)&=\bra{\psi}Y_{i_1}\cdots Y_{i_k}Y_{j_1}\cdots Y_{j_\ell}\ket{\psi}=\bra{\psi}\widehat{Y}_{j_\ell}\cdots\widehat{Y}_{j_1}Y_{i_1}\cdots Y_{i_k}\ket{\psi}\\
&=\bra{\psi}Y_{j_1}\cdots Y_{j_\ell}Y_{i_1}\cdots Y_{i_k}\ket{\psi}=f(1\otimes W_2W_1).
\end{align*}
It follows that $\tau:=f|_{1\otimes \PVM^{J,\{0,1\}}}$ is a tracial state. For any relation $s=\big(\sum_{j\in J}\omega_{ij}y_i\big)^2-r_i^2\in \mcR$, by \Cref{eq:row},
\begin{align*}
    \pi(1\otimes s)\ket{\psi}&=r_i^2\big(\widehat{Y}_j^2\ket{\psi} -\ket{\psi}\big)=r_i^2\big(\widehat{Y}_jY_j\ket{\psi} -\ket{\psi}\big)\\
&=r_i^2\big(Y_j\widehat{Y}_j\ket{\psi} -\ket{\psi}\big)=r_i^2\big(Y_j^2\ket{\psi} -\ket{\psi}\big) =0.
\end{align*}
The last equality uses the fact that $Y_j^2=\Id$. Hence $\tau(s^*s)=\bra{\psi}\pi(1\otimes s^*s)\ket{\psi}=0$ for all $s\in \mcR$. We conclude that $f$ satisfies part (b) condition (ii) in \Cref{def:pair}.

On the other hand, suppose $f$ satisfies parts (a) and (b) in condition (ii). Then in part (a), by replacing the PVM elements with the corresponding binary observables, we see that \Cref{eq:row} holds. This means $f$ is optimal. We conclude that $(\Gamma,\mcR)$ satisfies condition (ii) in \Cref{def:pair}. Hence $(\Gamma,\mcR)$ is a determining pair for $\mcG$.
\end{proof}

We refer to $C^*(\mcG):=\PVM^{J,\{0,1\}}\slash\ang{\mcR}$, with $\mcR$ defined in \Cref{prop:XORpair}, as the \emph{solution algebra} for $\mcG$. In \cite[Proposition 2.8]{Slof11}, Slofstra shows that $\mcG$ has a unique optimal quantum correlation $p$ if and only if the solution algebra for $\mcG$ is Clifford. Furthermore, the associated XOR correlation $c$ with $p$ has rank $r$ if and only if the corresponding Clifford algebra is of rank $r$. Here the Clifford algebra $Cl_r$ of rank $r$ refers to the universal $C^*$-algebra generated by indeterminates $x_1,\cdots,x_r$ and subject to the relations $x_ix_j+x_jx_i=2\delta_{ij},1\leq i,j\leq n$. This algebra has either one (when $r$ is even) or two (when $r$ is odd) irreducible $*$-representations of dimension $2^{\lfloor r/2 \rfloor}$.

\begin{theorem}
    An XOR game $\mcG$ is a self-test for its optimal quantum strategies if and only if $\mcG$ has a unique optimal quantum correlation $p$ and the associated XOR correlation with $p$ has even rank.
\end{theorem}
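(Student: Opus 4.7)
The approach is to combine two results: the tracial-state characterization of self-testing for nonlocal games with determining pairs (\Cref{thm:gametrace}), and Slofstra's algebraic characterization of XOR games that possess a unique optimal quantum correlation (\cite[Proposition 2.8]{Slof11}). By \Cref{prop:XORpair} every XOR game $\mcG$ admits a determining pair, and its associated game algebra $C^*(\mcG)$ coincides with the solution algebra studied in \cite{Slof11}. Consequently, \Cref{thm:gametrace} identifies the self-testing property of $\mcG$ with the existence of a unique finite-dimensional irreducible $*$-representation of $C^*(\mcG)$ up to unitary equivalence.

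For the forward direction, suppose $\mcG$ is a self-test for its optimal quantum strategies. Then \Cref{lemma:uniqueCq} yields a unique optimal quantum correlation $p$, so by \cite[Proposition 2.8]{Slof11} the solution algebra is isomorphic to a Clifford algebra: $C^*(\mcG) \cong Cl_r$, where $r$ is the rank of the XOR correlation associated with $p$. Meanwhile, \Cref{thm:gametrace}(d) forces $C^*(\mcG)$ to have only one finite-dimensional irreducible $*$-representation up to equivalence. Since $Cl_r$ has one irreducible representation when $r$ is even and two (both of dimension $2^{\lfloor r/2\rfloor}$) when $r$ is odd, the rank $r$ must be even.

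For the converse, suppose $\mcG$ has a unique optimal quantum correlation $p$ whose associated XOR correlation has even rank $r$. Again by \cite[Proposition 2.8]{Slof11} we have $C^*(\mcG) \cong Cl_r$, and for even $r$ this algebra is the full matrix algebra $M_{2^{r/2}}(\C)$, which trivially admits a unique finite-dimensional irreducible $*$-representation. Applying (d)$\Rightarrow$(a) of \Cref{thm:gametrace} shows that $\mcG$ is a self-test for its optimal quantum strategies. The only subtlety is to confirm at the outset that the quotient algebra defined in \Cref{prop:XORpair} is literally the solution algebra of \cite{Slof11}; once this identification is in hand, the remainder is a one-line invocation of the classification of irreducible representations of Clifford algebras and I do not anticipate any genuine obstacle.
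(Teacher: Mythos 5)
Your proposal is correct and matches the paper's own argument essentially step for step: apply \Cref{lemma:uniqueCq} to get a unique optimal quantum correlation, invoke Slofstra's \cite[Proposition 2.8]{Slof11} to identify $C^*(\mcG)$ with $Cl_r$, and conclude via the equivalence (a)$\Leftrightarrow$(d) in \Cref{thm:gametrace} together with the parity-dependent count of irreducible representations of Clifford algebras. The identification of the quotient in \Cref{prop:XORpair} with Slofstra's solution algebra, which you flag as a remaining check, is exactly the convention the paper fixes immediately before stating this theorem, so it is not a gap.
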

\begin{proof}
For the ``only if" direction, suppose $\mcG$ is a self-test for its optimal quantum strategies. Then by \Cref{lemma:uniqueCq}, $\mcG$ has a unique optimal quantum correlation $p$. Let $r$ be the rank of the associated XOR correlation with $p$. By \Cref{thm:gametrace}, the solution algebra $Cl_r$ for $\mcG$ has a unique finite-dimensional irreducible $*$-representation. So $r$ must be even.

For the ``if" direction, suppose $\mcG$ has a unique optimal quantum correlation $p$ and the associated XOR correlation with $p$ has even rank. This implies the solution algebra $Cl_r$ for $\mcG$ has a unique finite-dimensional irreducible $*$-representation. By \Cref{thm:gametrace} again, $\mcG$ is a self-test for its optimal quantum strategies.
\end{proof}

\section{Tracial states and robust self-tests}\label{sec:tracerobustselftest}

Suppose a nonlocal game $\mcG$ has a determining pair $(\Gamma,\mcR)$, and let $C^*(\mcG)$ be the associated game algebra. In this section, we show that if $(\Gamma,\mcR)$ is robust in a certain way, then the robust self-testing property of $\mcG$ can be characterized by amenable tracial states on $C^*(\mcG)$. 

\subsection{Robustness of determining pairs}
\begin{definition}\label{def:stablepair}
A determining pair $(\Gamma,\mcR)$ for a nonlocal game $\mcG$ is $\nu$-robust (or simply called robust) if there exists a function $\nu:\R_{\geq 0}\arr\R_{\geq 0}$ with $\nu(\eps)\arr 0$ as $\eps\arr 0$ such that for any $\eps\geq 0$ and any projective $\eps$-optimal quantum strategy
    \begin{align*}
        \models
    \end{align*}
   for $\mcG$,
    \begin{enumerate}[(i)]
        \item $\norm{\big(\pi_A(m^x_a)\otimes\Id-\Id\otimes\pi_B(\gamma^x_a)    \big)\ket{\psi}}\leq \nu(\eps)$ for all $x\in X,a\in A$, where $\pi_A\otimes\pi_B$ is the associated representation of $S$,
        \item $\norm{\pi_B(r)}_{\rho_B}\leq \nu(\eps)$ for all $r\in\mcR$, where $\rho_B:=\Tr_{\mcH_B}(\ket{\psi}\bra{\psi})$, and
        \item $\norm{\pi_B(n^y_b)\sqrt{\rho_B}-\sqrt{\rho_B}\pi_B(n^y_b)}_F\leq \nu(\eps)$ for all $y\in Y,b\in B$.
    \end{enumerate}
\end{definition}

Recall that the big Frobenius norm $\norm{\cdot}_F$ on $M_d(\C)$ is defined by $\norm{T}_F=\sqrt{\Tr(T^*T)}$, and the $\rho$-norm $\norm{\cdot}_\rho$ is defined by $\norm{T}_{\rho}=\sqrt{\Tr(T^*T\rho)}$. Intuitively, a determining pair $(\Gamma,\mcR)$ is robust for a game $\mcG$ if in any near-optimal strategy for $\mcG$, the action of Alice's measurement $m^x_a$ is approximately equal to the action of Bob's measurement $\gamma^x_a$, and Bob's measurements respect the relations in $\mcR$ approximately.

\begin{lemma}\label{lemma:approxrep}
    Suppose $\mcG$ is a nonlocal game with a $\nu$-robust determining pair $(\Gamma,\mcR)$. Let 
    \begin{align*}
        \models
    \end{align*}
    be a projective $\eps$-optimal quantum strategy for $\mcG$, and let $\phi:=f_S|_{1\otimes\PVM^{Y,B}}$ where $f_S$ is the state on $\PVM^{X,A}\otimes_{min}\PVM^{Y,B}$ induced by $S$.
    Then
    \begin{enumerate}[(a)]
        \item $\sqrt{\phi(r^*r)}\leq \nu(\eps)$ for all $r\in\mcR$, 
        \item $\norm{\pi_B(W)\sqrt{\rho_B}-\sqrt{\rho_B}\pi_B(W)}_F\leq \deg(W)\nu(\eps)$ for any monomial $W\in\C^*\ang{n^y_b:b\in B,y\in Y }$,
        \item $\abs{\phi(W_1W_2)-\phi(W_2W_1)}\leq \big(\deg(W_1)+\deg(W_2)\big)\nu(\eps)$ for any monomials $W_1,W_2\in\C^*\ang{n^y_b:b\in B,y\in Y }$, and
        \item $\abs{\phi(\gamma^x_an^y_b)-p_S(a,b|x,y)}\leq \nu(\eps)$ for all $a,b,x,y$.
    \end{enumerate}
\end{lemma}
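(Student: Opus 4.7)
The plan is to prove the four parts in order, each time invoking the appropriate clause of Definition~5.1, with part~(c) being the only piece requiring thought.

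Parts~(a) and~(d) are essentially one-line unpackings. For~(a), I would write
\begin{equation*}
\phi(r^*r) \;=\; \bra{\psi}\,\Id\otimes \pi_B(r)^*\pi_B(r)\,\ket{\psi} \;=\; \Tr\bigl(\pi_B(r)^*\pi_B(r)\rho_B\bigr) \;=\; \norm{\pi_B(r)}_{\rho_B}^2,
\end{equation*}
so condition~(ii) of Definition~5.1 delivers (a) immediately. For~(d), the commutativity of $\pi_A(m^x_a)\otimes\Id$ and $\Id\otimes\pi_B(n^y_b)$ lets me write
\begin{equation*}
p_S(a,b|x,y) - \phi(\gamma^x_a n^y_b) \;=\; \bra{\psi}\bigl(\pi_A(m^x_a)\otimes\Id - \Id\otimes\pi_B(\gamma^x_a)\bigr)\bigl(\Id\otimes\pi_B(n^y_b)\bigr)\ket{\psi},
\end{equation*}
and Cauchy--Schwarz together with condition~(i) and the bound $\norm{\Id\otimes\pi_B(n^y_b)\ket{\psi}}\leq 1$ completes~(d).

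For~(b), I would expand $W = u_1 u_2 \cdots u_k$ with $k = \deg(W)$ and each $u_i$ a single generator $n^{y_i}_{b_i}$. A telescoping identity gives
\begin{equation*}
\pi_B(W)\sqrt{\rho_B} - \sqrt{\rho_B}\,\pi_B(W) \;=\; \sum_{i=1}^{k} \pi_B(u_1\cdots u_{i-1})\,\bigl[\pi_B(u_i),\sqrt{\rho_B}\bigr]\,\pi_B(u_{i+1}\cdots u_k).
\end{equation*}
Since $S$ is projective, each $\pi_B(u_j)$ is a PVM element with operator norm at most $1$; combining the submultiplicative estimate $\norm{ABC}_F \leq \norm{A}_{op}\norm{B}_F\norm{C}_{op}$ with condition~(iii) of Definition~5.1 bounds each summand by $\nu(\eps)$, and the triangle inequality finishes~(b).

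Part~(c) is the one step that requires care, and my plan is to reduce it to~(b) via a cyclic rewriting. Starting from
\begin{equation*}
\phi(W_1W_2) - \phi(W_2W_1) \;=\; \Tr\bigl(\pi_B(W_1)\pi_B(W_2)\rho_B\bigr) - \Tr\bigl(\pi_B(W_2)\pi_B(W_1)\rho_B\bigr),
\end{equation*}
the cyclic property turns the second trace into $\Tr\bigl(\pi_B(W_1)\rho_B\pi_B(W_2)\bigr)$; writing $\rho_B = \sqrt{\rho_B}\cdot\sqrt{\rho_B}$ and applying cyclicity once more to the first trace collapses the difference into a single commutator,
\begin{equation*}
\phi(W_1W_2) - \phi(W_2W_1) \;=\; \Tr\Bigl(\bigl[\sqrt{\rho_B},\,\pi_B(W_1)\bigr]\,\pi_B(W_2)\sqrt{\rho_B}\Bigr).
\end{equation*}
The inequality $\abs{\Tr(AB)} \leq \norm{A}_F\norm{B}_F$, part~(b) applied to $W_1$, and the estimate $\norm{\pi_B(W_2)\sqrt{\rho_B}}_F^2 = \Tr(\pi_B(W_2)^*\pi_B(W_2)\rho_B) \leq \Tr(\rho_B) = 1$ jointly yield $\abs{\phi(W_1W_2)-\phi(W_2W_1)} \leq \deg(W_1)\,\nu(\eps) \leq (\deg(W_1)+\deg(W_2))\,\nu(\eps)$. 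The main conceptual obstacle here is spotting this single-commutator identity: a naive Leibniz-rule split $[\pi_B(W_2),\rho_B] = [\pi_B(W_2),\sqrt{\rho_B}]\sqrt{\rho_B} + \sqrt{\rho_B}[\pi_B(W_2),\sqrt{\rho_B}]$ produces a valid but strictly worse constant, so the cleanliness of the final bound hinges on rewriting the difference against $W_1$ rather than expanding $[\pi_B(W_2),\rho_B]$ directly.
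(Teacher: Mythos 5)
Parts (a), (b), and (d) are correct and, modulo presentation, match the paper's argument: in (b) you telescope where the paper inducts on the degree, but the bound on each summand is identical, and (a), (d) are the same unpacking as the paper's.

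Part (c) has a genuine error. The ``single commutator'' identity
\begin{equation*}
\phi(W_1W_2)-\phi(W_2W_1)\;\stackrel{?}{=}\;\Tr\bigl([\sqrt{\rho_B},\pi_B(W_1)]\,\pi_B(W_2)\sqrt{\rho_B}\bigr)
\end{equation*}
is false. Write $A:=\pi_B(W_1)$, $B:=\pi_B(W_2)$, $\sigma:=\sqrt{\rho_B}$. The left side is $\Tr(\sigma AB\sigma)-\Tr(\sigma BA\sigma)$, while the right side is $\Tr(\sigma AB\sigma)-\Tr(A\sigma B\sigma)$; these differ by $\Tr(A\sigma[\sigma,B])$, which does not vanish when $B$ and $\sigma$ fail to commute. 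For a concrete failure take $A=B=\begin{pmatrix}0&1\\1&0\end{pmatrix}$ and $\sigma=\begin{pmatrix}1&0\\0&1/2\end{pmatrix}$: then $\phi(W_1W_2)-\phi(W_2W_1)=0$ because $A=B$, yet $\Tr([\sigma,A]A\sigma)=1/4\neq 0$. The missing term is exactly what produces the $\deg(W_2)$ contribution; without it you would be claiming the asymmetric bound $\min(\deg W_1,\deg W_2)\nu(\eps)$, which should itself be a warning sign. The correct decomposition is
\begin{equation*}
\phi(W_1W_2)-\phi(W_2W_1)\;=\;\Tr\bigl([\sigma,A]B\sigma\bigr)+\Tr\bigl(A\sigma[B,\sigma]\bigr),
\end{equation*}
with each trace bounded through $\abs{\Tr(ST)}\le\norm{S}_F\norm{T}_F$ and part (b), yielding $\deg(W_1)\nu(\eps)+\deg(W_2)\nu(\eps)$. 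This is, up to a sign convention, precisely the paper's two-term split. Your remark that the Leibniz-rule route gives ``a valid but strictly worse constant'' is also not right: symmetrizing over $W_1,W_2$ it gives $2\min(\deg W_1,\deg W_2)\nu(\eps)\leq(\deg W_1+\deg W_2)\nu(\eps)$, and the two-commutator bookkeeping above, which is just Leibniz rearranged, reproduces the lemma's constant exactly.
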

\begin{proof}
    Let $\pi_A\otimes\pi_B$ be the associated representation of $S$, and let $\rho_B:=\Tr_{\mcH_B}(\ket{\psi}\bra{\psi})$. For any $r\in \mcR$,
    \begin{align*}
\phi(r^*r)=\bra{\psi}\Id\otimes\pi_B(r^*r)\ket{\psi}=\norm{\pi_B(r)}^2_{\rho_B}\leq \nu(\eps)^2.
    \end{align*}
So (a) follows.

We prove (b) by induction on the degree $k$ of $W$. $k=1$ follows from condition (iii) of \Cref{def:stablepair}. Now suppose (b) holds for any monomial of degree $k$ for some $k\geq 1$. For any monomial $W=\alpha_1\cdots \alpha_{k}\alpha_{k+1}$ in $\PVM^{Y,B}$ of degree $k+1$, let $W':=\alpha_1\cdots \alpha_{k}$ be a monomial of degree $k$. Then by the induction hypothesis,
\begin{align*}
    \norm{\pi_B(W')\sqrt{\rho_B}-\sqrt{\rho_B}\pi_B(W')}_F\leq k\nu(\eps).
\end{align*}
Note that $\norm{\pi_B(W')}_{op}\leq 1$ and $\norm{\pi_B(\alpha_{k+1})}_{op}\leq 1$, it follows that
\begin{align*}
    \norm{\pi_B(W)\sqrt{\rho_B}-\sqrt{\rho_B}\pi_B(W)}_F  =& \norm{\pi_B(W')\pi_B(\alpha_{k+1})\sqrt{\rho_B}-\sqrt{\rho_B}\pi_B(W')\pi_B(\alpha_{k+1})}_F\\
    \leq& \norm{\pi_B(W')\big( \pi_B(\alpha_{k+1})\sqrt{\rho_B}-\sqrt{\rho_B}\pi_B(\alpha_{k+1}) \big)}_F\\
    &+ \norm{\big(\sqrt{\rho_B}\pi_B(W')-\pi_B(W')\sqrt{\rho_B}  \big)\pi_B(\alpha_{k+1})}_F\\
    \leq & \norm{\pi_B(W')}_{op}\norm{\pi_B(\alpha_{k+1})\sqrt{\rho_B}-\sqrt{\rho_B}\pi_B(\alpha_{k+1})}_F\\
    & + \norm{\pi_B(\alpha_{k+1})}_{op}\norm{\sqrt{\rho_B}\pi_B(W')-\pi_B(W')\sqrt{\rho_B} }_F\\
    \leq & \nu(\eps)+k\nu(\eps)= (k+1)\nu(\eps).
\end{align*}
So (b) holds for monomials of degree $k+1$. We conclude that (b) holds.

To prove (c), we first recall that for any matrices $S$ and $T$,
\begin{align*}
    \abs{\Tr(ST)}\leq \norm{S}_{op}\cdot \norm{T}_F.
\end{align*}
Note that $\norm{\sqrt{\rho_B}}_{op}\leq 1$, $\norm{\pi_B(W_1)}_{op}\leq 1$, and $\norm{\pi_B(W_2)}_{op}\leq 1$. By part (b) we have 
\begin{align*}
    \abs{\phi(W_1W_2)-\phi(W_2W_1)}=&\abs{\Tr\Big(\sqrt{\rho_B}\pi_B(W_1)\big(\pi_B(W_2)\sqrt{\rho_B}-\sqrt{\rho_B}\pi_B(W_2)\big)       \Big)\\
    &+ \Tr\Big(\sqrt{\rho_B}\pi_B(W_2)\big(\pi_B(W_1)\sqrt{\rho_B}-\sqrt{\rho_B}\pi_B(W_1)\big)       \Big)   }\\
    \leq & \norm{\pi_B(W_1)\sqrt{\rho_B}-\sqrt{\rho_B}\pi_B(W_1)}_F + \norm{\pi_B(W_2)\sqrt{\rho_B}-\sqrt{\rho_B}\pi_B(W_2)}_F\\
    \leq & \deg(W_1)\nu(\eps) + \deg(W_2)\nu(\eps) = \big(\deg(W_1)+\deg(W_2) \big) \nu(\eps).
\end{align*}
This proves (c).

Now we prove (d). Since $\norm{\pi_B(n^y_B)}_{op}\leq 1$ and $\norm{\big(\pi_A(m^x_a)\otimes\Id-\Id\otimes\pi_B(\gamma^x_a)    \big)\ket{\psi}}\leq \nu(\eps)$,
\begin{align*}
    \abs{\phi(\gamma^x_an^y_b)-p_S(a,b|x,y)}&=\abs{\bra{\psi}\Id\otimes\pi_B(\gamma^x_an^y_b)\ket{\psi}-\bra{\psi}\pi_A(m^x_a)\otimes\pi_B(n^y_b)\ket{\psi} } \\
    &= \abs{\bra{\psi}\big(\pi_A(m^x_a)\otimes\Id-\Id\otimes\pi_B(\gamma^x_a)\big)\cdot\big(\Id\otimes\pi_B(n^y_b) \big)\ket{\psi} }\\
    &\leq \norm{\big(\pi_A(m^x_a)\otimes\Id-\Id\otimes\pi_B(\gamma^x_a)    \big)\ket{\psi}}\leq \nu(\eps)
\end{align*}
for all $a,b,x,y$. This completes the proof.
\end{proof}

Given a quantum model 
\begin{align*}
    \models,
\end{align*}
by picking the Schmidt basis for $\ket{\psi}$ as bases for $\mcH_A$ and $\mcH_B$, and dilating Alice or Bob's system so that they have the same dimension, we can always write
\begin{align}
    S=\big(\C^d,\C^d,\{M^x_a\},\{N^y_b\},\ket{\psi}=\sum_{i=1}^d\lambda_i\ket{i}\otimes\ket{i}   \big),\label{eq:balance}
\end{align}
where $d\geq 1$ and $\lambda_1,\ldots,\lambda_d\geq 0$, without changing the state induced by $S$. We refer to \Cref{eq:balance} as a \emph{balanced form} of $S$. Indeed, if $s:=\dim(\mcH_B)-\dim(\mcH_A)>0$, then we take $\mcH_A\oplus\C^s$ as the new space for Alice and define her new measurement operators to be $M^x_a\oplus \Id$ for one fixed $a\in A$ and $M^x_{a'}\oplus 0$ for the rest $a'\in A$. We can process similarly when $\dim(\mcH_A)>\dim(\mcH_B)$. So we can always assume $\dim(\mcH_A)=\dim(\mcH_B)=d$ for some $d\geq 1$. Consider the Schmidt decomposition $\ket{\psi}=\sum_{i=1}^k\lambda_i\ket{\alpha_i}\otimes\ket{\beta_i}$ where $k\leq d$, $\lambda_1,\ldots,\lambda_k>0$, and $\{\ket{\alpha_i}:1\leq i\leq k\}$ and $\{\ket{\beta_i}:1\leq i\leq k\}$ are orthonormal sets in $\mcH_A$ and $\mcH_B$ respectively. We can expand these two orthonormal sets to orthonormal bases $\{\ket{\alpha_i}:1\leq i\leq d\}$ and $\{\ket{\beta_i}:1\leq i\leq d\}$ for $\mcH_A$ and $\mcH_B$ respectively, and write $\ket{\psi}=\sum_{i=1}^d\lambda_i\ket{\alpha_i}\otimes\ket{\beta_i}$ where $\lambda_{k+1}=\cdots=\lambda_d=0$. After fixing the bases $\{\ket{\alpha_i}:1\leq i\leq d\}$ and $\{\ket{\beta_i}:1\leq i\leq d\}$, we can identify both $\mcH_A$ and $\mcH_B$ with $\C^d$, and identify $\ket{\psi}$ with $\sum_{i=1}^d\lambda_i\ket{i}\otimes\ket{i}$. One of the advantages of the balanced form \Cref{eq:balance} is that the reduced density matrices of $\ket{\psi}$ on Alice and Bob's sides are both $\rho:=\sum_{i=1}^d\lambda_i^2\ket{i}\bra{i}$. So we have $\ket{\psi}=\sqrt{\rho}\otimes\Id\ket{\widehat{\varphi}_d}=\Id\otimes\sqrt{\rho}\ket{\widehat{\varphi}_d}$, where $\ket{\widehat{\varphi}_d}:=\sum_{i=1}^d\ket{i}\otimes\ket{i}$.

\begin{proposition}\label{prop:limitamenable}
   Let $\mcG=(X,Y,A,B,\mu,V)$ be a nonlocal game with a $\nu$-robust determining pair $(\Gamma,\mcR)$. Suppose $p\in C_q$ is an optimal quantum correlation for $\mcG$. Let $S_n,n\in \N$ be a sequence of projective quantum models such that $\norm{p_{S_n}-p}_1\arr 0$ as $n\arr \infty$, and let $\phi_n:=f_{S_n}|_{1\otimes\PVM^{Y,B}}$ for all $n\in\N$, where $f_{S_n}$ is the state on $\PVM^{X,A}\otimes_{min}\PVM^{Y,B}$ induced by $S_n$. Then any weak*-accumulation point $\phi$ of $\{\phi_n,n\in \N\}$ is an amenable tracial state on $\PVM^{Y,B}$ satisfying
   \begin{enumerate}[(a) ]
       \item $\phi(\gamma^x_an^y_b)=p(a,b|x,y)$ for all $a,b,x,y$, and
       \item $\phi(r^*r)=0$ for all $r\in \mcR$.
   \end{enumerate}
\end{proposition}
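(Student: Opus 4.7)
The plan is first to extract a weak*-convergent subsequence by compactness of the state space and assume without loss of generality that $\phi_n\arr\phi$. Since $\norm{p_{S_n}-p}_1\arr 0$ and $p$ is optimal for $\mcG$, we have $\eps_n:=w_q(\mcG)-w(\mcG;S_n)\arr 0$, so each $S_n$ is $\eps_n$-optimal and \Cref{lemma:approxrep} applies with $\nu(\eps_n)\arr 0$. Passing to the pointwise limit in parts (d), (a), and (c) of \Cref{lemma:approxrep} respectively yields (a), (b), and traciality of $\phi$ on monomials in $\{n^y_b\}$; the last extends to all of $\PVM^{Y,B}$ by linearity and continuity of $\phi$.

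The real work is amenability, and I will use characterization (b) of \Cref{lemma:amenable}: it suffices to extend the functional $\Phi:\PVM^{Y,B}{}^{\op}\otimes\PVM^{Y,B}\arr\C$ defined by $a^{\op}\otimes b\mapsto\phi(ab)$ to a state on $\PVM^{Y,B}{}^{\op}\otimes_{\min}\PVM^{Y,B}$. Put each $S_n$ in balanced form $\big(\C^{d_n},\C^{d_n},\{M^x_{a,n}\},\{N^y_{b,n}\},\ket{\psi_n}\big)$, let $\rho_{B,n}$ be Bob's reduced density, and let $\pi_{B,n}:\PVM^{Y,B}\arr M_{d_n}(\C)$ send $n^y_b\mapsto N^y_{b,n}$. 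Define
\begin{align*}
    \Phi_n(a^{\op}\otimes b):=\Tr\bigl(\sqrt{\rho_{B,n}}\,\pi_{B,n}(b)\,\sqrt{\rho_{B,n}}\,\pi_{B,n}(a)\bigr).
\end{align*}
Via the operator-vector correspondence $M_{d_n}(\C)\iso\C^{d_n}\otimes\C^{d_n}$, the vector corresponding to $\sqrt{\rho_{B,n}}$ has unit norm (since $\Tr(\rho_{B,n})=1$), left multiplication becomes $x\mapsto x\otimes\Id$, and right multiplication becomes $x\mapsto\Id\otimes x^T$. Hence $\Phi_n$ is the vector state at this unit vector of the tensor-product representation $a^{\op}\otimes b\mapsto \pi_{B,n}(b)\otimes \pi_{B,n}(a)^T$, and therefore extends to a genuine state on $\PVM^{Y,B}{}^{\op}\otimes_{\min}\PVM^{Y,B}$.

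Finally, using cyclicity of the trace one computes
\begin{align*}
    \Phi_n(a^{\op}\otimes b)-\phi_n(ab)=\Tr\bigl([\sqrt{\rho_{B,n}},\pi_{B,n}(b)]\,\sqrt{\rho_{B,n}}\,\pi_{B,n}(a)\bigr),
\end{align*}
which by Cauchy-Schwarz in the Hilbert-Schmidt inner product, the bound $\norm{\sqrt{\rho_{B,n}}\,\pi_{B,n}(a)}_F\leq 1$, and \Cref{lemma:approxrep}(b) is controlled in absolute value by $\deg(b)\nu(\eps_n)\arr 0$ for each fixed monomial $b$. Combined with $\phi_n(ab)\arr\phi(ab)$, this gives $\Phi_n(a^{\op}\otimes b)\arr\phi(ab)$ on all monomial pairs. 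Extracting a weak*-accumulation point $\Phi$ of $\{\Phi_n\}$ in the weak*-compact state space of $\PVM^{Y,B}{}^{\op}\otimes_{\min}\PVM^{Y,B}$, we obtain a state with $\Phi(a^{\op}\otimes b)=\phi(ab)$ on monomials, hence on all of $\PVM^{Y,B}{}^{\op}\otimes\PVM^{Y,B}$ by density and continuity. By \Cref{lemma:amenable}(b), $\phi$ is amenable. The main obstacle is precisely this amenability step: the construction of $\Phi_n$ as a matrix coefficient of an explicit tensor-product representation (via the operator-vector correspondence) is what makes min-continuity manifest and lets the approximate commutation in \Cref{lemma:approxrep}(b) do the heavy lifting.
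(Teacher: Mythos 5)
Your proof is correct and follows essentially the same route as the paper's: pass to a subsequence, invoke parts (a), (c), (d) of Lemma~\ref{lemma:approxrep} for the algebraic conclusions, then for amenability put each $S_n$ in balanced form and observe that $(a^{\op},b)\mapsto \Tr\bigl(\sqrt{\rho_{B,n}}\,\pi_{B,n}(b)\,\sqrt{\rho_{B,n}}\,\pi_{B,n}(a)\bigr)$ is a finite-dimensional (hence min-continuous) state on $(\PVM^{Y,B})^{\op}\otimes\PVM^{Y,B}$ via the operator--vector correspondence, control its deviation from $\phi_n(ab)$ by Lemma~\ref{lemma:approxrep}(b), and pass to the limit to satisfy Lemma~\ref{lemma:amenable}(b). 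The paper reaches the same functional by writing it as the matrix coefficient $\bra{\psi_n}\wtd\pi_n(\cdot)\ket{\psi_n}$ with $\wtd\pi_n(a^{\op}\otimes b)=\pi^B_n(a)^T\otimes\pi^B_n(b)$, which is the same construction with the tensor factors swapped; you explain the same fact via the $\sqrt{\rho}$-as-unit-vector identification. No gap.
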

We remind the reader that using the notation before, there is another way to phrase this proposition: any weak*-accumulation point of $\{\phi_n:n\in\N\}$ is an amenable tracial state in $T^{(p)}(\Gamma,\mcR)$.
\begin{proof}
    By passing a subsequence, we may assume that $\lim\limits_{n\arr\infty}\phi_n=\phi$ in the weak*-topology. Note that 
    $\norm{p_{S_n}-p}_1\arr 0$ implies $\eps_n:=w_q(\mcG)-w(\mcG;S_n)\arr 0$ as $n\arr\infty$. By part (d) of \Cref{lemma:approxrep},  
    \begin{align*}
        \abs{\phi(\gamma^x_an^y_b)&-p(a,b|x,y)}\\
        &\leq \abs{\phi(\gamma^x_an^y_b)-\phi_n(\gamma^x_an^y_b)}+\abs{\phi_n(\gamma^x_an^y_b)-p_{S_n}(a,b|x,y)}+\abs{p_{S_n}(a,b|x,y)-p(a,b|x,y) }\\
        &\leq \abs{\phi(\gamma^x_an^y_b)-\phi_n(\gamma^x_an^y_b)}+\abs{p_{S_n}(a,b|x,y)-p(a,b|x,y) } + \nu(\eps_n) \arr 0
    \end{align*}
    as $n\arr \infty$. So (a) follows. By part (a) of \Cref{lemma:approxrep}, $\phi_n(r^*r)\leq \nu(\eps_n)^2\arr 0$ as $n\arr \infty$ for all $r\in\mcR$. This proves (b). 
    
    It follows from part (c) of \Cref{lemma:approxrep} that $\phi$ is a tracial state. Now we only need to show  $\phi$ is amenable. Let $\wtd{f}$ be the state on $(\PVM^{Y,B})^{\op}\otimes\PVM^{Y,B}$ defined by $\wtd{f}(\alpha^{\op}\otimes\beta):=\phi(\alpha\beta),\alpha,\beta\in \PVM^{Y,B}$. We write every $S_n$ in their balanced form
    \begin{align*}
        S_n=\big(\C^{d_n},\C^{d_n},\{M^x_a(n)\},\{N^y_b(n)\},\ket{\psi_n} \big)
    \end{align*}
    and denote by $\rho_n$ the reduced density of $\ket{\psi_n}$. For every $n\in \N$, let $\pi_n^A\otimes\pi_n^B$ be the associated representation of $S_n$ (in the balanced form), let $\wtd{\pi}_n:(\PVM^{Y,B})^{\op}\otimes\PVM^{Y,B}\arr\msB(\C^d\otimes\C^d)$ be the $*$-representation sending 
    \begin{align*}
        1\otimes n^y_b\mapsto \Id\otimes \pi_n^B(n^y_b) \text{ and }n^y_b\otimes 1\mapsto \pi_n^B(n^y_b)^T\otimes \Id,
    \end{align*}
    and let $\wtd{f}_n$ be the finite-dimensional state on $(\PVM^{Y,B})^{\op}\otimes\PVM^{Y,B}$ defined by 
\begin{align*}
\wtd{f}_n(\alpha)=\bra{\psi_n}\wtd{\pi}_n(\alpha)\ket{\psi_n},\alpha\in (\PVM^{Y,B})^{\op}\otimes\PVM^{Y,B}.
\end{align*}
 For any monomials $W_1$ and $W_2$ over $\{n^y_b,y\in Y,b\in B\}$,
    \begin{align*}
        \wtd{f}_n(W_1^{\op}\otimes W_2)&=\bra{\psi_n}\wtd{\pi}_n(W_1^{op}\otimes W_2)\ket{\psi_n}=\bra{\psi_n}\pi_n^B(W_1)^T\otimes \pi_n^B(W_2)\ket{\psi_n}\\
        &= \bra{\widehat{\varphi}_{d_n}}\pi_n^B(W_1)^T\otimes \sqrt{\rho_n}\pi_n^B(W_2)\sqrt{\rho_n}\ket{\widehat{\varphi}_{d_n}}\\
        &=\bra{\widehat{\varphi}_{d_n}}\Id\otimes \sqrt{\rho_n}\pi_n^B(W_2)\sqrt{\rho_n}\pi_n^B(W_1)\ket{\widehat{\varphi}_{d_n}}\\
&=\Tr\big(\sqrt{\rho_n}\pi_n^B(W_2)\sqrt{\rho_n}\pi_n^B(W_1)\  \big),
    \end{align*}
 where $\ket{\widehat{\varphi}_{d_n}}:=\sum_{i=1}^{d_n}\ket{i}\otimes\ket{i}$. So by part (b) of \Cref{lemma:approxrep},
 \begin{align*}
     \abs{\wtd{f}_n(W_1^{\op}\otimes W_2)- \phi_n(W_1W_2)  }&= \abs{\Tr\Big(\sqrt{\rho_n}\pi_n(W_1)\big(\sqrt{\rho_n}\pi_n(W_2)-\pi_n(W_2)\sqrt{\rho_n}\big)  \Big)   }\\
     &\leq \norm{\sqrt{\rho_n}\pi_n(W_2)-\pi_n(W_2)\sqrt{\rho_n}}_F\leq \deg(W_2)\nu(\eps_n).
 \end{align*}
 It follows that
\begin{align*}
    \abs{\wtd{f}_n&(W_1^{\op}\otimes W_2)-\wtd{f}(W_1^{\op}\otimes W_2)}\\
    &\leq \abs{\wtd{f}_n(W_1^{\op}\otimes W_2)-\phi_n(W_1W_2) } + \abs{\phi_n(W_1W_2)-\phi(W_1W_2) } + \abs{\phi(W_1W_2)-\wtd{f}(W_1^{\op}\otimes W_2) }\\
    &\leq \deg(W_2) \nu(\eps_n)+\abs{\phi_n(W_1W_2)-\phi(W_1W_2) } \arr 0
\end{align*}
 as $n\arr \infty$. Since $W_1$ and $W_2$ were arbitrary, $\lim\limits_{n\arr\infty}\wtd{f}_n=\wtd{f}$ in the weak*-topology. This means $\wtd{f}$ is the weak*-limit of a sequence of finite-dimensional states on $\PVM^{X,A}\otimes\PVM^{Y,B}$. Hence $\wtd{f}$ is min-continuous, and thus $\phi$ is amenable.
\end{proof}

\subsection{Stability of game algebras}
Let $\mcG$ be a nonlocal game with a robust determining pair $(\Gamma,\mcR)$. Suppose that there is a unique amenable tracial state $\tau$ on $\PVM^{Y,B}$ respecting all the relations in $\mcR$ and $\tau$ is finite-dimensional. In this section, we show that $\mcG$ must be a robust self-test for its optimal quantum strategies. The proof relies on the stability of the game algebra $C^*(\mcG)=\PVM^{Y,B}\slash\ang{\mcR}$. We outline some key ideas in the following two paragraphs. 

We first demonstrate that there exists a ucp map $\theta:C^*(\mcG)\arr \PVM^{Y,B}$ inverting the quotient map $q:\PVM^{Y,B}\arr\C^*(\mcG)$ in the sense that every $\theta(n^y_b)-n^y_b\in\ang{\mcR}$. Now suppose $\pi:\PVM^{Y,B}\arr\msB(\mcH_B)$ is the representation given by a near-optimal strategy $\models$. Since the determining pair $(\Gamma,\mcR)$ is robust, the representation $\pi$ respects the relations in $\mcR$ approximately. Let $\widehat{\theta}$ be the ucp map from $C^*(\mcG)\arr\msB(\mcH)$ defined by $\widehat{\theta}:=\pi\circ \theta$, and let $V^*\wtd{\pi}(\cdot)V$ be the Stinespring dilation of $\widehat{\theta}$, where $\wtd{\pi}:C^*(\mcG)\arr\msB(\mcK)$ is a $*$-representation and $V:\mcH\arr\mcK$ is an isometry. Then $\pi(n^y_b)-V^*\wtd{\pi}(n^y_b)V$ is small because
\begin{align*}
    \pi(n^y_b)-V^*\wtd{\pi}(n^y_b)V=\pi(n^y_b)-\widehat{\theta}(n^y_b)=\pi(n^y_b)-\pi\big(\theta(n^y_b)\big)=\pi\big(n^y_b-\theta(n^y_b) \big),
\end{align*}
$n^y_b-\theta(n^y_b)$ is in $\mcR$, and $\pi$ respects the relations in $\mcR$ approximately. In other words, $\pi$ is close to the exact representation $\wtd{\pi}$ under dilation.

One can even derive a constructive bound on the difference between $\pi(n^y_b)$ and $V^*\wtd{\pi}(n^y_b)V$ if $n^y_b-\theta(n^y_b)$ can be expressed by relations in $\mcR$ explicitly (see the notion of an $\mcR$-decomposition in \Cref{defn:rde}). This forms the core of the quantitative Gowers-Hatami theorem for $C^*$-algebras that we will introduce in \Cref{sec:GH}. In this section, however, we work with a slightly weaker statement. The following proposition illustrates how to construct a ucp map $\theta:C^*(\mcG)\arr\PVM^{Y,B}$ such that every $\theta(n^y_b)-n^y_b$ is in the kernel of $\tau$.

\begin{proposition}\label{prop:lifting}
    Suppose a nonlocal game $\mcG$ has a determining pair $(\Gamma,\mcR)$. Let $C^*(\mcG)$ be the associated game algebra. For every finite-dimensional tracial state $\tau$ on $\PVM^{Y,B}$ that satisfies $\tau(r^*r)=0$ for all $r\in\mcR$, there is a ucp map $\theta:C^*(\mcG)\arr\PVM^{Y,B}$ such that
    \begin{align*}
        \tau\Big( \big(\theta(\alpha)-\alpha \big)^*\big(\theta(\alpha)-\alpha\big)  \Big)=0
    \end{align*}
    for any $*$-polynomial $\alpha\in \C^*\ang{n^y_b,y\in Y,b\in B}$.
\end{proposition}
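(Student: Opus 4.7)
\medskip

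\noindent\textbf{Proof plan.} The plan is to exploit the finite-dimensionality of $\tau$ together with a ucp lifting theorem to split the quotient $q:\PVM^{Y,B}\to C^*(\mcG)$ modulo the tracial ideal of $\tau$. First I would observe that since $\tau(r^*r)=0$ for every $r\in \mcR$, the kernel $\mcI_\tau\subset \PVM^{Y,B}$ (defined in \Cref{eq:tracialideal}) is a closed two-sided ideal containing $\mcR$, and hence contains the closed two-sided ideal $\ang{\mcR}$ as well. Consequently the natural quotient map $q_\tau:\PVM^{Y,B}\to \PVM^{Y,B}\slash\mcI_\tau$ factors through $q$; that is, there exists a surjective $*$-homomorphism $q':C^*(\mcG)\to \PVM^{Y,B}\slash\mcI_\tau$ satisfying $q_\tau=q'\circ q$.

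Set $F:=\PVM^{Y,B}\slash\mcI_\tau$. Because $\tau$ is finite-dimensional, part (a) of \Cref{lemma:fdtrace} gives that $F$ is a finite-dimensional $C^*$-algebra, and thus nuclear (and separable). The Choi–Effros lifting theorem applied to the surjection $q_\tau:\PVM^{Y,B}\twoheadrightarrow F$ and to the identity ucp map $\id_F:F\to F$ then yields a ucp map $\sigma:F\to \PVM^{Y,B}$ with $q_\tau\circ\sigma=\id_F$. I would then simply define
\begin{equation*}
\theta:=\sigma\circ q':C^*(\mcG)\longrightarrow \PVM^{Y,B},
\end{equation*}
which is ucp as a composition of ucp maps.

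To verify the desired property, I would compute $q_\tau\circ\theta=q_\tau\circ\sigma\circ q'=q'$. Hence for any $\alpha\in\C^*\ang{n^y_b:y\in Y,b\in B}\subset \PVM^{Y,B}$,
\begin{equation*}
q_\tau\bigl(\theta(\alpha)-\alpha\bigr)=q'\bigl(q(\alpha)\bigr)-q_\tau(\alpha)=q_\tau(\alpha)-q_\tau(\alpha)=0,
\end{equation*}
so $\theta(\alpha)-\alpha\in\mcI_\tau$, which by definition of $\mcI_\tau$ is exactly the statement $\tau\bigl((\theta(\alpha)-\alpha)^*(\theta(\alpha)-\alpha)\bigr)=0$.

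I do not anticipate a genuine obstacle here; the only substantive input is the Choi–Effros lifting theorem, whose hypotheses are met for free because finite-dimensional $C^*$-algebras are nuclear. The one subtlety worth flagging in writing is the slight abuse of notation in the statement: on the left-hand side $\alpha$ is read in $C^*(\mcG)$ (so that $\theta(\alpha)$ makes sense), while on the right it is read in $\PVM^{Y,B}$; the factorization $q_\tau=q'\circ q$ is precisely what makes these two readings compatible inside $\mcI_\tau$.
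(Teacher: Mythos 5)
Your proposal is correct and takes essentially the same approach as the paper: both reduce to the Choi--Effros lifting theorem applied across the quotient $q_\tau:\PVM^{Y,B}\to\PVM^{Y,B}/\mcI_\tau$, using finite-dimensionality of $\tau$ to get nuclearity. The only cosmetic difference is that the paper lifts the surjection $\varphi:C^*(\mcG)\to\PVM^{Y,B}/\mcI_\tau$ directly (noting $\varphi$ is nuclear since its codomain is), whereas you lift the identity on $F=\PVM^{Y,B}/\mcI_\tau$ to obtain a ucp section $\sigma$ of $q_\tau$ and then set $\theta=\sigma\circ q'$; these are the same argument.
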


Note that $\PVM^{Y,B},C^*(\mcG)$, and $\PVM^{Y,B}\slash\mcI_\tau$ all have generating set $\{n^y_b:y\in Y,b\in B\}$. In the expression $\theta(\alpha)-\alpha$, the first $\alpha$ is viewed as an element of $C^*(\mcG)$, while the second $\alpha$ is an element of $\PVM^{Y,B}$. The proof uses a lifting theorem due to Choi and Effros. 

\begin{lemma}[\cite{CE76}, see also \cite{Arv77}]\label{lemma:lifting}
    Suppose $\mcA$ and $\mcB$ are two unital separable $C^*$-algebras. Suppose $\mcJ$ is a closed two-sided ideal in $\mcB$. Let $q:\mcB\arr\mcB\slash\mcJ$ be the quotient map, and let $\varphi:\mcA\arr\mcB\slash\mcJ$ be a ucp map. If $\varphi$ is nuclear, then there is a ucp map $\theta:\mcA\arr\mcB$ such that $\varphi=q\circ \theta$.
\end{lemma}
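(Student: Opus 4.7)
The plan is to read the hypothesis $\tau(r^*r)=0,\ r\in\mcR$ as saying that the trace kernel $\mcI_\tau := \{a\in\PVM^{Y,B}:\tau(a^*a)=0\}$ contains $\mcR$, hence (being a closed two-sided $*$-ideal of $\PVM^{Y,B}$, using that $\mcI_\tau$ is $*$-closed by traciality) contains the closed two-sided $*$-ideal $\ang{\mcR}$. Consequently $\tau$ descends to a tracial state $\bar\tau$ on $C^*(\mcG) = \PVM^{Y,B}/\ang{\mcR}$, and moreover, by part (a) of \Cref{lemma:fdtrace}, the quotient $D := \PVM^{Y,B}/\mcI_\tau$ is finite-dimensional. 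I will denote the quotient maps by $q:\PVM^{Y,B}\to C^*(\mcG)$ and $p:\PVM^{Y,B}\to D$, and observe that $p$ factors as $p = \bar p\circ q$ for a uniquely determined surjective $*$-homomorphism $\bar p:C^*(\mcG)\to D$.

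The core idea is to produce the desired $\theta$ as a Choi--Effros lift of $\bar p$. Concretely, I will apply \Cref{lemma:lifting} with $\mcA = C^*(\mcG)$, $\mcB = \PVM^{Y,B}$, $\mcJ = \mcI_\tau$, and $\varphi = \bar p: C^*(\mcG)\to \PVM^{Y,B}/\mcI_\tau = D$. The $\varphi = \bar p$ is ucp because it is a unital $*$-homomorphism. The crucial point is that $\bar p$ is \emph{nuclear}: since $D$ is finite-dimensional, $D$ embeds unitally as a $C^*$-subalgebra of some matrix algebra $M_N(\C)$, and therefore the ucp map $\bar p$ factors through $M_N(\C)$ (trivially, via $\bar p$ followed by the inclusion), which witnesses nuclearity. \Cref{lemma:lifting} then supplies a ucp map $\theta:C^*(\mcG)\to \PVM^{Y,B}$ with $p\circ\theta = \bar p$.

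To conclude the proof, take any $\alpha\in\C^*\ang{n^y_b:y\in Y,b\in B}$ and view it, via the canonical maps, as an element of $\PVM^{Y,B}$ on the one hand and of $C^*(\mcG)$ on the other. Then
\begin{equation*}
   p\big(\theta(\alpha)-\alpha\big) \;=\; \bar p(\alpha) - p(\alpha) \;=\; \bar p(\alpha) - \bar p(q(\alpha)) \;=\; 0,
\end{equation*}
so $\theta(\alpha)-\alpha\in\ker(p) = \mcI_\tau$, which is exactly the assertion $\tau\big((\theta(\alpha)-\alpha)^*(\theta(\alpha)-\alpha)\big)=0$.

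The main obstacle, and essentially the only nontrivial step, is seeing that the finite-dimensionality assumption on $\tau$ is precisely what makes Choi--Effros applicable here: in general $C^*(\mcG)$ need not be nuclear, so one cannot simply lift the identity of $C^*(\mcG)$; however, by passing to the finite-dimensional quotient $D=\PVM^{Y,B}/\mcI_\tau$, the map one actually needs to lift lands in a nuclear (in fact finite-dimensional) algebra, and nuclearity becomes automatic. All remaining verifications (that $\mcI_\tau$ is a closed two-sided $*$-ideal, that $\ang{\mcR}\subseteq \mcI_\tau$, that $\bar p$ is well defined) are routine bookkeeping.
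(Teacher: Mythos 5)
The statement you were asked to prove is the Choi--Effros lifting theorem (\Cref{lemma:lifting}), which the paper cites from \cite{CE76} and \cite{Arv77} and does not prove itself. Your proposal does not prove it either; it \emph{invokes} it as a black box. What you have actually written is a proof of \Cref{prop:lifting}, the proposition that \emph{uses} the lifting theorem. Relative to the statement posed there is therefore a complete gap: your argument never touches the internal substance of the Choi--Effros result (Arveson extension, quasicentral approximate units, the selection/convexity argument for assembling an exact ucp lift from a point-norm convergent sequence of approximate lifts), and cannot stand in for it.

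If the intended target was instead \Cref{prop:lifting}, your proof is correct and coincides with the paper's essentially verbatim: both note that $\tau(r^*r)=0$ for $r\in\mcR$ forces $\ang{\mcR}\subseteq\mcI_\tau$, identify $\PVM^{Y,B}/\mcI_\tau$ as finite-dimensional (by \Cref{lemma:fdtrace}) and hence nuclear, factor the quotient $\PVM^{Y,B}\to\PVM^{Y,B}/\mcI_\tau$ through $C^*(\mcG)$ to get a nuclear $*$-homomorphism $C^*(\mcG)\to\PVM^{Y,B}/\mcI_\tau$, and then apply \Cref{lemma:lifting} with $\mcA=C^*(\mcG)$, $\mcB=\PVM^{Y,B}$, $\mcJ=\mcI_\tau$. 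Your $\bar p$ is the paper's $\varphi$ and your $p$ is the paper's $q_\tau$; the closing computation that $\theta(\alpha)-\alpha\in\ker p=\mcI_\tau$ is the same one-liner. Only notation and bookkeeping differ.
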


\begin{proof}[Proof of \Cref{prop:lifting}]
Suppose $\tau$ is a finite-dimensional tracial state on $\PVM^{Y,B}$ satisfying $\tau(r^*r)=0$ for all $r\in\mcR$. Let $\ang{\mcR}$ be the closed two-sided ideal generated by $\mcR$. Since $\ang{\mcR}\subset \mcI_{\tau}$, there is a surjective
$*$-homomorphism $\varphi:C^*(\mcG)\arr \PVM^{Y,B}\slash\mcI_\tau$ sending $\alpha\mapsto \alpha$ for any $*$-polynomial $\alpha\in \C^*\ang{n^y_b,y\in Y,b\in B}$. Since $\tau$ is finite-dimensional, $\PVM^{X,A}\slash\mcI_\tau$ is a finite-dimensional $C^*$-algebra, and hence nuclear. It follows that $\varphi$ is nuclear.  By \Cref{lemma:lifting}, there is a ucp map $\theta:\mcA(\mcG)\arr \PVM^{Y,B}$ such that $\varphi=q_\tau\circ\theta$, where $q_\tau:\PVM^{Y,B}\arr \PVM^{Y,B}\slash\mcI_\tau$ is the quotient map. For any $*$-polynomial $\alpha\in \C^*\ang{n^y_b,y\in Y,b\in B}$,
\begin{align*}
q_\tau\big(\theta(\alpha)\big)=\varphi(\alpha)=\alpha=q_\tau(\alpha) 
\end{align*}
in $\PVM^{Y,B}\slash\mcI_\tau$.
Hence $\theta(\alpha)-\alpha\in \ker q_\tau=\mcI_{\tau}$.
\end{proof}

\Cref{prop:lifting} is used to prove the following stability result.

\begin{proposition}\label{prop:uniquetraceimplyrobust}
Let $\mcG=(X,Y,A,B,\mu,V)$ be a nonlocal game with a $\nu$-robust determine pair $(\Gamma,\mcR)$. Suppose $p\in C_q$ is the unique optimal quantum correlation for $\mcG$ and there is a unique amenable tracial state $\tau$ in $T^{(p)}(\Gamma,\mcR)$. Then there exists a full-rank projective quantum model $\wtd{S}$ for $p$ such that the following holds. For any sequence of projective quantum models
    \begin{align*}
    S_n=\big(\mcH_n^A,\mcH_n^B,\{M^x_a(n) \},\{N^y_b(n) \},\ket{\psi_n}\big), n\in\N
    \end{align*}
     with $\lim\limits_{n\arr\infty} \norm{p_{S_n}-p}_1=0$,
     there is a function $\eta:\N\arr\R_{\geq 0}$ with $\lim\limits_{n\arr\infty} \eta(n)=0$ such that $S_n\succeq_{\eta(n)}\wtd{S}$ for all $n\in\N$.
\end{proposition}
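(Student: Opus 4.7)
My plan is to construct the ideal model $\wtd{S}$ from $\tau$, establish weak*-convergence of the restricted states $\phi_n := f_{S_n}|_{1\otimes \PVM^{Y,B}}$ to $\tau$, and then use the Choi--Effros ucp lifting of \Cref{prop:lifting} together with Stinespring dilation to produce local isometries on both Alice's and Bob's sides. The error estimates will then be plugged into \Cref{prop:gameidealrep} to obtain the desired rate $\eta(n)$.

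Since every finite-dimensional tracial state is amenable, uniqueness of the amenable tracial state in $T^{(p)}(\Gamma,\mcR)$ forces $T_{\fin}^{(p)}(\Gamma,\mcR) \subseteq \{\tau\}$; because $p \in C_q$, \Cref{prop:extendtrace} guarantees $T_{\fin}^{(p)}(\Gamma,\mcR)$ is non-empty, so $\tau$ is itself finite-dimensional. By \Cref{thm:gametrace} the game is a self-test for its optimal quantum strategies, and by \Cref{cor:traceideal} an ideal model is $\wtd{S} = \big(\C^d, \C^d, \{\wtd{\pi}(\gamma^x_a)^T\}, \{\wtd{\pi}(n^y_b)\}, \ket{\varphi_d}\big)$ where $\wtd{\pi}: C^*(\mcG) \to M_d(\C)$ is the unique finite-dimensional irreducible $*$-representation (\Cref{cor:extremetrace-irrep}); this $\wtd{S}$ is full-rank and projective. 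Moreover, \Cref{prop:limitamenable} combined with uniqueness of $\tau$ shows that every weak*-accumulation point of $\{\phi_n\}$ equals $\tau$, so $\phi_n \to \tau$ in the weak*-topology.

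Next, apply \Cref{prop:lifting} to $\tau$ to obtain a ucp map $\theta: C^*(\mcG) \to \PVM^{Y,B}$ with $\theta(\alpha) - \alpha \in \mcI_\tau$ for every $*$-polynomial $\alpha$ in $\C^*\ang{n^y_b}$. Write each $S_n$ in balanced form with common reduced density $\rho_n$, and let $\pi_n: \PVM^{Y,B} \to M_{d_n}(\C)$ denote Bob's representation. The composition $\widehat{\theta}_n := \pi_n \circ \theta$ is a ucp map into a finite-dimensional $C^*$-algebra, so its Stinespring dilation produces a finite-dimensional $*$-representation $\sigma_n$ of $C^*(\mcG)$ and an isometry $V_n$ with $\widehat{\theta}_n = V_n^* \sigma_n(\cdot) V_n$. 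Uniqueness of the finite-dimensional irreducible representation forces $\sigma_n \cong \wtd{\pi} \otimes \Id_{\mcK_B^n}$ up to unitary equivalence (which we absorb into $V_n$), so $V_n: \C^{d_n} \to \C^d \otimes \mcK_B^n$. Define $I_B^n := V_n$ and $I_A^n := \overline{V_n}$ by entrywise complex conjugation in the computational bases; a direct computation confirms $I_A^n$ is an isometry. The Bob-side estimate is then immediate: $(I_B^n)^*(\wtd{N}^y_b \otimes \Id)I_B^n - N^y_b(n) = \pi_n(\theta(n^y_b) - n^y_b)$ has $\rho_n$-norm equal to $\phi_n\!\big((\theta(n^y_b)-n^y_b)^*(\theta(n^y_b)-n^y_b)\big)^{1/2}$, which tends to $0$ by weak*-convergence. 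For the Alice-side, the identity $\overline{V}^*(X)\overline{V} = \overline{V^*\,\bar{X}\,V}$ applied with $X = \wtd{\pi}(\gamma^x_a)^T \otimes \Id$, together with self-adjointness of $\gamma^x_a$, yields $(I_A^n)^*(\wtd{M}^x_a \otimes \Id)I_A^n = \pi_n(\theta(\gamma^x_a))^T$; condition (i) of \Cref{def:stablepair} translated through the balanced form, combined with the approximate commutation of $\sqrt{\rho_n}$ with $\pi_n(\gamma^x_a)$ from \Cref{lemma:approxrep}(b) and the identity $\norm{Y^T}_{\rho_n} = \norm{Y}_{\rho_n}$ for self-adjoint $Y$ in the balanced form, yields $\delta^A_{x,a}(n) \to 0$ by the same weak*-argument as for Bob. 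Since also $\eps_n := w_q(\mcG) - w(\mcG;S_n) \to 0$, \Cref{prop:gameidealrep} produces unit vectors $\ket{aux_n}$ and a rate $\eta(n) = O\!\big((\delta(n) + \sqrt{\eps_n})/\Delta\big) \to 0$ with $S_n \succeq_{\eta(n)} \wtd{S}$.

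The main obstacle is the Alice-side construction. The transpose identity linking $\wtd{M}^x_a$ to Bob's algebra is clean on the maximally entangled vector $\ket{\varphi_d}$ used by $\wtd{S}$, but in $S_n$ the bipartite state is arbitrary and $\rho_n$ can be rank-deficient. The balanced-form bookkeeping together with the approximate commutation of $\sqrt{\rho_n}$ with $\pi_n(n^y_b)$ (condition (iii) of \Cref{def:stablepair}) is precisely what converts the state-level robustness bound on $(M^x_a(n)\otimes\Id - \Id\otimes \pi_n(\gamma^x_a))\ket{\psi_n}$ into an operator-level $\rho_n$-norm approximation of $M^x_a(n)$ by $\pi_n(\gamma^x_a)^T$, which is the key input that makes Alice's isometry match the ideal model.
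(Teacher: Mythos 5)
Your proposal follows the paper's proof essentially line for line: extract finite-dimensionality of $\tau$, build $\wtd{S}$ from the unique irreducible, establish $\phi_n \to \tau$ via \Cref{prop:limitamenable}, lift with \Cref{prop:lifting}, dilate with Stinespring, and feed the resulting errors into \Cref{prop:gameidealrep}. The one genuine point of departure, and it is worth flagging, is the Alice-side isometry. The paper simply sets $I_{A,n} = I_{B,n}$ and asserts the equality
\begin{align*}
\norm{I_{B,n}^*\big(\wtd{\pi}(\gamma^x_a)^T\otimes\Id\big)I_{B,n}-\pi_n^B(\gamma^x_a)^T}_{\rho_n}
= \norm{I_{B,n}^*\big(\wtd{\pi}(\gamma^x_a)\otimes\Id\big)I_{B,n}-\pi_n^B(\gamma^x_a)}_{\rho_n},
\end{align*}
justified only by self-adjointness of $\gamma^x_a$. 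This identity is not obviously correct: the transpose of $I^*(W\otimes\Id)I$ is $I^T(W^T\otimes\Id)\overline{I}$, not $I^*(W^T\otimes\Id)I$, so the left-hand operator is not the transpose of the right-hand one unless $I_{B,n}$ happens to have real entries, which the Stinespring construction does not guarantee. Your choice $I_A^n := \overline{V_n}$ resolves this cleanly: $\overline{V_n}^*\,(W^T\otimes\Id)\,\overline{V_n} = \overline{V_n^*(W\otimes\Id)V_n}$ by entrywise conjugation, and since $V_n^*(\wtd\pi(\gamma^x_a)\otimes\Id)V_n = \pi_n^B(\theta(\gamma^x_a))$ is self-adjoint this equals $\pi_n^B(\theta(\gamma^x_a))^T$ exactly; the remaining error then reduces to $\norm{\pi_n^B(\theta(\gamma^x_a)-\gamma^x_a)}_{\rho_n}$ (using $\norm{Y^T}_{\rho_n}=\norm{Y}_{\rho_n}$ for self-adjoint $Y$ and the real diagonal $\rho_n$ of the balanced form) plus the commutation term handled by \Cref{lemma:approxrep}(b), both of which you control. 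So not only is your proposal correct and taking the same global route, it is slightly more careful on this step than the paper's written argument.
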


\begin{proof}
By \Cref{prop:extendtrace}, $\tau$ extends to a state $\wtd{f}$ which is the unique state on $\PVM^{X,A}\otimes_{min}\PVM^{Y,B}$ that can achieve $p$.
Note that $p$ is a quantum correlation. This means $\wtd{f}$ and $\tau$ must be finite-dimensional, and hence $\wtd{f}$ is the unique finite-dimensional state on $\PVM^{X,A}\otimes_{min}\PVM^{Y,B}$ for $p$, and $\tau$ is the unique tracial state in $T_{\fin}^{(p)}(\Gamma,\mcR)$.
Since $p$ is the unique optimal quantum correlation for $\mcG$, by \Cref{lemma:uniqueextreme}, $p$ is an extreme point in $C_q$. Hence by \Cref{thm:correlationtrace}, $p$ is a self-test for all quantum models, and $\mcG$ is a self-test for its optimal quantum strategies. Let $\pi:\PVM^{Y,B}\arr M_d(\C)$ be the finite-dimensional irreducible $*$-representation given by part (d) of \Cref{thm:correlationtrace}, and let $\wtd{\pi}$ be the unique irreducible $*$-representation of $C^*(\mcG)$ given by part (d) of \Cref{thm:gametrace}. Then $\pi=\wtd{\pi}\circ q$ by \Cref{cor:irreps}, where  $q:\PVM^{Y,B}\arr C^*(\mcG)$ is the quotient map. Let $\wtd{M}^x_a:=\pi(\gamma^x_a)^T$ and $\wtd{N}^y_b:=\pi(n^y_b)$ for all $a,b,x,y$, and let $\ket{\varphi_d}:=\frac{1}{\sqrt{d}}\sum_{i=1}^d\ket{i}\otimes\ket{i}$.
 Then by \Cref{cor:traceideal},
\begin{align*}
    \wtd{S}:=\left(\C^d,\C^d,\{\wtd{M}^x_a\},\{\wtd{N}^y_b\},\ket{\varphi_d} \right)
\end{align*}
is a quantum model for $\mcG$, and the triple $(\C^d\otimes\C^d,\wtd{\pi}_A\otimes\wtd{\pi}_B,\ket{\varphi_d})$ is a GNS representation of $\wtd{f}$, where $\wtd{\pi}_A\otimes\wtd{\pi}_B$ is the associated representation of $\wtd{S}$.

Let $\theta:C^*(\mcG)\arr \PVM^{Y,B}$ be the ucp map given by \Cref{prop:lifting} that satisfies 
 \begin{align*}
        \tau\Big( \big(\theta(\alpha)-\alpha \big)^*\big(\theta(\alpha)-\alpha \big)  \Big)=0
    \end{align*}
for any $*$-polynomial $\alpha$ over $\{n^y_b:y\in Y,b\in B\}$. For every $n\in \N$, let $\pi^A_n\otimes\pi^B_n$ be the associated representation of $S_n$, and let $\rho_n^A:=\Tr_{\mcH_n^B}(\ket{\psi_n}\bra{\psi_n})$ and $\rho_n^B:=\Tr_{\mcH_n^A}(\ket{\psi_n}\bra{\psi_n})$.

\begin{claim}\label{claim1}
There are integers $r_n,s_n\in \N$, and isometries $I_{A,n}:\mcH_n^A\arr \C^d\otimes \C^{r_n}$ and $I_{B,n}:\mcH_n^B\arr \C^d\otimes \C^{s_n}$ such that 
\begin{align}
&\delta^A_{x,a}(n):=\norm{I_{A,n}^*(\wtd{M}^x_a\otimes \Id_{\C^{r_n}}) I_{A,n}-M^x_a(n)}_{\rho^A_n} \arr 0, \text{ and}\label{eq:M0}\\
&\delta^B_{y,b}(n):=\norm{I_{B,n}^*(\wtd{N}^y_b\otimes \Id_{\C^{s_n}}) I_{B,n}-N^y_b(n)}_{\rho^B_n} \arr 0\label{eq:N0}
\end{align}
as $n\arr \infty$ for all $a,b,x,y$.   
\end{claim}

\begin{proof}[Proof of \Cref{claim1}]
Without loss of generality, we can write every $S_n$ in its balanced form
\begin{align*}
    S_n=\big(\C^{d_n},\C^{d_n},\{M^x_a(n)\},\{N^y_b(n)\},\ket{\psi_n}\big).
\end{align*}
Let $\rho_n$ be the reduced density of $\ket{\psi_n}$ on $\C^{d_n}$, and let $\phi_n:=f_{S_n}|_{1\otimes \PVM^{Y,B}}$ where $f_{S_n}$ is the state on $\PVM^{X,A}\otimes\PVM^{Y,B}$ induced by $S_n$. By \Cref{prop:limitamenable}, any weak*-accumulation point $\phi$ of $\{\phi_n:n\in\N\}$ is an amenable tracial state in $T^{(p)}(\Gamma,\mcR)$. By the hypothesis, $T^{(p)}(\Gamma,\mcR)$ has a unique amenable tracial state $\tau$. So we must have $\phi=\tau$. By passing a subsequence, we conclude that $\lim\limits_{n\arr\infty}\phi_n=\tau$ is the weak*-topology.

Now consider the ucp maps $\theta^B_n:=\pi^B_n\circ\theta,n\in \N$ from $C^*(\mcG)\arr \msB(\C^{d_n})$. Since $\wtd{\pi}$ is the unique finite-dimensional irreducible $*$-representation of $C^*(\mcG)$, any finite-dimensional $*$-representation of $C^*(\mcG)$ is a direct sum of $\wtd{\pi}$'s. By the Stinespring dilation theorem, there are $s_n\in \N$ and isometries $I_{B,n}: \C^{d_n}\arr\C^d\otimes \C^{s_n},n\in \N$ such that 
\begin{align*}
 \theta^B_n(\alpha)=I_{B,n}^*\big(\wtd{\pi}(\alpha)\otimes \Id_{\C^{s_n}}\big)I_{B,n}   
\end{align*}
for all $\alpha\in C^*(\mcG)$. It follows that
\begin{align*}
    \lim_{n\arr\infty} \norm{I_{B,n}^*\big(\wtd{\pi}(\alpha)\otimes \Id_{\C^{s_n}}\big) I_{B,n}-\pi_n^B(\alpha)}_{\rho_n}^2
    &=\lim_{n\arr\infty} \norm{\theta^B_n(\alpha)-\pi^B_n(\alpha)}_{\rho_n}^2\\
    &=\lim_{n\arr\infty} \Tr_{\rho_n}\circ \pi^B_n\Big(\big(\theta(\alpha)-\alpha\big)^* \big(\theta(\alpha)-\alpha\big)     \Big)\\
    &=\lim_{n\arr\infty} \phi_n\Big(\big(\theta(\alpha)-\alpha\big)^* \big(\theta(\alpha)-\alpha\big)     \Big)\\
    &= \tau\Big(\big(\theta(\alpha)-\alpha\big)^* \big(\theta(\alpha)-\alpha\big)\Big)=0 
\end{align*}
for any $*$-polynomial $\alpha$ over $\{n^y_b:y\in Y,b\in B\}$. Taking $\alpha=n^y_b$, \Cref{eq:N0} follows.

Note that every $\gamma^x_a$ is a $*$-polynomial over $\{n^y_b:y\in Y,b\in B\}$ and is self-adjoint in $\PVM^{Y,B}$. So $\wtd{\pi}\big(\gamma^x_a \big)$ and $\pi_n^B\big(\gamma^x_a \big)$ are self-adjoint, and the above calculation also implies
\begin{align*}
     \norm{I_{B,n}^*\big(\wtd{\pi}\big(\gamma^x_a \big)^T\otimes \Id_{\C^{s_n}}\big) I_{B,n}-\pi_n^B\big(\gamma^x_a \big)^T}_{\rho_n}^2
    &= \norm{I_{B,n}^*\big(\wtd{\pi}\big(\gamma^x_a \big)\otimes \Id_{\C^{s_n}}\big) I_{B,n}-\pi_n^B\big(\gamma^x_a \big)}_{\rho_n}^2\\
    &\arr \tau\Big(\big(\theta(\gamma^x_a)-\gamma^x_a\big)^* \big(\theta(\gamma^x_a)-\gamma^x_a\big)\Big)=0
\end{align*}
as $n\arr\infty$. Again, since $(\Gamma,\mcR)$ is $\nu$-robust, by part (b) of \Cref{lemma:approxrep},
\begin{align*}
    \norm{\pi_n^B(\gamma^x_a)\sqrt{\rho_n}-\sqrt{\rho_n}\pi_n^B(\gamma^x_a)}_F\leq \norm{\gamma^x_a}_{1,1}\nu(\eps_n),
\end{align*}
where $\eps_n:=w_q(\mcG)-w(\mcG;S_n)\arr 0$ as $n\arr\infty$. Here $\norm{\gamma^x_a}_{1,1}$ is the first Sobolev $1$-seminorm of the $*$-polynomial $\gamma^x_a$ (see \cite{MSZ23} for a definition). Hence
\begin{align*}
    \norm{M^x_a(n)-\pi_n^B\big(\gamma^x_a &\big)^T}_{\rho_n}=\norm{\pi^A_n(m^x_a)-\pi_n^B(\gamma^x_a )^T}_{\rho_n}
    =\norm{\big(\pi_n^A(m^x_a)-\pi_n^B(\gamma^x_a )^T\big)\otimes\Id\ket{\psi_n}}\\
    &\leq \norm{\big(\pi_n^A(m^x_a)\otimes\Id-\Id\otimes\pi_n^B(\gamma^x_a)\big)\ket{\psi}}+ \norm{\big(\pi_n^B(\gamma^x_a)^T\otimes\Id-\Id\otimes\pi_n^B(\gamma^x_a)\big)\ket{\psi}}\\
    &\leq \nu(\eps_n) + \norm{\pi_n^B(\gamma^x_a)\sqrt{\rho_n}-\sqrt{\rho_n}\pi_n^B(\gamma^x_a)}_F\leq \big(\norm{\gamma^x_a}_{1,1}+1\big)\nu(\eps_n),
\end{align*}
$\arr 0$ as $n\arr\infty$.
Note that $\big(\wtd{\pi}(\gamma^x_a)\big)^T=\wtd{M}^x_a$. Taking $r_n=s_n$ and $I_{A,n}=I_{B,n}$ for all $n\in \N$, it follows that
\begin{multline*}
\norm{I_{A,n}^*\big(\wtd{M}^x_a\otimes\Id_{\C^{r_n}}\big)I_{A,n}-M^x_a(n)}_{\rho_n}\\ \leq \norm{I_{B,n}^*\big(\wtd{\pi}\big(\gamma^x_a \big)^T\otimes \Id_{\C^{s_n}}\big) I_{B,n}-\pi_n^B\big(\gamma^x_a \big)^T}_{\rho_n}+\norm{M^x_a(n)-\pi_n^B\big(\gamma^x_a \big)^T}_{\rho_n}
\end{multline*}
$\arr 0$ as $n\arr\infty$.
So \Cref{eq:M0} follows.
\end{proof}

\begin{claim}\label{claim2}
    There are unit vectors $\ket{\kappa_n}\in \C^{r_n}\otimes\C^{s_n},n\in\N$ such that
    \begin{align}
        \norm{ I_{A,n}\otimes I_{B,n}\big( M^x_a(n)\otimes N^y_b(n)\ket{\psi}\big) -     \big(\wtd{M}^x_a \otimes \wtd{N}^y_n\ket{\varphi_d}\big)  \otimes \ket{\kappa_n}     } \arr 0 \text{ as } n\arr \infty\label{eq:limit}
    \end{align}
 for all $a,b,x,y$. 
\end{claim}
\begin{proof}[Proof of \Cref{claim2}]
Since $\mcG$ is a self-test, we let $\Delta$ be its spectral gap, and let 
\begin{align*}
    \delta(n):=\sum_{a,b,x,y}\mu(x,y) V(a,b|x,y)\left( \delta^A_{x,a}(n)+\delta^B_{y,b}(n)  \right),n\in\N.
\end{align*}
Since $\wtd{f}$ is the unique finite-dimensional optimal state on $\PVM^{X,A}\otimes_{min}\PVM^{X,A}$ for $\mcG$, by \Cref{prop:gameidealrep} and \Cref{rmk:gameidealrep}, there is a unit vector $\ket{\kappa_n}\in \C^{r_n}\otimes \C^{s_n}$ for every $n\in\N$ such that
\begin{multline*}
    \norm{ I_{A,n}\otimes I_{B,n}\big( M^x_a(n)\otimes N^y_b(n)\ket{\psi}\big) -     \big(\wtd{M}^x_a \otimes \wtd{N}^y_n\ket{\varphi_d}\big)  \otimes \ket{\kappa_n} }\\
    \leq \frac{\sqrt{2}\big(\delta(n)+\sqrt{\epsilon_n}\big)}{\Delta}+\delta^A_{x,a}(n)+\delta^B_{y,b}(n)
\end{multline*}
for all $a,b,x,y$. By \Cref{claim1}, $\lim\limits_{n\arr\infty}\delta^A_{x,a}(n)=\lim\limits_{n\arr\infty}\delta^B_{y,b}(n)=0$, and hence $\lim\limits_{n\arr\infty}\delta(n)=0$. So \Cref{eq:limit} follows.
\end{proof}

 For every $n\in\N$, let
    \begin{align*}
        \eta(n):=\max\limits_{x,y,a,b}\left\{\norm{ I_{A,n}\otimes I_{B,n}\big( M^x_a(n)\otimes N^y_b(n)\ket{\psi}\big) -     \big(\wtd{M}^x_a \otimes \wtd{N}^y_n\ket{\varphi_d}\big)  \otimes \ket{\kappa_n} }   \right\}.
    \end{align*}
Then $S_n\succeq_{\eta(n)}\wtd{S}$ for all $n\in \N$, and $\lim\limits_{n\arr\infty}\eta(n)=0$ by \Cref{claim2}.  This completes the proof.
\end{proof}

\subsection{General results}
\begin{theorem}\label{thm:robustcorr}
Let $\mcG=(X,Y,A,B,\mu,V)$ be a nonlocal game with a robust determining pair $(\Gamma,\mcR)$. Suppose $p\in C_q$ is the unique optimal quantum correlation for $\mcG$. Then the following statements are equivalent.
    \begin{enumerate}[(a)]
        \item $p$ is a robust self-test for all quantum models.
        \item $p$ is an abstract state self-test for all states on $\POVM^{X,A}\otimes_{min}\POVM^{Y,B}$.
        \item $p$ is an abstract state self-test for all states on $\PVM^{X,A}\otimes_{min}\PVM^{Y,B}$.
        \item There is a unique amenable tracial state $\tau$ on $\PVM^{Y,B}$ such that
        \begin{enumerate}[(i)]
            \item $\tau(r^*r)=0$ for all $r\in \mcR$ and
            \item $\tau(\gamma^x_an^y_b)=p(a,b|x,y)$ for all $a,b,x,y$.
        \end{enumerate}
    \end{enumerate}
\end{theorem}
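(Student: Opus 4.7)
The plan is to establish the cycle $(a) \Rightarrow (b) \Rightarrow (c) \Rightarrow (d) \Rightarrow (a)$. The first three implications are relatively direct algebraic arguments. For $(a) \Rightarrow (b)$, I would apply Theorem \ref{thm:robustuniquestate'} verbatim: a robust self-test for quantum models excludes both multiple finite-dimensional states and the existence of any infinite-dimensional state on $\POVM^{X,A} \otimes_{\min} \POVM^{Y,B}$ achieving $p$. For $(b) \Rightarrow (c)$, I would use the quotient $q: \POVM^{X,A} \to \PVM^{X,A}$ imposing the projectivity relations $(e^x_a)^2 = e^x_a$. This induces a $*$-epimorphism $q \otimes q : \POVM^{X,A} \otimes_{\min} \POVM^{Y,B} \to \PVM^{X,A} \otimes_{\min} \PVM^{Y,B}$, and any state on the target pulls back to a state on the source with the same correlation, so uniqueness transfers. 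For $(c) \Rightarrow (d)$, I would invoke Proposition \ref{prop:extendtrace}: it provides a bijection between states on $\PVM^{X,A} \otimes \PVM^{Y,B}$ for $p$ and tracial states in $T^{(p)}(\Gamma, \mcR)$ in which amenable tracial states correspond precisely to min-continuous states; since states on a min-tensor product are exactly the min-continuous ones on the algebraic tensor, uniqueness in $(c)$ forces uniqueness of amenable tracial states.

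For $(d) \Rightarrow (a)$, the first step would be to apply Proposition \ref{prop:uniquetraceimplyrobust}, which under $(d)$ produces a full-rank projective ideal $\wtd{S}$ (as in Corollary \ref{cor:traceideal}) and establishes the sequential form of robust self-testing against \emph{projective} quantum models. The second step is to promote this to all quantum models: assuming for contradiction a $\delta>0$ and a sequence $S_n$ of (possibly non-projective) quantum models with $\|p_{S_n} - p\|_1 \to 0$ but $S_n \not\succeq_\delta \wtd{S}$, Naimark-dilate each $S_n$ to a projective model $S_n'$ on enlarged spaces $\mcH_n^A \otimes \mcK_n^A$ and $\mcH_n^B \otimes \mcK_n^B$ via isometries $V_A^{(n)}, V_B^{(n)}$, noting that $p_{S_n'} = p_{S_n}$. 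The projective result then yields isometries $I_A'^{(n)}, I_B'^{(n)}$ and auxiliary vectors $\ket{aux_n}$ with $S_n' \succeq_{\eta(n)} \wtd{S}$ and $\eta(n) \to 0$. Define $I_A^{(n)} := I_A'^{(n)} V_A^{(n)}$ and $I_B^{(n)} := I_B'^{(n)} V_B^{(n)}$ and expand, using $\tilde{M}^x_a V_A^{(n)} = V_A^{(n)} M^x_a(n) + P_{V_A^\perp} \tilde{M}^x_a V_A^{(n)}$, to obtain a dilation error bounded by $\eta(n)$ plus the \emph{Naimark error} $\sqrt{\braket{\psi_n | M^x_a(n)(\Id - M^x_a(n)) \otimes \Id | \psi_n}}$, and analogously for Bob.

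The hard part will be controlling this Naimark error, which quantifies how far the POVMs $\{M^x_a(n)\}$ are from being projective on the support of $\ket{\psi_n}$. Since $(M^x_a)(\Id - M^x_a) = \sum_{a' \neq a} M^x_a M^x_{a'}$, the error equals $\sum_{a' \neq a} f_{S_n}(m^x_a m^x_{a'} \otimes 1)$ and would vanish in the limit provided $f_{S_n} \to f_{\wtd{S}}$ in the weak-$*$ topology (using that $\wtd{S}$ is projective, so the limit expression vanishes identically). The obstacle is that this weak-$*$ convergence on degree-two monomials does not follow from correlation convergence alone; it requires the uniqueness statement in $(b)$. To avoid circularity, I would first establish $(d) \Rightarrow (b)$ directly: given any state $f$ on $\POVM^{X,A} \otimes_{\min} \POVM^{Y,B}$ for $p$, Naimark-dilate its GNS to produce a state $f'$ on $\PVM^{X,A} \otimes_{\min} \PVM^{Y,B}$ for $p$, which by $(c)$ equals the unique finite-dimensional state $\wtd{f}$ associated with $\wtd{S}$; the finite-dimensionality of $\wtd{f}$ (from $p \in C_q$) together with the rigid structure of its GNS then pins down $f$ itself, giving $(d) \Rightarrow (b)$. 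With $(b)$ in hand, Proposition \ref{prop:dense} and uniqueness yield the required weak-$*$ convergence, the Naimark error vanishes, and the contradiction with $S_n \not\succeq_\delta \wtd{S}$ is reached, completing $(d) \Rightarrow (a)$.
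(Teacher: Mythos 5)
Your chain $(a)\Rightarrow(b)\Rightarrow(c)\Rightarrow(d)$ follows the paper's route. (For $(b)\Rightarrow(c)$, besides the pull-back argument along $q\otimes q$, you must also supply \emph{existence} of a state on $\PVM^{X,A}\otimes_{\min}\PVM^{Y,B}$ achieving $p$; this is where \Cref{lemma:fullrankprojective} enters in the paper, which also shows that the unique state on $\POVM^{X,A}\otimes_{\min}\POVM^{Y,B}$ is in fact projective.) The real divergence is in $(d)\Rightarrow(a)$. After \Cref{prop:uniquetraceimplyrobust} produces the robust conclusion against \emph{projective} quantum models, the paper extends to arbitrary quantum models by a black-box invocation of \cite[Theorem~4.1]{lifiting23}, whereas you attempt to re-derive this extension by Naimark dilation.

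You correctly identify the obstacle: composing a Naimark-dilation isometry with the local isometry from the projective case introduces a Naimark error $\sqrt{\braket{\psi_n|M^x_a(\Id-M^x_a)\otimes\Id|\psi_n}}$, and bounding it requires $f_{S_n}\to f_{\wtd{S}}$ in the weak-$*$ topology on degree-two words, which convergence of correlations alone cannot give. However, your proposed workaround --- proving $(d)\Rightarrow(b)$ directly by Naimark-dilating an arbitrary state $f$ on $\POVM^{X,A}\otimes_{\min}\POVM^{Y,B}$ to a state $f'$ on $\PVM^{X,A}\otimes_{\min}\PVM^{Y,B}$ --- has a genuine gap. Even granting that a commuting Naimark dilation $(\wtd{\pi},V)$ can be chosen (this is not automatic: the GNS of a state on the min tensor product yields a pair of \emph{commuting} representations on one Hilbert space, not a tensor-product representation on local spaces), the identity $\pi(\alpha)=V^*\wtd{\pi}(\alpha)V$ holds only for $\alpha$ a generator. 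For a product of generators one instead has
\[
\pi(m^x_a m^{x'}_{a'}\otimes 1)=V^*\wtd{\pi}(m^x_a\otimes1)V\,V^*\wtd{\pi}(m^{x'}_{a'}\otimes1)V\neq V^*\wtd{\pi}(m^x_a m^{x'}_{a'}\otimes1)V,
\]
since $VV^*$ is a proper projection. Thus $f'$ records only degree-one data of $f$; the conclusion $f'=\wtd{f}$ therefore does not pin down $f$ on higher-degree monomials, and in particular does not force $f(m^x_a m^x_{a'}\otimes 1)$ to vanish --- which is exactly the quantity your Naimark-error bound needs. In the paper the implication $(d)\Rightarrow(b)$ is obtained only by going \emph{through} $(a)$, and $(d)\Rightarrow(a)$ delegates the passage from projective to general near-optimal strategies to \cite{lifiting23}; that step is where the non-projectivity of approximately optimal POVM strategies is genuinely controlled, and it is not circumvented by the route you sketch.
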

\begin{proof}
    (a)$\Rightarrow$(b) follows from \Cref{thm:robustuniquestate}.

    (b)$\Rightarrow$(c): Suppose $p$ is an abstract state self-test for all states on $\POVM^{X,A}\otimes_{min}\POVM^{Y,B}$. That is, there is a unique state $f$ on $\POVM^{X,A}\otimes_{min}\POVM^{Y,B}$ for $p$. By \Cref{lemma:fullrankprojective}, $p$ has a full-rank projective quantum model, so $f$ must be finite-dimensional and projective. Hence $f$ drops to a finite-dimensional state $\wtd{f}$ on $\PVM^{X,A}\otimes_{min}\PVM^{Y,B}$ and $\wtd{f}$ is a state for $p$. Now assume for the sake of contradiction that there is another state $f'\neq \wtd{f}$ on $\PVM^{X,A}\otimes\PVM^{Y,B}$ that can achieve $p$. Then the pull-back of $f'$ on $\POVM^{X,A}\otimes_{min}\POVM^{Y,B}$ is a state for $p$ but is distinct to $f$, a contradiction. Hence $\wtd{f}$ is the unique state on $\PVM^{X,A}\otimes_{min}\PVM^{Y,B}$ for $p$.

    (c)$\Rightarrow$(d): By \Cref{prop:extendtrace}, any amenable tracial state in $T^{(p)}(\Gamma,\mcR)$ extends uniquely to a state on $\PVM^{X,A}\otimes_{min}\PVM^{Y,B}$ for $p$. If there is a unique state $f$ on $\PVM^{X,A}\otimes_{min}\PVM^{Y,B}$ for $p$, then $\tau:=f|_{1\otimes\PVM^{Y,B}}$ must be the unique amenable tracial state in $T^{(p)}(\Gamma,\mcR)$.

    (d)$\Rightarrow$(a): Suppose there is a unique amenable tracial state $\tau$ in $T^{(p)}(\Gamma,\mcR)$. Let $\wtd{S}$ be the full-rank projective quantum model given in \Cref{prop:uniquetraceimplyrobust}. We first claim the $p$ is a robust self-test for projective quantum models with ideal model $\wtd{S}$. Assume for the sake of contradiction that there is a $\delta_0>0$ and a sequence of projective quantum models $S_n,n\in\N$ such that $\eps_n:=\norm{p_{S_n}-p}_1\arr 0 $ as $n\arr\infty$, but for every $n\in \N$, the relation $S_n\succeq_{\delta_0} \wtd{S}$ does not hold. By \Cref{prop:uniquetraceimplyrobust}, there is a function $\eta:\N\arr\R_{\geq 0}$ such that $\lim\limits_{n\arr\infty}\eta(n)=0$ and $S_n\succeq_{\eta(n)}\wtd{S}$ for all $n\in \N$. Take a big enough $N\in\N$ such that $\eta(N)\leq\delta_0$. Then $S_N\succeq_{\delta_0}\wtd{S}$, which contradicts the assumption. We conclude that $p$ is a robust self-test for projective quantum models. Note that the ideal model $\wtd{S}$ is full-rank and projective. By \cite[Theorem 4.1]{lifiting23}, $p$ is a robust self-test for all quantum models.
\end{proof}

In terms of robust self-testing for nonlocal games, we state the following theorem. 

\begin{theorem}\label{thm:robustgame}
For any nonlocal game $\mcG$ that has a robust determining pair $(\Gamma,\mcR)$, the following statements are equivalent.
    \begin{enumerate}[(a)]
        \item $\mcG$ is a robust self-test for its optimal quantum strategies.
        \item There is a unique optimal state $f$ on $\POVM^{X,A}\otimes_{min}\POVM^{Y,B}$ for $\mcG$ and $f$ is finite-dimensional. 
        \item There is a unique optimal state $f$ on $\PVM^{X,A} \otimes_{min} \PVM^{Y,B}$ for $\mcG$ and $f$ is finite-dimensional.
        \item There is a unique amenable tracial state $\tau$ on $\PVM^{Y,B}$ such that $\tau(r^*r)=0$ for all $r\in\mcR$ and $\tau$ is finite-dimensional.
    \end{enumerate}
\end{theorem}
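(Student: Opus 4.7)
The plan is to establish the cycle (a)$\Rightarrow$(b)$\Rightarrow$(c)$\Rightarrow$(d)$\Rightarrow$(a), leveraging the correlation-level counterpart \Cref{thm:robustcorr} together with the bridge \Cref{lemma:robustunique}, which says that a game is a robust self-test precisely when it has a unique optimal quantum approximate correlation and that correlation is itself a robust self-test. Throughout, having a robust determining pair gives us access to the tools in \Cref{sec:traceselftest,sec:tracerobustselftest}.

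For (a)$\Rightarrow$(b), I would directly invoke \Cref{thm:robustuniquestate'game}, which yields both the uniqueness of a finite-dimensional optimal state on the POVM product and the non-existence of any infinite-dimensional optimal state. For (b)$\Rightarrow$(c), I would use the game-level unique-correlation consequence of (b): by \Cref{lemma:gameabstractstate}, $\mcG$ has a unique optimal quantum correlation $p$, which is extreme by \Cref{lemma:uniqueextreme}. Since $\mcG$ has a determining pair, \Cref{lemma:fullrankprojective} provides a full-rank projective model for $p$, forcing the unique $f$ to factor as a projective state on $\PVM^{X,A}\otimes_{\min}\PVM^{Y,B}$; the uniqueness on the PVM side follows because any other optimal state there would pull back to a distinct state on the POVM side. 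For (c)$\Rightarrow$(d), I would chain the bijections: \Cref{prop:extendtracegame} identifies optimal states on the PVM product algebra with tracial states on $C^*(\mcG)$ (preserving finite-dimensionality and matching min-continuity with amenability), while \Cref{prop:tracequotientfd,prop:tracequotientamenable} identify tracial states on $C^*(\mcG)$ with tracial states on $\PVM^{Y,B}$ whose kernel contains $\ang{\mcR}$ (again preserving finite-dimensionality and sending amenable to amenable). Restricting both correspondences to the min-continuous/amenable side and tracking finite-dimensionality gives (d).

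The most delicate step, and the main obstacle, is (d)$\Rightarrow$(a). Starting from the unique amenable finite-dimensional optimal tracial state $\tau$, \Cref{prop:extendtracegame} extends $\tau$ to a unique finite-dimensional optimal state $f$, which achieves an optimal correlation $p\in C_q$. To apply \Cref{thm:robustcorr}, I need $p$ to be the \emph{unique} optimal correlation in $C_{qa}$, not merely in $C_q$. This is exactly where robustness of the determining pair is essential: given any optimal $p'\in C_{qa}$, I approximate it by a sequence of projective quantum strategies via \Cref{prop:dense}, and then \Cref{prop:limitamenable} guarantees that weak*-accumulation points of the restrictions $f_{S_n}|_{1\otimes \PVM^{Y,B}}$ are amenable tracial states lying in $T^{(p')}(\Gamma,\mcR)$. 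The assumed uniqueness of amenable $\tau$ (which is the only candidate) forces $p'=p$, so $p$ is the unique optimal in $C_{qa}$. Now \Cref{thm:robustcorr} (via its characterization (d)) tells us $p$ is a robust self-test for quantum models, and \Cref{lemma:robustunique} upgrades this to $\mcG$ being a robust self-test for its optimal quantum strategies.

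In short, the proof is a game-theoretic reshuffling of \Cref{thm:robustcorr}: conditions (b)--(d) are the game analogues of the corresponding conditions for the unique optimal correlation, and \Cref{lemma:robustunique,thm:robustuniquestate'game} glue the correlation picture to the game picture on both ends of the cycle. The only subtlety beyond bookkeeping is verifying that amenability survives under the various quotient and extension operations used to pass between $\PVM^{Y,B}$, $C^*(\mcG)$, and the bipartite product algebra, which is handled cleanly by the propositions collected in \Cref{sec:trace,sec:tracerobustselftest}.
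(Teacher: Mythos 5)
Your proof is correct and follows essentially the same route as the paper, which proves this theorem in one line by noting that each condition forces a unique optimal quantum correlation $p$ and then citing \Cref{thm:robustcorr}; you spell out the cycle $(a)\Rightarrow(b)\Rightarrow(c)\Rightarrow(d)\Rightarrow(a)$ with the same ingredients (\Cref{thm:robustuniquestate'game}, \Cref{lemma:fullrankprojective}, \Cref{prop:extendtracegame}, \Cref{prop:limitamenable}, \Cref{lemma:robustunique}). Your treatment of $(d)\Rightarrow(a)$ is in fact slightly more careful than the paper's: the paper's terse proof leaves implicit the step from ``$p$ is the unique optimal correlation in $C_q$'' to ``$p$ is the unique optimal correlation in $C_{qa}$'' (which is exactly what \Cref{lemma:robustunique} requires), and your use of \Cref{prop:dense} plus \Cref{prop:limitamenable} to pin this down is the right way to fill that in. Two minor imprecisions worth tidying: \Cref{prop:extendtracegame}(b) only gives ``$\tau$ amenable $\Rightarrow$ $f$ min-continuous,'' not a two-way matching, so your phrase ``matching min-continuity with amenability'' overstates it (you only need the stated direction for $(c)\Rightarrow(d)$, which still works); and \Cref{prop:dense} as stated produces POVM models, so to apply \Cref{prop:limitamenable} you should note the standard passage to projective models (or invoke the analogous density statement for $\PVM^{X,A}\otimes_{\min}\PVM^{Y,B}$).
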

\begin{proof}
    Any one of the above four statements implies $\mcG$ has a unique optimal quantum correlation. So the theorem follows directly from \Cref{thm:robustcorr}. 
\end{proof}

In \cref{thm:gametrace}, we have shown that a nonlocal game $\mcG$ is a self-test if and only if the associated game algebra $C^*(\mcG)$ has a unique finite-dimensional tracial state. Part (d) of \Cref{thm:robustgame} suggests one may have a similar characterization of robust self-testing: $\mcG$ is a robust self-test if and only if $C^*(\mcG)$ has a unique amenable tracial state $\tau$. However, as discussed in \Cref{rmk:tracequotientamenable}, it is possible that an amenable tracial state on $\PVM^{Y,B}$ whose kernel contains $\mcR$ drops to a non-amenable tracial state on the quotient $C^*(\mcG)$. So it is possible that $C^*(\mcG)$ has a unique amenable tracial state but there are multiple amenable tracial states on $\PVM^{Y,B}$ that are optimal for $\mcG$. Nonetheless, if $C^*(\mcG)$ has a unique tracial state $\wtd{\tau}$ and $\wtd{\tau}$ is finite-dimensional, then the tracial state $\tau$ on $\PVM^{Y,B}$ induced by $\wtd{\tau}$ must be the unique tracial state (and hence the unique amenable tracial state) that satisfies $\tau(r^*r)=0$ for all $r\in\mcR$. We conclude that:
\begin{corollary}\label{cor:uniquetrace}
    Suppose a nonlocal game $\mcG$ has a robust determining pair $(\Gamma,\mcR)$. Let $C^*(\mcG)$ be the associated game algebra. If $C^*(\mcG)$ has a unique tracial state $\tau$ and $\tau$ is finite-dimensional, then $\mcG$ is a robust self-test for its optimal quantum strategies.
\end{corollary}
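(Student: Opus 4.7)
The plan is to reduce the statement to an application of the equivalence (a)$\Leftrightarrow$(d) in \Cref{thm:robustgame}. Concretely, I need to produce a unique amenable tracial state $\tau$ on $\PVM^{Y,B}$ such that $\tau(r^*r)=0$ for every $r\in\mcR$, and argue that this $\tau$ is finite-dimensional. Once this is done, \Cref{thm:robustgame}(d)$\Rightarrow$(a) closes the argument.

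First, let $\wtd{\tau}$ denote the unique tracial state on $C^*(\mcG)$ and let $q:\PVM^{Y,B}\arr C^*(\mcG)$ be the quotient map. I would set $\tau:=\wtd{\tau}\circ q$. By \Cref{prop:tracequotientfd}, the assignment $\sigma\mapsto \sigma\circ q$ is a bijection from $T\bigl(C^*(\mcG)\bigr)$ onto the set of tracial states on $\PVM^{Y,B}$ whose kernel contains $\ang{\mcR}$, i.e.\ those $\sigma$ on $\PVM^{Y,B}$ with $\sigma(r^*r)=0$ for every $r\in\mcR$. Hence any such $\sigma$ must arise from a tracial state on $C^*(\mcG)$, and by the hypothesized uniqueness of $\wtd{\tau}$, it must coincide with $\tau$. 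In particular, $\tau$ is the \emph{only} tracial state on $\PVM^{Y,B}$ annihilating $\mcR$, so a fortiori the only amenable one.

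Second, I need to verify that $\tau$ is both amenable and finite-dimensional. Finite-dimensionality is immediate from the second clause of \Cref{prop:tracequotientfd}: since $\wtd{\tau}$ is finite-dimensional, so is $\tau$. For amenability, I invoke the standard fact that every finite-dimensional tracial state is amenable: by \Cref{lemma:fdtrace}(c) the enveloping von Neumann algebra $\pi_\tau(\PVM^{Y,B})''$ is a finite-dimensional $C^*$-algebra, hence a finite direct sum of matrix algebras, which is nuclear; this gives condition (d) of \Cref{lemma:amenable} trivially (take $\Phi$ to be the obvious conditional expectation composed with $\pi_\tau$). Alternatively, one can note that $\tau$ factors through a matrix algebra and directly exhibit the ucp approximating sequence in \Cref{lemma:amenable}(a).

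With $\tau$ in hand, the finite-dimensional amenable tracial state on $\PVM^{Y,B}$ annihilating $\mcR$ is unique (by the previous paragraph), and it satisfies condition (d) of \Cref{thm:robustgame} with $p(a,b|x,y):=\tau(\gamma^x_an^y_b)$ (which is automatically the unique optimal quantum correlation for $\mcG$ via \Cref{prop:extendtrace} and \Cref{lemma:extendtrace}). Therefore \Cref{thm:robustgame} applies and $\mcG$ is a robust self-test for its optimal quantum strategies. I do not anticipate a genuine obstacle here: the entire content of the corollary is bookkeeping on top of \Cref{thm:robustgame} and \Cref{prop:tracequotientfd}, the only subtlety being that the converse of \Cref{prop:tracequotientamenable} could fail in general (cf.\ \Cref{rmk:tracequotientamenable}), which is precisely why the hypothesis is stated in terms of \emph{all} tracial states of $C^*(\mcG)$ rather than merely the amenable ones.
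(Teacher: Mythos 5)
Your proposal is correct and follows essentially the same route as the paper: pull back the unique finite-dimensional tracial state on $C^*(\mcG)$ along the quotient map $q$, use \Cref{prop:tracequotientfd} to see it is the unique tracial state on $\PVM^{Y,B}$ annihilating $\mcR$ (hence the unique amenable one, since finite-dimensional tracial states are amenable), and invoke \Cref{thm:robustgame}(d)$\Rightarrow$(a). The only thing you spell out more explicitly than the paper is the verification that $\tau$ is amenable, which is a welcome but standard observation.
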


\subsection{Robust self-testing for XOR and synchronous games}
For XOR games, the robustness of the $(\Gamma,\mcR)$ determining pair defined in \Cref{prop:XORpair} was first proved in \cite{Slof11}. For synchronous games, similar results were observed in \cite{MPS21}. Both cases were also proved in \cite{Pad22}. We summarize these results in the following proposition.
\begin{proposition}[Proposition 5.14 and Proposition 5.23 in \cite{Pad22}]\label{prop:stablepair}
$\quad$
    \begin{enumerate}[(1)]
        \item Let $\mcG=(X,A,V)$ be a synchronous game with $w_q(\mcG)=1$, and let $(\Gamma,\mcR)$ be as in \Cref{prop:synpair}. Then $(\Gamma,\mcR)$ is a $O(\eps^{1/4})$-robust determining pair for $\mcG$.
        \item Let $\mcG=(I,J,(\omega_{ij}))$ be an XOR game, and let $(\Gamma,\mcR)$ be as in \Cref{prop:XORpair}. Then $(\Gamma,\mcR)$ is a $O(\eps^{1/4})$-robust determining pair for $\mcG$.
    \end{enumerate}
\end{proposition}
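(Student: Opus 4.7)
The plan is to prove each clause by combining the near-optimality of the strategy with the known algebraic identities that define the pair $(\Gamma,\mcR)$, extracting the three robustness inequalities from the game polynomial's approximate defect. Throughout, I would normalize the strategy to balanced form $\ket{\psi}=\sum_i\lambda_i\ket{i,i}$ so that $\rho_A=\rho_B=:\rho$, and write $R=\sqrt{\rho}$. The key observation is that in both synchronous and XOR settings, being $\eps$-optimal already gives a sum-of-squares (SOS) bound of the form $\sum_{x,a}\norm{(\pi_A(m^x_a)\otimes\Id-\Id\otimes\pi_B(\gamma^x_a))\ket{\psi}}^2 \leq C\eps$. For synchronous games, the only failing questions with $V=0$ at $x=y$ give $\sum_x\sum_{a\neq a'}\bra{\psi}M^x_a\otimes N^x_{a'}\ket{\psi}\leq |X|^2\eps$, and expanding $\sum_a\norm{(M^x_a\otimes\Id-\Id\otimes N^x_a)\ket{\psi}}^2 = 2-2\sum_a p(a,a|x,x)$ yields condition~(i) with $\nu_1(\eps)=O(\sqrt{\eps})$; for XOR games, one uses the standard SOS decomposition of the bias polynomial (Slofstra \cite{Slof11}, Tsirelson) identifying $X_i$ with $r_i^{-1}\sum_j\omega_{ij}Y_j$ on the state, which again gives $O(\sqrt{\eps})$.

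The second step is to deduce (ii) from (i) by a direct computation. For synchronous games and $r=n^x_an^y_b$ with $V(a,b|x,y)=0$, write
\begin{align*}
\norm{\pi_B(r)}_\rho^2 = \bra{\psi}\Id\otimes N^y_bN^x_aN^y_b\ket{\psi},
\end{align*}
then substitute $\Id\otimes N^x_a \ket{\psi}\approx M^x_a\otimes\Id\ket{\psi}$ using (i), commute across the tensor factor, and reduce to $p(a,b|x,y)\leq |X|^2\eps$ modulo an $O(\sqrt{\eps})$ error. This forces $\norm{\pi_B(r)}_\rho=O(\eps^{1/4})$. For XOR, the relation $\bigl(\sum_j\omega_{ij}y_j\bigr)^2-r_i^2$ is handled by squaring the near-identity from (i): since $X_i^2=\Id$, $(r_i^{-1}\sum_j\omega_{ij}Y_j)^2\ket{\psi}\approx X_i^2\ket{\psi}=\ket{\psi}$, with error controlled linearly by (i).

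The third and hardest step is condition (iii), i.e.\ the approximate commutation $[N^y_b,\sqrt{\rho_B}]$ in Frobenius norm. Here I would exploit that for any bounded $X$ on $\mcH_B$,
\begin{align*}
\bra{\psi}\Id\otimes [N^y_b,X]\ket{\psi}=\Tr\bigl(X[\rho_B,N^y_b]\bigr),
\end{align*}
and using (i) together with commutativity across tensor factors, estimate
\begin{align*}
\bra{\psi}\Id\otimes N^y_bX\ket{\psi}\approx\bra{\psi}M^y_b\otimes X\ket{\psi}=\bra{\psi}(\Id\otimes X)(M^y_b\otimes\Id)\ket{\psi}\approx\bra{\psi}\Id\otimes XN^y_b\ket{\psi},
\end{align*}
with total error $O(\norm{X}_{op}\sqrt{\eps})$. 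Taking the supremum over $X$ in the unit ball (via trace-Frobenius duality applied to $[\rho_B,N^y_b]$) gives $\norm{[\rho_B,N^y_b]}_1=O(\sqrt{\eps})$. To convert this into a commutator bound for $\sqrt{\rho_B}$, I would invoke the standard Powers--St\o rmer type inequality (or the integral representation $\sqrt{t}=\pi^{-1}\int_0^\infty s^{-1/2}\,t/(t+s)\,ds$) which converts commutation with $\rho$ to commutation with $\sqrt{\rho}$ at the cost of a square-root, yielding $\norm{[N^y_b,R]}_F=O(\eps^{1/4})$. The same derivation works for XOR after identifying $M^y_b$ with the appropriate Alice-side observable. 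For the XOR case, since we have binary observables and $M$ is a linear combination of the $Y_j$'s, the same abstract mechanism applies once (i) is established.

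The main obstacle is the passage from commuting with $\rho_B$ to commuting with $\sqrt{\rho_B}$ in step three, because the naive bound loses a factor depending on the spectral gap of $\rho_B$ (which we have no control over). Resolving this requires the Powers--St\o rmer estimate (or an equivalent trick, e.g.\ writing $\sqrt{\rho}$ via resolvents and bounding commutators term-by-term as in \cite{Pad22}), which converts a trace-norm commutator bound with $\rho$ into a Frobenius-norm commutator bound with $\sqrt{\rho}$ at the cost of taking a square root---this is precisely where the final $O(\eps^{1/4})$ exponent arises and cannot be improved without stronger hypotheses. The synchronous and XOR cases then follow as parallel applications of this common scheme, and the result matches the bounds in Propositions~5.14 and~5.23 of \cite{Pad22}.
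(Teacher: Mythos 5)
The paper does not prove this proposition at all: it is stated as a citation of Propositions~5.14 and~5.23 in \cite{Pad22}, so there is no local proof to compare your outline against. Judged on its own terms, your outline is essentially correct. Condition~(i) of \Cref{def:stablepair} follows from near-optimality exactly as you describe (direct counting over the synchronous questions $x=y$, or the SOS decomposition of the XOR bias polynomial, both yielding $O(\sqrt{\eps})$). Condition~(ii) follows from (i) by repeatedly swapping $\Id\otimes N^{\,\cdot}$ with $M^{\,\cdot}\otimes\Id$ on the side adjacent to $\ket{\psi}$; in the synchronous case this reduces $\norm{\pi_B(n^x_an^y_b)}_\rho^2$ to $p(a,b|x,y)+O(\sqrt{\eps})=O(\sqrt{\eps})$ and hence $O(\eps^{1/4})$, while in the XOR case the error stays linear in $\sqrt{\eps}$ as you note. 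For condition~(iii), your reduction to $\norm{[\rho_B,N^y_b]}_1=O(\sqrt{\eps})$ via trace-duality, followed by a Powers--St\o rmer estimate (e.g.\ comparing $\rho_B$ with the pinched $N\rho_B N+(1-N)\rho_B(1-N)$, whose square root commutes with the projection $N$), does give $\norm{[\sqrt{\rho_B},N^y_b]}_F\le 2\sqrt{\norm{[\rho_B,N^y_b]}_1}=O(\eps^{1/4})$, which is where the final exponent comes from.

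One detail deserves tightening in the XOR case of condition~(iii). Condition~(i) of \Cref{def:stablepair} only gives the \emph{row} relations $\Id\otimes\widehat{Y}_i\ket{\psi}\approx X_i\otimes\Id\ket{\psi}$ with $\widehat{Y}_i=r_i^{-1}\sum_j\omega_{ij}Y_j$; applying your trace-duality swap to these yields $\norm{[\rho_B,\widehat{Y}_i]}_1=O(\sqrt{\eps})$, not directly $\norm{[\rho_B,Y_j]}_1$. To obtain the bound on $[\rho_B,Y_j]$ you need the \emph{column} approximations $\Id\otimes Y_j\ket{\psi}\approx\widehat{X}_j\otimes\Id\ket{\psi}$ with $\widehat{X}_j=c_j^{-1}\sum_i\omega_{ij}X_i$; these follow from $\eps$-optimality by the column version of the same SOS argument (cf.\ \Cref{eq:column} and \cite[Corollary 3.2]{Slof11}), but they are not the same statement as condition~(i), so the phrase ``once (i) is established'' understates what is being invoked. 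This is a presentational issue rather than a gap --- the column near-identities do hold --- but the clean route is to derive them from $\eps$-optimality in parallel with the row ones, rather than to try to solve for $Y_j$ in terms of the $\widehat{Y}_i$'s, which may fail when $(\omega_{ij})$ lacks full column rank. Finally, note that all your big-$O$ constants absorb game-dependent quantities such as $r_i^{-1}\sum_j|\omega_{ij}|$ and $|X|^2$, which is consistent with the fixed-game statement of the proposition.
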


Hence for synchronous games and XOR games, we have:

\begin{theorem}
    Let $\mcG$ be a synchronous game, and let $C^*(\mcG)$ be the associated synchronous algebra. If $C^*(\mcG)$ has a unique tracial state $\tau$ and $\tau$ is finite-dimensional, then $\mcG$ is a robust self-test for its perfect quantum strategies.
\end{theorem}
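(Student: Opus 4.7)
The plan is to recognize this theorem as essentially a direct specialization of the general result \Cref{cor:uniquetrace} to the synchronous setting, combined with the robustness of the determining pair given by \Cref{prop:stablepair}. First, I would verify the hypothesis needed to invoke those results: namely, that a synchronous game whose associated synchronous algebra $C^*(\mcG)$ has a (unique) finite-dimensional tracial state automatically has $w_q(\mcG)=1$. This is immediate — any finite-dimensional tracial state on $C^*(\mcG)$ pulls back, via \Cref{prop:extendtracegame} or more directly via \Cref{lemma:extendtrace}, to a finite-dimensional optimal state on $\PVM^{X,A}\otimes_{\min}\PVM^{X,A}$ for $\mcG$, which in turn yields a perfect finite-dimensional (hence quantum) strategy.

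Next, I would invoke \Cref{prop:synpair} to supply the determining pair: with $\gamma^x_a := n^x_a$ and $\mcR := \{\,n^x_a n^y_b : V(a,b|x,y)=0\,\}$, the pair $(\Gamma,\mcR)$ is a determining pair for $\mcG$, and the associated game algebra $\PVM^{X,A}/\langle \mcR\rangle$ coincides with the synchronous algebra $C^*(\mcG)$ in the statement. By part~(1) of \Cref{prop:stablepair}, this determining pair is $O(\eps^{1/4})$-robust. Thus $\mcG$ is a synchronous game equipped with a robust determining pair whose game algebra has a unique, finite-dimensional tracial state.

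At this point all hypotheses of \Cref{cor:uniquetrace} are in place, so I conclude that $\mcG$ is a robust self-test for its optimal quantum strategies. Since $w_q(\mcG)=1$, ``optimal'' here means ``perfect,'' which is the conclusion sought.

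There is no genuine obstacle: the entire machinery — the correspondence between perfect states and tracial states on $C^*(\mcG)$ (\Cref{prop:extendtracegame}), the lifting-based stability argument (\Cref{prop:lifting,prop:uniquetraceimplyrobust}), and the robustness of the synchronous determining pair (\Cref{prop:stablepair}) — has already been assembled earlier in the paper. The only mildly delicate point worth emphasizing in the write-up is that one must pass from ``unique tracial state on $C^*(\mcG)$'' to ``unique amenable tracial state on $\PVM^{X,A}$ respecting $\mcR$,'' which is where \Cref{cor:uniquetrace} (rather than the weaker form using only amenable traces on $C^*(\mcG)$) is needed, precisely because \Cref{rmk:tracequotientamenable} warns that amenability need not be preserved in the non-split direction of the quotient.
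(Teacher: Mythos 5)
Your proposal is correct and takes essentially the same route the paper intends: the theorem is stated without proof in the paper, precisely because it is the direct specialization of \Cref{cor:uniquetrace} using the robust synchronous determining pair from \Cref{prop:synpair} and \Cref{prop:stablepair}(1), and you have reconstructed exactly that chain, including the correct observation that one must go through \Cref{cor:uniquetrace} rather than a weaker ``unique amenable trace on $C^*(\mcG)$'' version, in light of \Cref{rmk:tracequotientamenable}.

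One small wrinkle worth cleaning up in the write-up: your preliminary verification that $w_q(\mcG)=1$ invokes \Cref{lemma:extendtrace} (or \Cref{prop:extendtracegame}), but those results already presuppose that $(\Gamma,\mcR)$ is a determining pair, which for synchronous games is supplied by \Cref{prop:synpair} only under the hypothesis $w_{qc}(\mcG)=1$ — so as stated the step is circular. The fix is immediate and elementary: from the finite-dimensional tracial state $\tau$ on $C^*(\mcG)$, take the unique irreducible representation $\pi:C^*(\mcG)\to M_d(\C)$ from \Cref{cor:extremetrace-irrep}, set $M^x_a:=\pi(n^x_a)^T$, $N^y_b:=\pi(n^y_b)$, and use the maximally entangled state $\ket{\varphi_d}$; since $n^x_an^y_b\in\ang{\mcR}$ whenever $V(a,b|x,y)=0$, this strategy is perfect, giving $w_q(\mcG)=1$ directly and breaking the circularity. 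After that, \Cref{prop:synpair}, \Cref{prop:stablepair}, and \Cref{cor:uniquetrace} apply as you describe.
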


\begin{theorem}
    If $p$ is an extreme point of $C_q^{unbiased}$ and the associated XOR correlation $c$ has even rank, then $p$ is a robust self-test for all quantum models.
\end{theorem}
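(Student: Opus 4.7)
The plan is to mirror the structure of the synchronous-game theorem just proved, substituting the XOR solution-algebra framework of Section~5.4 for the synchronous algebra. First, from $p$ one constructs a canonical XOR game $\mcG = \big(I,J,(\omega_{ij})\big)$ whose cost matrix essentially records the bias of $p$, so that $p$ is optimal for $\mcG$. Using Slofstra's Proposition~2.8 of \cite{Slof11} (invoked in the discussion after Proposition~\ref{prop:XORpair}), the extremality of $p$ in $C_q^{unbiased}$ forces $p$ to be the unique optimal quantum correlation for $\mcG$, and identifies the associated solution algebra $C^*(\mcG)$ with the Clifford algebra $Cl_r$, where $r$ is the rank of the associated XOR correlation $c$.

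Next I would invoke the representation theory of Clifford algebras. Since $r$ is even, $Cl_r \cong M_{2^{r/2}}(\C)$ is a full matrix algebra, which admits a unique tracial state, namely the normalized trace $\tr_{2^{r/2}}$. This tracial state is manifestly finite-dimensional, so $C^*(\mcG)$ satisfies the hypothesis of Corollary~\ref{cor:uniquetrace}.

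To conclude, I would combine this uniqueness with the robustness of the XOR determining pair. Part~(2) of Proposition~\ref{prop:stablepair} shows that the pair $(\Gamma,\mcR)$ from Proposition~\ref{prop:XORpair} is $O(\eps^{1/4})$-robust for every XOR game, so Corollary~\ref{cor:uniquetrace} applies and gives that $\mcG$ is a robust self-test for its optimal quantum strategies. By Lemma~\ref{lemma:robustunique}, since $p$ is the unique optimal quantum correlation for $\mcG$, this is equivalent to saying that $p$ itself is a robust self-test for all quantum models, which is the desired conclusion.

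The main obstacle is the first step: carefully setting up the XOR game $\mcG$ from the correlation $p$ and verifying, via Slofstra's classification, that extremality in $C_q^{unbiased}$ together with the even-rank assumption really identifies $C^*(\mcG)$ with a full matrix Clifford algebra. Once that identification is secured, the remainder of the argument is a direct consequence of the tracial-state machinery developed in Sections~\ref{sec:traceselftest}--\ref{sec:tracerobustselftest}.
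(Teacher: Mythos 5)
Your proposal is correct and follows essentially the same route the paper sets up: extremality of $p$ in $C_q^{unbiased}$ together with Slofstra's classification \cite[Proposition 2.8]{Slof11} produces an XOR game $\mcG$ with $p$ as its unique optimal quantum correlation and solution algebra $Cl_r$; even $r$ makes $Cl_r$ a full matrix algebra with unique (finite-dimensional) tracial state; Proposition~\ref{prop:stablepair}(2) and Corollary~\ref{cor:uniquetrace} then make $\mcG$ a robust self-test; and the "only if'' direction of Lemma~\ref{lemma:robustunique} transfers this to the correlation $p$. The only step you should flesh out is the one you already flag — verifying carefully that extremality of $p$ in $C_q^{unbiased}$ actually yields a game for which $p$ is the \emph{unique} optimal quantum correlation (this is where Slofstra's marginal-bias/vertex-SDP machinery is needed, not merely the existence of a supporting functional) — but the overall structure is exactly what the paper intends.
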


\section{Self-testing in parallel with synchronous games}\label{sec:parallel}
In this section, we apply our tracial-state characterization of self-testing to study self-testing in parallel with synchronous games. We first recall that given two nonlocal games $\mcG_1=(X_1,Y_1,A_1,B_1,V_1)$ and $\mcG_2=(X_2,Y_2,A_2,B_2,V_2)$, their product $\mcG_1\times\mcG_2$ is the nonlocal game $(X_1\times X_2,Y_1\times Y_2,A_1\times A_2,B_1\times B_2,V_1\times V_2)$ where 
\begin{equation*}
    V_1\times V_2\left((a_1,a_2),(b_1,b_2)|(x_1,x_2),(y_1,y_2)  \right)=V_1(a_1,b_1|x_1,y_1)V_2(a_2,b_2|x_2,y_2).
\end{equation*}
When $\mcG_1=\mcG_2$, their product is just a parallel repetition. It is easy to see that if both $\mcG_1$ and $\mcG_2$ are synchronous games, then $\mcG_1\times\mcG_2$ is also a synchronous game. In this case, the following characterization for $C^*(\mcG_1\times\mcG_2)$, the synchronous algebra of $\mcG_1\times \mcG_2$, was established in \cite{productsyn}. 

\begin{theorem}[Theorem 3.1 in \cite{productsyn}]
    Let $\mcG_1$ and $\mcG_2$ be synchronous games. The games $\mcG_1$ and $\mcG_2$ have perfect quantum strategies if and only if $\mcG_1\times \mcG_2$ does. In this case, the associated synchronous algebra $C^*(\mcG_1\times\mcG_2)$ for the product game $\mcG_1\times\mcG_2$ is $*$-isomorphic to $C^*(\mcG_1)\otimes_{max}C^*(\mcG_2)$.
\end{theorem}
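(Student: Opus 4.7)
The proof builds explicit mutually inverse $*$-homomorphisms between $C^*(\mcG_1\times\mcG_2)$ and $C^*(\mcG_1)\otimes_{\max}C^*(\mcG_2)$, and then transfers perfect quantum strategies via the finite-dimensional tracial-state characterization of \Cref{syngameselftest}. Write $n^{x_i}_{a_i}$ for the generators of $C^*(\mcG_i)$ and $n^{(x_1,x_2)}_{(a_1,a_2)}$ for those of $C^*(\mcG_1\times\mcG_2)$.

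The forward map $\psi: C^*(\mcG_1\times\mcG_2)\to C^*(\mcG_1)\otimes_{\max}C^*(\mcG_2)$ should send $n^{(x_1,x_2)}_{(a_1,a_2)}\mapsto n^{x_1}_{a_1}\otimes n^{x_2}_{a_2}$. For each $(x_1,x_2)$, the family $\{n^{x_1}_{a_1}\otimes n^{x_2}_{a_2}\}$ is visibly a PVM, and the synchronous relations of $\mcG_1\times\mcG_2$ are inherited from those of $\mcG_1$ and $\mcG_2$, so $\psi$ is well-defined by universality. The reverse map comes from two commuting $*$-homomorphisms $\phi_i: C^*(\mcG_i)\to C^*(\mcG_1\times\mcG_2)$ assembled from the marginal projections
\[
    N^{x_1}_{a_1}:=\sum_{a_2\in A_2}n^{(x_1,x_2)}_{(a_1,a_2)}\quad\text{and}\quad M^{x_2}_{a_2}:=\sum_{a_1\in A_1}n^{(x_1,x_2)}_{(a_1,a_2)}.
\]
The main obstacle is well-definedness: a priori $N^{x_1}_{a_1}$ depends on the auxiliary choice of $x_2$. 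Here synchronicity of $\mcG_1$ is essential: since $V_1(a_1,b_1|x_1,x_1)=0$ whenever $a_1\neq b_1$, the synchronous relation of $\mcG_1\times\mcG_2$ forces $n^{(x_1,x_2)}_{(a_1,a_2)}n^{(x_1,x_2')}_{(b_1,b_2)}=0$ whenever $a_1\neq b_1$, for any $a_2,b_2$ and any $x_2,x_2'$. Summing over $a_2,b_2$ yields $N^{x_1}_{a_1,x_2}\cdot N^{x_1}_{b_1,x_2'}=0$ for $a_1\neq b_1$, and combining with $\sum_{a_1}N^{x_1}_{a_1,x_2}=1$ forces $N^{x_1}_{a_1,x_2}=N^{x_1}_{a_1,x_2'}$, eliminating the $x_2$-dependence. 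A symmetric argument handles $M^{x_2}_{a_2}$. Routine checks then show that $\{N^{x_1}_{a_1}\}_{a_1}$ is a PVM obeying the synchronous relations of $\mcG_1$, that the $N$'s commute with the $M$'s, and that $N^{x_1}_{a_1}M^{x_2}_{a_2}=n^{(x_1,x_2)}_{(a_1,a_2)}$ by PVM orthogonality within a single question. Universality of the max tensor product then produces the required $\phi=\phi_1\otimes\phi_2$, and $\psi\circ\phi$ and $\phi\circ\psi$ agree with the identity on generators.

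The equivalence of perfect quantum strategies then follows from \Cref{syngameselftest}, which says a synchronous game admits a perfect quantum strategy iff its synchronous algebra has a finite-dimensional tracial state. Given finite-dimensional tracial states $\tau_i$ on $C^*(\mcG_i)$, the product $\tau_1\otimes\tau_2$ is a finite-dimensional tracial state on $C^*(\mcG_1)\otimes_{\max}C^*(\mcG_2)\cong C^*(\mcG_1\times\mcG_2)$; conversely, a finite-dimensional tracial state $\tau$ on $C^*(\mcG_1\times\mcG_2)$ pulls back along each $\phi_i$ to a tracial state on $C^*(\mcG_i)$ whose GNS is a subrepresentation of $\pi_\tau$, hence still finite-dimensional. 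So perfect quantum strategies exist for both component games iff they exist for the product.
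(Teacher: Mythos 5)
The paper itself does not prove this statement --- it cites it as Theorem 3.1 of \cite{productsyn} --- so there is no in-paper proof to compare against; your argument must be assessed on its own. On that basis it is correct, and it follows the standard route. Well-definedness of $\psi$ holds because $V_1\times V_2=0$ forces $V_1=0$ or $V_2=0$, so one tensor factor of $\psi(n)\psi(n')$ already vanishes in its synchronous algebra. The heart of the reverse map is your well-definedness argument for the marginals: synchronicity of $\mcG_1$ gives $n^{(x_1,x_2)}_{(a_1,a_2)}n^{(x_1,x_2')}_{(b_1,b_2)}=0$ whenever $a_1\neq b_1$ (since $V_1(a_1,b_1|x_1,x_1)=0$ regardless of $V_2$), and combined with the resolutions of the identity $\sum_{a_1}N^{x_1}_{a_1,x_2}=1$ this forces $N^{x_1}_{a_1,x_2}=N^{x_1}_{a_1,x_2}N^{x_1}_{a_1,x_2'}=N^{x_1}_{a_1,x_2'}$. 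Commutativity of the two marginal families and $N^{x_1}_{a_1}M^{x_2}_{a_2}=n^{(x_1,x_2)}_{(a_1,a_2)}$ then follow from within-question orthogonality of the PVM $\{n^{(x_1,x_2)}_{(a_1,a_2)}\}_{(a_1,a_2)}$, and the universal property of $\otimes_{\max}$ assembles $\phi$; the two maps are inverses on generators. The transfer of finite-dimensional tracial states is also correct: $\tau\circ\phi_i$ has GNS representation a subrepresentation of $\pi_\tau$, and $\tau_1\otimes\tau_2$ (min-continuous, hence max-continuous) is a finite-dimensional tracial state on $C^*(\mcG_1)\otimes_{\max}C^*(\mcG_2)$.

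One correction is needed. \Cref{syngameselftest} does \emph{not} say that a synchronous game has a perfect quantum strategy iff $C^*(\mcG)$ has a finite-dimensional tracial state; it says $\mcG$ is a \emph{self-test} iff $C^*(\mcG)$ has a \emph{unique} finite-dimensional tracial state. The existence characterization you actually invoke --- perfect quantum strategies of a synchronous game correspond to finite-dimensional tracial states on $C^*(\mcG)$ --- is a different (weaker) fact, which in this paper follows from \Cref{prop:synpair} together with \Cref{prop:extendtracegame}. With that reference corrected, the argument stands.
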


\begin{lemma}\label{lemma:irreptensor}
    Let $\mcA$ and $\mcB$ be two $C^*$-algebras. The $C^*$-algebra $\mcA\otimes_{max}\mcB$ has a unique finite-dimensional irreducible representation if and only if both $\mcA$ and $\mcB$ have unique finite-dimensional irreducible representations.
\end{lemma}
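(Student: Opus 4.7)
The plan is to reduce the lemma to an explicit bijection
$$\Irr_{\fin}(\mcA) \times \Irr_{\fin}(\mcB) \;\longleftrightarrow\; \Irr_{\fin}(\mcA \otimes_{max}\mcB),$$
where a pair $(\pi_A,\pi_B)$ is sent to the representation $\pi_A\otimes\pi_B$ of $\mcA\otimes_{max}\mcB$ on $\mcH_A\otimes\mcH_B$. Once this bijection (up to unitary equivalence) is in hand, the lemma is immediate: the tensor product algebra has exactly one finite-dimensional irreducible $*$-representation if and only if both factors do.

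For the easy direction I would check that, given finite-dimensional irreducible $\pi_A:\mcA\to\msB(\mcH_A)$ and $\pi_B:\mcB\to\msB(\mcH_B)$, the operators $\pi_A(a)\otimes \Id$ and $\Id\otimes \pi_B(b)$ commute, so by the universal property of $\otimes_{max}$ they define a representation $\pi$ of $\mcA\otimes_{max}\mcB$ on $\mcH_A\otimes\mcH_B$. Since $\pi_A,\pi_B$ are irreducible, $\pi_A(\mcA)'' = \msB(\mcH_A)$ and $\pi_B(\mcB)'' = \msB(\mcH_B)$, so $\pi(\mcA\otimes\mcB)'' \supseteq \msB(\mcH_A)\otimes\msB(\mcH_B) = \msB(\mcH_A\otimes\mcH_B)$, proving irreducibility.

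The main obstacle is the hard direction: showing every finite-dimensional irreducible representation $\pi:\mcA\otimes_{max}\mcB\to\msB(\mcH)$ factors as a tensor product. Let $\pi_A:\mcA\to\msB(\mcH)$ and $\pi_B:\mcB\to\msB(\mcH)$ be the two commuting representations coming from $\pi$, and set $M_A := \pi_A(\mcA)''$, $M_B := \pi_B(\mcB)''$. Since $M_A$ and $M_B$ commute we have $M_B\subseteq M_A'$, and irreducibility of $\pi$ gives $(M_A\cup M_B)' = \C\Id$. Therefore
$$Z(M_A) = M_A \cap M_A' \subseteq M_A' \cap M_B' = (M_A\cup M_B)' = \C\Id,$$
so $M_A$ is a factor; symmetrically so is $M_B$. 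In finite dimensions factors are full matrix algebras, and a standard computation with commuting full matrix subalgebras that jointly generate $\msB(\mcH)$ yields a unitary $U:\mcH\to\mcH_A\otimes\mcH_B$ identifying $M_A$ with $\msB(\mcH_A)\otimes\Id$ and $M_B$ with $\Id\otimes\msB(\mcH_B)$. Under this identification the compositions $U\pi_A(\cdot)U^*$ and $U\pi_B(\cdot)U^*$ land surjectively in the two tensor factors, so they are irreducible representations of $\mcA$ and $\mcB$ whose tensor product recovers $\pi$.

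It remains to verify that the association $(\pi_A,\pi_B)\mapsto \pi_A\otimes\pi_B$ and the inverse construction above are well-defined on equivalence classes and are mutually inverse; both checks are routine once the factor decomposition has been produced. Together these steps yield the claimed bijection, from which the biconditional in the lemma follows directly: $\mcA\otimes_{max}\mcB$ has a unique finite-dimensional irreducible representation precisely when the cardinality of the product $\Irr_{\fin}(\mcA)\times\Irr_{\fin}(\mcB)$ is one, i.e.\ when both $\Irr_{\fin}(\mcA)$ and $\Irr_{\fin}(\mcB)$ are singletons.
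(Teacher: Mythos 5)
Your proof is correct but takes a genuinely different route from the paper's. For the ``if'' direction, the paper invokes its hypothesis immediately: since $\mcA$ has a unique finite-dimensional irreducible $\wtd{\pi}_A$, the complete reducibility of the finite-dimensional representation $\pi_A$ forces $\pi_A \cong \wtd{\pi}_A \otimes \Id$ on $\mcH \cong \mcH_A \otimes \mcH_{\mathrm{mult}}$, after which $\pi_B = \Id \otimes \widehat{\pi}_B$ by commutation, and irreducibility of $\widehat{\pi}_B$ is outsourced to a lemma of \cite{PSZZ23}. You instead prove the stronger, self-contained fact that \emph{every} finite-dimensional irreducible representation of $\mcA \otimes_{\max} \mcB$ factors as $\pi_A \otimes \pi_B$ with both factors irreducible, giving a bijection $\Irr_{\fin}(\mcA) \times \Irr_{\fin}(\mcB) \leftrightarrow \Irr_{\fin}(\mcA \otimes_{\max}\mcB)$ via the center computation $Z(M_A) \subseteq M_A' \cap M_B' = \C\Id$. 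That buys you generality (no uniqueness hypothesis needed for the factorization) and independence from the external lemma, at the cost of a slightly longer argument. One detail to make explicit in your ``standard computation'': you need $M_B = M_A'$, not merely $M_B \subseteq M_A'$, to conclude that $\pi_B$ is carried onto \emph{all} of $\Id \otimes \msB(\mcH_B)$; this holds because $M_B \subseteq M_A'$ is a unital inclusion of finite-dimensional factors with trivial relative commutant $M_B' \cap M_A' = \C\Id$, which forces equality.
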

\begin{proof}
    We first prove the contrapositive of the ``only if" direction. Suppose $\mcA$ has two different finite-dimensional irreducible representations $\pi_A$ and $\phi_A$. Let $\pi_B$ be a finite-dimensional irreducible representation of $\mcB$. Then $\pi_A\otimes\pi_B$ and $\phi_A\otimes \pi_B$ are different finite-dimensional irreducible representations for $\mcA\otimes_{max}\mcB$. 

    Now we prove the ``if" direction. Suppose $\mcA$ and $\mcB$ have unique finite-dimensional irreducible representations $\wtd{\pi}_A$ and $\wtd{\pi}_B$ on $\mcH_A$ and $\mcH_B$ respectively. Let $\pi:\mcA\otimes_{max}\mcB \arr\msB(\mcH)$ be a finite-dimensional irreducible representation of $\mcA\otimes_{max}\mcB$.\footnote{Such a representation always exists. For instance, $\wtd{\pi}_A\otimes\wtd{\pi}_B$ is one, and the unique one, as the proof shows.} Then there are finite-dimensional representations $\pi_A:\mcA\arr\msB(\mcH)$ and $\pi_B:\mcB\arr\msB(\mcH)$ with commuting ranges such that $\pi(a\otimes b)=\pi_A(a)\pi_B(b)$ for all $a\in\mcA$ and $b\in\mcB$. Note that $\pi_A$ can be decomposed into a direct sum of finite-dimensional irreducible representations of $\mcA$ and $\wtd{\pi}_A$ is the unique finite-dimensional irreducible representation of $\mcA$, so $\mcH\cong \mcH_A\otimes \mcH_B$ and $\pi_A\cong \wtd{\pi}_A\otimes\Id_{\mcH_B}$ for some Hilbert space $\mcH_B$. Since $\pi_B$ and $\pi_A$ have commuting ranges and $\wtd{\pi}_A$ is irreducible, $\pi_B(b)$ acting trivially on $\mcH_A$ for all $b\in\mcB$. So $\pi_B=\Id_{\mcH_A}\otimes\widehat{\pi}_B$ for some representation $\widehat{\pi}_B:\mcB\arr \msB(\mcH_B)$, and hence $\pi\cong\wtd{\pi}_A\otimes\widehat{\pi}_B$. By \cite[Lemma 4.11]{PSZZ23}, $\widehat{\pi}_B$ must be irreducible. It follows that $\widehat{\pi}_B\cong\wtd{\pi}_B$ and $\pi\cong\wtd{\pi}_A\otimes\wtd{\pi}_B$.
\end{proof}

Recall from \Cref{syngameselftest} that a synchronous game $\mcG$ is a self-test for its perfect quantum strategies if and only if the associated synchronous algebra $C^*(\mcG)$ has a unique irreducible $*$-representation. As an immediate consequence of \Cref{lemma:irreptensor}, we have:

\begin{theorem}\label{thm:productselftest}
    Let $\mcG_1$ and $\mcG_2$ be two synchronous games. The product game $\mcG_1\times\mcG_2$ is a self-test for its perfect quantum strategies if and only if both $\mcG_1$ and $\mcG_2$ are self-tests for their perfect quantum strategies.
\end{theorem}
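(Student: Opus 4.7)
The plan is to chain together three results already available in the paper: the characterization of self-testing for synchronous games via uniqueness of finite-dimensional irreducible representations of the synchronous algebra (\Cref{syngameselftest}), the identification $C^*(\mcG_1 \times \mcG_2) \cong C^*(\mcG_1) \otimes_{\max} C^*(\mcG_2)$ of \cite{productsyn}, and \Cref{lemma:irreptensor}. The proof is essentially a translation of the game-theoretic statement into the algebraic one, after which the structural work has already been done by \Cref{lemma:irreptensor}.

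Concretely, I will first invoke \Cref{syngameselftest} to rephrase both sides of the biconditional. The product game $\mcG_1 \times \mcG_2$ is a self-test for its perfect quantum strategies iff its synchronous algebra $C^*(\mcG_1 \times \mcG_2)$ has a unique finite-dimensional irreducible $*$-representation, and likewise each $\mcG_i$ is a self-test iff $C^*(\mcG_i)$ has a unique finite-dimensional irreducible $*$-representation. Second, I will use the $*$-isomorphism $C^*(\mcG_1 \times \mcG_2) \cong C^*(\mcG_1) \otimes_{\max} C^*(\mcG_2)$ from \cite{productsyn} to transfer the question about the product algebra to a question about the maximal tensor product. Third, \Cref{lemma:irreptensor} says precisely that $C^*(\mcG_1) \otimes_{\max} C^*(\mcG_2)$ has a unique finite-dimensional irreducible $*$-representation iff each of $C^*(\mcG_1)$ and $C^*(\mcG_2)$ does. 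Combining these three equivalences immediately yields the biconditional.

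There is essentially no obstacle: all the substantive content has been distilled into \Cref{lemma:irreptensor}, whose proof exploits that any finite-dimensional irreducible representation of $\mcA \otimes_{\max} \mcB$ factors (up to unitary equivalence) as a tensor product of finite-dimensional irreducible representations of $\mcA$ and $\mcB$. The only subtlety to watch is the implicit requirement that the games being tested actually admit perfect quantum strategies; but the same result from \cite{productsyn} also asserts that $\mcG_1 \times \mcG_2$ has perfect quantum strategies iff both $\mcG_1$ and $\mcG_2$ do, so this hypothesis is consistent across both sides of the biconditional and no case analysis is required.
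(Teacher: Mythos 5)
Your proof is correct and matches the paper's argument exactly: the paper also derives the result by combining \Cref{syngameselftest}, the $*$-isomorphism $C^*(\mcG_1\times\mcG_2)\cong C^*(\mcG_1)\otimes_{\max}C^*(\mcG_2)$ from \cite{productsyn}, and \Cref{lemma:irreptensor}. Nothing to add.
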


This theorem can be easily generalized to more products of copies (aka. parallel repetition) of a synchronous game.

\begin{corollary}\label{cor:parallelselftest}
    A synchronous game $\mcG$ is a self-test for its perfect quantum strategies if and only if the parallel repeated game $\mcG^{\times n}$ is a self-test for its perfect strategies.
\end{corollary}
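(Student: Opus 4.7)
The plan is to derive \Cref{cor:parallelselftest} from \Cref{thm:productselftest} by a straightforward induction on $n$. Write $\mcG^{\times n}$ for the $n$-fold parallel repetition, so that $\mcG^{\times 1} = \mcG$ and $\mcG^{\times n} = \mcG^{\times (n-1)} \times \mcG$ for $n \geq 2$. Since parallel repetition of a synchronous game is synchronous (the question and answer sets become products, and the winning predicate is a product of synchronous predicates), each $\mcG^{\times n}$ falls within the scope of \Cref{thm:productselftest}.

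The base case $n=1$ is trivial. For the inductive step, assume the statement holds for $n-1$, i.e., $\mcG$ is a self-test for its perfect quantum strategies if and only if $\mcG^{\times (n-1)}$ is. Applying \Cref{thm:productselftest} to the decomposition $\mcG^{\times n} = \mcG^{\times (n-1)} \times \mcG$ gives: $\mcG^{\times n}$ is a self-test for its perfect quantum strategies if and only if \emph{both} $\mcG^{\times (n-1)}$ and $\mcG$ are self-tests for their perfect quantum strategies. By the inductive hypothesis the former condition is equivalent to $\mcG$ being a self-test, so the whole conjunction reduces to ``$\mcG$ is a self-test,'' completing the induction.

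There is no real obstacle here, since the heavy lifting has already been done in \Cref{thm:productselftest} (which itself rests on \Cref{lemma:irreptensor} and the identification $C^*(\mcG_1 \times \mcG_2) \cong C^*(\mcG_1) \otimes_{max} C^*(\mcG_2)$ from \cite{productsyn}). One small point worth stating explicitly in the write-up is that parallel repetition of a synchronous game is synchronous, so that the synchronous-algebra framework (and hence \Cref{thm:productselftest}) actually applies at each step of the induction; this amounts to checking that if $V(a,a'|x,x) = 0$ whenever $a \neq a'$, then $V^{\times n}$ satisfies the analogous property on the product alphabets, which is immediate from the product structure of the predicate.
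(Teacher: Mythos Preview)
Your proposal is correct and matches the paper's intent: the paper does not give an explicit proof of this corollary, merely stating that \Cref{thm:productselftest} ``can be easily generalized to more products of copies (aka.\ parallel repetition) of a synchronous game,'' and your induction on $n$ via the decomposition $\mcG^{\times n} = \mcG^{\times (n-1)} \times \mcG$ is exactly that easy generalization.
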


It is natural to ask whether \Cref{thm:productselftest,cor:parallelselftest} also hold for robust self-testing.
\begin{conjecture}
    Let $\mcG_1$ and $\mcG_2$ be two synchronous games. The product game $\mcG_1\times\mcG_2$ is a robust self-test for its perfect strategies if and only if both $\mcG_1$ and $\mcG_2$ are robust self-tests for their perfect strategies.
\end{conjecture}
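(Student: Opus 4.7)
The strategy is to translate the conjecture into an algebraic statement about amenable tracial states on the synchronous algebras. By the main theorem of \cite{productsyn}, $C^*(\mcG_1 \times \mcG_2) \cong C^*(\mcG_1) \otimes_{max} C^*(\mcG_2)$. Combining \Cref{thm:robustgame} with the bijective correspondence in \Cref{prop:extendtracegame} (together with \Cref{prop:tracequotientfd,prop:tracequotientamenable} to transfer between the $\PVM^{X,A}$-level and $C^*(\mcG)$-level statements), a synchronous game $\mcG$ with a perfect quantum strategy is a robust self-test for perfect strategies if and only if the synchronous algebra $C^*(\mcG)$ admits a unique amenable tracial state and that trace is finite-dimensional. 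Writing $\mcA_i := C^*(\mcG_i)$, the conjecture reduces to the purely algebraic claim: \emph{$\mcA_1 \otimes_{max} \mcA_2$ has a unique amenable (automatically finite-dimensional) tracial state if and only if each $\mcA_i$ does}.

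\textbf{Proof of the reduction.} For the $(\Leftarrow)$ direction, let $\tau_i$ be the unique amenable trace on $\mcA_i$; by \Cref{cor:quotient}, $\tau_i$ factors through $\mcA_i / \mcI_{\tau_i} \cong M_{n_i}$, so $\tau_1 \otimes \tau_2$ factors through $M_{n_1} \otimes M_{n_2} = M_{n_1 n_2}$ and defines a finite-dimensional amenable trace on $\mcA_1 \otimes_{max} \mcA_2$, with amenability coming from \Cref{lemma:amenable}(a) by tensoring the almost-multiplicative ucp approximants of $\tau_1$ and $\tau_2$. For uniqueness, any amenable trace $\sigma$ on $\mcA_1 \otimes_{max} \mcA_2$ restricts to amenable traces on each $\mcA_i$ (compose the ucp approximants of $\sigma$ with the inclusions $a \mapsto a \otimes 1$ and $b \mapsto 1 \otimes b$), hence $\sigma(\,\cdot\, \otimes 1) = \tau_1$ and $\sigma(1 \otimes \,\cdot\,) = \tau_2$; a Cauchy--Schwarz argument then forces $\sigma$ to vanish on $\mcI_{\tau_1} \otimes \mcA_2 + \mcA_1 \otimes \mcI_{\tau_2}$ and descend to $M_{n_1} \otimes_{max} M_{n_2} = M_{n_1 n_2}$, whose unique trace is $\tau_1 \otimes \tau_2$. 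For the $(\Rightarrow)$ direction, $w_q(\mcG_1 \times \mcG_2) = w_q(\mcG_1)\, w_q(\mcG_2) = 1$ guarantees each $\mcG_i$ has a perfect quantum strategy, providing at least one finite-dimensional amenable trace on $\mcA_i$ via \Cref{prop:extendtracegame}; uniqueness on each factor is by contraposition, since two distinct amenable traces $\tau_1 \ne \tau_1'$ on $\mcA_1$ would, when paired with any amenable trace $\tau_2$ on $\mcA_2$, give the distinct amenable traces $\tau_1 \otimes \tau_2$ and $\tau_1' \otimes \tau_2$ on the tensor product.

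\textbf{Main obstacle.} The most delicate step is controlling amenability across the \emph{max} (as opposed to min) tensor product: restriction to a factor preserves amenability in a straightforward way by composing ucp approximants with the canonical inclusions, but extending $\tau_1 \otimes \tau_2$ to $\mcA_1 \otimes_{max} \mcA_2$ requires boundedness in $\|\cdot\|_{max}$, which is secured here because finite-dimensionality of each $\tau_i$ forces the tensor trace to factor through the nuclear quotient $M_{n_1 n_2}$. A second, and potentially more serious, subtlety is the passage between the PVM-level formulation of \Cref{thm:robustgame} and the $C^*(\mcG)$-level formulation used above: \Cref{prop:tracequotientamenable} only gives one direction of the amenability correspondence between $\PVM^{X,A}$ and its quotient $C^*(\mcG)$, and the converse can fail in general (see \Cref{rmk:tracequotientamenable}), so one must verify that for synchronous algebras the relevant equivalence does hold --- plausibly because the ideal $\langle\mcR\rangle$ generated by the synchronous relations yields a locally split extension, placing us in the favorable regime of \Cref{rmk:tracequotientamenable}.
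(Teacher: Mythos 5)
This statement is explicitly labelled a \emph{Conjecture} in the paper --- the authors do not prove it, and the non-robust analogue (\Cref{thm:productselftest,cor:parallelselftest}) is the most they establish. So there is no paper proof to compare against, and your task was effectively to close an open problem; you have not done so.

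Your high-level roadmap is the natural one and correctly mirrors the paper's proof of the non-robust version: use $C^*(\mcG_1 \times \mcG_2) \cong C^*(\mcG_1) \otimes_{\max} C^*(\mcG_2)$ and reduce to an algebraic statement about uniqueness of traces on a max tensor product. But the reduction itself is where you have a genuine gap, and you flag it yourself without resolving it. \Cref{thm:robustgame} characterizes robust self-testing via uniqueness of \emph{amenable tracial states on $\PVM^{Y,B}$ respecting the relations $\mcR$}, not via amenable traces on $C^*(\mcG)$. The two conditions agree only if every amenable trace on $\PVM^{Y,B}$ whose kernel contains $\langle\mcR\rangle$ descends to an amenable trace on $C^*(\mcG)$, and \Cref{rmk:tracequotientamenable} explicitly warns that this direction can fail. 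You invoke the possibility that the synchronous ideal $\langle\mcR\rangle$ yields a locally split extension of $\PVM^{Y,B}$, which would put you in the favorable regime of \cite[Proposition 6.3.5]{BrownOzawa}, but you offer no argument for this, and it is far from obvious --- for general synchronous games the quotient $\PVM^{X,A}/\langle\mcR\rangle$ can be arbitrarily complicated, and there is no reason to expect a ucp right inverse to exist on every finite-dimensional operator system. This is not a technical footnote you can wave past; it is (plausibly) the exact reason the paper leaves the statement as a conjecture rather than a theorem. Until this bridge is built, none of your three equivalent formulations of ``robust self-test'' at the $C^*(\mcG)$ level (unique amenable trace, unique trace, or unique finite-dimensional trace on the product algebra) can be derived from \Cref{thm:robustgame}.

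As a secondary remark: even granting the game-algebra formulation, your claim that $\tau_1 \otimes \tau_2$ is amenable on $\mcA_1 \otimes_{\max} \mcA_2$ ``by tensoring the almost-multiplicative ucp approximants'' is too quick --- tensor products of approximately-multiplicative ucp maps are not automatically approximately multiplicative in the Hilbert--Schmidt norm without care, and the boundedness in $\|\cdot\|_{\max}$ of the tensor functional deserves a real sentence rather than an appeal to nuclearity of the quotient. That piece is likely repairable; the locally split gap is the one that blocks the argument.
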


\begin{comment}
In \Cref{coro:uniqueirreptrace} we've shown that a $C^*$-algebra has a unique finite-dimensional irreducible $*$-representation if and only if it has a unique finite-dimensional tracial state. So \Cref{lemma:irreptensor} can be equivalently stated as the following.
\begin{lemma}
Let $\mcA$ and $\mcB$ be two $C^*$-algebras. The $C^*$-algebra $\mcA\otimes_{max}\mcB$ has a unique finite-dimensional tracial state $\tau$ if and only $\mcA$ has a unique finite-dimensional tracial state $\tau_A$ and $\mcB$ has a unique finite-dimensional $\tau_B$.
    
\end{lemma}

states that for any $C^*$-algebras $\mcA$ and $\mcB$, the $C^*$-algebra $\mcA\otimes_{max}\mcB$ has a unique finite-dimensional tracial state if and only if both $\mcA$ and $\mcB$ have unique tracial states. To study robust self-testing for product games or parallel repeated games, 
\end{comment}

\section{A quantitative Gowers-Hatami theorem for game algebras}\label{sec:GH}

In \cite{GH17}, Gowers and Hatami prove that any finite group $G$ is dilation-stable: if a function $f$ from $G$ to unitaries respects the multiplication table of $G$ approximately, then there must be a representation $\phi$ of $G$ and an isometry $I$ such that $f(g)$ and $I^*\phi(g)I$ are close in $\norm{\cdot}_{hs}$. A state-norm version of the Gowers-Hatami theorem was introduced by Vidick in \cite{Vid18} and has been widely used in proving robust self-testing results. In \cite{MPS21}, Man\v{c}inska, Praksh, and Schafhauser introduce an analog of the Gowers-Hatami theorem that applies to some $C^*$-algebras. Using this result, they construct a family of constant-sized nonlocal games that can robustly self-test for maximally entangled states of unbounded dimension. However, the Gowers-Hatami theorem in \cite{MPS21} is not quantitative, in the sense that there is no explicit function to characterize the ``distance" between the approximate representations and exact representations. Consequently, the robustness of their self-tests is non-constructive. In this section, we state and prove the first quantitative Gowers-Hatami theorem for $C^*$-algebres and show how it can be used to derive the explicit robustness function of a self-test.

We first need to recall some notions from \cite{MSZ23}.

\begin{definition}[Definition 3.2 in \cite{MSZ23}]\label{defn:rde}
    Let $\mcA$ be a $*$-algebra generated by a set of unitaries $\mcX$. Let $\mcR\subseteq\C^*\ang{\mcX}$ be a set of $*$-polynomials over
    $\mcX$. For any $*$-polynomial $f\in\C^*\ang{\mcX}$ that is trivial in
    $\mcA/\ang{\mcR}$, we say that $\sum_{i=1}^n \lambda_iu_ir_iv_i$ is an
    $\mcR$-decomposition for $f$ in $\mcA$ if
    \begin{enumerate}
        \item $u_i,v_i$ are $*$-monomials in $\C^*\ang{\mcX}$ for all $1\leq i\leq n$, 
        \item $r_i\in\mcR\cup\mcR^*$ for all $1\leq i\leq n$,
        \item $\lambda_i \in \C$ for all $1\leq i\leq n$, and
        \item $f=\sum_{i=1}^n \lambda_iu_ir_iv_i$ in $\mcA$.
    \end{enumerate}
    The size of an $\mcR$-decomposition $\sum_{i=1}^n \lambda_i u_i r_i v_i$ is $\sum_{i=1}^{n} |\lambda_i| (1+ \norm{r_i}_{\mcA} \deg(v_i))$,
    where $\norm{\cdot}_{\mcA}$ is the operator norm in $\mcA$.
 \end{definition}
Here a $*$-monomial in $\C^*\ang{\mcX}$ is a product $a_1a_2\cdots a_k$ where
$k \geq 0$ and $a_1,\ldots, a_k\in\mcX\cup\mcX^*$ (the integer $k$ is called
the degree). Since all the generators in $\mcX$ are unitary in $\mcA$, every $u_i$ and $v_i$ are unitary in $\mcA$. We see that the size of an $\mcR$-decomposition does not depend on $u_i$'s at all. This is because later we will evaluate $u_ir_iv_i$ in some state norm $\norm{\cdot}_\rho$. Every $u_i$ is unitary in $\mcA$ and $\norm{\cdot}_\rho$ is left unitarily invariant, so $\norm{u_ir_iv_i}_\rho=\norm{r_iv_i}_\rho$. The size of an $\mcR$-decomposition does depend on the degree of the monomials $v_i$ because we want to switch $r_i$ and $v_i$ in $\norm{\cdot}_\rho$ and $\deg(v_i)$ is the price we need to pay.

Given any $*$-algebra $\mcA$ and a set of relations $\mcR\subset \mcA$, we say a $*$-representation $\pi:\mcA\arr\msB(\mcH)$ is an $(\eps,\rho,\mcR)$-representation for some $\eps\geq 0$ and density operator $\rho$ on $\mcH$, if
\begin{align*}
 \norm{\pi(r)}_\rho\leq \eps
\end{align*}
for all $r\in\mcR$. For example, suppose a nonlocal game $\mcG$ has a $\nu$-robust determining pair $(\Gamma,\mcR)$, and let $\models$ be an $\eps$-optimal strategy with associated representation $\pi_A\otimes\pi_B$. Then $\pi_B$ is a $(\nu(\eps),\rho_B,\mcR)$-representation for $\PVM^{Y,B}$, where $\rho_B$ is the reduced density of $\ket{\psi}$ on $\mcH_B$.

Now we can state our quantitative Gowers-Hatami theorem. In the following, we work with unitary generators $\{b_y:y\in Y\}$ for $\PVM^{Y,B}$.

\begin{theorem}\label{thm:algGH}
 Let $\mcG=(X,Y,A,B,\mu,V)$ be a nonlocal game with a $\nu$-robust determining pair $(\Gamma,\mcR)$, and let $C^*(\mcG)=\PVM^{Y,B}\slash \ang{\mcR}$ be the associated game algebra. Suppose there is a ucp map $\theta:C^*(\mcG)\arr \PVM^{Y,B}$ and a positive real number $\Lambda$ such that every $\theta(b_y)-b_y$ has an $\mcR$-decomposition in $\PVM^{Y,B}$ with size at most $\Lambda$. Then for any $(\eps,\rho,\mcR)$-representation $\pi:\PVM^{Y,B}\arr\msB(\mcH)$, there is a $*$-representation $\wtd{\pi}:C^*(\mcG)\arr\msB(\wtd{\mcH})$ and an isometry $I:\mcH\arr\wtd{\mcH}$ such that
 \begin{equation}
     \norm{\pi(b_y)-I^*\wtd{\pi}(b_y)I}_\rho\leq \Lambda\cdot\nu(\eps)\label{eq:GH}
 \end{equation}
 for all $y\in Y$. If in addition, $\mcG$ is a self-test with spectral gap $\Delta$, then $\mcG$ is a robust self-test with robustness $O\left(\frac{\Lambda\nu(\eps)+\sqrt{\eps}}{\Delta}\right)$.
\end{theorem}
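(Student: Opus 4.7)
The plan is a Stinespring-dilation argument combined with the $\mcR$-decomposition machinery. For the first assertion, define the composite ucp map $\widehat{\theta} := \pi\circ\theta : C^*(\mcG) \to \msB(\mcH)$. By the Stinespring dilation theorem, there exist a $*$-representation $\wtd{\pi} : C^*(\mcG) \to \msB(\wtd{\mcH})$ and an isometry $I : \mcH \to \wtd{\mcH}$ with $\widehat{\theta}(c) = I^*\wtd{\pi}(c)I$ for all $c \in C^*(\mcG)$. The key algebraic identity is
\begin{equation*}
\pi(b_y) - I^*\wtd{\pi}(b_y)I \;=\; \pi(b_y) - \pi\bigl(\theta(b_y)\bigr) \;=\; -\pi\bigl(\theta(b_y)-b_y\bigr),
\end{equation*}
where the ``$b_y$'' inside $\wtd{\pi}$ is the generator of $C^*(\mcG)$ while the others live in $\PVM^{Y,B}$; since $\theta(b_y) - b_y$ lies in the kernel of the quotient map it belongs to $\langle \mcR \rangle$, and the whole problem reduces to bounding $\norm{\pi(\theta(b_y)-b_y)}_\rho$.

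To bound this quantity, expand $\theta(b_y)-b_y = \sum_i \lambda_i u_i r_i v_i$ via a size-$\Lambda$ $\mcR$-decomposition. Each $u_i, v_i$ is a $*$-monomial in the unitary generators, so $\pi(u_i)$ and $\pi(v_i)$ are unitary. By left unitary invariance of $\norm{\cdot}_\rho$ and the triangle inequality,
\begin{equation*}
\norm{\pi(u_i r_i v_i)}_\rho \;=\; \norm{\pi(r_i)\pi(v_i)}_\rho \;\leq\; \norm{\pi(r_i)\sqrt{\rho}\,\pi(v_i)}_F + \norm{\pi(r_i)}_{\op}\norm{\pi(v_i)\sqrt{\rho}-\sqrt{\rho}\pi(v_i)}_F.
\end{equation*}
The first term equals $\norm{\pi(r_i)}_\rho \leq \nu(\eps)$ by cyclicity of the trace and the $(\eps,\rho,\mcR)$-representation hypothesis. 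For the second, the commutator estimate from \Cref{lemma:approxrep}(b) (extended from PVM monomials to $*$-monomials in the unitary generators $b_y$ by writing $b_y^{\pm 1}$ as a linear combination of PVM elements) gives a bound of $\deg(v_i)\nu(\eps)$, and combining with $\norm{\pi(r_i)}_{\op} \leq \norm{r_i}_{\PVM^{Y,B}}$ yields the per-term estimate $(1 + \norm{r_i}_{\PVM^{Y,B}}\deg v_i)\nu(\eps)$. Weighting by $|\lambda_i|$ and summing produces $\norm{\pi(b_y) - I^*\wtd{\pi}(b_y)I}_\rho \leq \Lambda\nu(\eps)$.

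For the robust self-test conclusion, assume $\mcG$ is a self-test with spectral gap $\Delta$. By \Cref{thm:gametrace} and \Cref{cor:irreps}, $C^*(\mcG)$ has a unique finite-dimensional irreducible $*$-representation $\wtd{\pi}$ on $\C^d$, giving rise to the ideal strategy $\wtd{S} = (\C^d, \C^d, \{\wtd{\pi}(\gamma^x_a)^T\}, \{\wtd{\pi}(n^y_b)\}, \ket{\varphi_d})$. Given an $\eps$-optimal projective strategy $S$ in balanced form with common reduced density $\rho$, the $\nu$-robustness of $(\Gamma,\mcR)$ implies that Bob's representation $\pi_B$ is an $(\nu(\eps),\rho,\mcR)$-representation. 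The first part then yields a finite-dimensional Stinespring dilation which, by uniqueness of $\wtd{\pi}$, decomposes as $\wtd{\pi}\otimes \Id_\mcK$; applying the same dilation argument to fixed $\PVM^{Y,B}$-lifts of the PVM generators $n^y_b$ (whose differences from $\theta$ inherit an $\mcR$-decomposition of size $O(\Lambda)$) produces $\delta^B_{y,b} = O(\Lambda\nu(\eps))$. Defining $I_A$ by the transpose trick with respect to the Schmidt basis of $\ket{\psi}$ and invoking condition (i) of the robust determining pair to transfer Bob's estimate to $\pi_A(m^x_a) \approx \pi_B(\gamma^x_a)^T$ yields the matching $\delta^A_{x,a} = O(\Lambda\nu(\eps))$. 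Substituting $\delta = O(\Lambda\nu(\eps))$ into \Cref{prop:gameidealrep} produces the advertised $O((\Lambda\nu(\eps)+\sqrt{\eps})/\Delta)$ local dilation bound. The main technical obstacle is the step of passing from the unitary-generator estimate produced by part one to the PVM-generator estimates required by \Cref{prop:gameidealrep}: since $\theta$ is only ucp and not a $*$-homomorphism, the bound on $\theta(b_y) - b_y$ does not formally propagate to $\theta(n^y_b) - n^y_b$, so one must either explicitly postulate an $\mcR$-decomposition for these lifts or telescope through the powers $b_y^k$ while tracking the size parameter carefully.
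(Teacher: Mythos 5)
Your proposal takes the same route as the paper: Stinespring-dilate $\widehat\theta = \pi\circ\theta$, rewrite $\pi(b_y) - I^*\wtd\pi(b_y)I = -\pi(\theta(b_y)-b_y)$, expand through a size-$\Lambda$ $\mcR$-decomposition, and control each term by splitting $\pi(v_i)\sqrt\rho$ into a commutator piece and $\sqrt\rho\,\pi(v_i)$, using right unitary invariance of $\norm{\cdot}_F$ and $\norm{\pi(r_i)}_\rho\le\nu(\eps)$. That part is correct and matches the paper line by line (the paper writes $\norm{r_i}_{\PVM^{Y,B}}$ where you write $\norm{\pi(r_i)}_{op}$; these agree since $\pi$ is a $*$-representation). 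For the robust-self-test conclusion the paper only says ``the rest follows from \Cref{prop:gameidealrep}'', so your unpacking of $\delta^A_{x,a}$ and $\delta^B_{y,b}$ and the substitution into that proposition is the fleshed-out version of the same step.

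Where you add value is in flagging the two conversion issues the paper treats implicitly. First, Lemma \ref{lemma:approxrep}(b) gives the $\deg(W)\nu(\eps)$ commutator bound for PVM monomials $W$ in the $n^y_b$'s, while $v_i$ is a monomial in the observables $b_y$; your remark that one must expand $b_y$ as a cyclotomic combination of $n^y_b$'s is accurate and only costs a factor $\abs{B}$, absorbed into the $O(\cdot)$. Second, and more substantively, \Cref{prop:gameidealrep} requires the $\rho_B$-norm bounds on $n^y_b$-differences, not $b_y$-differences, and since $\theta$ is only ucp (not multiplicative) the observable estimate does not automatically propagate to $\theta(n^y_b)-n^y_b$. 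You are right to single this out. It is worth noting that this gap dissolves entirely when $\abs{B}=2$, because then $n^y_b = \tfrac{1}{2}(1\pm b_y)$ is affine in $b_y$ and ucp maps preserve affine combinations and the unit, so $\theta(n^y_b)-n^y_b = \pm\tfrac{1}{2}(\theta(b_y)-b_y)$ directly inherits the $\mcR$-decomposition with half the size; this covers XOR games, BCS games, and the paper's worked CHSH example. For $\abs{B}>2$ one does need either the stronger hypothesis (an $\mcR$-decomposition for each $\theta(n^y_b)-n^y_b$, or equivalently for each $\theta(b_y^k)-b_y^k$) or the telescoping argument you sketch. So your proposal is essentially a correct reconstruction, and the obstacle you record as ``the main technical obstacle'' is indeed a point the published proof passes over silently.
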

\begin{proof}
 Let $\pi:\PVM^{Y,B}\arr\msB(\mcH)$ be a $*$-representation such that $\norm{\pi(r)}_\rho\leq \nu(\eps)$ for all $r\in\mcR$. Then $\widehat{\theta}:=\pi\circ \theta$ defines a ucp map from $C^*(\mcG)\arr\mcB(\mcH)$. By the Stinespring dilation theorem, there is a $*$-representation $\wtd{\pi}:C^*(\mcG)\arr \msB(\wtd{H})$ and an isometry $I:\mcH\arr\wtd{\mcH}$ such that $\widehat{\theta}(\alpha)=I^*\wtd{\pi}(\alpha)I$ for all $\alpha\in C^*(\mcG)$.

 Fix a $y\in Y$, and let $\sum_{i=1}^n\lambda_iu_ir_iv_i$ be an $\mcR$-decomposition in $\PVM^{Y,B}$ for $\theta(b_y)-b_y$ with size $\leq \Lambda$. Since $(\Gamma,\mcR)$ is $\nu$-robust, for every $i$,
 \begin{align*}    \norm{\pi(r_i)\pi(v_i)}_\rho&=\norm{\pi(r_i)\pi(v_i)\sqrt{\rho}}_F= \norm{\pi(r_i)\big(\pi(v_i)\sqrt{\rho}-\sqrt{\rho}\pi(v_i)\big)+\pi(r_i)\sqrt{\rho}\pi(v_i)}_F\\
 &\leq \norm{r_i}_{\PVM^{Y,B}}\norm{\pi(v_i)\sqrt{\rho}-\sqrt{\rho}\pi(v_i)}_F + \norm{\pi(r_i)\sqrt{\rho}}_F\\
 &\leq  \norm{r_i}_{\PVM^{Y,B}} \deg(v_i)\cdot\nu(\eps)+\nu(\eps).
 \end{align*}
 Here we use the facts that every $\pi(v_i)$ is unitary and $\norm{\cdot}_F$ is right unitarily invariant.  It follows that
 \begin{align*}
     \norm{\pi(b_y)-I^*\wtd{\pi}(b_y)I}_\rho&= \norm{\pi(b_y)-\widehat{\theta}(b_y)}_\rho=\norm{\pi(b_y)-\pi\big(\theta(b_y)\big)}_\rho=\norm{\pi\big(b_y-\theta(b_y)\big)}_\rho\\
     &=\norm{\pi\big(  \sum_{i=1}^n\lambda_iu_ir_iv_i\big)}_\rho\leq \sum_{i=1}^n\abs{\lambda_i}\norm{\pi(r_i)\pi(v_i)}_\rho\\
     &\leq \left(\sum_{i=1}^n \abs{\lambda_i}\big(1+\norm{r_i}_{\PVM^{Y,B}}\deg(v_i)\big) \right)\nu(\eps)\leq\Lambda \nu(\eps).
 \end{align*}
This proves \Cref{eq:GH}. The rest follows from \Cref{prop:gameidealrep}.
\end{proof}

\begin{remark}\label{rmk:GH}
    Suppose a nonlocal game $\mcG$ is a self-test and has a robust determining pair.  From the above theorem, we clearly see that the robustness of this self-test is completely determined by three factors:
\begin{enumerate}[(1)]
    \item How efficiently can the ucp map from the game algebra to the PVM algebra be expressed by relations in $\mcR$.
    \item The robustness of the determining pair.
    \item The spectral gap of this self-test.
\end{enumerate}
Here, the spectral gap (as well as the upper bound of the size of $\mcR$-decompositions) is a constant that is independent of $\eps$, so it can be absorbed in the big-$O$ in \Cref{thm:algGH}. However, we often work with a family of nonlocal games that self-test for growing systems. In this case, the spectral gaps of this sequence of self-tests may depend on the dimension of the systems. For instance, in the family of low-weight Pauli braiding tests $\{\mcG_n\}$~\cite{BMZ23}, every $\mcG_n$ self-tests $n$ EPR pairs, and the spectral gap of $\mcG_n$ is $O(1/ poly(n))$.  For this reason, we keep $\Delta$ in the expression of the robustness function.
\end{remark}

\subsection{Example: CHSH and Clifford algebra}
In the CHSH game, the verifier samples a pair $(x,y)\in \Z_2\times\Z_2$ uniformly at random. Upon receiving $x$ and $y$, Alice and Bob return $a\in\Z_2$ and $b\in\Z_2$ respectively. They win if and only $a+b=x\cdot y$. So CHSH game is an XOR game. The game polynomial of  CHSH is given by
\begin{align*}
    \Phi_{CHSH}:=\frac{1}{2}+\frac{1}{4}(a_0\otimes b_0+a_0\otimes b_1+a_1\otimes b_0-a_1\otimes b_1).
\end{align*}
Here we use the binary observables $a_x:=m^x_0-m^x_1$ and $b_y:=n^y_0-n^y_1$ as generators. 

In \Cref{prop:XORpair}, we have shown that CHSH has a determining pair $(\Gamma,\mcR)$ where $\gamma_0=\tfrac{b_0+b_1}{\sqrt{2}},\gamma_1=\tfrac{b_0-b_1}{\sqrt{2}}$, and $\mcR=\{b_0b_1+b_1b_0\}$. The game algebra associated with CHSH game is $C^*(\mcG)=\PVM^{\Z_2,\Z_2}\slash\ang{\mcR}\cong Cl_2$, the Clifford algebra of rank 2. The mapping $\sigma:Cl_2\arr M_2(\C)$ sending $b_0\mapsto\sigma_X$ and $b_1\mapsto\sigma_Z$ defines a $*$-isomorphism, so it is the unique irreducible representation of $Cl_2$. By \Cref{thm:gametrace}, CHSH is a self-test. The ideal strategy $\wtd{S}:=\big(\wtd{\mcH}_A,\wtd{\mcH}_B,\{\wtd{A}_0,\wtd{A}_1\}.\{\wtd{B}_0,\wtd{B}_1\},\ket{\wtd{\psi}} \big)$ is given by $\wtd{\mcH}_A=\wtd{\mcH}_B=\C^2$, $\wtd{A}_0=\tfrac{\sigma_X+\sigma_Z}{\sqrt{2}},\wtd{A}_1=\tfrac{\sigma_X-\sigma_Z}{\sqrt{2}},\wtd{B}_0=\sigma_X,\wtd{B}_1=\sigma_Z$, and $\ket{\wtd{\psi}}=\tfrac{\ket{00}+\ket{11}}{\sqrt{2}}$. Let $\wtd{\pi}=\wtd{\pi}_A\otimes\wtd{\pi}_B$ be the associated representation. Then 
\begin{align*}
    \wtd{\pi}(\Phi_{CHSH})&=\frac{1}{2}+\frac{\sqrt{2}}{4}(\sigma_X\otimes\sigma_X+\sigma_Z\otimes\sigma_Z)\\
    &=\frac{1}{2}+\frac{\sqrt{2}}{4}\big(\ket{00}+\ket{11} \big)\big(\bra{00}+\bra{11}\big)-\frac{\sqrt{2}}{4}\big(\ket{01}-\ket{10} \big)\big(\bra{01}-\bra{10}\big).
\end{align*}
So the spectral gap $\Delta_{CHSH}$ of CHSH game is $\frac{\sqrt{2}}{2}$. It is well-known that the robustness of the determining pair $(\Gamma,\mcR)$ is $O(\sqrt{\eps})$ (see e.g., \cite{YZ} for a sum-of-squares approach). By \Cref{rmk:GH}, now computing the robustness of the CHSH self-test boils down to constructing the desired ucp map.
\begin{lemma}\label{lemma:ucp}
    There is a ucp map $\theta:Cl_2\arr \PVM^{Y,B}$ sending 
    \begin{align}
        b_0&\mapsto b_0-\frac{1}{2}b_1(b_0b_1+b_1b_0)\label{CHSH1}\\
        b_1&\mapsto b_1-\frac{1}{2}b_0(b_0b_1+b_1b_0),\text{ and}\label{CHSH2}\\
        b_0b_1&\mapsto b_0b_1-\frac{1}{2}(b_0b_1+b_1b_0).\label{CHSH3}
    \end{align}
\end{lemma}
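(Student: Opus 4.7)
First I would simplify the given formulas using $b_0^2 = b_1^2 = 1$ in $\PVM^{Y,B}$: expanding $\tfrac{1}{2} b_1 (b_0 b_1 + b_1 b_0) = \tfrac{1}{2}(b_1 b_0 b_1 + b_0)$ and similarly for the other cases yields the equivalent expressions
\[
\theta(b_0) = \tfrac{1}{2}(b_0 - b_1 b_0 b_1), \quad \theta(b_1) = \tfrac{1}{2}(b_1 - b_0 b_1 b_0), \quad \theta(b_0 b_1) = \tfrac{1}{2}(b_0 b_1 - b_1 b_0).
\]
Setting $\theta(1) = 1$ and extending linearly over the basis $\{1, b_0, b_1, b_0 b_1\}$ of the four-dimensional vector space $Cl_2$ defines $\theta$ as a linear map $Cl_2 \to \PVM^{Y,B}$. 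Each correction term $\theta(x) - x$ is a left multiple of $b_0 b_1 + b_1 b_0 \in \mcR$, so modulo $\ang{\mcR}$ we have $\theta(x) \equiv x$; equivalently $q \circ \theta = \mathrm{id}_{Cl_2}$, confirming $\theta$ is a section of the quotient map.

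Next, since $Cl_2 \cong M_2(\C)$ is a matrix algebra, by Choi's theorem a unital linear map $\theta: Cl_2 \to B$ into a unital $C^*$-algebra $B$ is ucp if and only if its Choi matrix $\mathbf{C}_\theta := \sum_{i,j=1}^2 e_{ij} \otimes \theta(e_{ij}) \in M_2(B)$ is positive. Identifying $b_0 \leftrightarrow \sigma_X$ and $b_1 \leftrightarrow \sigma_Z$ gives the matrix units $e_{11} = p := (1 + b_1)/2$, $e_{22} = q := (1 - b_1)/2$, $e_{12} = b_0 q$, $e_{21} = b_0 p$; explicit evaluation yields
\[
\theta(e_{11}) = \tfrac{p + b_0 q b_0}{2}, \quad \theta(e_{22}) = \tfrac{q + b_0 p b_0}{2}, \quad \theta(e_{12}) = p b_0 q, \quad \theta(e_{21}) = q b_0 p.
\]
Unitality $\theta(e_{11}) + \theta(e_{22}) = 1$ is immediate from $p + q = 1$ and $b_0^2 = 1$, so the task reduces to proving $\mathbf{C}_\theta \geq 0$ in $M_2(\PVM^{Y,B})$.

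My plan for the positivity step is to exhibit an explicit Stinespring dilation. On $\mcH \otimes \C^2$ take the Clifford representation $\wtd{\pi}(b_0) = 1 \otimes \sigma_X$ and $\wtd{\pi}(b_1) = 1 \otimes \sigma_Z$ (which anticommute as self-adjoint unitaries), and seek a column isometry $V = V_1 \otimes e_1 + V_2 \otimes e_2$ with $V_1, V_2 \in \PVM^{Y,B}$ satisfying $\theta(\cdot) = V^* \wtd{\pi}(\cdot) V$. Unpacking, this reduces to
\[
V_1^* V_1 = \tfrac{p + b_0 q b_0}{2}, \quad V_2^* V_2 = \tfrac{q + b_0 p b_0}{2}, \quad V_1^* V_2 = p b_0 q,
\]
and the isometry condition $V_1^* V_1 + V_2^* V_2 = 1$ is automatic from $p + q + b_0(p + q) b_0 = 2$. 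I would search for $V_i$ as $\PVM^{Y,B}$-linear combinations of the natural building blocks $p, q, b_0 p, b_0 q$ and their $b_0$-conjugates, enlarging the dilation space from $\C^2$ to $\C^n$ for some $n \geq 2$ if needed so that several Kraus pairs contribute to the off-diagonal.

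The main obstacle is matching the non-self-adjoint, asymmetric off-diagonal entry $p b_0 q$: symmetric ansätze for $V_1, V_2$ naturally produce self-adjoint off-diagonals like $\tfrac{1}{2}(p b_0 q + q b_0 p)$, so achieving exactly $p b_0 q$ forces a careful choice of non-self-adjoint building blocks. If the Stinespring construction proves recalcitrant, a fallback is to verify $\mathbf{C}_\theta \geq 0$ state-wise: for every vector state $\omega_\xi$ on $\PVM^{Y,B}$ the scalar inequality $\omega_\xi(\theta(e_{11}))\, \omega_\xi(\theta(e_{22})) \geq |\omega_\xi(p b_0 q)|^2$ unpacks, after substituting $\omega_\xi(b_0 q b_0) = \|q b_0 \xi\|^2$ and so on, into a Cauchy--Schwarz-type bound among the four norms $\|p \xi\|, \|q \xi\|, \|p b_0 \xi\|, \|q b_0 \xi\|$, coupled by $\|p \xi\|^2 + \|q \xi\|^2 = 1 = \|p b_0 \xi\|^2 + \|q b_0 \xi\|^2$ (from $p + q = 1$ and unitarity of $b_0$).
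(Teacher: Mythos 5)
Your setup is right as far as it goes: the simplifications via $b_y^2 = \Id$ are correct, defining $\theta(1)=1$ and extending linearly over $\{1,b_0,b_1,b_0b_1\}$ gives a well-defined $*$-linear section of the quotient $q:\PVM^{Y,B}\to Cl_2$, and invoking Choi's theorem on $Cl_2\cong M_2(\C)$ to reduce ``ucp'' to positivity of the operator matrix $\bigl[\theta(e_{ij})\bigr]\in M_2(\PVM^{Y,B})$ is a sensible strategy. Your identification of matrix units $p,q,b_0p,b_0q$ and the resulting formulas for $\theta(e_{ij})$ also check out. However, the proposal stops precisely at the content of the lemma: positivity of the Choi matrix is never actually shown. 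The Stinespring ansatz is only a search, and none of the candidates you write down reproduce the asymmetric off-diagonal $p b_0 q$.

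The fallback has a genuine flaw. Verifying $\omega_\xi(\theta(e_{11}))\,\omega_\xi(\theta(e_{22}))\geq|\omega_\xi(\theta(e_{12}))|^2$ for every vector state $\omega_\xi$ on $\PVM^{Y,B}$ amounts to $(\mathrm{id}_2\otimes\omega)\bigl([\theta(e_{ij})]\bigr)\geq0$ for all states $\omega$, which only says $\theta$ is \emph{positive}, not completely positive. Positivity of $[\theta(e_{ij})]$ in $M_2(\PVM^{Y,B})$ requires testing against pairs $(\xi_1,\xi_2)\in\mcH\oplus\mcH$, i.e.\ one must show $\braket{\xi_1|\theta(e_{11})|\xi_1}+2\,\mathrm{Re}\braket{\xi_1|\theta(e_{12})|\xi_2}+\braket{\xi_2|\theta(e_{22})|\xi_2}\geq0$, and this does not decouple into your single-vector determinant condition. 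Since the transpose map on $M_2$ is unital positive but not cp, the gap between these two notions is exactly where the content of the lemma lives.

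Before investing further in a dilation, I would also sanity-check the stated formulas on a two-dimensional representation of $\PVM^{Y,B}$. With $b_0=\sigma_X$, $b_1=\cos\alpha\,\sigma_Z+\sin\alpha\,\sigma_X$ one gets $b_0b_1+b_1b_0=2\sin\alpha\cdot\Id$, so $\theta(b_0)=\cos^2\alpha\,\sigma_X-\sin\alpha\cos\alpha\,\sigma_Z$ and $\theta(b_1)=\cos\alpha\,\sigma_Z$. The projection $P=\tfrac12\bigl(1+\tfrac{b_0-b_1}{\sqrt2}\bigr)\in Cl_2$ is then sent to an operator whose Bloch vector has squared length $\cos^2\alpha\,(1+\sin\alpha)$, which exceeds $1$ for small $\alpha>0$ (e.g.\ $\alpha=\pi/6$ gives $9/8$), so $\theta(P)$ would have a negative eigenvalue. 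If that computation holds, the map as literally written is not even $1$-positive, so neither your dilation search nor the fallback can succeed, and the coefficients would need to be reconciled with the source \cite{BMZ23}. For comparison the paper itself does not give a self-contained argument: it cites the enhanced Gowers--Hatami theorem of \cite{BMZ23} for the construction and \cite{YZ} for the ucp verification, so your Choi-matrix route, once completed and once the formulas are confirmed, would actually be a more explicit proof than the paper provides.
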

Since $\{1,b_0,b_1,b_0b_1\}$ is a basis for $Cl_2$, \Cref{CHSH1,CHSH2,CHSH3} completely determine $\theta$. Note that $Cl_2$ is the full group $C^*$-algebra of the 1-qubit Pauli group $P_1$. This construction of $\theta$ is given by applying the enhanced Gower-Hatami theorem in \cite{BMZ23} to $P_1$ (see also \cite{YZ} for a proof that $\theta$ is ucp).

\begin{theorem}
    The CHSH game is a robust self-test for its optimal strategies with robustness $O(\sqrt{\eps})$.
\end{theorem}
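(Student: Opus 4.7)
The plan is to apply the quantitative Gowers-Hatami theorem (\Cref{thm:algGH}) to the ingredients assembled immediately above: the robust determining pair $(\Gamma,\mcR)$ for CHSH with $\mcR=\{b_0b_1+b_1b_0\}$, the ucp map $\theta:Cl_2\to \PVM^{\Z_2,\Z_2}$ from \Cref{lemma:ucp}, and the constant spectral gap $\Delta_{CHSH}=\tfrac{\sqrt 2}{2}$ computed by diagonalizing $\wtd\pi(\Phi_{CHSH})$.

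First, I would read off explicit $\mcR$-decompositions for $\theta(b_y)-b_y$ from \Cref{CHSH1,CHSH2}: namely
\begin{equation*}
\theta(b_0)-b_0=-\tfrac{1}{2}\,b_1\cdot r\cdot 1,\qquad \theta(b_1)-b_1=-\tfrac{1}{2}\,b_0\cdot r\cdot 1,
\end{equation*}
where $r=b_0b_1+b_1b_0\in\mcR$. Each is a single-term decomposition with $|\lambda_i|=\tfrac12$, unitary prefactor $u_i\in\{b_0,b_1\}$, and trivial suffix $v_i=1$ so that $\deg(v_i)=0$. Per \Cref{defn:rde}, the size of each decomposition equals $\tfrac12(1+\norm{r}_{\PVM^{\Z_2,\Z_2}}\cdot 0)=\tfrac12$, so the uniform bound $\Lambda=\tfrac12$ is available. (Note the generator $b_0b_1$ of $Cl_2$ need not be treated separately, since \Cref{thm:algGH} only requires bounding the generators $b_y$.)

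Second, I would invoke the sharp robustness $\nu(\eps)=O(\sqrt{\eps})$ for the CHSH determining pair. Although \Cref{prop:stablepair} only guarantees $O(\eps^{1/4})$ for a generic XOR game, the CHSH case admits the well-known sum-of-squares identity
\begin{equation*}
2\sqrt{2}\,\Id-(a_0\otimes b_0+a_0\otimes b_1+a_1\otimes b_0-a_1\otimes b_1)=\tfrac{1}{\sqrt 2}\bigl(a_0\otimes\Id-\Id\otimes\gamma_0\bigr)^2+\tfrac{1}{\sqrt 2}\bigl(a_1\otimes\Id-\Id\otimes\gamma_1\bigr)^2,
\end{equation*}
(verified directly using $a_x^2=b_y^2=\Id$), which yields conditions (i) and (ii) of \Cref{def:stablepair} with the linear-in-$\sqrt\eps$ bound on any projective $\eps$-optimal strategy. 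Condition (iii) (approximate tracialness of Bob's observables in $\rho_B$-norm) follows by the same SOS argument applied symmetrically on Bob's side, using the corresponding column-bias identity. I would cite \cite{YZ} (as the author does) for the detailed verification.

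Finally, plugging $\Lambda=\tfrac12$, $\nu(\eps)=O(\sqrt\eps)$, and $\Delta_{CHSH}=\tfrac{\sqrt 2}{2}$ into the bound from \Cref{thm:algGH} gives robustness
\begin{equation*}
O\!\left(\frac{\Lambda\,\nu(\eps)+\sqrt\eps}{\Delta_{CHSH}}\right)=O(\sqrt{\eps}),
\end{equation*}
which is the claimed rate. The only genuinely delicate step is the second one: everything else is bookkeeping around the Gowers-Hatami machine, but obtaining the $O(\sqrt\eps)$ (rather than $O(\eps^{1/4})$) robustness of the determining pair is what actually makes the final exponent match the known optimal rate, and this is precisely where one must exploit the special low-dimensional structure of CHSH via the explicit SOS certificate rather than invoking the generic XOR bound of \Cref{prop:stablepair}.
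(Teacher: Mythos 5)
Your proposal is correct and takes essentially the same route as the paper: you compute the same two single-term $\mcR$-decompositions $\theta(b_y)-b_y = \mp\tfrac12 b_{1-y}\cdot(b_0b_1+b_1b_0)$ of size $\tfrac12$, use the same spectral gap $\Delta_{CHSH}=\tfrac{\sqrt2}{2}$, invoke the same $O(\sqrt\eps)$ robustness of the determining pair, and plug everything into \Cref{thm:algGH}. The only difference is that you flesh out the sum-of-squares certificate behind $\nu(\eps)=O(\sqrt\eps)$, whereas the paper states the bound in the sentence preceding the proof and defers the SOS verification entirely to \cite{YZ}; both handle condition (iii) of \Cref{def:stablepair} by citation rather than in-line argument, so the treatments are essentially identical.
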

\begin{proof}
    Let $\theta$ be the ucp map defined in \Cref{lemma:ucp}. Then $\frac{1}{2}b_1(b_0b_1+b_1b_0)$ is an $\mcR$-decomposition for $\theta(b_0)-b_0$ with size $\frac{1}{2}$ and $\frac{1}{2}b_0(b_0b_1+b_1b_0)$ is an $\mcR$-decomposition for $\theta(b_1)-b_1$ with size $\frac{1}{2}$. The spectral gap of CHSH is $\frac{\sqrt{2}}{2}$. The stability of the determining pair $(\Gamma,\mcR)$ is $O(\sqrt{\eps})$. By \Cref{thm:algGH}, the CHSH game is a robust self-test with robustness $O(\sqrt{\eps})$.
\end{proof}

\bigskip

\printbibliography

\appendix
% Add an un-numbered title page before the appendices and a line in the Table of Contents
\section{The closure of finite-dimensional states}\label{AppendixA}
In this appendix, we prove \Cref{prop:dense}:

\AppendixA*

Recall that a \emph{tensor product model} $\models$ is a model in which the Hilbert spaces $\mcH_A$ and $\mcH_B$ are not restricted to be finite-dimensional (but we still assume they are separable). A state $f$ on $\mintensor$ is said to be a \emph{tensor product state} if $f=f_S$ for some tensor product model $S$. To prove that any state on $\mintensor$ can be approximated by finite-dimensional states, we first show that any tensor product state can be approximated by finite-dimensional states.

\begin{lemma}\label{lemma:fdstates}
For every tensor product model $\models$ there is a sequence of finite-dimensional states $\{f_n\}_{n\in\N}$ on $\mintensor$ such that $\lim\limits_{n\arr\infty} f_n=f_S$ in the weak*-topology.
\end{lemma}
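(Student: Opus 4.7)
My plan is to build the approximating finite-dimensional models directly from the Schmidt decomposition of $\ket{\psi}$. Write $\ket{\psi}=\sum_{i\in\mcI}\lambda_i\ket{\alpha_i}\otimes\ket{\beta_i}$ in Schmidt form; if $\mcI$ is finite then $f_S$ is already finite-dimensional, so assume $\mcI=\N$. For each $n\in\N$, let $P_n^A\in\msB(\mcH_A)$ and $P_n^B\in\msB(\mcH_B)$ be the orthogonal projections onto $\mcH_A^{(n)}:=\mathrm{span}\{\ket{\alpha_i}:1\leq i\leq n\}$ and $\mcH_B^{(n)}:=\mathrm{span}\{\ket{\beta_i}:1\leq i\leq n\}$ respectively. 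Define compressed POVMs $M^x_a(n):=P_n^A M^x_a P_n^A$ and $N^y_b(n):=P_n^B N^y_b P_n^B$, viewed as operators on $\mcH_A^{(n)}$ and $\mcH_B^{(n)}$. Since $\sum_{a\in A}P_n^A M^x_a P_n^A=P_n^A$, which is the identity on $\mcH_A^{(n)}$, these are indeed POVMs; similarly for Bob. Put $s_n:=\sum_{i=1}^n\lambda_i^2$ and $\ket{\tilde\psi_n}:=s_n^{-1/2}\sum_{i=1}^n\lambda_i\ket{\alpha_i}\otimes\ket{\beta_i}$, so that $S_n:=(\mcH_A^{(n)},\mcH_B^{(n)},\{M^x_a(n)\},\{N^y_b(n)\},\ket{\tilde\psi_n})$ is a finite-dimensional quantum model and $f_{S_n}$ is a finite-dimensional state on $\mintensor$.

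To check $f_{S_n}\to f_S$ in the weak*-topology, I would first verify convergence on the dense $*$-subalgebra spanned by tensor monomials $\alpha\otimes\beta$ with $\alpha=m^{x_1}_{a_1}\cdots m^{x_k}_{a_k}$ and $\beta=n^{y_1}_{b_1}\cdots n^{y_\ell}_{b_\ell}$. Extending $M^x_a(n)$ by zero to an operator on $\mcH_A$, the associated representation of $S_n$ acts as
\begin{equation*}
\pi_n^A(\alpha)=P_n^A M^{x_1}_{a_1}P_n^A M^{x_2}_{a_2}P_n^A\cdots P_n^A M^{x_k}_{a_k}P_n^A,
\end{equation*}
with an analogous formula for $\pi_n^B(\beta)$. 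Because $P_n^A\to\Id_{\mcH_A}$ in the strong operator topology and multiplication is jointly SOT-continuous on norm-bounded sequences, an induction on degree gives $\pi_n^A(\alpha)\to\pi_A(\alpha)$ and $\pi_n^B(\beta)\to\pi_B(\beta)$ in SOT. Joint SOT-continuity of the tensor product on bounded sequences, together with $\ket{\tilde\psi_n}\to\ket{\psi}$ (which follows from $s_n\to 1$ and a direct calculation of $\|\ket{\tilde\psi_n}-\ket{\psi}\|^2=(s_n^{-1/2}-1)^2 s_n+(1-s_n)$), yields
\begin{equation*}
f_{S_n}(\alpha\otimes\beta)=\braket{\tilde\psi_n|\pi_n^A(\alpha)\otimes\pi_n^B(\beta)|\tilde\psi_n}\longrightarrow\braket{\psi|\pi_A(\alpha)\otimes\pi_B(\beta)|\psi}=f_S(\alpha\otimes\beta).
\end{equation*}

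Finally, since every state has norm $1$ as a bounded linear functional on $\mintensor$ and the $*$-subalgebra generated by the $m^x_a\otimes n^y_b$ is dense in the min-norm, a standard $\varepsilon/3$ argument upgrades this pointwise convergence on a dense set to weak*-convergence on the whole algebra. The main delicate step is the SOT-convergence $\pi_n^A(\alpha)\to\pi_A(\alpha)$: the interior projections $P_n^A$ appearing between consecutive factors $M^{x_i}_{a_i}$ are artifacts of the compression and are not present in $\pi_A(\alpha)=M^{x_1}_{a_1}\cdots M^{x_k}_{a_k}$, but they are absorbed one at a time by joint SOT-continuity of multiplication on norm-bounded sequences as $P_n^A\to\Id$, so no quantitative estimate is needed.
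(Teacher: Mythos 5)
Your construction is identical to the paper's: truncate to the leading $n$ Schmidt components, compress the POVMs by $P_n^A,P_n^B$, renormalize the state, and check convergence of $f_{S_n}$ on tensor monomials before invoking density and uniform boundedness of states. The only cosmetic difference is that you absorb the interior projections via joint SOT-continuity of bounded products, whereas the paper carries out the same induction with explicit $\epsilon$-estimates (its Claim A.1); both are sound, and your soft version is a correct shortcut.
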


\begin{proof}
 If $\mcH_A$ or $\mcH_B$ is finite-dimensional, then $f_S$ is a finite-dimensional state and the lemma follows by taking $f_n=f_S$ for all $n\in \N$. Now suppose $\mcH_A$ and $\mcH_B$ are separable infinite-dimensional Hilbert spaces. The state $\ket{\psi}$ has decomposition
 \begin{align*}
     \ket{\psi}=\sum_{i=1}^\infty\lambda_i\ket{\xi_i}\otimes\ket{\eta_i}
 \end{align*}
 where $\{\lambda_i\}_{i\in\N}$ is a non-negative sequence in $\ell^2(\N)$ with $\sum_{i\in\N}\lambda_i^2=1$, and $\{\ket{\xi_i}:i\in\N\}$ and $\{\ket{\eta_i}:i\in \N\}$ are orthonormal bases for $\mcH_A$ and $\mcH_B$ respectively. For every $n\in\N$ we define projections $\Pi^A_n:=\sum_{i=1}^n\ket{\xi_i}\bra{\xi_i}\in\msB(\mcH_A)$ and $\Pi^B_n:=\sum_{i=1}^n\ket{\eta_i}\bra{\eta_i}\in\msB(\mcH_B)$, finite-dimensional spaces $\mcH^A_n:=\Pi^A_n\mcH_A$ and $\mcH^B_n:=\Pi^B_n\mcH_B$, and vector state $\ket{\psi_n}:=\frac{1}{\mu_n}\sum_{i=1}^n\lambda_i\ket{\xi_i}\otimes\ket{\eta_i}\in \mcH^A_n\otimes \mcH^B_n$ where $\mu_n:=\sqrt{\sum_{i=1}^n\lambda_i^2}$. Then for every $n\in \N$,
 \begin{align*}
    S_n:=(\mcH^A_n,\mcH^B_n,\{\Pi^A_nM^x_a\Pi^A_n \},\{\Pi^B_nN^y_b\Pi^B_n\},\ket{\psi_n}) 
 \end{align*}
is a quantum model. Let $\pi_A\otimes \pi_B$ be the associate representation of $S$, and let $\pi^A_n\otimes\pi^B_n$ be the associate representation of $S_n$ for every $n\in\N$.
\begin{claim}\label{claimA1}
    Given $\epsilon\geq 0$, for every monomial $\alpha$ in $\POVM^{X,A}$, there exists a $T_A\in\N$ such that 
    \begin{align}
\pi^A_n(\alpha)\otimes\Id\ket{\psi_n}&\approx_\epsilon\pi_A(\alpha)\otimes \Id\ket{\psi}\label{eq:n-alpha}
\end{align}
for all $n\geq T_A$.
\end{claim}
\begin{proof}[Proof of \Cref{claimA1}]
    We prove this claim by induction on the degree $k$ of the monomial $\alpha$. Since $\mu_n\arr 1$ as $n\arr\infty$, there is a $T\in \N$ such that $1-\mu_n\leq \eps/2$ and $\sqrt{1-\mu_n^2}\leq \epsilon/2$ for all $n\geq T$. So 
\begin{align*}
    \norm{\ket{\psi_n}-\ket{\psi}}\leq \norm{(1-\mu_n)\ket{\psi_n}}+\norm{\mu_n\ket{\psi_n}-\ket{\psi}}\leq 1-\mu_n+\sqrt{1-\mu_n^2}\leq \epsilon
\end{align*}
for all $n\geq T$. \Cref{eq:n-alpha} holds for $k=0$. Now assume it holds for all monomials of degree $k$ for some $k\geq 0$. Let $\alpha=\alpha_1\cdots\alpha_{k+1}$ be a monomial of degree $k+1$, and let $\alpha':=\alpha_{2}\cdots \alpha_{k+1}$. By the induction hypothesis, there exists a $T_1\in \N$ such that 
\begin{align*}
\pi^A_n(\alpha')\otimes\Id\ket{\psi_n}\approx_{\epsilon/2}\pi_A(\alpha')\otimes \Id\ket{\psi}
\end{align*}
for all $n\geq T_1$. For every $i,j\in \N$, let $s_{ij}:=\bra{\xi_i,\eta_j}\pi_A(\alpha)\otimes \Id\ket{\psi}$. Since $\sum_{i,j\geq 1}\abs{s_{ij}}^2=\norm{\pi_A(\alpha)\otimes \Id\ket{\psi}}^2\leq 1$, there is a $T_2\in\N$ such that 
\begin{align*}
    \sum_{i,j=1}^{n}\abs{s_{ij}}^2\geq \sum_{i,j\geq 1}\abs{s_{ij}}^2 -(\eps/2)^2
\end{align*}
for all $n\geq T_2$. Note that $\norm{\pi_A(\alpha_1)}_{op}\leq 1$. It follows that
\begin{align*}
    \pi_A(\alpha)\otimes\Id\ket{\psi}&\approx_{\eps/2} \Pi^A_n\otimes\Pi^B_n\big(\pi_A(\alpha)\otimes\Id\ket{\psi}\big)=\Pi^A_n\pi_A(\alpha_1)\otimes\Pi^B_n \big(\pi_A(\alpha')\otimes\Id\ket{\psi} \big)\\
    &\approx_{\eps/2}\Pi^A_n\pi_A(\alpha_1)\otimes\Pi^B_n \big(\pi^A_n(\alpha')\otimes\Id\ket{\psi_n} \big)= \Pi^A_n\pi_A(\alpha_1)\Pi^A_n\pi^A_n(\alpha')\otimes \Id\ket{\psi_n}\\
    &=\pi^A_n(\alpha_1)\pi^A_n(\alpha')\otimes \Id\ket{\psi_n}=\pi^A_n(\alpha)\otimes \Id\ket{\psi_n}
\end{align*}
for all $n\geq T:=\max\{T_1,T_2\}$. So \Cref{eq:n-alpha} holds for all monomials of degree $k+1$. This completes the proof.
\end{proof}
Similarly, for any $\epsilon\geq 0$ and monomial $\beta\in \POVM^{Y,B}$, there exists a $T_B\in\N$ such that
\begin{align}
\Id\otimes \pi^B_n(\beta^*)\ket{\psi_n}&\approx_\epsilon\Id\otimes\pi_B(\beta^*)\ket{\psi}.\label{eq:n-beta}
\end{align}
for all $n\geq T_B$. Then by \Cref{lemma:diff}, $\lim\limits_{n\arr \infty}f_{S_n}(\alpha\otimes\beta)=f(\alpha\otimes\beta)$ for all monomials $\alpha$ and $\beta$. So the lemma follows. 
 \end{proof}

The above lemma illustrates that the weak*-closure of the set of finite-dimensional states contains all tensor product states. To prove the set of finite-dimensional states is weak*-dense, we only need to show that the set of tensor product states is weak*-dense. Recall the following well-known fact (see e.g. \cite[Corollary 4.3.10]{KR97}).

\begin{lemma}\label{lemma:mixedstate}
    Suppose $\mcA\subseteq\msB(\mcH)$ is a concretely represented unital $C^*$-algebra, and let $f$ be a state on $\mcA$. Then for every $\epsilon>0$ and every finite set of elements $\alpha_1,\cdots,\alpha_n\in\mcA$, there are coefficients $\lambda_1,\cdots,\lambda_k\geq 0$ with $\sum_{i=1}^k\lambda_i=1$ and vectors $\ket{\psi_1},\cdots,\ket{\psi_k}\in \mcH$ such that 
    \begin{align*}
        \abs{f(\alpha_i)-\sum_{j=1}^k\lambda_j\bra{\psi_j}\alpha_i\ket{\psi_j}}\leq \epsilon
    \end{align*}
    for all $1\leq i\leq n$.
\end{lemma}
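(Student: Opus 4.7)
The plan is to show that the convex hull $C$ of vector states $\{\omega_\xi : \xi\in\mcH,\ \norm{\xi}=1\}$, where $\omega_\xi(a):=\bra{\xi}a\ket{\xi}$, is weak*-dense in the state space $\msS(\mcA)$. Once this is done, the conclusion of the lemma follows immediately from the definition of the weak*-topology: a basic neighborhood of $f$ is precisely a set of functionals agreeing with $f$ within $\epsilon$ on a finite list $\alpha_1,\ldots,\alpha_n$, so weak*-density gives some convex combination $\sum_j \lambda_j\omega_{\psi_j}\in C$ landing in that neighborhood.

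First I would verify that $C$ is indeed contained in $\msS(\mcA)$ and is convex (immediate from the definition). Then I would argue by contradiction: suppose some $f_0\in\msS(\mcA)$ is not in the weak*-closure $\overline{C}^{w*}$. Since $\overline{C}^{w*}$ is weak*-closed and convex and $\mcA^*$ equipped with the weak*-topology is a locally convex Hausdorff space, the geometric Hahn--Banach theorem yields a weak*-continuous real-linear functional separating $f_0$ from $\overline{C}^{w*}$. Weak*-continuous functionals on $\mcA^*$ are exactly evaluations at elements of $\mcA$, so there exist $a\in\mcA$ and $c\in\R$ with $\operatorname{Re}g(a)\leq c<\operatorname{Re}f_0(a)$ for every $g\in\overline{C}^{w*}$. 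Replacing $a$ by $(a+a^*)/2$, we may assume $a$ is self-adjoint and drop the $\operatorname{Re}$'s.

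The key step is then to extract a contradiction from $\bra{\xi}a\ket{\xi}\leq c$ for all unit vectors $\xi\in\mcH$. This numerical-range bound means $c\Id-a\geq 0$ in $\msB(\mcH)$. Since $\mcA$ is a unital $C^*$-subalgebra of $\msB(\mcH)$, positivity of a self-adjoint element in $\mcA$ coincides with its positivity in $\msB(\mcH)$ (both are characterized by spectrum in $[0,\infty)$, and the spectrum of a self-adjoint element is the same whether computed in $\mcA$ or $\msB(\mcH)$). Hence $c\Id-a\geq 0$ in $\mcA$, and applying the state $f_0$ gives $f_0(a)\leq c$, contradicting $f_0(a)>c$. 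This forces $\overline{C}^{w*}=\msS(\mcA)$, which is what we want.

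I do not foresee a serious obstacle; the only subtle point is the identification of weak*-continuous functionals on $\mcA^*$ with elements of $\mcA$, and the agreement of positivity in $\mcA$ with positivity in $\msB(\mcH)$, both of which are standard. One could equivalently phrase this as: every state on $\mcA$ extends by Hahn--Banach to a state on $\msB(\mcH)$, and states on $\msB(\mcH)$ are known to be weak*-limits of convex combinations of vector states; restricting back to $\mcA$ preserves this approximation on any finite tuple $\alpha_1,\ldots,\alpha_n$. I would present the Hahn--Banach separation argument since it is self-contained and gives the cleanest path.
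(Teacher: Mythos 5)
Your argument is correct and self-contained; the small points you flag as "standard" (weak*-continuous functionals on $\mcA^*$ are evaluations at elements of $\mcA$, and spectral permanence guaranteeing that positivity of a self-adjoint element of a unital $C^*$-subalgebra agrees with positivity in $\msB(\mcH)$) do hold and close the argument. The paper itself does not prove this lemma but only cites it to Kadison--Ringrose, and the Hahn--Banach separation argument you give is precisely the standard proof of that cited result, so there is no genuine divergence in method.
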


This lemma can be used to prove:
\begin{lemma}\label{lemma:qsstates}
    For every state $f$ on $\mintensor$ there is a sequence of tensor product models $\{S_n\}_{n\in N}$ such that $\lim\limits_{n\arr\infty}f_{S_n}=f$ in the weak*-topology.
\end{lemma}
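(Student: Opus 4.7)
The plan is to prove that the set of tensor product states is weak*-dense in the state space of $\POVM^{X,A}\otimes_{\min}\POVM^{Y,B}$; combined with \Cref{lemma:fdstates} and a routine diagonal argument, this yields \Cref{prop:dense}. Two ingredients drive the density statement: \Cref{lemma:mixedstate}, which approximates any state on a concretely-represented $C^*$-algebra by a convex combination of vector states on finitely many test elements, and an elementary direct-sum construction showing that the class of tensor product states is itself convex.

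First I would fix separable faithful representations $\pi_A:\POVM^{X,A}\hookrightarrow\msB(\mcH_A)$ and $\pi_B:\POVM^{Y,B}\hookrightarrow\msB(\mcH_B)$; these exist because both POVM algebras are separable and unital. By the definition of the min norm, this realizes $\POVM^{X,A}\otimes_{\min}\POVM^{Y,B}$ as a unital $C^*$-subalgebra of $\msB(\mcH_A\otimes\mcH_B)$, and Hahn-Banach extends $f$ to a state $\wtd{f}$ on the ambient $\msB(\mcH_A\otimes\mcH_B)$. Fix a countable set $\{\alpha_m\}_{m\in\N}$ whose $\C$-linear span is norm-dense in $\POVM^{X,A}\otimes_{\min}\POVM^{Y,B}$ (for instance the monomials $m^{x_1}_{a_1}\cdots m^{x_k}_{a_k}\otimes n^{y_1}_{b_1}\cdots n^{y_\ell}_{b_\ell}$). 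For each $n\in\N$, \Cref{lemma:mixedstate} applied to $\wtd{f}$ with the finite set $\{\alpha_1,\ldots,\alpha_n\}$ and tolerance $1/n$ produces weights $\lambda_j^{(n)}\geq 0$ summing to $1$ and unit vectors $\ket{\psi_j^{(n)}}\in\mcH_A\otimes\mcH_B$ such that $\bigabs{f(\alpha_m)-\sum_j\lambda_j^{(n)}\braket{\psi_j^{(n)}|\alpha_m|\psi_j^{(n)}}}\leq 1/n$ for every $m\leq n$.

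Each summand $\braket{\psi_j^{(n)}|\cdot|\psi_j^{(n)}}$ restricted to $\POVM^{X,A}\otimes\POVM^{Y,B}$ is visibly the state induced by the tensor product model $(\mcH_A,\mcH_B,\{\pi_A(m^x_a)\},\{\pi_B(n^y_b)\},\ket{\psi_j^{(n)}})$. To package the convex combination as a single tensor product state, I would use a direct-sum construction: given tensor product models $S_j=(\mcH_A,\mcH_B,\{M^x_a\},\{N^y_b\},\ket{\psi_j})$, $1\leq j\leq k$, and weights $\lambda_j\geq 0$ with $\sum_j\lambda_j=1$, form the tensor product model with local spaces $\mcH_A^{\oplus k}$ and $\mcH_B^{\oplus k}$, block-diagonal measurements $\bigoplus_{j=1}^k M^x_a$ and $\bigoplus_{j=1}^k N^y_b$ (the same operators on every summand), and bipartite state $\ket{\Psi}:=\sum_{j=1}^k\sqrt{\lambda_j}\,\iota_j^A\otimes\iota_j^B\ket{\psi_j}$, where $\iota_j^A:\mcH_A\hookrightarrow\mcH_A^{\oplus k}$ and $\iota_j^B:\mcH_B\hookrightarrow\mcH_B^{\oplus k}$ embed as the $j$-th direct summand. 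Since $(M^x_a)^{\oplus k}\otimes (N^y_b)^{\oplus k}$ preserves each $(j,j)$-summand of $\mcH_A^{\oplus k}\otimes\mcH_B^{\oplus k}$, all cross terms vanish and one immediately computes $\bra{\Psi}(M^x_a)^{\oplus k}\otimes (N^y_b)^{\oplus k}\ket{\Psi}=\sum_j\lambda_j\braket{\psi_j|M^x_a\otimes N^y_b|\psi_j}$. Applying this construction for each $n$ produces a tensor product model $S_n$ with $\abs{f_{S_n}(\alpha_m)-f(\alpha_m)}\leq 1/n$ for all $m\leq n$.

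Since every state has norm one on the min tensor product and $\{\alpha_m\}$ has dense linear span, a standard three-epsilons argument promotes the above bound to pointwise convergence $f_{S_n}(\alpha)\to f(\alpha)$ on all of $\POVM^{X,A}\otimes_{\min}\POVM^{Y,B}$, i.e.\ $f_{S_n}\to f$ weak*. The only genuinely non-routine step is the observation that the class of tensor product states is closed under finite convex combinations --- the direct-sum trick above is the entire content of that. Everything else (Hahn-Banach extension, \Cref{lemma:mixedstate}, and the diagonalisation) is a standard package, so I would not expect any real obstacle beyond correctly setting up the direct-sum model and checking that cross terms vanish.
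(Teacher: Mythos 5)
Your proof is correct and follows essentially the same route as the paper: fix faithful separable representations, apply \Cref{lemma:mixedstate} to approximate $f$ on a growing finite family of test elements by convex combinations of vector states, and package the result as a sequence of tensor product models. The one place you diverge is in how that convex combination gets promoted to a single tensor product state: the paper simply declares $S_n:=\big(\mcH_A,\mcH_B,\{m^x_a\},\{n^y_b\},\rho_n\big)$ a ``tensor product model with a mixed state $\rho_n$,'' which is a mild abuse of the paper's own definition of a model (it requires a unit vector in $\mcH_A\otimes\mcH_B$). Your explicit direct-sum construction --- embedding copies of $\mcH_A$ and $\mcH_B$, using block-diagonal measurements, and checking that cross terms vanish --- fills in exactly that gap, so your version is slightly more rigorous on that point. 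Everything checks: the cross-term computation $\bra{\Psi}(M^x_a)^{\oplus k}\otimes(N^y_b)^{\oplus k}\ket{\Psi}=\sum_j\lambda_j\braket{\psi_j|M^x_a\otimes N^y_b|\psi_j}$ is correct, and the uniform norm bound on states together with density of $\mathrm{span}\{\alpha_m\}$ gives weak* convergence as you say. One small inefficiency: the Hahn--Banach extension of $f$ to $\msB(\mcH_A\otimes\mcH_B)$ is unnecessary, since \Cref{lemma:mixedstate} is stated for a state on the concretely-represented subalgebra $\mcA\subseteq\msB(\mcH)$ itself and already produces vectors in $\mcH$; you can apply it directly to $f$ on $\POVM^{X,A}\otimes_{\min}\POVM^{Y,B}\subseteq\msB(\mcH_A\otimes\mcH_B)$.
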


\begin{proof}
    We first fix any faithful representations $\POVM^{X,A}\subseteq \msB(\mcH_A)$ and $\POVM^{Y,B}\subseteq \msB(\mcH_B)$ for some Hilbert spaces $\mcH_A$ and $\mcH_B$. Then $\mintensor$ is the $C^*$-algebra generated by the concrete operators $m^x_a\otimes n^y_b$, $(a,b,x,y)\in A\times B\times X\times Y$, where each $m^x_a\otimes n^y_b$ acts on $\mcH_A\otimes \mcH_B$. For every $n\in\N$, let 
    \begin{align*}
        C(n):=\{\alpha\otimes\beta: \alpha \text{ monomial in }\POVM^{X,A}, \beta \text{ monomial in }\POVM^{Y,B}, \deg(\alpha)+\deg(\beta)\leq n \}
    \end{align*}
    be the set of monomials in $\mintensor$ of degree $\leq n$. Then $C(n)$ is a finite set and $C(n)\subset C(n+1)$ for all $n\in\N$. For every $n\in\N$, by \Cref{lemma:mixedstate} there are coefficients $\lambda_1^{(n)},\cdots,\lambda^{(n)}_{k_n}\geq 0$ with $\sum_{i=1}^{k_n}\lambda^{(n)}_i=1$ and vectors $\ket{\psi^{(n)}_1},\cdots,\ket{\psi^{(n)}_{k_n}}\in \mcH_A\otimes \mcH_B$ such that 
    \begin{align}
        \abs{f(\alpha\otimes \beta)-\sum_{i=1}^{k_n}\lambda^{(n)}_i\bra{\psi^{(n)}_i}\alpha\otimes\beta\ket{\psi^{(n)}_i}}\leq \frac{1}{n} \text{ for all }\alpha\otimes\beta\in C(n).\label{eq:mixedstate}
    \end{align}
    Then $S_n:=\big(\mcH_A,\mcH_B,\{m^x_a\},\{n^y_b\},\rho_n  \big)$ is a tensor product model with a mixed state $\rho_n:=\sum_{i=1}^{k_n}\lambda_i\ket{\psi_i}\bra{\psi_i}$. \Cref{eq:mixedstate} implies 
    \begin{align*}
        \abs{f(\alpha\otimes \beta)-f_{S_n}(\alpha\otimes\beta)}\leq \frac{1}{n} \text{ for all }\alpha\otimes\beta\in C(n).
    \end{align*}
   Then for every $\epsilon>0$ and every monomial $\alpha\otimes\beta\in \mintensor$, 
   \begin{align*}
       \abs{f(\alpha\otimes \beta)-f_{S_n}(\alpha\otimes \beta)}\leq \epsilon
   \end{align*}
   for all $n\geq\max\{\lceil \frac{1}{\epsilon}\rceil,\deg(\alpha)+\deg(\beta) \}$. So $\lim\limits_{n\arr\infty}f_{S_n}(\alpha\otimes\beta)=f(\alpha\otimes\beta)$ for any monomial $\alpha\otimes\beta$. The lemma follows.
\end{proof}

\begin{proof}[Proof of \Cref{prop:dense}] Let $f$ be a state on $\mintensor$. By \cref{lemma:qsstates} there is a sequence of tensor product models $\{S_n\}_{n\in\N}$ such that $f=\lim\limits_{n\arr\infty}f_{S_n}$ in the weak*-topology. Then for every $n\in\N$, by \Cref{lemma:fdstates} there is a sequence of finite-dimensional states $\{f^{(n)}_m\}_{m\in \N}$ such that $f_{S_n}=\lim\limits_{m\arr\infty}f^{(n)}_{m}$ in the weak*-topology. Hence $\{f^{(n)}_n\}_{n\in\N}$ is a sequence of finite-dimensional states such that 
$f=\lim\limits_{n\arr\infty}f^{(n)}_n$ in the weak*-topology. We conclude that the set of finite-dimensional states on $\mintensor$ is a weak*-dense subset of the state space of $\mintensor$.  
\end{proof}

\end{document}